\newcommand{\p}{\varphi}
\newcommand{\eset}{\emptyset}
\newcommand{\Int}{\ensuremath{\mathcal{I}}\xspace}
\newcommand{\tr}{^{\dagger}}
\newcommand{\red}{^{\dagger}}
\newcommand{\sttr}[2]{\ensuremath{{\pi}_{#1}(#2)}\xspace}
\newcommand{\sig}[1]{\ensuremath{\Sigma_{#1}}\xspace}
\newcommand{\sub}[1]{\ensuremath{\mathsf{for}(#1)}\xspace}
\newcommand{\con}[1]{\ensuremath{\mathsf{con}(#1)}\xspace}
\newcommand{\conseq}[1]{\ensuremath{\mathit{Cons}(#1)}\xspace}
\newcommand{\assign}{\ensuremath{\sigma}\xspace}
\newcommand{\tvalue}{\ensuremath{\bar{\assign}}\xspace}
\newcommand{\dom}{\ensuremath{{\sf dom}}\xspace}
\newcommand{\monodic}{\ensuremath{\mathop{\ooalign{$\Box$ \cr \kern0.57ex \raisebox{0.2ex}{\scalebox{0.55}{$1$}}}\rule{0pt}{1.5ex} \kern-0.7ex}}\xspace}
\newcommand{\Lang}{\ensuremath{\mathcal{L}}}
\newcommand{\eq}[1]{\ensuremath{[{#1}]}\xspace}
\newcommand{\ra}{\ensuremath{{\cal R }_{\sf A}}}
\newcommand{\NC}{\ensuremath{{\sf N_C}}\xspace}
\newcommand{\NI}{\ensuremath{{\sf N_I}}\xspace}
\newcommand{\NR}{\ensuremath{{\sf N_R}}\xspace}
\newcommand{\ALC}{\ensuremath{\smash{\mathcal{ALC}}}\xspace}
\newcommand{\ALCO}{\ensuremath{\smash{\mathcal{ALCO}}}\xspace}
\newcommand{\ALCOu}{\ensuremath{\smash{\mathcal{ALCO}_{\!u}}}\xspace}
\newcommand{\ALCOud}{\ensuremath{\smash{\mathcal{ALCO}_{\!u}^{\defdes}}}\xspace}
\newcommand{\ALCOd}{\ensuremath{\smash{\mathcal{ALCO}^{\defdes}}}\xspace}
\newcommand{\EL}{\ensuremath{\smash{\mathcal{EL}}}\xspace}
\newcommand{\ELO}{\ensuremath{\smash{\mathcal{ELO}}}\xspace}
\newcommand{\ELOu}{\ensuremath{\smash{\mathcal{ELO}_{\!u}}}\xspace}
\newcommand{\ELOd}{\ensuremath{\smash{\mathcal{ELO}^{\defdes}}}\xspace}
\newcommand{\ELOud}{\ensuremath{\smash{\mathcal{ELO}_{\!u}^{\defdes}}}\xspace}
\newcommand{\NKRd}{\ensuremath{\smash{\mathcal{ALCO}^{\defdes \ast}}}\xspace}
\newcommand{\tp}{\ensuremath{t}\xspace}
\newcommand{\fincanmod}{\ensuremath{\Imc_{{\mathsf A},\Omc}}\xspace}
\newcommand{\Ex}{\ensuremath{\mathsf{Ex}}\xspace}
\newcommand{\exrel}{\ensuremath{\downarrow\Ex}\xspace}
\newcommand{\OutDom}{\ensuremath{\Delta}\xspace}
\newcommand{\InnDom}{\ensuremath{\mathfrak{d}}\xspace}
\newcommand{\defdes}{\ensuremath{\smash{\iota}}\xspace}
\newcommand{\concleqone}{\ensuremath{C^{\leq 1}_{\tau}}\xspace}
\newcommand{\topex}{\boldsymbol{{\mathsf{T}}}}
\newcommand{\simul}{\ensuremath{\leq}}
\newcommand{\PTime}{\textsc{PTime}}
\newcommand{\ExpTime}{\textsc{ExpTime}}
\newcommand{\TwoExpTime}{\ensuremath{2\textsc{ExpTime}}}
\newcommand{\Amc}{\ensuremath{\mathcal{A}}\xspace}
\newcommand{\Bmc}{\ensuremath{\mathcal{B}}\xspace}
\newcommand{\Cmc}{\ensuremath{\mathcal{C}}\xspace}
\newcommand{\Dmc}{\ensuremath{\mathcal{D}}\xspace}
\newcommand{\Imc}{\ensuremath{\mathcal{I}}\xspace}
\newcommand{\Jmc}{\ensuremath{\mathcal{J}}\xspace}
\newcommand{\Lmc}{\ensuremath{\mathcal{L}}\xspace}
\newcommand{\Omc}{\ensuremath{\mathcal{O}}\xspace}
\newcommand{\Rmc}{\ensuremath{\mathcal{R}}\xspace}
\newcommand{\Smc}{\ensuremath{\mathcal{S}}\xspace}
\title{On Free Description Logics \\ with Definite Descriptions}
\author{%
Alessandro Artale$^1$\and
Andrea Mazzullo$^1$\and
Ana Ozaki$^2$\and
Frank Wolter$^3$
}
\institute{
$^1$Free University of Bozen-Bolzano\quad
$^2$University of Bergen \quad
$^3$University of Liverpool}
\begin{document}

\maketitle

\begin{abstract}

	\emph{Definite descriptions} are phrases of the form `the $x$ such that $\p$', used to refer to single entities in a context.
	They are often more meaningful to users than individual names alone, in particular when modelling or querying data over ontologies.
	%
	%
	We investigate
	\emph{free description logics} 
	with both individual names and definite descriptions as
	terms of the language,
	while also accounting for their possible lack of denotation.
	%
	We focus on the extensions of $\mathcal{ALC}$ and, respectively, $\mathcal{EL}$ with nominals, the universal role, and definite descriptions.
	We show that standard reasoning in these extensions 
    is not harder than in the original languages,
    %
	and
	we
	characterise the expressive power of concepts relative to first-order formulas using a suitable notion of bisimulation.
	Moreover, we
	lay the foundations for automated support
	for definite descriptions generation
	by studying the complexity of deciding the existence of definite descriptions for an individual under an ontology. 
	Finally, we provide a polynomial-time reduction of reasoning in
	other
	free description logic languages
	based on dual-domain semantics to
	the case of
	partial interpretations.
\end{abstract}

\section{Introduction}
\label{sec:intro}
Noun phrases that can be used to refer to a single object in a context are known in linguistics as \emph{referring expressions} (REs).
These include
both \emph{individual names},
such as
`KR 2021',
and \emph{definite descriptions},
such as
`the General Chair of KR 2021'~\cite{Nea90,Can93}.
Compared to individual names alone, REs provide increased flexibility in the description and the identification of objects, representing also a natural tool
to
transmit
this
kind of
information in a semantically transparent way.
In the context of
information and
knowledge base
(KB)
management
systems, REs have been proposed to address the problem of object identifiers that
remain
obscure to end-users, such as blank node identifiers in RDF or system-generated \texttt{ref} expressions in
object-oriented
databases~\cite{BorEtAl16,BorEtAl17}.

However, with the recent exception of the work by~\cite{NeuEtAl20}
discussed   below, most of the ontology languages considered in the literature have not included definite descriptions as first-class \emph{terms}, on a par with individual names.
%
To this goal, another feature of
REs has to be taken into account:
that of possibly \emph{failing to denote} any object at all.
For instance, `KR 2019' is a non-denoting individual name, since no KR conference took place in 2019, while
`the Program Chair of KR 2020'
and
`the banquet of KR 2020'
are non-denoting definite descriptions, because this conference had
two Program Chairs and no banquet
in 2020.
%
This is not easily captured
in
classical first-order logic (FO),
where
an individual name
is always assigned to an element of the domain by the interpretation function,
and
definite descriptions are not included
among the terms of the language~\cite{Rus05}.
Logics that allow for possibly non-denoting terms
are known as \emph{free logics}~\cite{Ben02,Leh02}.

In this work,
we introduce and study a family of description logic (DL) 
languages with both individual names and definite descriptions, that we call \emph{free DLs with definite descriptions}, or \emph{free DLs}, for short.
Syntactically,
they
extend the classical
ones
with nominals of the form $\{ \defdes C \}$, where $\defdes C$ is a term standing for the definite description `the object that is $C$' and $C$ is a concept.
We denote the resulting DLs
with an upperscript $\defdes$,
focussing in particular on
$\ALCOud$ and $\ELOud$, which are, respectively, $\ALC$ and $\EL$ with nominals, the universal role, and definite descriptions.
Their semantics is based on \emph{partial interpretations},
that generalise the classical ones by letting the interpretation function to be \emph{partial} on 
individual names, meaning that only a \emph{subset} of all the individual names has its elements assigned to objects of the domain.
Moreover, the extension of $\{ \defdes C \}$ in a partial interpretation 
coincides with that of the concept $C$, if $C$ is interpreted as a singleton, and it is empty otherwise.
%
%
%
Nominals involving definite descriptions can
be used to form concept inclusions (CIs)
with different satisfaction conditions.
E.g.,
\begin{align*}
\{ \defdes  \exists {\sf isPCof}. \{ {\sf dl20 }\} \} & \sqsubseteq \exists {\sf reportsTo}.\{\defdes \exists {\sf isGCof}.\{ {\sf dl20} \} \}
\end{align*}
states
that
whoever (if anyone) is the Program Chair of DL 2020 reports to the General Chair of DL 2020:
if there is exactly one
object in
$\exists {\sf isPCof}. \{ {\sf dl20 }\}$,
then
$\exists {\sf isGCof}.\{ {\sf dl20} \}$ is forced to have exactly one element as well, 
but this
CI
is
(vacuously)
satisfied
also in interpretations without, or with more than one,
object in $\exists {\sf isPCof}. \{ {\sf dl20 }\}$.
%

We show
that reasoning
in free DLs with definite descriptions
can be performed at no additional costs.
For
(extensions of)
$\ALCOud$, we
employ a polynomial time reduction
(via a translation that can be applied to other constructors as well)
to
the corresponding
language
without definite descriptions,
so that efficient off-the-shelf reasoners can be used.
%
Moreover, we show that entailment in $\ELOud$
ontologies
remains tractable, using a modified version of the algorithm for classical $\ELO$~\cite{BaaEtAl05}. 

We next characterise the expressive power of $\ALCOud$ concepts relative to
FO
on partial interpretations, using an appropriate notion of \emph{$\ALCOud$ bisimulations}. This result is of interest in its own right, but also serves as an important technical tool for the remainder of the article.


Having designed
a suitable DL language,
we further
consider the task of
constructing ontologies with definite descriptions.
As a step
in this direction,
we
study
the
problem of \emph{finding} meaningful REs for an individual
under a given ontology.
This
is
related to \emph{RE generation}
in natural language processing,
concerned with the
automatic production of such noun phrases,
possibly extracted from a non-linguistic source, e.g. a database~\cite{ReiDal00,KraEtAl03,KraDee12}.
%
Towards a better understanding of this problem,
we investigate the complexity of deciding the \emph{existence} of an RE for an individual within a given language and signature and with respect to an ontology.
%
%
%
The signature
allows
the user
to
specify
the features of interest for describing
an
object,
and
by deciding
this problem
it can be determined
whether alternative characterisations of an individual are available.
%
For example,
consider
the
following
$\ALCOud$
ontology $\Omc$, about KR events
held
between 2018 and 2020:
\begin{align*}
	\mathsf{ KRConf } \equiv \ & \{ \mathsf{ kr18, kr19, kr20 } \}, \\
	\mathsf{ KRWork } \equiv \ & \{ \mathsf{ dl18, dl19, dl20 } \}, \\
	\mathsf{ KREvent } \equiv \ & \mathsf{ KRConf \sqcup KRWork }, \\
\{ \mathsf{kr19} \} \sqsubseteq \bot, \ & \
\{ \mathsf{kr18} \} \equiv \exists \mathsf{hasRC}.\top, \\
\mathsf{ \{ dl18, kr18, dl19, kr20 \} } \equiv \ & \exists \mathsf{hasPCM}.\{ \defdes \exists \mathsf{isGCof}.\{ \mathsf{dl20} \} \}, \\
	\mathsf{ \{ kr20, dl20 \} } \equiv \ & \mathsf{ \exists hasLoc.VirtualLoc }.
\end{align*}
%
The first three CIs define, respectively, the concepts of KR Conference, Workshop and Event, while the
next
two
state, respectively, that `KR 2019' does not denote, and that KR 2018 is the one and only object that has a Registration Chair.
The
subsequent
CI asserts that the KR events having as PC Member the General Chair of DL 2020 are exactly DL 2018, KR 2018, DL 2019 and KR 2020. Finally, the last one expresses that the
objects
having a virtual location are exactly KR 2020 and DL 2020. 
Focussing on location-based characterisations of KR events,
the nominal $\{ \mathsf{ kr20 } \}$
has an
$\ELO$ RE in terms of the signature $\Sigma_{1} = \{ \mathsf{ KRConf, hasLoc, VirtualLoc } \}$,
since
\[
	\Omc \models \mathsf{ \{ kr20 \} } \equiv \mathsf{KRConf} \sqcap \exists \mathsf{hasLoc}.\mathsf{VirtualLoc},
\]
whereas there is no RE for $\{ \mathsf{ kr20 } \}$ in any language if we consider the signature $\Sigma_{2} = \{ \mathsf{ KREvent, hasLoc, VirtualLoc } \}$.
If we instead choose to refer to KR events in light of their organising members, we have that $\{ \mathsf{ kr20 } \}$ has no RE under
$\Sigma_{3} = \{ \mathsf{ KRConf, hasPCM, isGCof, dl20 } \}$,
while it
can be described
in $\ALCOud$ (but not in $\ELOud$)
in terms of
$\Sigma_{4} = \{ \mathsf{KRConf, hasRC, hasPCM, isGCof, dl20} \}$, since it is equivalent under $\Omc$ to the $\ALCOud$ concept
\[
	\mathsf{KRConf} \sqcap
	\exists \mathsf{hasPCM}.\{ \defdes \exists \mathsf{isGCof}.\{ \mathsf{dl20} \} \} \sqcap \lnot \exists \mathsf{hasRC}.\top.
\]
%

We show that deciding the existence of REs is $\TwoExpTime$-complete if the ontology and the RE are both in $\ALCOud$.
The problem is in \PTime{} if the ontology and the RE are both in $\ELOud$, under the additional assumption that the individual name one aims to describe denotes in every model of the ontology. Without this assumption, the complexity remains open. If
FO
expressions are allowed as REs,
the first problem becomes $\ExpTime$-complete and the latter is still in $\PTime$, because of the projective Beth definability property~\cite{Bet56,ChaKei90} of
FO,
also on partial interpretations,
and because reasoning in $\ALCOud$ and $\ELOud$ are $\ExpTime$- and, respectively, $\PTime$-complete.
In this case, for $\ELOud$, no restriction is needed regarding the denotation
of the individual name.
If instead
the ontology is in $\ALCOud$ (even $\ALCO$), but one asks for
an $\ELOud$ RE,
the problem becomes undecidable.
%
%

Finally, we compare our
framework
to
the \emph{positive} and \emph{negative semantics}
for
free DLs
proposed by~\cite{NeuEtAl20}
based on \emph{dual-domain interpretations}.
For both these semantics, we provide a polynomial time reduction to reasoning in $\ALCOud$ on partial interpretations.

\section{Free Description Logics}
\label{sec:freedl}

%
We introduce basic notions for free DLs
(with definite descriptions)
by presenting the syntax and semantics of  $\ALCOud$,
which we define as a free DL based on the classical
$\ALCOu$~\cite{BaaEtAl03a}, and other related languages.

\subsection{Syntax}

Let \NC, \NR and \NI be countably infinite and pairwise disjoint sets of \emph{concept names}, \emph{role names}, and \emph{individual names}, respectively.
The $\ALCOud$ \emph{terms} $\tau$ and \emph{concepts} $C$ are constructed by mutual induction as follows:
\[
	\tau ::= a \mid \defdes C,
	\
	C ::= A \mid
	\{ \tau \} \mid
	\lnot C \mid (C \sqcap C) \mid \exists r.C
	\mid
	\exists u.C,
\]
where $a\in \NI$, $A \in \NC$, $r \in \NR$, and $u$ is the \emph{universal role}.
%
A term of the form $\defdes C$ is called a \emph{definite description}, with the concept $C$ being the \emph{body of $\defdes C$}, and a concept $\{ \tau \}$ is called a \emph{(term) nominal}.
An \emph{$\ALCOud$ axiom} is either an $\ALCOud$ \emph{concept
  inclusion} (\emph{CI}) of the form $(C \sqsubseteq D)$ or an $\ALCOud$
\emph{assertion} of the form
$C(\tau)$
or
$r(\tau_1,\tau_2)$, where $C, D$ are 
concepts,
$r \in \NR$, and $\tau, \tau_1, \tau_2$ are
terms.
An \emph{$\ALCOud$ ontology $\Omc$} is a finite set of CIs and assertions.


%
All the usual syntactic abbreviations and conventions are assumed. 
In particular,
for concepts, we set
$\bot = A \sqcap \lnot A$,
$\top = \lnot \bot$, $C \sqcup D = \lnot (\lnot C \sqcap \lnot D)$,
$C \Rightarrow D = \lnot C \sqcup D$, and
$\forall s. C = \lnot \exists s. \lnot C$, with $s \in \NR \cup \{ u \}$,
while
a \emph{concept equivalence} (\emph{CE}) $C \equiv D$ abbreviates $C \sqsubseteq D, D \sqsubseteq C$.

In the rest of this paper, we will consider other DL languages with nominals, that we introduce briefly here.
%
We define the classical $\ALCO$ as $\ALCOud$ without neither definite descriptions nor the universal role, while $\ALCOd$ and $\ALCOu$ are defined as $\ALCO$ with the addition of either definite descriptions or the universal role, respectively.
%
Moreover, the language $\ELOud$ is obtained from $\ALCOud$ by allowing only for $\bot$, $\top$ 
(considered now as primitive logical symbols), concept names, term nominals, conjunctions
and existential restrictions.
%
Finally, $\ELO$, $\ELOd$ and $\ELOu$ are similarly defined sublanguages of $\ELOud$.
%
%

Given a DL
$\Lmc$,
the \emph{signature} of an $\Lmc$ ontology $\Omc$, $\sig{\Omc}$, is the set of all concept, role and individual names occurring in $\Omc$,
while $\con{\Omc}$ is the set of all subconcepts occurring in $\Omc$.
For a \emph{signature} $\Sigma \subseteq \NC \cup \NR \cup \NI$, an \emph{$\Lmc(\Sigma)$ ontology} 
$\Omc$ is an $\Lmc$ ontology such that $\sig{\Omc} \subseteq \Sigma$ (analogous notions are given 
for $\Lmc$ concepts, where in particular $\con{C}$ is the set of subconcepts occurring in $C$).
\subsection{Semantics}
\label{sec:semantics}


For the DL languages with nominals considered in this work,
we generalise
their semantics
through the notion of partial interpretation.
%
A \emph{partial interpretation} 
is a pair
$\Int = (\Delta^{\Int}, \cdot^{\Int})$,
where
$\Delta^{\Int}$ is a non-empty set, called \emph{domain} of $\Int$,
and $\cdot^{\Int}$ is a function that maps
every $A \in \NC$ to a subset of $\Delta^{\Imc}$,
every $r\in\NR$ to a subset of $\Delta^{\Imc} \times \Delta^{\Imc}$,
the universal role $u$ to the set $\Delta^{\Imc} \times \Delta^{\Imc}$ itself,
and every $a$ in
a
\emph{subset} of $\NI$
to an element in $\Delta^{\Imc}$.
In other words, 
$\cdot^{\Imc}$ is a total function on $\NC \cup \NR$ and a partial function on $\NI$.
A \emph{total interpretation} is a partial interpretation
$\Int = (\Delta^{\Int}, \cdot^{\Int})$
in which $\cdot^{\Int}$ is also total on $\NI$. 
%
The \emph{value} $\tau^{\Int}$ of a term $\tau$  in $\Int$
and the \emph{extension} $C^{\Int}$ of a concept $C$ in $\Int$ are defined by mutual induction:
\begin{gather*}
		(\defdes C)^{\Int}  =
			\begin{cases}
				d, & \text{if} \ C^{\Int} = \{ d \}, \ \text{for some} \ d \in \Delta^{\Imc}; \\
				\text{undefined}, & \text{otherwise}.
			\end{cases}
\end{gather*}
We say that $\tau$ \emph{denotes}
in $\Int$ iff $\tau^{\Int} = d$, for a $d \in \Delta^{\Imc}$.
Thus, in particular, an individual name $a$ denotes in $\Imc$ iff $a^{\Imc}$ is defined.
In addition, where $s \in \NR \cup \{ u \}$:
	\begin{gather*}
		(\neg C)^{\Int} = \Delta^{\Imc} \setminus C^{\Int}, \qquad 
		(C \sqcap D)^{\Int} = C^{\Int} \cap D^{\Int}, \\
		(\exists s.C)^{\Int} = \{d \in \Delta^{\Imc} \mid \exists  e \in C^{\Int}: (d,e) \in s^{\Int}\}.
	\end{gather*}
Moreover,
we set
$\{ \tau \}^{\Int} = \{ \tau^\Int \}$, if $\tau$ denotes in $\Int$, and $\{ \tau \}^{\Int} =\emptyset$, otherwise.
%
%
%

A concept $C$ is \emph{satisfied in $\Int$} iff $C^{\Int} 
\neq \eset$, and it is \emph{satisfiable} iff there is
a
partial interpretation in which it is satisfied.
Given an axiom $\alpha \in \Omc$, the \emph{satisfaction of $\alpha$ in $\Int$}, written
$\Int \models \alpha$, is defined as follows:
\begin{gather*}
	\Int \models C\sqsubseteq D \  \text{iff} \  C^{\Int} \subseteq D^{\Int}, \\
	\Int  \models C(\tau)  \ \text{iff}  \ \tau \text{ denotes in $\Int$ and } \tau^{\Int} \in C^{\Int}, \\
	\Int \models r(\tau_{1}, \tau_{2})  \ \text{iff}  \  \tau_1, \tau_2 \text{ denote in $\Int$ and } (\tau_1^{\Int}, \tau_2^{\Int}) \in r^{\Int}.
\end{gather*} 
%
We say that $\Omc$ is \emph{satisfied} in a partial interpretation $\Int$ 
(or that $\Int$ \emph{satisfies}, or is a \emph{model} of, $\Omc$), written $\Imc \models \Omc$, 
iff $\Int \models \alpha$, for every $\alpha \in \Omc$,
and it is \emph{satisfiable} iff
it is satisfied in some partial interpretation.
A concept $C$ is \emph{satisfiable w.r.t. an ontology} \Omc
if both $C$ and \Omc are satisfied in some partial interpretation.
%
Moreover, $\Omc$ \emph{entails}
an axiom $\alpha$, written $\Omc \models \alpha$, if every partial interpretation that satisfies $\Omc$ satisfies also $\alpha$.
%
%
Finally,
we say that an ontology $\Omc'$ is a \emph{conservative extension} of an ontology $\Omc$ if every model of $\Omc'$ is a model of $\Omc$, and every model of $\Omc$ can be turned into a model of $\Omc'$ by modifying the interpretation of symbols in $\sig{\Omc'} \setminus \sig{\Omc}$, while keeping fixed the interpretation of symbols in $\sig{\Omc}$. 
We also consider these
notions
for total
(that is, classical)
interpretations
and write  `on total interpretations' explicitly
whenever this is the case. 
\subsection{Basic Properties}
\label{sec:observations}
We discuss some properties of free DLs,
where
$\Lmc \in \{ \ALCOud, \ELOud \}$
 in the following.
%

%
%
%
%
$\textbf{(1)}$ 
An $\Lmc$ term $\tau$ denotes in a partial
interpretation $\Int$ iff
$\Int \models \top \sqsubseteq \exists u.\{ \tau \}$.
Furthermore, an $\Lmc$ ontology $\Omc$ entails
$\top \sqsubseteq \exists u.\{ \tau \}$ iff $\Omc \models \top (\tau)$ and
this happens iff
$\tau$ denotes in all the partial interpretations that are
models of $\Omc$.
We say that an individual name $a$ \emph{denotes w.r.t.~an ontology} 
$\Omc$ if $\Omc\models \top \sqsubseteq \exists u.\{a\}$.
%
%
By adding such a CI to an ontology for each individual name 
occurring in it,
we immediately obtain
that
the $\Lmc$ ontology satisfiability
and entailment problems
on total interpretations can
be reduced in polynomial time to the
corresponding problems
on partial
interpretations.
%
The converse polynomial time reduction 
can be defined by substituting every individual name $a$ 
occurring in an ontology or axiom with a fresh concept name $B_{a}$, 
and adding the CI $B_{a} \sqsubseteq \{ b_{a} \}$, 
for a fresh individual name $b_{a}$.




\textbf{(2)}
On partial interpretations, an $\Lmc$
assertion $C(\tau)$ is \emph{not} equivalent to
$\{\tau\}\sqsubseteq C$.  Indeed, while
terms
occurring in assertions are forced by the
semantics to always denote,
the CI $\{\tau\}\sqsubseteq C$ is satisfied in any partial
interpretation where $\tau$ is not denoting.  Nevertheless, assertions are just syntactic sugar. 
One can replace
\begin{itemize}
\item $C(\tau)$ by 
$\{\tau\}\sqsubseteq C$, $\top \sqsubseteq \exists u.\{\tau\}$; and 
\item $r(\tau_1,\tau_2)$ by 
$\{\tau_1\}\sqsubseteq \exists r.\{\tau_2\}$, $\top \sqsubseteq \exists u.\{\tau_{1}\}$.
\end{itemize}
This
encoding
yields an equivalent ontology. 
Thus, from
now on, we may assume
w.l.o.g.
that $\Lmc$ ontologies do not contain assertions.
\textbf{(3)}
For every $\Lmc$ ontology $\Omc$, concept $C$ and term $\tau$, we have that $\Omc \models \{ \tau \} \equiv C$ implies $\Omc \models \{ \tau \} \equiv \{ \defdes C \}$.
In $\ELOud$, under the assumption that
an $\ELOud$ term $\tau$
denotes in every model of an
$\ELOud$
ontology $\Omc$, we also have the following, for every $\ELOud$ concept $C$:
if $\Omc \models \{ \tau \} \equiv C$,
then $\Omc \models \{ \tau \} \equiv C'$, where $C'$ is obtained from $C$ by substituting every
$\{ \defdes D \}$ occurring in $C$ with the concept $D$.

\textbf{(4)}
Given an $\Lmc$ ontology $\Omc$,
we can obtain a conservative extension $\overline{\Omc}$ of $\Omc$ in \emph{flattened form}, that is, such that all occurrences of definite descriptions in $\overline{\Omc}$ are of the form $\defdes B$, where $B$ is a concept name.
Indeed, let $\defdes C_1, \ldots, \defdes C_n$ be all the definite descriptions in $\Omc$ that do not occur in the body of another definite description $\defdes C'$.
%
We define
$\overline{\Omc}$ as
\[
	\Omc'
	\cup \bigcup_{1 \leq i \leq n} \overline{\{ B_{C_{i}} \equiv C_i\}},
\]
where $\Omc'$ is obtained from $\Omc$ by substituting the bodies $C_1, \ldots, C_n$ of $\defdes C_1, \ldots, \defdes C_n$ with fresh concept names $B_{C_{1}}, \ldots, B_{C_{n}}$, respectively, and $\overline{ \{B_{C_{i}} \equiv C_{i}\} }$ is the ontology obtained by recursively applying the procedure just described to the ontology $\{ B_{C_{i}} \equiv C_{i} \}$.
\section{Reasoning in Free DLs}
\label{sec:complexity}

We
study
the complexity of reasoning in $\ALCOud$ and 
$\ELOud$.

\subsection{Satisfiability in $\ALCOud$}
\label{sec:sat}





We prove that satisfiability in $\ALCOud$ is \ExpTime-complete.
To show this result, we provide a polynomial size equisatisfiable translation
into $\ALCOu$. 
%

An
$\ALCOud$ ontology $\Omc$
is in \emph{normal form} if it is in flattened form and 
all the CIs
in $\Omc$ are either of the form
$E \sqsubseteq F$, where $E, F$ are
$\ALC_{u}$  (i.e., $\ALC$ with the universal role) concepts,
or
$\{ \tau \} \sqsubseteq A$, or
$A \sqsubseteq \{ \tau \}$,
with $A\in\NC$.
It can be seen that an $\ALCOud$ ontology can be transformed in polynomial time into an $\ALCOud$ ontology in normal form that is a conservative extension of the original.

We now define a translation
of an $\ALCOud$ ontology $\Omc$ in normal form into an $\ALCOu$ ontology $\Omc\red$. While the translation
preserves
symbols in
$\NC \cup \NR$, nominals
$\{\tau\}$ are translated as follows:
%
\begin{align*}
  \{\tau\}\red = \{\tau\}^{+} \sqcap \concleqone,
\end{align*}
where: $\{ \tau \}^{+} = A_{b}$, with fresh $A_{b} \in \NC$,
if $\tau = b \in \NI$; $\{ \tau \}^{+} = B$, if $\tau = \defdes B$, with $B \in \NC$;
and $\concleqone$ stands for the concept
%
  $\forall u. (\{ \tau \}^+ \Rightarrow \{ a_{\tau} \})$,
%
with fresh $a_{\tau} \in \NI$.
%
We now define  
%
\[
(E \sqsubseteq F)\red  = E \sqsubseteq F,
\]
\[
  (\{ \tau \} \sqsubseteq A)\red  = \{ \tau \}\red \sqsubseteq A,
  \quad
  (A \sqsubseteq \{ \tau \})\red  = A \sqsubseteq \{ \tau \}\red,
\]
%
where $E, F$ are $\ALC_{u}$ (i.e., $\ALC$ with the universal role) concepts, and $A$ is a concept name.
Finally, we set $\Omc\red$ as
\[
\bigcup_{C \sqsubseteq D \in \Omc}
\!\!\!\!
\{ (C \sqsubseteq D)\red \}
\
\cup
\!\!\!\!\!\!
\bigcup_{\{\tau\} \in \con{\Omc}}
\!\!\!\!\!\!\!
\{  \{\tau\}^+ \sqsubseteq
  \forall u. (\{ a_{\tau} \} \Rightarrow \{ \tau \}^+) \}.
\]
%
%
We then obtain the following.
\begin{lemma}\label{prop:alcodtoalcou}
An $\ALCOud$ ontology $\Omc$ in normal form is satisfiable iff the $\ALCOu$ ontology $\Omc\red$ is satisfiable on total interpretations.
\end{lemma}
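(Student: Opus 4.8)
The plan is to prove both directions by transforming models, showing that on total interpretations the concept $\{\tau\}\red = \{\tau\}^{+}\sqcap\concleqone$ simulates exactly the partial semantics of the nominal $\{\tau\}$. The single invariant behind everything is: for any total interpretation $\Jmc$, the set $(\{\tau\}\red)^{\Jmc}$ is either empty or a singleton, because $(\concleqone)^{\Jmc}$ equals $\Delta^{\Jmc}$ if $(\{\tau\}^{+})^{\Jmc}\subseteq\{a_{\tau}^{\Jmc}\}$ and $\emptyset$ otherwise, so in either case $(\{\tau\}\red)^{\Jmc}=(\{\tau\}^{+})^{\Jmc}\cap(\concleqone)^{\Jmc}$ is contained in $\{a_{\tau}^{\Jmc}\}$. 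The role of the auxiliary inclusions $\{\tau\}^{+}\sqsubseteq\forall u.(\{a_{\tau}\}\Rightarrow\{\tau\}^{+})$ in $\Omc\red$ is to reconcile this with a body that is a shared concept name: they force $a_{\tau}^{\Jmc}\in(\{\tau\}^{+})^{\Jmc}$ whenever the latter is nonempty, so that a singleton $(\{\tau\}^{+})^{\Jmc}$ is actually $\{a_{\tau}^{\Jmc}\}$ and hence survives in $(\{\tau\}\red)^{\Jmc}$.

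For the direction from right to left, let $\Jmc\models\Omc\red$ with $\Jmc$ total. Define a partial interpretation $\Imc$ with $\Delta^{\Imc}=\Delta^{\Jmc}$, agreeing with $\Jmc$ on all concept and role names of $\sig{\Omc}$, and with $b^{\Imc}$ the unique element of $(\{b\}\red)^{\Jmc}$ if that set is a singleton and undefined otherwise, for each individual name $b$ occurring in $\Omc$ (having eliminated assertions, every such $b$ occurs in a nominal $\{b\}\in\con{\Omc}$, so the corresponding auxiliary inclusion is present). A short case distinction on the size of $(\{\tau\}^{+})^{\Jmc}$ then gives $\{\tau\}^{\Imc}=(\{\tau\}\red)^{\Jmc}$ for every $\{\tau\}\in\con{\Omc}$: for $\tau=b$ this is the definition of $b^{\Imc}$; for $\tau=\defdes B$ one uses $B^{\Imc}=B^{\Jmc}$ together with the auxiliary inclusion, which guarantees that a singleton $B^{\Jmc}$ equals $\{a_{\defdes B}^{\Jmc}\}$, so $(\{\defdes B\}\red)^{\Jmc}=B^{\Jmc}=\{\defdes B\}^{\Imc}$, while a body with at least two elements makes both sides empty. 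A routine induction lifts this to $G^{\Imc}=(G\red)^{\Jmc}$ for every $\ALC_{u}$ subconcept $G$ of $\Omc$, and the three normal-form shapes of inclusions are then immediate; hence $\Imc\models\Omc$.

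For the converse, let $\Imc\models\Omc$ with $\Imc$ partial. Define a total interpretation $\Jmc$ over $\Delta^{\Jmc}=\Delta^{\Imc}$ agreeing with $\Imc$ on $\sig{\Omc}$, with $A_{b}^{\Jmc}=\{b\}^{\Imc}$ for each individual name $b$ of $\Omc$, all individual names not occurring in $\Omc\red$ interpreted arbitrarily, and with $a_{\tau}^{\Jmc}$ chosen as follows: if $\{\tau\}^{\Imc}=\{d\}$ put $a_{\tau}^{\Jmc}=d$; otherwise, if $(\{\tau\}^{+})^{\Jmc}\neq\emptyset$ (for $\tau=\defdes B$ this is the case exactly when $B^{\Imc}$ has at least two elements) put $a_{\tau}^{\Jmc}$ equal to some element of $(\{\tau\}^{+})^{\Jmc}$; in the remaining case let $a_{\tau}^{\Jmc}$ be any fixed domain element. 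The same case distinction gives $(\{\tau\}\red)^{\Jmc}=\{\tau\}^{\Imc}$ for all $\{\tau\}\in\con{\Omc}$, and hence $G^{\Jmc}=G^{\Imc}$ for every $\ALC_{u}$ subconcept $G$; the auxiliary inclusions hold by the choice of $a_{\tau}^{\Jmc}$, and the translated normal-form inclusions hold because $\Imc$ satisfies their originals. Hence $\Jmc\models\Omc\red$.

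The point requiring care — and the reason why $\{\defdes B\}^{+}$ is the body name $B$ itself rather than a fresh name, and why the auxiliary inclusions are needed — is the treatment of definite descriptions whose body is a concept name interpreted identically in $\Imc$ and $\Jmc$: one must ensure simultaneously that a body with two or more elements collapses $\{\defdes B\}\red$ to $\emptyset$ (the $\concleqone$ conjunct) and that a one-element body is still recognised as the denotation of the definite description (the auxiliary inclusion, pinning $a_{\defdes B}$ inside $B$). The only other subtlety is choosing $a_{\tau}^{\Jmc}$ in the forward direction so that the auxiliary inclusions remain satisfied even when the definite description does not denote.
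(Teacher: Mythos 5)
Your proposal is correct and follows essentially the same route as the paper's proof: both directions proceed by model transformation over the same domain, with the central claim that $\{\tau\}^{\Imc}=(\{\tau\}\red)^{\Jmc}$ for every $\{\tau\}\in\con{\Omc}$ (using the auxiliary inclusions to pin $a_{\tau}$ inside $\{\tau\}^{+}$ when nonempty), lifted to the three normal-form CI shapes via a routine induction on $\ALC_u$ concepts. The only cosmetic deviation is that you read off $b^{\Imc}$ from $(\{b\}\red)^{\Jmc}$ rather than from $A_{b}^{\Jmc}$ as the paper does, which is equivalent in models of $\Omc\red$.
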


It follows from a 
result in Propositional Dynamic Logic 
extended with nominals and the universal modality~\cite[Corollary 7.7]{PasTin91} that 
the $\ALCOu$ ontology satisfiability problem on total interpretations is in $\ExpTime$. The matching lower bound comes from the $\ALC$ ontology satisfiability problem on total interpretations~\cite{GabEtAl03}.
Since the $\ALCOud$ ontology satisfiability
problem on total interpretations
is
reducible 
in polynomial time
%
to
its counterpart
on partial interpretations 
(cf. Point~(1) in Section~\ref{sec:observations}),
the following holds.

\begin{theorem}
$\ALCOud$ ontology satisfiability 
(both on partial and total interpretations) 
is $\ExpTime$-complete.
\end{theorem}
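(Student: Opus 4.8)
The plan is to read off both bounds by chaining together the reductions developed in this section, so that essentially no new argument is needed beyond Lemma~\ref{prop:alcodtoalcou}. \textbf{Upper bound on partial interpretations.} Starting from an arbitrary $\ALCOud$ ontology $\Omc$, I would first transform it in polynomial time into an $\ALCOud$ ontology in normal form that is a conservative extension of $\Omc$, and hence equisatisfiable with it: this is obtained by the flattening procedure of Point~(4) in Section~\ref{sec:observations} followed by renaming complex $\ALC_{u}$ subconcepts by fresh concept names in the standard way, using CIs $\{\tau\}\sqsubseteq A$ and $A\sqsubseteq\{\tau\}$ to handle nominals. I would then apply the translation $(\cdot)\red$ to obtain the $\ALCOu$ ontology $\Omc\red$; inspecting the definition shows that each nominal occurrence contributes only a constant-size concept together with one extra CI, so $\Omc\red$ has size polynomial in $|\Omc|$. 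By Lemma~\ref{prop:alcodtoalcou}, the normalised ontology is satisfiable on partial interpretations iff $\Omc\red$ is satisfiable on total interpretations, and the latter lies in $\ExpTime$ by the cited corollary on Propositional Dynamic Logic with nominals and the universal modality~\cite{PasTin91}. This yields the $\ExpTime$ upper bound for $\ALCOud$ ontology satisfiability on partial interpretations.

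\textbf{Upper bound on total interpretations.} For this I would simply invoke Point~(1) in Section~\ref{sec:observations}: $\ALCOud$ satisfiability on total interpretations reduces in polynomial time to $\ALCOud$ satisfiability on partial interpretations, by adding the CI $\top\sqsubseteq\exists u.\{a\}$ for every individual name $a$ occurring in the ontology. Hence this variant is in $\ExpTime$ as well.

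\textbf{Lower bound.} Here I would reduce from $\ALC$ ontology satisfiability on total interpretations, which is already $\ExpTime$-hard~\cite{GabEtAl03}. Since $\ALC$ is precisely the fragment of $\ALCOud$ without nominals, the universal role, and definite descriptions, and an $\ALC$ ontology contains no individual names whose denotation could be dropped, the partial and total semantics coincide on $\ALC$ ontologies and both agree with classical $\ALC$; consequently the hardness transfers to $\ALCOud$ ontology satisfiability, on partial as well as on total interpretations.

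The genuine work has already been carried out in Lemma~\ref{prop:alcodtoalcou}, so the remaining effort is bookkeeping, and I would expect the only delicate points to be the following two. First, one must check that the transformation to normal form is genuinely polynomial and satisfiability-preserving; it suffices that it produces a conservative extension, since a conservative extension is satisfiable exactly when the original ontology is. Second, one should spell out why the distinction between partial and total interpretations is vacuous on the individual-name-free fragment $\ALC$, so that the classical $\ExpTime$ lower bound can be imported verbatim into both the partial and the total setting.
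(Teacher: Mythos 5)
Your proposal is correct and follows essentially the same route as the paper: normalise to a conservative extension in normal form, apply the polynomial translation $(\cdot)\red$ together with Lemma~\ref{prop:alcodtoalcou} and the cited PDL result for the upper bound on partial interpretations, use Point~(1) of Section~\ref{sec:observations} for the total case, and import the $\ALC$ $\ExpTime$-hardness for the lower bound. Your explicit remark that partial and total semantics coincide on individual-name-free $\ALC$ ontologies just spells out a step the paper leaves implicit.
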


The reduction we presented can be easily adapted to deal with  more expressive DLs, e.g. 
extensions of $\ALCOud$ with inverse roles and number restrictions.
\subsection{Reasoning in \ELOud}\label{subsec:elo}


We prove that 
entailment in $\ELOud$ ontologies is \PTime-complete.
%
To show this result, we assume
w.l.o.g.
that the
assertions 
are encoded within the CIs in the ontology
(cf. Point~(2) in Section~\ref{sec:observations})
and adapt the completion
algorithm for \ELO ontologies~\cite{BaaEtAl05}.  The main idea is to
add a copy of each concept name in an ontology and remove it only if
its extension is a singleton in any model.
%
Even though $\ELOud$ admits a mild form of disjunction ($\{\defdes A\} \sqsubseteq B$ states 
that the extension of $A$ contains at least two elements \emph{or} $A\sqsubseteq B$),  
the logic remains `Horn'
in the sense that (if
an ontology is satisfiable, then) 
minimal models exist. 

Any \ELOud ontology can be converted
in polynomial time into a conservative extension in \emph{normal form}, that is, 
an $\ELOud$ ontology $\Omc$ in flattened form 
where all CIs
have one of the
following forms:
\begin{equation*}
C_1\sqcap C_2\sqsubseteq D, \
\exists r.C\sqsubseteq D, \
C \sqsubseteq \exists r.D, \
\{\tau\}\sqsubseteq D, \
C\sqsubseteq \{\tau\},
%
\end{equation*}
where $C_{(i)}\in \NC\cup\{\top\}$,
$D\in  \NC\cup\{\top,\bot\}$ and all terms $\tau$
in \Omc are either of the form $\{a\}$, with $a\in\NI$, or of the form
$\{\iota A\}$, with $A\in\NC$. 

Let \Omc be in normal form. 
We denote by $\Bmc\Cmc_\Omc$ the union of
$\{\top\}$, the set of all concept names occurring in \Omc, and the
set of all concepts $\{a\}\in \con{\Omc}$. 
Also,
we denote by $\Bmc\Cmc^{+}_\Omc$ the union of $\Bmc\Cmc_\Omc$
with
$\{\bot\}\cup \{\{\iota A\}\mid \{\iota A\} \in \con{\Omc} \}$ 
and by $\Rmc_\Omc$ the set including $u$ and the role names occurring in \Omc.  
%
%
%
%
Given $A,B\in\NC$, we may write $A \sqsubseteq B$ instead of
$A \sqcap A\sqsubseteq B$.
If $\{\iota A\} \in \con{\Omc}$,
we
assume
w.l.o.g.
that $\{\iota A\}\sqsubseteq A\in\Omc$.
Moreover, we write ${\sf A}$ to denote
a concept name which we aim at checking whether
$\Omc\models {\sf A}\sqsubseteq B$ (see Lemma~\ref{lem:completion} and
Table~\ref{tab:rules}).
%
The \emph{classification graph} for \Omc and    ${\mathsf A}$ is a tuple
$(V, S,R)$ where
\begin{itemize}
\item $V=\Bmc\Cmc_\Omc\cup\{A^c\mid A\in(\Bmc\Cmc_\Omc\cap\NC)\}$,
  with each $A^c\in\NC$ fresh; 
\item $S$ is a function mapping nodes in $V$ to subsets of
  $\Bmc\Cmc^{+}_\Omc$; 
\item $R$ is a function mapping edges in $V\times V$ to (possibly
  empty) subsets of $\Rmc_\Omc$, 
  where $r$ is in $\Rmc_\Omc$.
\end{itemize}
%
%
Intuitively, a concept name of the form $A^c$ represents a second element in the
extension of $A$, and it is removed from the classification graph if $A$ has at most
one object in its extension.
Initially, we set $S(C) := \{C,\top\}$, for all nodes $C \in V$, and
$R(C,D) := \emptyset$, for all edges $(C, D) \in (V \times V)$.
%
If $C\in V\setminus \Bmc\Cmc_\Omc$ is of the form $A^c$, with
$A\in\NC$, then we add $A$ to $S(A^c)$. 
Given 
$C,D\in \Bmc\Cmc_\Omc$, we
write $C\leadsto_R D$ iff there are $C_1,\ldots, C_k\in\Bmc\Cmc_\Omc$
such that $C_1\in S(C)$; $r\in R(C_j,C_{j+1})$, for some
$r\in \Rmc_\Omc$, for all $1\leq j < k$; $D\in S(C_k)$.
The completion rules are given in Table~\ref{tab:rules}.  Assume that
rules are only applied if $S$ or $R$ or $V$ change after the rule
application. This bounds the number of rule applications to a
polynomial in the number of concept and role names in $\Omc$. Thus, the
resulting \emph{completed classification graph} for
\Omc can be constructed in polynomial time
with respect to
the size of \Omc.

\begin{table*}[tb]
\centering
\begin{tabular*}{0.95\textwidth}{l@{}l@{}l@{}}
\toprule
 $\quad$ \textbf{if}   & $\quad\quad$ & \textbf{then}  
\\
\hline
\hline
${\sf R}_1$:   $C\sqcap D\sqsubseteq B\in\Omc$, $C,D\in S(E)$ &   &  add $B$ to $S(E)$
\\
${\sf R}_2$:   ${\sf A}\leadsto_R E$,  
$C\sqsubseteq \exists r.D\in\Omc$, $C\in S(E)$ &  &  add $r$ to $R(E,D)$, $R(E^{c},D)$ 
\\
${\sf R}_3$:  
$ \exists r.C \sqsubseteq D \in\Omc$, $C\in S(B), \ r\in R(E,B)$& &  add $D$ to $S(E)$
\\
${\sf R}'_3$:  $\exists u.C \sqsubseteq D \in\Omc$, ${\sf A}\leadsto_R
  C$,  $E\in V$& &  add $D$ to $S(E)$
  \\
${\sf R}_4$:   $ \{\tau\}\in S(E)\cap S(D)$, ${\sf  A} \leadsto_R D$;
&  &   $S(E):=S(E)\cup S(D)$ 
\\
${\sf R}_5$:   $r \in R(E,D)$, $\bot\in S(D)$ 
&   & add $\bot$ to $S(E)$
\\
${\sf R}_6$:    $\{\tau\}\sqsubseteq D\in\Omc$, $\{\tau\}\in S(E)$ 
&    & add $D$ to $S(E)$
\\
${\sf R}_7$:   $C\sqsubseteq \{\tau\}\in\Omc$, $C\in S(E)$
&    & add $\{\tau\}$ to $S(E)$
\\
${\sf R}_8$:    $\{\tau\}\in S(B)$, $B\in\NC$
&    & $V:=V\setminus \{B^c\}$
\\
${\sf R}_{9}$:    $B\in S(E)$, $B^c\not\in V$ 
&    & add $\{\iota B\}$ to $S(E)$
\\
${\sf R}_{10}$:    ${\sf A}\leadsto_R C$, 
$\{\iota B\}\in S(C)$
&  &   $V:=V\setminus \{B^c\}$ %
\\
\bottomrule
\end{tabular*}
\caption{Completion rules for subsumption in \ELOud with respect to ontologies.
}
\label{tab:rules}
\end{table*}

%
%

%
%
\begin{restatable}{lemma}{Eloentailment}\label{lem:completion}
  Given an \ELOud ontology \Omc in normal form, let $S$ be the node function of a completed
  classification graph for \Omc
  (cf. rules in Table~\ref{tab:rules}), ${\sf A}\in \NC$  and $B\in
  \Bmc\Cmc _\Omc \cup \{\bot\}$.
   Then, $\Omc\models {\sf A}\sqsubseteq  B$
  iff $S({\sf A})\cap \{B,\bot\}\neq \emptyset$.
\end{restatable}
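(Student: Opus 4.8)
The plan is to prove the two directions of the biconditional separately: \emph{soundness} of the completion rules (the ``if'' part, $S({\sf A})\cap\{B,\bot\}\neq\emptyset$ implies $\Omc\models{\sf A}\sqsubseteq B$), by induction on the run of rule applications that produced the completed classification graph; and \emph{completeness} (the ``only if'' part), by extracting a canonical model of $\Omc$ from the completed graph.

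For soundness, I would fix an arbitrary model $\Imc$ of $\Omc$ and maintain, throughout the run, the invariant that, whenever ${\sf A}^\Imc\neq\emptyset$, for every $d_0\in{\sf A}^\Imc$ there is a map $h$ from the ``reachable part'' of the graph (the node ${\sf A}$, every $C\in\Bmc\Cmc_\Omc$ with ${\sf A}\leadsto_R C$, and the still-present copies $A^c$ of concept names realized there) into $\Delta^\Imc$ such that $h({\sf A})=d_0$; $h(C)\in X^\Imc$ for all concept names and $\top$ in $S(C)$; $h(C)=\tau^\Imc$ (so $\tau$ denotes) whenever $\{\tau\}\in S(C)$; $B^\Imc=\{h(C)\}$ whenever $\{\iota B\}\in S(C)$; $(h(C),h(D))\in r^\Imc$ whenever $r\in R(C,D)$ with $C$ reachable; together with the extra clause that $A^c\notin V$ only if $A^\Imc$ is forced to be a singleton in every such $\Imc$; and finally $\bot\notin S(C)$ for every reachable $C$. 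The base case is immediate from the initialization. For the induction step one checks each of the ten rules of Table~\ref{tab:rules}: ${\sf R}_1,{\sf R}_3,{\sf R}'_3,{\sf R}_5,{\sf R}_6,{\sf R}_7$ are the usual $\ELO$-style label-propagation rules and only require unfolding the corresponding normal-form CI; ${\sf R}_2$ extends $h$ to a fresh $r$-successor witnessing an existential; ${\sf R}_4$ uses that two reachable nodes carrying a common nominal $\{\tau\}$ are mapped by $h$ to the single element $\tau^\Imc$, hence satisfy the same labels; and the three genuinely new rules exploit the semantics of $\iota$ --- ${\sf R}_8$ and ${\sf R}_{10}$ detect that $A^\Imc$ is forced to be a singleton (because some node's label already equals a nominal, resp.\ because $\{\iota A\}$ is realized), and ${\sf R}_9$ turns ``$A^c$ deleted'' into the fact that wherever $A$ holds, $\iota A$ denotes that very element. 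Soundness then follows: if $B\in S({\sf A})$, the relevant clause of the invariant gives $d_0\in B^\Imc$ for every $d_0\in{\sf A}^\Imc$, i.e.\ $\Omc\models{\sf A}\sqsubseteq B$; and if $\bot\in S({\sf A})$, the invariant is self-contradictory, so ${\sf A}^\Imc=\emptyset$ in every model of $\Omc$ and ${\sf A}\sqsubseteq B$ holds vacuously.

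For completeness, assume $S({\sf A})\cap\{B,\bot\}=\emptyset$; I would build $\Imc\models\Omc$ with witness $d_0={\sf A}$ and ${\sf A}\notin B^\Imc$. The domain $\Delta^\Imc$ consists of the reachable, $\bot$-free real nodes $W=\{{\sf A}\}\cup\{C\in\Bmc\Cmc_\Omc:{\sf A}\leadsto_R C,\ \bot\notin S(C)\}$ (identifying nodes that carry a common nominal label, which is consistent thanks to ${\sf R}_4$), together with one fresh element $c_A$ for each concept name $A$ with $A^c\in V$ that is realized somewhere in $W$; one sets $C\in A^\Imc$ iff $A\in S(C)$ (with the $c_A$ placed according to $S(A^c)$), $(x,y)\in r^\Imc$ iff $r\in R(x,y)$ (the copies inheriting the successors added by the $R(E^c,D)$ part of ${\sf R}_2$), and $a^\Imc$ equal to the unique node of $W$ whose label contains $\{a\}$, if there is one, and undefined otherwise. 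One then verifies $\Imc\models\Omc$ by going through the five CI shapes of the normal form, each discharged by the exhaustiveness of one or two rules (${\sf R}_1$ for $C_1\sqcap C_2\sqsubseteq D$; ${\sf R}_3,{\sf R}'_3$ for $\exists r.C\sqsubseteq D$ and $\exists u.C\sqsubseteq D$; ${\sf R}_2$ for $C\sqsubseteq\exists r.D$, with ${\sf R}_5$ ensuring that no successor leaves $W$ into a $\bot$-node; ${\sf R}_6$ for $\{\tau\}\sqsubseteq D$; ${\sf R}_7$ for $C\sqsubseteq\{\tau\}$), while ${\sf R}_8$, ${\sf R}_9$, ${\sf R}_{10}$ guarantee that $(\iota A)^\Imc$ is defined exactly when $A^c\notin V$ and then equals the unique element of $A^\Imc$, so that every nominal $\{\iota A\}$ is interpreted correctly. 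Since $B\notin S({\sf A})$ we get ${\sf A}\notin B^\Imc$, and since $\bot\notin S({\sf A})$ the node ${\sf A}$ genuinely lies in $W$; hence $\Omc\not\models{\sf A}\sqsubseteq B$.

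The main obstacle is the copy mechanism and its interaction with nominals and the universal role: one must show that retaining $A^c$ really corresponds to a model in which $A^\Imc$ has two distinct elements (with the successors added by ${\sf R}_2$ making $c_A$ a legitimate domain element), while the deletion of $A^c$ by ${\sf R}_8$ or ${\sf R}_{10}$ is \emph{forced} precisely when every model of $\Omc$ must make $A^\Imc$ a singleton --- and that ${\sf R}_9$ and ${\sf R}_{10}$, via the $\leadsto_R$ reachability (including through ${\sf R}'_3$) and the nominal merging of ${\sf R}_4$, capture exactly the resulting consequences. Once this bookkeeping is pinned down (which copies become genuine domain elements, and with which successors), the rest is a routine adaptation of the completeness argument for classical $\ELO$~\cite{BaaEtAl05}.
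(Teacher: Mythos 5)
Your completeness half (building the canonical model from the reachable, $\bot$-free part of the completed graph, quotienting nodes that share a nominal via ${\sf R}_4$, keeping the surviving copies $A^c$ as second elements, and using ${\sf R}_8$, ${\sf R}_9$, ${\sf R}_{10}$ to get the semantics of $\{\iota A\}$ right) is essentially the paper's construction; note only that the parts you defer as ``bookkeeping'' (the truth lemma for $\{\iota A\}$, the fact that copies $[C^c]$ never carry a nominal label, and that $\bot$-freeness propagates from ${\sf A}$ to all reachable nodes) are where most of the real work in that direction sits.

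The soundness half, however, has a genuine gap: the invariant you propose is not maintainable. You ask for a single map $h$ from the reachable part of the graph into $\Delta^{\Imc}$ with $(h(C),h(D))\in r^{\Imc}$ whenever $r\in R(C,D)$. Completion-graph edges do not embed functionally into an arbitrary model: take $\Omc=\{{\sf A}\sqsubseteq\exists r.C,\ {\sf A}\sqsubseteq\exists s.B,\ B\sqsubseteq\exists r.C\}$, so ${\sf R}_2$ puts $r$ into both $R({\sf A},C)$ and $R(B,C)$, and consider the model where $d_0\in{\sf A}^{\Imc}$ has $r$-successor $c_1$ and $s$-successor $b\in B^{\Imc}$, while $b$ has $r$-successor $c_2\neq c_1$ and there are no further edges; no single choice of $h(C)$ serves both edges. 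A related difficulty is that labels accumulate at nodes that are not yet reachable (via ${\sf R}_1$, ${\sf R}'_3$, ${\sf R}_6$, ${\sf R}_7$, ${\sf R}_9$), so when a node $D$ first becomes reachable, or becomes the target of a new edge, knowing merely that the freshly found witness lies in $D^{\Imc}$ does not tell you that it satisfies everything already in $S(D)$. Both problems vanish once the invariant is stated per element and existentially, which is what the paper does: whenever ${\sf A}^{\Imc}\neq\emptyset$, (i) every $x\in E^{\Imc}$ satisfies every member of $S(E)$; (ii) if $r\in R(E,D)$ then every $x\in E^{\Imc}$ has \emph{some} $r$-successor in $D^{\Imc}$ (different sources may use different successors); (iii) $B^c\notin V$ implies $|B^{\Imc}|\le 1$; (iv) ${\sf A}\leadsto_R C$ implies $C^{\Imc}\neq\emptyset$. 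Clause (i), being universally quantified over $E^{\Imc}$, is exactly what makes ${\sf R}_3$, ${\sf R}_4$ and the ``newly reachable node'' situations go through, and instantiated at $E={\sf A}$ it directly yields $\Omc\models{\sf A}\sqsubseteq B$; your single-witness formulation cannot substitute for it without in effect re-proving this stronger statement.
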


Thanks to Lemma~\ref{lem:completion}, given arbitrary \ELOud concepts
$C,D$ and an \ELOud ontology \Omc, one can decide in polynomial time
whether $C\sqsubseteq D$ is entailed by \Omc by adding ${\sf A}\equiv
C$ and $B\equiv D$ to \Omc,
converting it in normal form,
and then checking whether $S({\sf A})\cap
\{B,\bot\}\neq \emptyset$, where ${\sf A},B$ are fresh concept names.
As 
an immediate consequence of 
Lemma~\ref{lem:completion} and the polynomial size of a completed classification graph we
obtain the following complexity result.
\begin{theorem}\label{th-eloud-sat}
  Entailment in $\ELOud$ (both on partial and total interpretations)
  is \PTime-complete.
\end{theorem}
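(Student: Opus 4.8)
The plan is to read off both bounds from the machinery already established, since the entire technical content of the result is packaged in Lemma~\ref{lem:completion} and the polynomiality of the completed classification graph.

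For membership in $\PTime$, I would first reduce arbitrary $\ELOud$ entailment to deciding a concept inclusion $C \sqsubseteq D$ between $\ELOud$ concepts with respect to an $\ELOud$ ontology. Assertions appearing in the ontology are eliminated using Point~(2) of Section~\ref{sec:observations}. A queried assertion is handled by noting that $\Omc \models C(\tau)$ iff $\Omc \models \{\tau\} \sqsubseteq C$ and $\Omc \models \top \sqsubseteq \exists u.\{\tau\}$, and that $\Omc \models r(\tau_1, \tau_2)$ iff $\Omc \models \top \sqsubseteq \exists u.\{\tau_1\}$ and $\Omc \models \{\tau_1\} \sqsubseteq \exists r.\{\tau_2\}$ --- here the first conjunct records that $\tau_1$ denotes (cf.\ Point~(1)), and once $\tau_1$ denotes the second conjunct already forces $\tau_2$ to denote as well. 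Each of these is an $\ELOud$ CI, so it remains to decide $\Omc \models C \sqsubseteq D$. For this I follow the recipe stated after Lemma~\ref{lem:completion}: introduce fresh concept names $\mathsf{A}, B$ with $\mathsf{A} \equiv C$ and $B \equiv D$, pass to a conservative extension in normal form (possible in polynomial time), build the completed classification graph for the resulting ontology and $\mathsf{A}$ --- of polynomial size, by the remark preceding Lemma~\ref{lem:completion} --- and test whether $S(\mathsf{A}) \cap \{B, \bot\} \neq \emptyset$; soundness and completeness of this test is exactly Lemma~\ref{lem:completion}. All steps compose to a single polynomial-time procedure, giving $\PTime$ membership on partial interpretations; membership on total interpretations then follows because, by Point~(1), total-interpretation entailment reduces to partial-interpretation entailment in polynomial time.

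For $\PTime$-hardness, I would simply observe that the fragment of $\ELOud$ obtained by forbidding nominals, the universal role and definite descriptions is classical $\EL$, and that on total interpretations the $\ELOud$ semantics restricted to this fragment is the usual $\EL$ semantics; since subsumption with respect to general $\EL$ TBoxes is already $\PTime$-hard~\cite{BaaEtAl05}, $\ELOud$ entailment on total interpretations is $\PTime$-hard, and the same follows on partial interpretations via the polynomial reduction of Point~(1).

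I expect no real obstacle: the only delicate points are bookkeeping, namely verifying that the reduction of assertion entailment to CI entailment and the reduction of total- to partial-interpretation entailment genuinely preserve the answer (in particular that it suffices to force denotation for the individual names actually occurring in $\Omc$ and in the queried axiom), and that the successive polynomial transformations --- assertion elimination, adding the query equivalences, normal-form conversion, and classification-graph construction --- really do compose to a polynomial-time algorithm. Given Lemma~\ref{lem:completion}, the argument is short.
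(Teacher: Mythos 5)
Your proposal is correct and follows essentially the same route as the paper: the upper bound via encoding assertions as CIs (Points (1)–(2)), adding fresh names ${\sf A}\equiv C$, $B\equiv D$, normalising, and testing $S({\sf A})\cap\{B,\bot\}\neq\emptyset$ on the polynomial-size completed classification graph (Lemma~\ref{lem:completion}), and the lower bound inherited from $\PTime$-hardness of $\EL$ subsumption, with Point~(1) transferring between total and partial interpretations. The extra bookkeeping you spell out (handling queried assertions, preservation of the answer under the reductions) is sound and matches what the paper leaves implicit.
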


The completed  
classification graph can be used to define
a polynomial size canonical model for an $\ELOud$ ontology (if
it is satisfiable).
%
Let $\Omc$ be in normal form 
and let $(V, S,R)$ be the
completed classification graph for \Omc and a concept name
${\mathsf A}$.
Consider the following sets:
    %
    \begin{align*}
      \Rmc_{{\mathsf A}} & = \{C\in \Bmc\Cmc_\Omc \mid {{\mathsf A}}\leadsto_R C\},\\
      \Rmc_{{\mathsf A}}^c & = \{C^c\in V\mid C\in \Rmc_{{\mathsf A}} \cap\NC\},
    \end{align*}
    over which we define the relation $\thicksim$, where 
  \begin{center}
  $C \thicksim D$ iff $C=D$ or $\{\tau\}\in S(C)\cap S(D)$, for
    some   $\tau$.
  \end{center}
  %
  %
  It can be seen that $\thicksim$ is an equivalence relation, whose
  equivalence classes are denoted by $\eq{C}$.
Assume  
$\bot\notin S({\mathsf A})$ (otherwise, by  Lemma~\ref{lem:completion}, no model of $\Omc$ and ${\mathsf A}$ exists). 
The \emph{polynomial size canonical model of $\Omc$ and ${\mathsf A}$} is the
partial interpretation $\fincanmod =(\Delta^{\fincanmod},
\cdot^{\fincanmod})$ such that: 
  \begin{itemize}
  \item $\Delta^{\fincanmod} =\{\eq{C} \mid C\in\Rmc_{{\mathsf A}} \cup \Rmc_{{\mathsf A}}^c \}$;
  \item $r^{\fincanmod} =\{(\eq{C},\eq{D}) \in \Delta^{\fincanmod} \times
    \Delta^{\fincanmod} \mid \exists D'\in [D]. r\in R(C,D')\}$, for all
    $r\in\NR\cap \Rmc_\Omc$;
  \item $D^{\fincanmod} = \{\eq{C}\in \Delta^{\fincanmod} \mid D\in S(C)\}$, for all $D\in \NC$;
  \item $a^{\fincanmod} =\eq{C}$, for some $C\in \Rmc_{{\mathsf A}}$, if $\{a\}\in S(C)$, for all $a\in\NI$.
    %
  \end{itemize}
  %


We are now ready to state the main property of the canonical
model, used in Section~\ref{sec:bdp}.
\begin{restatable}
  {theorem}{eloufincanmod}\label{theo:eloufincanmod} Let $\Omc$ be an $\ELOud$
  ontology in normal form and ${\mathsf A}$ a concept name satisfiable w.r.t. \Omc. Then $[{\mathsf A}] \in  C^{\fincanmod}$ iff
  $\Omc \models {\mathsf A} \sqsubseteq C$, for every $\ELOu$ concept $C$.
\end{restatable}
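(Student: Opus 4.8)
The plan is to prove the two directions of the biconditional separately, the right-to-left one being a direct consequence of $\fincanmod$ being a model of $\Omc$, and the left-to-right one requiring the ``canonicity'' of $\fincanmod$, which I would establish by a structural induction on $C$. First I would record that $\fincanmod\models\Omc$: this is exactly the model-construction half of the argument behind Lemma~\ref{lem:completion}, obtained by going through the normal-form CIs one by one and using that $\Delta^{\fincanmod}$ consists precisely of the $\leadsto_R$-successors of ${\mathsf A}$ together with their surviving copies, so that every completion rule guarded by a side condition of the form ${\mathsf A}\leadsto_R\cdot$ has fired at every point of the domain, that $\thicksim$ is an equivalence relation, and that every term nominal receives a singleton (or empty) interpretation. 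Since moreover ${\mathsf A}\in S({\mathsf A})$ and ${\mathsf A}\leadsto_R{\mathsf A}$, we have $[{\mathsf A}]\in{\mathsf A}^{\fincanmod}$; hence if $\Omc\models{\mathsf A}\sqsubseteq C$, then ${\mathsf A}^{\fincanmod}\subseteq C^{\fincanmod}$ and so $[{\mathsf A}]\in C^{\fincanmod}$.

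For the converse I would rely, besides $\fincanmod\models\Omc$, on two further facts about the completed classification graph, both routine strengthenings of Lemma~\ref{lem:completion} proved by the same reasoning; write $\widehat{E}$ for the concept name $E$ itself, for $\{a\}$ when $E=\{a\}$, and for the parent concept name $F$ when $E=F^{c}$. Fact (a): for every node $E$ of $\fincanmod$ and every $B\in\Bmc\Cmc_\Omc\cup\{\bot\}$, $[E]\in B^{\fincanmod}$ iff $\Omc\models\widehat{E}\sqsubseteq B$ (in particular $\bot\notin S(E)$ for every node $E$, since $\bot\notin S({\mathsf A})$). Fact (b): every $R$-edge, and more generally every $\leadsto_R$-path in $\fincanmod$ from ${\mathsf A}$ to a node $E$, is ``sound'', i.e.\ $\Omc\models\widehat{D_1}\sqsubseteq\exists r.\widehat{D_2}$ whenever $r\in R(D_1,D_2)$, and $\Omc\models{\mathsf A}\sqsubseteq\exists r_{1}.\cdots\exists r_{n}.\widehat{E}$ along such a path; this follows by chaining the CIs that triggered the relevant applications of rule ${\sf R}_2$, and it also shows that every $R$-edge target and every reachable node is itself a node of $\fincanmod$. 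Finally I would use the elementary validities $\exists r.C'\sqsubseteq\exists u.C'$ and $\exists r.(X\sqcap\exists u.C')\equiv\exists r.X\sqcap\exists u.C'$, which hold because $u$ is always interpreted as the full relation, so that every concept $\exists u.C'$ is ``global'' (its extension in any partial interpretation is the whole domain or empty).

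With these in place, I would prove, by induction on the $\ELOu$ concept $C$, that for every node $E$ of $\fincanmod$, $[E]\in C^{\fincanmod}$ iff $\K\models\widehat{E}\sqsubseteq C$, where $\K=\Omc\cup\{\top\sqsubseteq\exists u.{\mathsf A}\}$. Here $\fincanmod\models\K$; by fact (b), $\K\models\top\sqsubseteq\exists u.\widehat{E}$ for every node $E$, so every $\widehat{E}$ is globally non-empty under $\K$; and $\K$ has the same ${\mathsf A}$-subsumptions as $\Omc$, since any model $\Jmc$ of $\Omc$ with ${\mathsf A}^{\Jmc}\neq\emptyset$ already satisfies $\top\sqsubseteq\exists u.{\mathsf A}$ — so specialising the claim to $E={\mathsf A}$ gives the theorem. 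The base cases (concept names, nominals, $\top$, $\bot$) are fact (a), or make both sides false when the symbol does not occur in $\Omc$; conjunction is immediate; for $C=\exists r.C'$, the left-to-right direction uses the witnessing edge, fact (b) and the induction hypothesis at the successor node, while the right-to-left direction uses $\fincanmod\models\K$ and $[E]\in\widehat{E}^{\fincanmod}$; and for $C=\exists u.C'$ one observes that $[E]\in(\exists u.C')^{\fincanmod}$ iff $(C')^{\fincanmod}\neq\emptyset$ iff (induction hypothesis) $\K\models\widehat{D}\sqsubseteq C'$ for some node $D$, which by the globality of $\exists u.C'$ and the non-emptiness of every $\widehat{E}$ under $\K$ is equivalent to $\K\models\widehat{E}\sqsubseteq\exists u.C'$.

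I expect the main obstacle to be the universal role. In plain $\ELO$ the naive invariant ``$[E]\in C^{\fincanmod}$ iff $\Omc\models\widehat{E}\sqsubseteq C$'' already works, but it fails here for $C$ of the form $\exists u.C'$: its left-hand side is independent of $E$, whereas $\Omc\models\widehat{E}\sqsubseteq\exists u.C'$ is not, so one must carry the assumption ``${\mathsf A}$ is non-empty'' through the induction — the purpose of the extension $\K$ above — or, equivalently, first rewrite every $\ELOu$ concept into the form $X\sqcap\exists u.D_{1}\sqcap\dots\sqcap\exists u.D_{m}$ with $X$ and the $D_{i}$ in $\ELO$ and treat the global conjuncts separately. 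The remaining work — carefully establishing $\fincanmod\models\Omc$ and the strengthenings (a) and (b) of Lemma~\ref{lem:completion}, checking the reachability bookkeeping encoded in $\leadsto_R$, and verifying that each copy node $F^{c}$ behaves exactly like its parent $F$ with respect to $\ELOu$ concepts (which holds because $S(F^{c})=S(F)\cup\{F^{c}\}$ and $R(F^{c},\cdot)=R(F,\cdot)$) — largely reproduces bookkeeping already needed for Lemma~\ref{lem:completion}.
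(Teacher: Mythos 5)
Your proposal is, in substance, the paper's own proof: it identifies $\fincanmod$ with the model built in the ($\Rightarrow$) direction of Lemma~\ref{lem:completion}, gets the right-to-left direction from $\fincanmod\models\Omc$ and $[{\mathsf A}]\in{\mathsf A}^{\fincanmod}$, and proves the left-to-right direction by induction on $C$, driven by exactly your facts (a) and (b) (the paper's node-label characterisation together with Lemma~\ref{lem:invariantsELO}), with $\exists u$ handled via $\leadsto_R$-chains from ${\mathsf A}$, entailment relativised to models realising ${\mathsf A}$ (your $\K=\Omc\cup\{\top\sqsubseteq\exists u.{\mathsf A}\}$, the paper's $\Omc,{\mathsf A}\models$), and the final observation that $\Omc,{\mathsf A}\models{\mathsf A}\sqsubseteq C$ iff $\Omc\models{\mathsf A}\sqsubseteq C$.

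One correction is needed: facts (a) and (b) are false as stated with plain $\Omc$; they, too, must be relativised to $\K$ (equivalently, to $\Omc,{\mathsf A}\models$), not only your induction invariant. The rules ${\sf R}_2$, ${\sf R}'_3$, ${\sf R}_4$, ${\sf R}_{10}$ fire under guards of the form ${\mathsf A}\leadsto_R\cdot$, and their conclusions are sound only in models where ${\mathsf A}$ is non-empty. For instance, for $\Omc=\{{\mathsf A}\sqsubseteq\exists r.B_1,\ {\mathsf A}\sqsubseteq B_2,\ \exists u.B_2\sqsubseteq B_3\}$, rule ${\sf R}'_3$ puts $B_3$ into $S(B_1)$, so $[B_1]\in B_3^{\fincanmod}$, yet $\Omc\not\models B_1\sqsubseteq B_3$ (take $B_1$ non-empty and ${\mathsf A}$ empty); the edges added by ${\sf R}_2$ suffer the same problem, which is why the paper states Lemma~\ref{lem:invariantsELO} with $\Omc,{\mathsf A}\models$. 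The unrelativised equivalence holds only at the node ${\mathsf A}$ itself, which is Lemma~\ref{lem:completion} and is all the final step uses. Since your invariant is already phrased with $\K$, this is a local repair of the auxiliary statements rather than of the architecture; with it, your argument coincides with the paper's, the only remaining (harmless) difference being that you prove an ``iff'' invariant at all nodes including the copies $F^c$ (hence your $S(F^c)=S(F)\cup\{F^c\}$ observation), whereas the paper runs the induction only over the nodes in $\Rmc_{{\mathsf A}}$, which suffices because $[{\mathsf A}]$ is such a node and $r$-successors can always be chosen in $\Rmc_{{\mathsf A}}$.
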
 
We note that the equivalence is stated for $\ELOu$ concepts and not for
$\ELOud$ concepts. In fact, it is an interesting open problem whether
polynomial size canonical models exist that satisfy the equivalence for
$\ELOud$ concepts. 

\section{Bisimulations and Expressive Power}
\label{sec:alcoiotabisim}

Here we discuss the expressive power of free DLs. 
In particular,
we define a notion of bisimulation for \ALCOud
that we use
to characterise the expressive
power of concepts relative to
FO
formulas interpreted on partial
interpretations. 
The definitions are standard in the literature~\cite{AreEtAl01,Ten05,LutEtAl11}, but have to be adapted to partial interpretations and definite descriptions.

Let $\Int$ and $\Jmc$ be partial interpretations, and let
$\Sigma$
be a signature.
An \emph{$\ALCO(\Sigma)$ bisimulation} between $\Int$ and $\Jmc$
is a 
relation
$Z \subseteq \Delta^{\Int} \times \Delta^{\Jmc}$ such
that, for every $d\in \Delta^{\Int}$ and $e\in\Delta^{\Jmc}$ with $(d, e) \in Z$,
every concept name or nominal $X$ formulated within $\Sigma$, 
and every role name $r$ in $\Sigma$: 
%
(\textit{atom})
$d \in X^{\Imc}$ iff $e \in X^{\Jmc}$;
%
%
%
(\textit{forth})
if $(d, d') \in r^{\Int}$ then there is 
$e' \in \Delta^{\Jmc}$ such that $(e, e') \in r^{\Jmc}$ and
$(d', e') \in Z$; and 
%
(\textit{back})
if $(e, e') \in r^{\Jmc}$ then there is
$d' \in \Delta^{\Int}$ such that $(d, d') \in r^{\Int}$ and
$(d', e') \in Z$. 
For \emph{pointed partial interpretations} $(\Imc, d)$ and $ (\Jmc, e)$, we say that
$(\Imc, d)$ is \emph{$\ALCO(\Sigma)$ bisimilar} to $(\Jmc, e)$ and write
$(\Int, d) \sim^{\ALCO}_{\Sigma} (\Jmc, e)$
if there is an
$\ALCO(\Sigma)$ bisimulation $Z$ between $\Int$ and $\Jmc$
such that $(d, e) \in Z$.
$\ALCO(\Sigma)$ bisimulations characterise the expressive power of $\ALCO(\Sigma)$ concepts in the sense that an FO formula $\varphi$ is
preserved under $\ALCO(\Sigma)$ bisimulations iff it is equivalent to
an $\ALCO(\Sigma)$ concept. To characterise $\ALCOud(\Sigma)$ we
add a condition that reflects its ability to count up to one and also add
totality conditions
that reflect the addition of the universal role.

An \emph{$\ALCOud(\Sigma)$ bisimulation} $Z$ between \Imc and \Jmc is an 
$\ALCO(\Sigma)$ bisimulation
that is \emph{total}, meaning that $\Delta^\Imc$ and $\Delta^{\Jmc}$ are the domain and range of the relation, and that satisfies, for all $(d,e)\in Z$:
\begin{enumerate}
[label=$(\defdes.\arabic*)$, align=left, leftmargin=*, labelsep=0pt]
	\item[($\defdes$)] there exists $d' \in \Delta^{\Imc}$ such that $d \neq d'$ and $(\Imc, d) \sim^{\ALCO}_{\Sigma} (\Imc, d')$ iff there exists $e' \in \Delta^{\Jmc}$ such that $e' \neq e$ and $(\Jmc, e) \sim^{\ALCO}_{\Sigma} (\Jmc, e')$.
\end{enumerate}
We write $(\Imc, d) \sim^{\ALCOud}_{\Sigma} (\Jmc, e)$ if there exists
an $\ALCOud(\Sigma)$ bisimulation $Z$ between $\Imc$ and $\Jmc$ containing $(d,e)$, and we write $(\Imc, d) \equiv^{\ALCOud}_{\Sigma} (\Jmc, e)$ if
$d\in C^{\Imc}$ iff $e\in C^{\Jmc}$, for all $\ALCOud(\Sigma)$ concepts $C$. The definition of $\omega$-saturated partial interpretation is the obvious generalisation of that one for total interpretations \cite{ChaKei90}, and it is given in the appendix.
 

\begin{restatable}{theorem}{alcoiotabisimtoequiv}\label{thm:alcoiotabisimtoequiv}
	\label{thm:equivomegatobisim}
	For all signatures $\Sigma$ and all pointed partial interpretations $(\Imc,d)$ and $(\Jmc,e)$,
	\begin{enumerate}
		\item
		if
		$(\Imc, d) \sim^{\ALCOud}_{\Sigma} (\Jmc, e)$,
		then
		$(\Imc, d) \equiv^{\ALCOud}_{\Sigma} (\Jmc, e)$;
		\item
		if
		$(\Imc, d) \equiv^{\ALCOd}_{\Sigma} (\Jmc, e)$ and $\Int, \Jmc$ are $\omega$-saturated, then $(\Int, d) \sim^{\ALCOd}_{\Sigma} (\Jmc, e)$.
	\end{enumerate}
\end{restatable}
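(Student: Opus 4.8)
The plan is to prove the two parts separately, each by the standard Hennessy--Milner-style argument adapted to partial interpretations and to the counting-to-one feature. For part~(1), I would proceed by induction on the structure of an $\ALCOud(\Sigma)$ concept $C$, showing that if $(\Imc,d)\sim^{\ALCOud}_\Sigma(\Jmc,e)$ via a bisimulation $Z$, then $d\in C^\Imc$ iff $e\in C^\Jmc$. The base case for concept names and for nominals $\{\tau\}$ is exactly the (\textit{atom}) clause, noting that a nominal $\{a\}$ with $a\in\Sigma$ is one of the admissible $X$'s, and a nominal $\{\defdes B\}$ built from $\Sigma$-material needs the ($\defdes$) clause: $d\in\{\defdes B\}^\Imc$ means $B^\Imc=\{d\}$, i.e.\ $d\in B^\Imc$ and no $d'\neq d$ lies in $B^\Imc$; using totality of $Z$ together with the already-established equivalence for the simpler concept $B$ (here a careful ordering of the induction, or a simultaneous induction on $\ALCO(\Sigma)$ and $\ALCOud(\Sigma)$ concepts, is needed) plus clause~($\defdes$) transfers this to $e$. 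The Boolean cases are routine; the $\exists r.C$ case is the usual (\textit{forth})/(\textit{back}) argument; and the $\exists u.C$ case uses totality of $Z$ — every element of $\Delta^\Jmc$ is $Z$-related to something in $\Delta^\Imc$ and vice versa — so that a witness for $\exists u.C$ on one side yields one on the other.

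For part~(2), assume $(\Imc,d)\equiv^{\ALCOd}_\Sigma(\Jmc,e)$ with $\Imc,\Jmc$ $\omega$-saturated. I would set $Z=\{(d',e')\mid (\Imc,d')\equiv^{\ALCOd}_\Sigma(\Jmc,e')\}$ and verify the clauses of an $\ALCOd(\Sigma)$ bisimulation. The (\textit{atom}) clause is immediate from the definition of $Z$, since concept names and $\Sigma$-nominals (including $\{\defdes B\}$) are themselves $\ALCOd(\Sigma)$ concepts. For (\textit{forth}): given $(d',e')\in Z$ and $(d',d'')\in r^\Imc$, the set of $\ALCOd(\Sigma)$ concepts true at $d''$ is finitely satisfiable along $r$-successors of $e'$ (otherwise a finite conjunction would give a concept $\exists r.(C_1\sqcap\cdots\sqcap C_n)$ distinguishing $d'$ from $e'$), so $\omega$-saturation of $\Jmc$ produces an $r$-successor $e''$ of $e'$ realising the full type; hence $(d'',e'')\in Z$. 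The (\textit{back}) clause is symmetric. It remains to check clause~($\defdes$); note that here we need it formulated with $\sim^{\ALCO}_\Sigma$, and on $\omega$-saturated interpretations $\equiv^{\ALCO}_\Sigma$ coincides with $\sim^{\ALCO}_\Sigma$ (this is the classical Hennessy--Milner theorem for $\ALCO$, which I would invoke or re-derive by the same $\omega$-saturation argument). So the existence of $d'\neq d'$ — rather, of some $d''\neq d'$ with $d''$ and $d'$ satisfying the same $\ALCO(\Sigma)$ concepts — is equivalent to: for every $\ALCO(\Sigma)$ concept $C$ with $d'\in C^\Imc$, the concept $C$ has at least two realisations, equivalently $d'\notin\{\defdes C\}^\Imc$ for the relevant $C$'s; this is expressible by $\ALCOd(\Sigma)$ concepts of the form $\neg\{\defdes C\}$ or $C\sqcap\neg\{\defdes C\}$, so it transfers across $\equiv^{\ALCOd}_\Sigma$ from $d'$ to $e'$. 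One must be slightly careful: "there exists $d''\neq d'$ bisimilar to $d'$" is a priori an infinitary condition, but on $\omega$-saturated models it reduces to the finitary statement just described, which is exactly what lets clause~($\defdes$) be verified.

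The main obstacle I expect is clause~($\defdes$) in part~(2): unlike the other clauses, it is stated in terms of $\ALCO$-bisimilarity $\sim^{\ALCO}_\Sigma$, not in terms of the relation $Z$ we are building, so one has to connect the two. The clean way is to first establish, for $\omega$-saturated partial interpretations, the plain Hennessy--Milner theorem $(\Imc,d')\equiv^{\ALCO}_\Sigma(\Jmc,e')\iff(\Imc,d')\sim^{\ALCO}_\Sigma(\Jmc,e')$ (and similarly for self-pairs $(\Imc,d'),(\Imc,d'')$), and then observe that "some $d''\neq d'$ is $\ALCO(\Sigma)$-bisimilar to $d'$" is, on $\omega$-saturated models, equivalent to "$d'$ falsifies $\{\defdes C\}$ for some/every $\ALCO(\Sigma)$ concept $C$ true at $d'$", a property invariant under $\equiv^{\ALCOd}_\Sigma$. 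A secondary subtlety, already flagged above, is getting the induction in part~(1) well-founded despite the mutual recursion between terms and concepts: since definite descriptions $\defdes C$ only occur inside nominals $\{\defdes C\}$ whose body $C$ is a strictly smaller concept, ordinary structural induction on concepts still works, with clause~($\defdes$) invoked at the nominal step. Everything else is bookkeeping of the standard kind.
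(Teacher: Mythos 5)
Your part (1) has a genuine gap at the crucial case $C=\{\defdes D\}$. Writing out your sketch: with $(u,v)\in Z$, $D^{\Imc}=\{u\}$, the induction hypothesis gives $v\in D^{\Jmc}$; if some $v'\neq v$ lies in $D^{\Jmc}$, totality plus the induction hypothesis force its $Z$-predecessor to be $u$ itself, so $v$ and $v'$ are $\ALCO(\Sigma)$-bisimilar in $\Jmc$, and clause ($\defdes$) then yields some $u''\neq u$ with $(\Imc,u)\sim^{\ALCO}_{\Sigma}(\Imc,u'')$. At this point you say the "already-established equivalence for the simpler concept" plus ($\defdes$) "transfers this to $e$", but the induction hypothesis cannot be applied to $u''$: it speaks only about pairs in $Z$, i.e.\ across $\Imc$ and $\Jmc$, whereas $u''$ is merely $\ALCO(\Sigma)$-bisimilar to $u$ \emph{inside} $\Imc$, and $\ALCOud$ concepts (containing $\exists u$ or nested $\{\defdes\cdot\}$) are \emph{not} invariant under plain $\ALCO(\Sigma)$ bisimulations between two interpretations. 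What is needed to conclude $u''\in D^{\Imc}$ (and hence the contradiction with $D^{\Imc}=\{u\}$) is the separate fact, proved in the paper as Lemma~\ref{lemma:bisimnotuni} by its own induction, that two points of the \emph{same} partial interpretation related by an $\ALCO(\Sigma)$ bisimulation satisfy the same $\ALCOud(\Sigma)$ concepts — the proof exploits that $\exists u.C$ and $\{\defdes C\}$ are global within one interpretation. (Equivalently, one can observe that an $\ALCO(\Sigma)$ self-bisimulation on $\Imc$ extended with the identity is total and satisfies ($\defdes$), so a suitably strengthened induction hypothesis applies to it.) Some such extra step must be made explicit; it is not the induction-ordering issue you flag, nor "bookkeeping of the standard kind".

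Your part (2) is correct and close to the paper's proof, with one pleasant difference. The paper verifies ($\defdes$) by realising, via $\omega$-saturation in $\Jmc$, the set $\{\lnot(e=x)\}$ together with the full $\ALCOud(\Sigma)$ type of the bisimilar companion of $d$ (which again needs Lemma~\ref{lemma:bisimnotuni}), using $\lnot\{\defdes C_{0}\}$ to push "at least two realisations" across the equivalence, and then invokes the $\ALCO$ Hennessy--Milner property on $\omega$-saturated partial interpretations. Your finitary reformulation — on $\omega$-saturated interpretations, "has a distinct $\ALCO(\Sigma)$-bisimilar companion" is equivalent to "every $\ALCO(\Sigma)$ concept true at the point has at least two realisations", transferred via the $\ALCOd(\Sigma)$ concepts $C\sqcap\lnot\{\defdes C\}$ — achieves the same with only $\ALCO$ types, so it does not even need the single-interpretation lemma there; like the paper, it does rely on the Hennessy--Milner theorem for $\ALCO$ adapted to partial interpretations, which you correctly flag and which is routine.
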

The following example illustrates how bisimulations can be used to prove the inexpressibility of certain concepts.

\begin{example}
	The concept $C=\exists u.\{\defdes A\}$ states that the extension of $A$ has cardinality one,
	that is
	$d\in C^{\Imc}$ iff $|A^{\Imc} |=1$, for any interpretation $\Imc$ and $d\in \Delta^{\Imc}$.
	The concept $D = \exists u.(A \sqcap \lnot \{ \defdes A \})$
	is such that
	$d\in D^{\Imc}$ iff $|A^{\Imc} | \geq 2$,
	hence it states that $A$ has cardinality greater 
	or equal to two.
	However,
	there
	is no
	$\ALCOud$ concept stating that the extension of $A$ has cardinality two: the pointed interpretations $(\Imc,d)$ and $(\Jmc,e)$ depicted below are $\ALCOud(\{A\})$-bisimilar (witnessed by $Z=\Delta^{\Imc}\times \Delta^{\Jmc}$), but $|A^{\Imc}|=2<|A^{\Jmc}|$.
%
\end{example}  
%
%
\begin{figure}[h]
\centering
\begin{tikzpicture}
\tikzset{
dot/.style = {draw, fill=black, circle, inner sep=0pt, outer sep=0pt, minimum size=2pt}
}

\draw (-1.75,-0.15) node[label=north:$\mathcal{I}$] {};
\draw (0,0) node[dot, label=east:$A$] {};
\draw (-1,0) node[dot, label=east:$A$, label=west:$d$] {};

\draw  (0.75,0.25) -- (0.75,-0.25);

\draw (4.25,-0.15) node[label=north:$\mathcal{J}$] {};
\draw (1.35,0) node[dot, label=east:$A$,label=west:$e$] {};
\draw (2.35,0) node[dot, label=east:$A$] {};
\draw (3.35,0) node[dot, label=east:$A$] {};
\end{tikzpicture}
%
\label{fig:bisim}
\end{figure}

We next state that \ALCOud is the fragment of
FO
on partial interpretations that is invariant under $\ALCOud$-bisimulations.
The \emph{standard translation} of an $\ALCOud$ concept $C$ into an FO formula $\sttr{x}{C}$ with one free variable $x$ is defined as expected,
where
for nominals of the form
$\{\iota C\}$ we
set:
\begin{align*}
	\sttr{x}{\{ \defdes C \}} = & \ \exists x \sttr{x}{C} \land \forall x \forall y (\sttr{x}{C} \land \sttr{y}{C} \to x = y) \\
 & \land \forall y (\sttr{y}{C} \to x = y).
\end{align*}
An
FO
formula $\p(x)$ is
\emph{invariant under $\sim^{\ALCOud}_{\Sigma}$}
 iff, for every $(\Imc, d)$ and $(\Jmc, e)$ such that $(\Imc, d) \sim^{\ALCOud}_{\Sigma} (\Jmc, e)$, we have $\Imc \models \p(d)$ iff $\Jmc \models \p(e)$.
%
%

\begin{restatable}{theorem}{alcofirstorder}\label{thm:alcofirstorder}
Let $\Sigma$ be a signature, and let $\p(x)$
a first-order formula such that $\sig{\p(x)} \subseteq \Sigma$.
The following conditions are equivalent: 
\begin{enumerate}
	\item there exists an $\ALCOud(\Sigma)$ concept $C$ such that $\sttr{x}{C}$ is logically equivalent to $\p(x)$; 
	\item $\p(x)$ is invariant under $\sim^{\ALCOud}_{\Sigma}$.
\end{enumerate}
\end{restatable}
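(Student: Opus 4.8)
The plan is to run the standard model-theoretic argument behind van Benthem-style characterisation theorems --- form the set of $\ALCOud(\Sigma)$ consequences of $\p$, show it already entails $\p$ by passing to $\omega$-saturated models, and conclude by compactness --- adapting it to partial interpretations and to definite descriptions. Throughout I would view a partial interpretation as an ordinary relational first-order structure, rendering each individual name $a$ as a unary predicate constrained to hold of at most one element (exactly as in Point~(1) of Section~\ref{sec:observations}) and the universal role canonically; the resulting class of structures is first-order axiomatisable, so compactness applies and every structure has an $\omega$-saturated elementary extension. The implication $(1)\Rightarrow(2)$ is then immediate from Theorem~\ref{thm:alcoiotabisimtoequiv}(1): if $\sttr{x}{C}$ is logically equivalent to $\p(x)$ and $(\Imc,d)\sim^{\ALCOud}_{\Sigma}(\Jmc,e)$, then $(\Imc,d)\equiv^{\ALCOud}_{\Sigma}(\Jmc,e)$, so $\Imc\models\p(d)$ iff $\Jmc\models\p(e)$.

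For $(2)\Rightarrow(1)$, let $\mathit{Cons}_{\Sigma}(\p)$ be the set of all $\sttr{x}{C}$ for which $C$ is an $\ALCOud(\Sigma)$ concept and $\p(x)\models\sttr{x}{C}$. The first step is to prove $\mathit{Cons}_{\Sigma}(\p)\models\p(x)$. Fix $\Imc$ and $d\in\Delta^{\Imc}$ with $\Imc\models\mathit{Cons}_{\Sigma}(\p)[d]$, and let $t$ be the $\ALCOud(\Sigma)$-type of $(\Imc,d)$, i.e.\ the set of all $\sttr{x}{C}$ ($C$ an $\ALCOud(\Sigma)$ concept) satisfied by $d$ in $\Imc$; since $\ALCOud$ is closed under negation and $\sttr{x}{\lnot C}=\lnot\sttr{x}{C}$, this is a complete type. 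I would argue $t\cup\{\p(x)\}$ is consistent: otherwise compactness gives a finite $t_{0}\subseteq t$ with $\p(x)\models\lnot\bigwedge t_{0}$, and $\lnot\bigwedge t_{0}$ is equivalent to $\sttr{x}{D}$ for an $\ALCOud(\Sigma)$ concept $D$ (a negated conjunction of $\ALCOud(\Sigma)$ concepts), so $\sttr{x}{D}\in\mathit{Cons}_{\Sigma}(\p)$ and hence $\Imc\models\lnot\bigwedge t_{0}[d]$, contradicting $t_{0}\subseteq t$. So there are $(\Jmc,e)$ with $\Jmc\models t[e]$ and $\Jmc\models\p(e)$, and completeness of $t$ (again using closure under negation) gives $(\Imc,d)\equiv^{\ALCOud}_{\Sigma}(\Jmc,e)$. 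Replacing $\Imc$ and $\Jmc$ by $\omega$-saturated elementary extensions $\Imc^{\ast}$ and $\Jmc^{\ast}$ preserves all first-order formulas, hence all standard translations, so $(\Imc^{\ast},d)\equiv^{\ALCOud}_{\Sigma}(\Jmc^{\ast},e)$ and still $\Jmc^{\ast}\models\p(e)$.

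The core of the argument is to show that $Z=\{(d',e')\mid(\Imc^{\ast},d')\equiv^{\ALCOud}_{\Sigma}(\Jmc^{\ast},e')\}$ is an $\ALCOud(\Sigma)$-bisimulation, following the template of Theorem~\ref{thm:alcoiotabisimtoequiv}(2) but now for the full language. The (\textit{atom}) clause is trivial, and (\textit{forth}), (\textit{back}) and totality of $Z$ follow from $\omega$-saturation by the routine realisation-of-types argument, using $\sttr{x}{\exists r.C_{0}}$ to transfer $r$-successors and $\sttr{x}{\exists u.C_{0}}$ to transfer arbitrary witnesses of a concept across $Z$. The one genuinely new ingredient, which I expect to be the main obstacle, is the $(\defdes)$ clause. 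Here I would use that on $\omega$-saturated interpretations $\sim^{\ALCO}_{\Sigma}$ coincides with $\ALCO(\Sigma)$-equivalence (the Hennessy--Milner property for $\ALCO(\Sigma)$): then for $(d',e')\in Z$ the left-hand side of $(\defdes)$ \emph{fails} precisely when some $\ALCO(\Sigma)$ concept $C$ satisfies $C^{\Imc^{\ast}}=\{d'\}$; in that case $d'\in(\exists u.(C\sqcap\{\defdes C\}))^{\Imc^{\ast}}$, and since this is an $\ALCOud(\Sigma)$ concept with $d'\in C^{\Imc^{\ast}}$, the equivalence $(\Imc^{\ast},d')\equiv^{\ALCOud}_{\Sigma}(\Jmc^{\ast},e')$ forces $C^{\Jmc^{\ast}}=\{e'\}$, so the right-hand side of $(\defdes)$ fails as well; the other direction is symmetric. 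Hence $(\Imc^{\ast},d)\sim^{\ALCOud}_{\Sigma}(\Jmc^{\ast},e)$, and invariance of $\p$ under $\sim^{\ALCOud}_{\Sigma}$ gives $\Imc^{\ast}\models\p(d)$, hence $\Imc\models\p(d)$, which completes the proof of $\mathit{Cons}_{\Sigma}(\p)\models\p(x)$.

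Finally, by compactness finitely many $\sttr{x}{C_{1}},\dots,\sttr{x}{C_{k}}\in\mathit{Cons}_{\Sigma}(\p)$ already entail $\p(x)$; since conversely $\p(x)\models\sttr{x}{C_{i}}$ for each $i$, the formula $\p(x)$ is logically equivalent to $\sttr{x}{C_{1}\sqcap\cdots\sqcap C_{k}}$, so $C=C_{1}\sqcap\cdots\sqcap C_{k}$ is the desired $\ALCOud(\Sigma)$ concept. Apart from the $(\defdes)$ clause, the remaining work is bookkeeping: verifying that the first-order coding of partial interpretations (with partially denoting individual names and the canonical universal role) really does make compactness, $\omega$-saturated elementary extensions, and the $\ALCO(\Sigma)$ Hennessy--Milner property available without change.
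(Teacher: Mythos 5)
Your proposal is correct and follows the same global strategy as the paper: (1)$\Rightarrow$(2) via Theorem~\ref{thm:alcoiotabisimtoequiv}(1) and Lemma~\ref{lemma:standardtr}, and (2)$\Rightarrow$(1) via the set of $\ALCOud(\Sigma)$ consequences of $\p(x)$, compactness for the partial semantics, and passage to $\omega$-saturated models (the paper phrases this as a proof by contradiction and gets compactness and saturation from Theorem~\ref{thm:omegasatexist} directly for partial interpretations, where you instead propose re-encoding individual names as at-most-singleton unary predicates; both are fine, and your direct organisation of the argument is only cosmetically different). The one genuine divergence is the step from $\equiv^{\ALCOud}_{\Sigma}$ to $\sim^{\ALCOud}_{\Sigma}$ on $\omega$-saturated models: the paper simply invokes Theorem~\ref{thm:alcoiotabisimtoequiv}(2), whose appendix proof handles the ($\defdes$) clause through a realisability argument over first-order sets with a domain parameter, whereas you re-prove this step inline with an arguably slicker treatment of ($\defdes$): on $\omega$-saturated models the left-hand side of ($\defdes$) fails exactly when some $\ALCO(\Sigma)$ concept $C$ satisfies $C^{\Imc^{\ast}}=\{d'\}$, and then the $\ALCOud(\Sigma)$ concept $\{\defdes C\}$ (or your $\exists u.(C\sqcap\{\defdes C\})$) transports this uniqueness across $\equiv^{\ALCOud}_{\Sigma}$, so the right-hand side fails too. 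Two remarks: the ``precisely when'' is not a consequence of the Hennessy--Milner property alone; the direction from ``no distinct $\sim^{\ALCO}_{\Sigma}$-equivalent point'' to the existence of such a $C$ needs one more $\omega$-saturation step (the set $\{\sttr{x}{D}\mid D\in \tp^{\Imc^{\ast}}_{\ALCO(\Sigma)}(d')\}\cup\{\lnot(x=d')\}$ cannot be finitely realisable, so a finite conjunction already isolates $d'$), which you should spell out. Also, since the paper's proof of Theorem~\ref{thm:alcoiotabisimtoequiv}(2) in fact establishes the needed implication for full $\ALCOud$-equivalence (despite the $\ALCOd$ labelling in its statement), you could cite it and shorten your argument; conversely, your ($\defdes$) argument gives a self-contained and more elementary alternative to that part of the appendix proof.
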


%

We next consider $\ELOud$. In contrast to $\ALCOud$, we do not have a 
 model-characterisation of $\ELOud$ that generalises 
 the one for $\EL$~\cite{LutWol10,LutEtAl11}.  
   The fundamental problem is to constrain simulations (the basic notion used to characterise $\EL$) in such a way that they capture the expressivity of $\ELOud$ concepts.

To obtain
preliminary
results on
$\ELOud$
REs
in the next section, we remind the reader of the standard simulations between interpretations and how they characterise $\ELOu$.
A relation $Z \subseteq \Delta^{\Imc} \times \Delta^{\Jmc}$ is an \emph{$\ELO(\Sigma)$ simulation} from $\Imc$ to $\Jmc$ iff it satisfies~(\textit{atom$_{R}$}),
i.e.,
the `only if' direction of the Condition~(\textit{atom}), and the Condition~(\textit{forth}) given above.
An \emph{$\ELOu(\Sigma)$ simulation} from $\Imc$ to $\Jmc$ is an $\ELO(\Sigma)$ simulation from $\Imc$ to $\Jmc$ that
is \emph{left total}, meaning that $\Delta^{\Imc}$ is the domain of the relation.
%
We write $(\Imc, d) \simul^{\ELOu}_{\Sigma} (\Jmc, e)$ if there exists
an $\ELOu(\Sigma)$ simulation $Z$ from $\Imc$ to $\Jmc$ with $(d,e) \in Z$.
Given
a DL $\Lmc$,
a partial interpretation $\Imc$
with
$d \in \Delta^{\Imc}$, and a signature $\Sigma$,
we call the \emph{$\Lang(\Sigma)$ type of $d$ in $\Imc$} the set $\tp^{\Imc}_{\Lang(\Sigma)}(d)$ of
$\Lang(\Sigma)$ concepts $C$ such that
$d \in C^{\Imc}$.

\begin{restatable}{theorem}{elousimequiv}
\label{thm:simequiv}
\label{thm:elousimtoequiv}
\label{thm:elousimequiv}
For all signatures $\Sigma$ and all partial pointed interpretations $(\Imc,d)$ and $(\Jmc,e)$,
\begin{enumerate}
	\item
	if
	$(\Imc, d)  \simul^{\ELOu}_{\Sigma} (\Jmc, e)$,
 	then
$\tp^{\Imc}_{\ELOu(\Sigma)}(d) \! \subseteq  \! \tp^{\Jmc}_{\ELOu(\Sigma)}(e)$;
	\item
	if
	$\tp^{\Imc}_{\ELOu(\Sigma)}(d) \subseteq  \tp^{\Jmc}_{\ELOu(\Sigma)}(e)$ and $\Jmc$
	is
	$\omega$-saturated,
	then
	$(\Int, d) \simul^{\ELOu}_{\Sigma} (\Jmc, e)$.
\end{enumerate}
\end{restatable}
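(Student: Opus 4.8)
The plan is to prove Theorem~\ref{thm:elousimequiv} by adapting the classical Hennessy--Milner-style correspondence between simulations and preservation of positive existential formulas, now carried out at the level of $\ELOu$ concepts and adapted to partial interpretations. The two directions are of very different difficulty: (1) is a routine structural induction, while (2) requires the $\omega$-saturation hypothesis to build the simulation.

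For part~(1), I would proceed by induction on the structure of an $\ELOu(\Sigma)$ concept $C$, showing that $(\Imc,d)\simul^{\ELOu}_{\Sigma}(\Jmc,e)$ and $d\in C^{\Imc}$ imply $e\in C^{\Jmc}$. The base cases are $\top$ (trivial), concept names and nominals $X$ in $\Sigma$ (here \emph{atom$_{R}$} gives exactly the `only if' direction needed), and $\bot$ (vacuous). For conjunction the claim is immediate from the induction hypothesis applied to both conjuncts. For $\exists r.C'$ with $r\in\Sigma$: if $d\in(\exists r.C')^{\Imc}$, pick $d'$ with $(d,d')\in r^{\Imc}$ and $d'\in C'^{\Imc}$; \emph{forth} yields $e'$ with $(e,e')\in r^{\Jmc}$ and $(d',e')\in Z$, and the induction hypothesis gives $e'\in C'^{\Jmc}$, so $e\in(\exists r.C')^{\Jmc}$. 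For $\exists u.C'$: if $d\in(\exists u.C')^{\Imc}$ there is some $d'\in C'^{\Imc}$, and since $Z$ is left total there is $e'$ with $(d',e')\in Z$; as $u^{\Jmc}=\Delta^{\Jmc}\times\Delta^{\Jmc}$ we get $(e,e')\in u^{\Jmc}$, and the induction hypothesis gives $e'\in C'^{\Jmc}$. One subtlety worth a remark: nominals $\{\tau\}$ with $\tau=\iota B$ are not $\Sigma$-atoms unless $B\in\NC\cap\Sigma$ makes $\{\iota B\}$ count as a nominal formulated within $\Sigma$; I would check that the notion of `nominal $X$ formulated within $\Sigma$' in the $\ELO(\Sigma)$ simulation condition is read so that \emph{atom$_{R}$} covers exactly these, matching the $\ALCO$ case.

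For part~(2), assume $\tp^{\Imc}_{\ELOu(\Sigma)}(d)\subseteq\tp^{\Jmc}_{\ELOu(\Sigma)}(e)$ with $\Jmc$ $\omega$-saturated. I would define
\[
Z=\{(d',e')\in\Delta^{\Imc}\times\Delta^{\Jmc}\mid \tp^{\Imc}_{\ELOu(\Sigma)}(d')\subseteq\tp^{\Jmc}_{\ELOu(\Sigma)}(e')\}
\]
and verify the three conditions. Condition~\emph{atom$_{R}$} is immediate since each $\Sigma$-atom is an $\ELOu(\Sigma)$ concept. For left totality: given any $d'\in\Delta^{\Imc}$, the set of concepts $\{\sttr{x}{C}\mid C\in\tp^{\Imc}_{\ELOu(\Sigma)}(d')\}$ together with the knowledge that $\exists u.C$ holds at $e$ whenever $C$ holds somewhere in $\Imc$ shows that this type is finitely satisfiable in $\Jmc$ (using $\exists u$ to transport each finite conjunction from $d'$ to some point reachable from $e$); $\omega$-saturation then yields a realizing $e'$, giving $(d',e')\in Z$. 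For \emph{forth}: given $(d',e')\in Z$ and $(d',d'')\in r^{\Imc}$ with $r\in\Sigma$, consider the type $p(y)=\{\sttr{y}{C}\mid C\in\tp^{\Imc}_{\ELOu(\Sigma)}(d'')\}\cup\{r(x,y)\}$ over the parameter $e'$; every finite subset is realized in $\Jmc$ because the finite conjunction $C_1\sqcap\cdots\sqcap C_k$ in $d''$'s type makes $\exists r.(C_1\sqcap\cdots\sqcap C_k)$ an $\ELOu(\Sigma)$ concept true at $d'$, hence (since $(d',e')\in Z$) true at $e'$; $\omega$-saturation produces $e''$ with $(e',e'')\in r^{\Jmc}$ and $\tp^{\Imc}_{\ELOu(\Sigma)}(d'')\subseteq\tp^{\Jmc}_{\ELOu(\Sigma)}(e'')$, so $(d'',e'')\in Z$.

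The main obstacle I anticipate is the handling of the universal role $u$ together with partial interpretations in the left-totality argument. Unlike the plain $\ELO$ case, where one only needs to match $r$-successors, here $u$ forces $Z$ to cover \emph{all} of $\Delta^{\Imc}$, and one must be careful that the `$\exists u$' transport argument genuinely shows finite satisfiability of an arbitrary point's type in $\Jmc$: a finite conjunction of $\ELOu(\Sigma)$ concepts realized at $d'$ yields $\exists u.(\text{conjunction})\in\tp^{\Imc}_{\ELOu(\Sigma)}(d)$, which by hypothesis lies in $\tp^{\Jmc}_{\ELOu(\Sigma)}(e)$, so the conjunction is realized somewhere in $\Jmc$; $\omega$-saturation of $\Jmc$ (applied to $1$-types with no parameters, or equivalently over the single parameter $e$) then gives a single point realizing the whole type. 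I would also double-check that partiality of $\cdot^{\Imc}$ on individual names causes no trouble: a nominal $\{a\}$ with $a$ non-denoting has empty extension in $\Imc$, so it simply never appears in any type $\tp^{\Imc}_{\ELOu(\Sigma)}(d')$, and the argument goes through unchanged. These checks are the only genuinely new ingredients beyond the textbook $\EL$/$\ELO$ simulation theorems cited in the paper.
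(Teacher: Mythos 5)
Your proposal is correct and follows essentially the same route as the paper: part (1) by structural induction with left-totality handling the $\exists u$ case, and part (2) by taking $Z$ to be the type-inclusion relation and verifying (\textit{atom}$_R$), (\textit{forth}) via $\omega$-saturation with a parameter, and left-totality via the $\exists u$ transport of finite conjunctions plus $\omega$-saturation. The only (immaterial) differences are that the paper argues left-totality by contradiction rather than directly and delegates the (\textit{atom}$_R$)/(\textit{forth}) cases to the known $\EL$ argument, and your worry about $\{\iota B\}$ nominals is moot since $\ELOu$ types contain no definite descriptions.
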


\section{Existence of Referring Expressions}
\label{sec:bdp}


One of the main motivations for enriching DLs with definite descriptions 
comes from the observation 
that individual names used in databases, ontologies, or other forms of
KBs,
are very often completely
meaningless
to
the human user~\cite{BorEtAl16}. 
Introducing semantically meaningful referring expressions (REs)
in addition to
individual names via ontologies enables a more informative naming of individuals, and thus a more 
user-friendly modelling of domains.
In this section, we address the problem of providing support for the generation of such expressions for individual names that
occur in an ontology.
Thus,
for
an individual name $a$ and an ontology $\Omc$, the goal is to support 
the generation of a concept $C$ so that
$$
\Omc \models \{a\} \equiv C,
$$
if such a concept exists. One can then replace $a$ by $\defdes C$ in the 
ontology or add an explicit definition of $\{a\}$ to the ontology 
and possibly remove other inclusions that become redundant 
(see~\cite{TenEtAl06,DBLP:journals/jair/CateFS13} 
for this approach applied to concept names rather than individual names).
Other than
being used
to improve an ontology,
REs may also be regarded as answer to queries about individuals~\cite{BorEtAl16}.

To support the targeted generation of
REs,
we consider two types of restrictions on $C$:
$(i)$
the restriction of the signature of $C$ to some subset $\Sigma$ of the signature of $\Omc$;
and
$(ii)$
restrictions
on
the DL constructors used in $C$.
As an
initial
step,
we focus on the
DLs
introduced in this paper and on the complexity of deciding the existence of
an RE.
We also discuss briefly what happens if one admits
FO
formulas as
REs.
The algorithm deciding the existence of
REs
can then inform the development of
a
generating algorithm
for REs,
although
this is
beyond the scope of
the present
contribution. 

Formally, given a pair $(\Lmc,\Lmc_{R})$ of logics, \emph{$(\Lmc,\Lmc_{R})$
RE
existence} is the problem of deciding, for an $\Lmc$ ontology $\Omc$, an individual name $a$, and a signature $\Sigma$, 
whether \emph{$a$ is explicitly $\Lmc_{R}(\Sigma)$ definable under $\Omc$}, that is,
whether
there exists an $\Lmc_{R}(\Sigma)$ concept $C$ such that $\Omc \models \{a \} \equiv C$.
If $\Lmc_{R}=\text{FO}$, then we ask whether there is an FO formula $\varphi(x)$ over $\Sigma$ such that $\Omc \models \forall x ((x=a) \leftrightarrow \varphi(x))$.
Such a concept or formula is called an \emph{$\Lmc_{R}(\Sigma)$ 
RE
for $a$ under $\Omc$}.
If $\Lmc=\Lmc_{R}$, then we simply speak of \emph{$\Lmc$
RE
existence}.
\begin{theorem}\label{thm:mainrefexp} On partial and total interpretations:
	\begin{enumerate}
   \item $(\ALCOud,\text{FO})$
   	RE
	existence is \ExpTime-complete;
   
   \item $(\ELOud,\text{FO})$
      RE
      existence is in \PTime;
   
   \item $\ALCOud$
   	RE
	existence is 2\ExpTime-complete;

   \item $(\ALCOud,\ELOud)$
   RE
   existence is undecidable;

   \item $\ELOud$
   RE
   existence is in \PTime, for individuals $a$ that denote w.r.t.~the ontology.
   \end{enumerate}
\end{theorem}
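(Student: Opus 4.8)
The plan is to treat the five parts using a combination of the projective Beth definability property, the complexity results and bisimulation characterisation established earlier, and (for the undecidability item) a reduction from a known undecidable problem for DLs.

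For items (1) and (2), I would use the fact that FO (and hence FO on partial interpretations, via the translation in Point~(1) of Section~\ref{sec:observations}) has the projective Beth definability property: an individual name $a$ is explicitly FO$(\Sigma)$-definable under $\Omc$ iff it is implicitly so, i.e.\ iff $\Omc \cup \Omc' \models \forall x\,(x = a \leftrightarrow x = a')$, where $\Omc'$ is a copy of $\Omc$ with all non-$\Sigma$ symbols renamed and $a$ renamed to $a'$ (and $a$ itself replaced, via the flattening trick of Point~(1), by a fresh concept name so that both copies are about partial interpretations in a uniform way). Implicit definability is a (non-)entailment check in the base logic: for (1) the combined ontology is still $\ALCOud$ and entailment there is \ExpTime-complete by Theorem~2 and Point~(2), giving the \ExpTime\ upper bound; the matching lower bound is by reducing $\ALCOud$ concept (un)satisfiability to RE existence with a trivial signature. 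For (2) the combined ontology is $\ELOud$, entailment is in \PTime\ by Theorem~\ref{th-eloud-sat}, and one has to check that the Beth argument still produces an FO RE (not necessarily an $\ELOud$ one) — this is fine because $\Lmc_R = \text{FO}$. Note no denotation assumption is needed here since the definability reduction is about FO formulas, which handle non-denotation directly.

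For item (3), the $2$\ExpTime\ upper bound is the interesting direction: an $\ALCOud(\Sigma)$ RE exists iff the FO RE provided by the Beth argument happens to be equivalent, under $\Omc$, to some $\ALCOud(\Sigma)$ concept, and by the bisimulation characterisation (Theorems~\ref{thm:alcoiotabisimtoequiv} and~\ref{thm:alcofirstorder}) this is equivalent to: the models of $\Omc$ cannot distinguish $a$ from anything else by an $\ALCOud(\Sigma)$-bisimulation-invariant property except the one pinned down by $a$. Concretely I would phrase it as an $\ALCOud$-bisimulation-invariance condition and decide it by a product/tree-automaton construction over models of $\Omc$ — the doubly exponential blow-up coming from the need to also track the auxiliary $\sim^{\ALCO}_\Sigma$-classes used in the $(\defdes)$ clause (counting up to one). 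The lower bound would be by reduction from a $2$\ExpTime-hard problem such as concept satisfiability in a suitably expressive DL or from the known $2$\ExpTime-hardness of projective definability/interpolant existence for $\ALCO$-type logics. The main obstacle is getting the automaton bookkeeping for the counting-to-one condition right while staying in $2$\ExpTime.

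For item (4), I would reduce an undecidable problem — a natural candidate is the undecidability of conservative extensions / definability for $\ALC$ with the target language restricted, or directly a tiling/Turing-machine encoding — using the mismatch between an $\ALC(O)^\iota_u$ ontology and an $\ELOud$ target RE: the $\ELOud$ RE must be simulation-invariant (Theorem~\ref{thm:elousimequiv}) yet the ontology can enforce $\ALC$-style universal constraints, and the interaction is expressive enough to be undecidable. For item (5), with the standing assumption that $a$ denotes in every model, I would run the $\ELOud$ completion algorithm and build the polynomial-size canonical model $\fincanmod$ of $\Omc$ with ${\sf A} \equiv \{a\}$ (Theorem~\ref{theo:eloufincanmod}); then $a$ has an $\ELOud(\Sigma)$ RE iff the $\ELOu(\Sigma)$-type of $[a]$ in $\fincanmod$, rewritten back using Point~(3) of Section~\ref{sec:observations} to eliminate definite descriptions in favour of their bodies, already forces $[a]$ to be the unique element — checkable by testing, for each other node $[C]$, whether there is an $\ELOu(\Sigma)$-simulation from $\fincanmod$ into itself relating $[a]$ to $[C]$, all in polynomial time. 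The denotation assumption is exactly what lets Theorem~\ref{theo:eloufincanmod} (stated for $\ELOu$ concepts) suffice here; without it the canonical-model equivalence for $\ELOud$ concepts is the open problem flagged after that theorem, which is why the complexity is left open in that case.
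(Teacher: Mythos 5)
Your handling of Points~(1) and~(2) matches the paper (projective Beth definability of FO on partial interpretations plus the \ExpTime{}/\PTime{} entailment bounds), but your argument for Point~(5) contains a genuine error. You propose to decide RE existence by testing, for each node $[C]\neq[{\mathsf A}]$ of the canonical model $\fincanmod$, whether some $\ELOu(\Sigma)$ simulation from $\fincanmod$ \emph{into itself} relates $[{\mathsf A}]$ to $[C]$. The correct criterion (Lemma~\ref{thm:elourefexprchar}) quantifies over arbitrary models: no RE exists iff there are a model $\Jmc$ of $\Omc$ and $e\neq a^{\Jmc}$ with $(\fincanmod,a^{\fincanmod})\simul^{\ELOu}_{\Sigma}(\Jmc,e)$, and the witness $e$ need not correspond to any node of $\fincanmod$. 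Concretely, take $\Omc=\{\,\{a\}\sqsubseteq Q_1,\ \{a\}\sqsubseteq Q_2,\ \top\sqsubseteq\exists u.\{a\}\,\}$ (so $a$ denotes w.r.t.\ $\Omc$) and $\Sigma=\{Q_1,Q_2\}$: no $\ELOu(\Sigma)$ (hence no $\ELOud(\Sigma)$) RE for $a$ exists, witnessed by a model with a second element in $Q_1\sqcap Q_2$; yet in $\fincanmod$ the only node lying in both $Q_1^{\fincanmod}$ and $Q_2^{\fincanmod}$ is $[{\mathsf A}]$ itself (the copy nodes $[Q_1^{c}]$ and $[Q_2^{c}]$ each carry only one of the two names), so your test finds no simulating node and wrongly reports that an RE exists. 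The paper closes exactly this gap by introducing the $\ELOu(\Sigma)$ diagram $\Dmc_{\fincanmod}$ of the canonical model and reducing Condition~(2) of Lemma~\ref{thm:elourefexprchar} to the non-entailment $\Omc\cup\Dmc_{\fincanmod}\not\models X_{a^{\fincanmod}}\sqsubseteq\{a\}$, which is decided by the \PTime{} completion algorithm and thereby quantifies over all models rather than over $\fincanmod$ alone.

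Points~(3) and~(4) are also not established by the proposal. For~(3), the paper first proves Lemma~\ref{thm:expdefchar} (RE existence is the complement of joint consistency of $\Omc,\{a\}$ and $\Omc,\neg\{a\}$ modulo $\ALCOud(\Sigma)$ bisimulations) and then decides joint consistency by a mosaic/type-elimination procedure whose entire technical content is the bookkeeping for nominals and for Condition~($\defdes$) (the ``good for definite descriptions'' constraints); your automaton sketch defers precisely this step, so the upper bound is not proved, and the lower bound is only gestured at (the paper adapts the $\ALCOu$ hardness construction of~\cite{ArtEtAl21}). For~(4) you offer no reduction, only the expectation that the $\ALCOud$/$\ELOud$ mismatch ``is expressive enough''; the paper's proof is a specific reduction from the undecidable comparison of certain $\EL(\Sigma)$-concept entailments between an $\ALC$ and an $\EL$ KB from~\cite{BotEtAl19}, using relativisation, two individuals $a,b$, the axiom $D_{a,b}\equiv\{a\}\sqcup\{b\}$, and additional CIs blocking the relevant concepts at $b$. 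Moreover, your justification there appeals to ``the $\ELOud$ RE must be simulation-invariant (Theorem~\ref{thm:elousimequiv})'', which is incorrect: that theorem characterises $\ELOu$, and $\ELOud$ concepts involving definite descriptions are not preserved under simulations --- the paper explicitly notes that no model-theoretic characterisation of $\ELOud$ is available.
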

%
%
%
%
%
We first comment on Points~(1) and (2), which are consequences of the fact that
FO has the
projective Beth definability property
on total and partial interpretations, and the \ExpTime{} and \PTime{} upper bounds for reasoning 
in \ALCOud and \ELOud, respectively. In detail, we say that an individual name $a$
is \emph{implicitly definable} from a signature $\Sigma$ under an ontology $\Omc$ if for all (partial or total, respectively) models $\Int$ and $\Jmc$ of $\Omc$ such that $\Delta^{\Int} = \Delta^{\Jmc}$ and $X^{\Int} = X^{\Jmc}$, for all $X \in \Sigma$, we have
$a^{\Int} = a^{\Jmc}$. 
Clearly, if $a$ is explicitly $\Lmc(\Sigma)$-definable under $\Omc$, then it 
is implicitly definable from $\Sigma$ under $\Omc$.
We say that $\Lmc$ has the
\emph{projective Beth definability property (PBDP) for individuals}
if the converse holds.
FO is
known to have the PBDP for individuals (and, in fact, for arbitrary relations) under total interpretations~\cite{ChaKei90}.
It is straightforward to extend
 the model-theoretic proof given, e.g., in~\cite{ChaKei90} to show that FO enjoys the PBDP for individuals (and, again, arbitrary relations) under partial interpretations as well. 
From an algorithmic viewpoint, the PBDP is important because it implies that explicit definability can be checked using implicit definability, and the latter reduces to a standard reasoning problem: $a$ is implicitly definable from $\Sigma$ under $\Omc$ iff $\Omc \cup \Omc' \models \{a\} \equiv \{a'\}$, where $\Omc'$ is obtained from $\Omc$ by
uniformly substituting every symbol $X$ not in $\Sigma$ by a fresh symbol $X'$.
We
thus proved Points~(1) and (2). 

FO is arguably too expressive as a language for
REs
for individuals in DL ontologies, and Points (3) to (5) address the problem of
finding
REs
within the DLs considered in this paper.
Unfortunately, $\ALCOud$ does not enjoy the PBDP for individuals.
\begin{example}\label{exm:jiuntcons}
Let $\Sigma = \{A,r\}$ and assume that $\Omc$ consists of the following CIs:
\begin{align*}
\{a\} \sqsubseteq & \ \exists r.\{a\},
\\
\neg \{a\} \sqcap A \sqsubseteq \ & \forall r.(\neg \{a\} \Rightarrow \neg A), \\
\neg \{a\}\sqcap \neg A \sqsubseteq \ & \forall r.(\neg \{a\} \Rightarrow A).
\end{align*}
Then $a$ is implicitly definable from $\Sigma$ under $\Omc$ since $\Omc\models \forall x ((x=a) \leftrightarrow r(x,x))$.
However, the figure depicted below (for which we assume $a^{\Imc}=a$) 
shows
a model $\Imc$ of $\Omc$ such that $(\Imc, a^{\Imc}) \sim^{\ALCOud}_{\Sigma} (\Imc, e)$, with $e \neq a^{\Imc}$. Thus, any $\ALCOud(\Sigma)$ concept that applies to $a^{\Imc}$ applies to $e$ in $\Imc$ and so $a$ cannot be explicitly
$\ALCOud(\Sigma)$ definable under $\Omc$.
	

\begin{figure}[h]
\centering
\begin{tikzpicture}
\tikzset{
dot/.style = {draw, fill=black, circle, inner sep=0pt, outer sep=0pt, minimum size=2pt}
}

\draw (-0.75,0.5) node[label=west:$\mathcal{I}$] {};
\node (a) at (1.5,0) {};
\draw (a) node[dot, label=south:$a$, label=north:$A$] {};
\node (b) at (-0.5,0) {};
\draw (b) node[dot, label=north:$A$,  label=south:$e$] {};

\draw[->, >=stealth]  (a) edge [loop right] node[] {$r$} ();
\path[->, >=stealth, bend right] (a) edge (b);
\path[->, >=stealth, bend right] (b) edge (a);
\node (x) at (0.5,0.5) {$r$};
\node (y) at (0.5,-0.5) {$r$};

%
%


\end{tikzpicture}
\label{fig:nopbdp}
\end{figure}

\end{example}
We now come to the proof of (3). We adapt the proof given in~\cite{ArtEtAl21}
that
RE
existence in $\ALCOu$ is \TwoExpTime-complete. The proof is based on the bisimulation characterisation given in the previous section. The changes required to the proof in~\cite{ArtEtAl21} are subtle and non-trivial, however, as one now has to take care of the Condition ($\defdes$) for definite descriptions.
Assume $\Omc$, $\Sigma$, and an individual $a$ are given. Then we say that $\Omc,\{a\}$ and $\Omc,\neg\{a\}$ are \emph{jointly consistent modulo $\ALCOud(\Sigma)$ bisimulations} iff there exist models $\Imc$ and $\Jmc$ of $\Omc$ and $e\in \Delta^{\Jmc}$ such that $e\not=a^{\Jmc}$ and  $(\Imc,a^{\Imc}) \sim^{\ALCOud}_{\Sigma} (\Jmc,e)$.
Example~\ref{exm:jiuntcons} above illustrates this definition. The following
lemma reduces the
RE
existence problem to the problem of deciding the complement of joint consistency modulo $\ALCOud(\Sigma)$
bisimulations.
\begin{restatable}{lemma}{expdefchar}
	\label{thm:expdefchar}
	Let $\Omc$ be an $\ALCOud$ ontology, $a$ an individual name, and $\Sigma$ a signature. Then the following conditions are equivalent:
	\begin{enumerate}
		\item there exists an $\ALCOud(\Sigma)$ RE for $a$ under $\Omc$;
		\item $\Omc,\{a\}$ and $\Omc, \lnot \{a\}$ are not jointly consistent modulo $\ALCOud(\Sigma)$ bisimulations.
	\end{enumerate}
\end{restatable}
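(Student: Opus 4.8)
The plan is to prove the two implications separately, with the bisimulation characterisation of Theorem~\ref{thm:alcoiotabisimtoequiv} as the main semantic tool, combined with routine first-order compactness and $\omega$-saturation arguments (which apply to partial interpretations through the encodings of Section~\ref{sec:observations}). Throughout I would read ``$e\neq a^{\Jmc}$'' as ``$e\notin\{a\}^{\Jmc}$'', which also covers the case where $a$ fails to denote in $\Jmc$.

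For $(1)\Rightarrow(2)$ the argument is short: assume $C$ is an $\ALCOud(\Sigma)$ RE for $a$ under $\Omc$, so $\sig{C}\subseteq\Sigma$ and $\Omc\models\{a\}\equiv C$, and suppose toward a contradiction that there are $\Imc,\Jmc\models\Omc$ and $e\in\Delta^{\Jmc}$ with $e\notin\{a\}^{\Jmc}$ and $(\Imc,a^{\Imc})\sim^{\ALCOud}_{\Sigma}(\Jmc,e)$. Theorem~\ref{thm:alcoiotabisimtoequiv}(1) gives $a^{\Imc}\in C^{\Imc}$ iff $e\in C^{\Jmc}$; since $a$ denotes in $\Imc$ we have $a^{\Imc}\in\{a\}^{\Imc}=C^{\Imc}$, hence $e\in C^{\Jmc}=\{a\}^{\Jmc}$, contradicting $e\notin\{a\}^{\Jmc}$.

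For $(2)\Rightarrow(1)$ I would argue by contraposition, building models that witness joint consistency modulo $\ALCOud(\Sigma)$ bisimulations from the assumption that no $\ALCOud(\Sigma)$ RE for $a$ under $\Omc$ exists. Let $t$ be the set of $\ALCOud(\Sigma)$ concepts $D$ with $\Omc\models\{a\}\sqsubseteq D$; it is closed under conjunction, and ``no RE'' is equivalent to $\Omc\not\models(\bigsqcap\Delta)\sqsubseteq\{a\}$ for every finite $\Delta\subseteq t$, since any such $\bigsqcap\Delta$ would otherwise be an RE. By compactness there are $\Jmc\models\Omc$ and $e\in\Delta^{\Jmc}$ with $e\in D^{\Jmc}$ for all $D\in t$ and $e\notin\{a\}^{\Jmc}$; passing to an $\omega$-saturated elementary extension I may take $\Jmc$ to be $\omega$-saturated, and I set $p=\tp^{\Jmc}_{\ALCOud(\Sigma)}(e)$, a complete $\ALCOud(\Sigma)$-type containing $t$. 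Next I would realise $p$ at $a$ in a model of $\Omc$: the first-order theory consisting of $\Omc$, the axiom $\top\sqsubseteq\exists u.\{a\}$, and the sentences asserting $a\in D$ for each $D\in p$ is finitely satisfiable --- if a finite part with conjunction $D^{\sharp}\in p$ were not, then $\Omc\models\{a\}\sqsubseteq\lnot D^{\sharp}$, hence $\lnot D^{\sharp}\in t\subseteq p$, contradicting the consistency of $p$ --- so there is $\Imc\models\Omc$ in which $a$ denotes and $a^{\Imc}$ realises $p$, and, passing once more to an $\omega$-saturated elementary extension, I may assume $\Imc$ is $\omega$-saturated with $\tp^{\Imc}_{\ALCOud(\Sigma)}(a^{\Imc})=p=\tp^{\Jmc}_{\ALCOud(\Sigma)}(e)$. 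Thus $(\Imc,a^{\Imc})\equiv^{\ALCOud}_{\Sigma}(\Jmc,e)$ with both interpretations $\omega$-saturated, and the $\ALCOud$-version of the saturated half of the bisimulation characterisation yields $(\Imc,a^{\Imc})\sim^{\ALCOud}_{\Sigma}(\Jmc,e)$; since $e\notin\{a\}^{\Jmc}$, the pair $\Imc,\Jmc$ then witnesses joint consistency, which is what we need.

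The hard part will be exactly this last appeal to the $\ALCOud$-version of Theorem~\ref{thm:alcoiotabisimtoequiv}(2): the theorem as stated produces an $\ALCO(\Sigma)$-bisimulation satisfying Condition~$(\defdes)$ that need not be total, so one must (i) extend it to a total relation and (ii) check that $(\defdes)$ still holds at every pair. Step~(i) is routine --- $p$ records, via the concepts $\exists u.E$, which $\ALCO(\Sigma)$ concepts are satisfiable in $\Imc$ and in $\Jmc$, so in $\omega$-saturated models an $\ALCO(\Sigma)$-type realised on one side is realised on the other. Step~(ii) is the delicate one: in $\omega$-saturated models ``$d$ has a distinct $\ALCO(\Sigma)$-bisimilar companion'' means ``$d$'s $\ALCO(\Sigma)$-type is realised at least twice'', and transferring this along the relation forces one to exploit the definite-description concepts $\{\defdes E\}$ and $\exists u.\{\defdes E\}$ in $p$ (which encode whether $|E^{\Imc}|=1$) and to choose the $\omega$-saturated models carefully so that realised $\ALCO(\Sigma)$-types carry matching multiplicity information. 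This is precisely the point flagged in the paper as subtle and non-trivial relative to the $\ALCOu$ argument adapted from~\cite{ArtEtAl21}.
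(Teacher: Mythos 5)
Your proof is correct and follows essentially the same route as the paper's: bisimulation invariance (Theorem~\ref{thm:alcoiotabisimtoequiv}, Point~(1)) for $(1)\Rightarrow(2)$, and, contrapositively, the set of entailed $\ALCOud(\Sigma)$ concepts $\{D \mid \Omc\models\{a\}\sqsubseteq D\}$, compactness on partial interpretations to obtain $(\Jmc,e)$ with $e\neq a^{\Jmc}$ realising it, a second compactness step to realise $\tp^{\Jmc}_{\ALCOud(\Sigma)}(e)$ at a denoting $a$ in a model of $\Omc$, and then $\omega$-saturation together with the converse bisimulation direction. The ``hard part'' you flag is already discharged by the paper's proof of Theorem~\ref{thm:alcoiotabisimtoequiv}, Point~(2), which shows that between $\omega$-saturated partial interpretations the relation $\equiv^{\ALCOud}_{\Sigma}$ is itself a \emph{total} $\ALCO(\Sigma)$ bisimulation satisfying Condition~($\defdes$), i.e., a full $\ALCOud(\Sigma)$ bisimulation, exactly as your sketch requires.
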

The following lemma states the main technical result from which Point~(3) follows.
\begin{restatable}{lemma}{bothdir}
	\label{lem:bothdir}
		For \ALCOud ontologies $\Omc$, signatures $\Sigma$, and individual names $a$, joint consistency of $\Omc,\{a\}$ and $\Omc,\neg\{a\}$ modulo $\ALCOud(\Sigma)$-bisimulations 
	is 2\ExpTime-complete. The lower bound holds already if $a$ is the only
	individual name in $\Omc$.
\end{restatable}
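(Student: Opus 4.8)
We establish the two bounds separately, adapting the corresponding arguments for $\ALCOu$ from~\cite{ArtEtAl21}. The lower bound is, by Lemma~\ref{thm:expdefchar}, the statement that $\ALCOud$ RE existence is $\TwoExpTime$-hard, and the source reduction in~\cite{ArtEtAl21} can be taken to use a single individual name $a$; so the only thing to check is that the reduced instances remain faithful for $\ALCOud$ rather than merely $\ALCOu$. In the ``yes'' direction nothing changes, since every $\ALCOu(\Sigma)$ RE is a fortiori an $\ALCOud(\Sigma)$ RE. In the ``no'' direction one must exhibit witnessing models $\Imc,\Jmc$ of $\Omc$ and $e\ne a^{\Jmc}$ carrying an $\ALCO(\Sigma)$-bisimulation with $(a^{\Imc},e)$ in it that \emph{additionally} satisfies Condition~$(\defdes)$; this is obtained by a minor padding of the gadgets in the reduction so that $a^{\Imc}$ and $e$ either both possess, or both lack, a distinct $\ALCO(\Sigma)$-bisimilar element inside their own interpretation. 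Restricting $\Omc$ to contain only the individual name $a$ is inherited directly from the source reduction.

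For the upper bound the starting point is the bisimulation characterisation of Section~\ref{sec:alcoiotabisim}: a witness for joint consistency is a pair of models $\Imc,\Jmc$ of $\Omc$ together with $e\in\Delta^{\Jmc}$, $e\ne a^{\Jmc}$, and an $\ALCOud(\Sigma)$-bisimulation $Z$ with $(a^{\Imc},e)\in Z$. First bring $\Omc$ into the flattened normal form of Sections~\ref{sec:sat} and~\ref{subsec:elo}, so that the only definite descriptions occurring in $\Omc$ are of the form $\defdes B$ with $B\in\NC$. A standard unravelling argument modulo nominals and the universal role shows that such a witness exists iff one exists in which $\Imc$ and $\Jmc$ are forest-shaped with branching and outdegree bounded by $|\Omc|$. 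On such structures a witness is described by a pair of \emph{annotated quasimodels}: for each element one records $(i)$ the set of concepts from $\con{\Omc}$ it realises, $(ii)$ its $\ALCO(\Sigma)$-bisimulation type, abstracted as the set of $\ALCO(\Sigma)$-concepts built from the $\Sigma$-symbols and $\con{\Omc}$-subconcepts relevant to $\Omc$ that hold at it, and $(iii)$ a single bit $\beta$ recording whether the element has a distinct $\ALCO(\Sigma)$-bisimilar element inside its own interpretation. The pair must be \emph{coherent}: role successors realise the promised concepts; the nominals $\{b\}$ and the singleton requirements coming from $\{\defdes B\}\in\con{\Omc}$ are globally consistent; the $\ALCO(\Sigma)$-type annotations on the $a^{\Imc}$-side and the $e$-side agree and are matched by the forth/back clauses along $Z$; and the twin bit $\beta$ agrees on $a^{\Imc}$ and $e$ and is itself consistent with the quasimodel. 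There are at most single-exponentially many such annotated types in $|\Omc|$, so the existence of a coherent pair can be tested by a two-way alternating tree automaton of exponential size, whose non-emptiness is decidable in time exponential in the automaton, hence in $\TwoExpTime$ overall; the matching lower bound just established yields completeness.

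The main obstacle is exactly Condition~$(\defdes)$, i.e.\ keeping the bit $\beta$ correct. Unlike the \emph{atom}, \emph{forth}, and \emph{back} clauses, ``having an $\ALCO(\Sigma)$-bisimilar twin in one's own interpretation'' is a non-local, second-order-flavoured property, and the twin need not lie in the forest-shaped fragment one is constructing; moreover the demands placed on $\beta$ by the $\Imc$-side and by the $\Jmc$-side must be met simultaneously. Handling this requires either threading through the automaton an auxiliary guessed copy of the twin's $\Sigma$-type and verifying an $\ALCO(\Sigma)$-bisimulation to it, or an explicit combinatorial lemma to the effect that whenever an $\ALCO(\Sigma)$-type is realisable in a model of $\Omc$ at all, it is realisable at two distinct points of some bounded model of $\Omc$, together with a check that this remains compatible with the constraints coming from the other side. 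Everything else in the adaptation of~\cite{ArtEtAl21}, including the treatment of the universal role and of the singleton tests $\{\defdes B\}$ occurring in $\Omc$ itself (which are tracked as global ``$B$ is a singleton'' bits), is routine.
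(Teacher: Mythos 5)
Your lower-bound sketch is essentially the paper's: it reuses the single-individual reduction of~\cite{ArtEtAl21} and argues that, on those instances, joint consistency modulo $\ALCOu(\Sigma)$ bisimulations coincides with joint consistency modulo $\ALCOud(\Sigma)$ bisimulations (the paper phrases this directly for the joint-consistency problem rather than detouring through Lemma~\ref{thm:expdefchar}, whose direction you in fact invoke backwards, but since $\TwoExpTime$ is closed under complement this is repairable). The genuine gap is in your upper bound. You abstract ``the $\ALCO(\Sigma)$-bisimulation type'' of an element as the set of $\ALCO(\Sigma)$ concepts built from $\Sigma$-symbols and subconcepts of $\Omc$ that hold at it, and you count single-exponentially many such annotations. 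This abstraction is unsound: $\ALCO(\Sigma)$-bisimilarity, both between the two witnessing models and \emph{inside} a single model (which is what Condition~($\defdes$) quantifies over), is not determined by agreement on any fixed, $\Omc$-bounded set of concepts, so a coherent pair of annotated quasimodels need not yield an actual $\ALCOud(\Sigma)$ bisimulation. This is precisely why the paper does not annotate single elements at all: its mosaics are pairs $(T_{1},T_{2})$ of \emph{sets} of $\Xi$-types whose joint realisability at mutually bisimilar points is certified globally, by saturation conditions ($\Sigma$-concept-name coherence, existential and $\Sigma$-existential saturation, goodness for the universal role, nominals and definite descriptions) together with a recursive elimination over sets of such pairs, which is where the double-exponential bound comes from.

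The second half of the gap is that the one place you identify as ``the main obstacle'' --- keeping the twin bit $\beta$ correct --- is left with two unworked alternatives, and the one you state concretely is false: it is not true that whenever an $\ALCO(\Sigma)$-type is realisable in a model of $\Omc$ it is realisable at two distinct points of some model. Nominals and definite descriptions block exactly this (a type containing $\{b\}$ with $b\in\Sigma$, or a type whose duplication is forbidden because $\Omc$ forces some $\{\defdes B\}$ with $\mathrm{sig}$ inside $\Sigma$ to pin it down, can only ever be realised once), and whether duplication is possible moreover depends on what the \emph{other} model must realise at the bisimilar point. The paper handles this interplay explicitly: the ``good for nominals'' and ``good for definite descriptions'' conditions distinguish direct and indirect $i$-singletons, and the model construction takes two copies of every non-singleton node and no copy of singleton nodes, so that Condition~($\defdes$) holds by construction on both sides simultaneously. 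Without a worked replacement for this mechanism (or a sound automaton encoding of bisimulation classes, which would bring you back to sets of type sets and hence to the mosaic argument), your sketch does not establish the $\TwoExpTime$ upper bound.
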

\begin{proof}[Sketch]
	The proof of the upper bound is mosaic-based, where mosaics are pairs $(T_{1},T_{2})$ of sets $T_{1}$ and $T_{2}$ of $\Omc$-types such that
	there exist models $\Imc_{1}$ and $\Imc_{2}$ of $\Omc$ and nodes $d_{t}, t\in T_{i}$, realising $t$ in $\Imc_{i}$, $i=1,2$, such that all pairs $d_{t},d_{t'}$ with $t,t'\in T_{1}\cup T_{2}$ are $\ALCOud(\Sigma)$ bisimilar. The maximal sets of such pairs that can make up the types realised in models $\Imc_{1},\Imc_{2}$ can be enumerated in double exponential time by formulating appropriate constraints and employing a recursive elimination procedure. These constraints have to be extended significantly compared to~\cite{ArtEtAl21} to deal with the cardinality constraints imposed by definite descriptions. The lower bound can be proved by adapting the lower bound proof for $\ALCOu$ in~\cite{ArtEtAl21}.
 \end{proof}
To show
Point~(4),
we make use of a recent undecidability proof in the context of conjunctive query inseparability of $\mathcal{ALC}$ KBs
given in~\cite{BotEtAl19}. In that paper it is shown that it is undecidable 
whether two $\mathcal{ALC}$ KBs entail the same conjunctive queries. To this end, an $\ALC$ ontology $\Omc_{1}$, an $\EL$-ontology $\Omc_{2}$, a signature $\Sigma$ consisting of concept and role names, and a single individual name $a$ are constructed such that it is undecidable whether 
$$
\Omc_{1} \cup \{r(a,a)\} \models C(a) \quad \Rightarrow \quad  \Omc_{2}\cup \{r(a,a)\}\models C(a)
$$
for all $\EL(\Sigma)$ concepts $C$. It turns out that only concepts $C$ of a particular form are relevant. In our reduction we create an \ALCO ontology $\Omc$ by taking the union of relativized versions of $\Omc_{1}$ and $\Omc_{2}$
and two individuals $a$ and $b$ such that the relativization of $\Omc_{1}$ acts on $a$ and the relativization of $\Omc_{2}$ on $b$. We add a fresh concept name $D_{a,b}$ to $\Sigma$ and add $D_{a,b}\equiv \{a\} \sqcup \{b\}$ to $\Omc$.
Then the set $\{a,b\}$ is trivially explicitly definable using an $\EL(\Sigma)$ concept under $\Omc$ and so $a$ is explicitly definable using an $\ELOud(\Sigma)$ concept under $\Omc$ iff it can be distinguished from $b$ using an $\ELOud(\Sigma)$ concept in the sense that
$\Omc \models C(a)$ and $\Omc\models \neg C(a)$. The undecidability of a
weaker form of distinguishability (does there exist an $\ELOud(\Sigma)$ concept such that
$\Omc \models C(a)$ and $\Omc\not\models C(a)$?) is immediate. To achieve
undecidability of the stronger form of indistinguishability we add 
further inclusions to $\Omc$ that make subtle use of the form of the relevant concepts $C$.

We come to the proof of Point~(5). 
It has already been observed in Point~(3) of Section~\ref{sec:observations} that for every $\ELOud$ ontology $\Omc$, $\Sigma \subseteq \sig{\Omc}$, and individual name $a$ that denotes w.r.t.~$\Omc$, there exists an $\ELOu(\Sigma)$ RE for $a$ under $\Omc$ iff there exists an $\ELOud(\Sigma)$ RE for $a$ under $\Omc$.
%
%
%
%
%
%
%
Thus, as we assume in Point~(5) that $a$ denotes
w.r.t. the ontology, it suffices to decide the existence of
$\ELOu(\Sigma)$ REs.
In fact, one can use Theorem~\ref{theo:eloufincanmod} and the simulation characterisation of $\ELOu$ concepts given in Lemma~\ref{thm:simequiv} to prove the following characterisation of the existence of
REs.
%
\begin{restatable}{lemma}{elourefexprchar}
	\label{thm:elourefexprchar}
	Given a signature $\Sigma$ and an $\ELOud$ ontology $\Omc = \Omc_{0} \cup \{ {\mathsf A} \equiv \{ a \} \}$, where $\Omc_{0}$ is an $\ELOud$ ontology in normal form, ${\mathsf A}$ is a
	concept name
	satisfiable w.r.t. $\Omc$,
	and $a$ is an individual name,
	the following
	are equivalent:
	\begin{enumerate}
		\item there does not exist an $\ELOu(\Sigma)$
		RE
		for $a$ under $\Omc$;
		\item
		there exist a model $\Jmc$ of $\Omc$ and $e \in \Delta^{\Jmc}$ such that $(\fincanmod, a^{\fincanmod}) \simul^{\ELOu}_{\Sigma} (\Jmc, e)$ and $e \neq a^{\Jmc}$.
	\end{enumerate}
\end{restatable}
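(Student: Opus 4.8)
The plan is to prove the two implications separately, in both cases exploiting the polynomial-size canonical model $\fincanmod$ together with Theorem~\ref{theo:eloufincanmod} and the simulation characterisation of $\ELOu$ in Theorem~\ref{thm:simequiv}. First I would record two preliminary facts. Since $({\mathsf A}\equiv\{a\})\in\Omc$, for every $\ELOu$ concept $C$ we have $\Omc\models\{a\}\equiv C$ iff $\Omc\models{\mathsf A}\equiv C$; in particular $\Omc\models{\mathsf A}\sqsubseteq\{a\}$, so Theorem~\ref{theo:eloufincanmod} applied to the $\ELOu$ concept $\{a\}$ gives $[{\mathsf A}]\in\{a\}^{\fincanmod}$. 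By the definition of the semantics and of the canonical model this forces $a$ to denote in $\fincanmod$ and $a^{\fincanmod}=[{\mathsf A}]$ (directly: rule ${\sf R}_{7}$ puts $\{a\}$ into $S({\mathsf A})$, and ${\mathsf A}\in\Rmc_{{\mathsf A}}$). We then use throughout that $a^{\fincanmod}$ realises the type $t:=\tp^{\fincanmod}_{\ELOu(\Sigma)}(a^{\fincanmod})$ and that, by Theorem~\ref{theo:eloufincanmod}, $C\in t$ iff $\Omc\models{\mathsf A}\sqsubseteq C$ for every $\ELOu(\Sigma)$ concept $C$.

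For the implication (2) $\Rightarrow$ (1), assume $\Jmc\models\Omc$, $e\in\Delta^{\Jmc}$ with $e\neq a^{\Jmc}$ (i.e.\ $e\notin\{a\}^{\Jmc}$) and $(\fincanmod,a^{\fincanmod})\simul^{\ELOu}_{\Sigma}(\Jmc,e)$, and suppose towards a contradiction that some $\ELOu(\Sigma)$ concept $C$ is an RE for $a$ under $\Omc$, so $\Omc\models{\mathsf A}\equiv C$. Then $C\in t$ by the above, and Theorem~\ref{thm:simequiv}(1) gives $t\subseteq\tp^{\Jmc}_{\ELOu(\Sigma)}(e)$, hence $e\in C^{\Jmc}$. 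But $\Omc\models C\sqsubseteq\{a\}$ and $\Jmc\models\Omc$ force $C^{\Jmc}\subseteq\{a\}^{\Jmc}$, so $e\in\{a\}^{\Jmc}$, contradicting the choice of $e$.

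For (1) $\Rightarrow$ (2) I would argue by contraposition, working through the standard translation into first-order logic so that compactness and $\omega$-saturated elementary extensions become available. Assume that for every model $\Jmc$ of $\Omc$ and every $e\in\Delta^{\Jmc}$, $(\fincanmod,a^{\fincanmod})\simul^{\ELOu}_{\Sigma}(\Jmc,e)$ implies $e\in\{a\}^{\Jmc}$; I must produce an $\ELOu(\Sigma)$ RE for $a$ under $\Omc$. With a fresh individual name $c$, consider the first-order theory $\mathcal{T}=\Omc\cup\{\,C(c)\mid C\in t\,\}\cup\{(\neg\{a\})(c)\}$. I claim $\mathcal{T}$ is unsatisfiable: given a model $\Jmc_{0}$ of $\mathcal{T}$, pass to an $\omega$-saturated elementary extension $\Jmc$ (which still satisfies $\Omc$ and all assertions of $\mathcal{T}$, these being first-order); then $e:=c^{\Jmc}$ satisfies $t\subseteq\tp^{\Jmc}_{\ELOu(\Sigma)}(e)$ and $e\notin\{a\}^{\Jmc}$, so Theorem~\ref{thm:simequiv}(2) yields $(\fincanmod,a^{\fincanmod})\simul^{\ELOu}_{\Sigma}(\Jmc,e)$, whence $e\in\{a\}^{\Jmc}$ by assumption --- a contradiction. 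By compactness there is a finite $t_{0}\subseteq t$ with $\Omc\cup\{\,C(c)\mid C\in t_{0}\,\}\cup\{(\neg\{a\})(c)\}$ unsatisfiable. Let $C^{\ast}$ be the conjunction of all concepts in $t_{0}$ (with $C^{\ast}=\top$ if $t_{0}=\emptyset$), an $\ELOu(\Sigma)$ concept. Unsatisfiability of this finite set means $(C^{\ast})^{\Jmc}\cap(\neg\{a\})^{\Jmc}=\emptyset$ in every model $\Jmc$ of $\Omc$, i.e.\ $\Omc\models C^{\ast}\sqsubseteq\{a\}$; and since $a^{\fincanmod}\in C^{\fincanmod}$ for each $C\in t_{0}$ we get $a^{\fincanmod}\in(C^{\ast})^{\fincanmod}$, hence $\Omc\models{\mathsf A}\sqsubseteq C^{\ast}$ by Theorem~\ref{theo:eloufincanmod}, i.e.\ $\Omc\models\{a\}\sqsubseteq C^{\ast}$. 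Thus $\Omc\models\{a\}\equiv C^{\ast}$, so $C^{\ast}$ is the required RE, contradicting~(1).

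I expect the main obstacle to be the (1) $\Rightarrow$ (2) direction, for two reasons: (i) extracting a \emph{single} $\ELOu(\Sigma)$ concept from the possibly infinite type $t$ via compactness forces one to leave the DL setting and reason in first-order logic, and (ii) the careful treatment of non-denotation --- the assertion $(\neg\{a\})(c)$ pins down an element distinct from every denotation of $a$ without presupposing that $a$ denotes, and the denotation of $a$ in $\Jmc$ is recovered only a posteriori from the simulation and the standing assumption. One must also check that the generalisation of $\omega$-saturation to partial interpretations (given in the appendix) behaves as expected under elementary extensions, so that Theorem~\ref{thm:simequiv}(2) applies to $\Jmc$; this should be routine once the standard translation and a standard model construction treating a non-denoting individual name via a distinguished undefined element are in place. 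The converse implication, and the preliminary identification $a^{\fincanmod}=[{\mathsf A}]$, are straightforward consequences of Theorems~\ref{theo:eloufincanmod} and~\ref{thm:simequiv}.
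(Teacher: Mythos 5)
Your proof is correct and relies on essentially the same machinery as the paper's: the identification $a^{\fincanmod}=[{\mathsf A}]$ via Theorem~\ref{theo:eloufincanmod}, the simulation characterisation of Theorem~\ref{thm:simequiv}, and compactness together with $\omega$-saturation of FO on partial interpretations. The only difference is cosmetic: you run (1)$\Rightarrow$(2) contrapositively and extract the RE as an explicit finite conjunction $C^{\ast}$ via a fresh constant $c$, whereas the paper argues directly with the set of $\ELOu(\Sigma)$ consequences of $\{a\}$ under $\Omc$ and its closure under conjunction; the two are mirror images of the same Beth-style argument.
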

We show in the appendix that Condition~(2) of Lemma~\ref{thm:elourefexprchar} can be checked in polynomial time in the size of $\Omc$. Thus, to show Point~(5), it suffices to recall that every $\ELOud$ ontology can be transformed
in polynomial time into a $\ELOud$ ontology in normal form that is a conservative extension of the original ontology, and observe that the existence of an
RE
over a subset of the signature
of an ontology is invariant under replacing the ontology by a conservative extension.

The complexity of
RE
existence remains
open if we do not assume the individual $a$ denotes (of course, if we assume total interpretations, this is trivially the case). 
We note that a \PTime{} upper bound cannot be shown via the PBDP: while
$\EL$ enjoys the PBDP~\cite{LutEtAl19}, neither $\ELOd$ nor $\ELOud$ enjoy the PBDP for individuals on partial interpretations.
To show this for  $\ELOud$ one can use an example provided in~\cite{MF21}.
The following example shows that
$\ELOd$ does not enjoy the PBDP for individuals on partial interpretations.

\begin{example}
Let $\Sigma = \{ b, B \}$ and $\Omc$ be the $\ELOd$ ontology:
\[
\{ a \} \sqsubseteq \{ b \}, \qquad
\{ a \} \sqsubseteq \exists r.B, \qquad
B \sqsubseteq \exists s.\{ a \}.
\]
On \emph{partial} interpretations, $a$ is implicitly definable from $\Sigma$ under $\Omc$, since $\Omc \models \{ a \} \equiv \{ b \} \sqcap \exists u.B$.
However, $a$ is not $\ELOd(\Sigma)$ explicitly definable under $\Omc$, as shown by the partial interpretations $\Imc, \Jmc$ that are models of $\Omc$ in the figure below, where we assume $a^{\Imc} = b$,
$a^{\Jmc}$ undefined,
$b^{\Imc} = b$, and $b^{\Jmc} = b$.
Non $\ELOd(\Sigma)$ explicit definability follows from the observation that $b$
satisfies exactly
the same $\ELOd(\Sigma)$ concepts in $\Imc$ and in $\Jmc$ (namely $\{b\}$).


\begin{figure}[h]
\centering
\begin{tikzpicture}
\tikzset{
dot/.style = {draw, fill=black, circle, inner sep=0pt, outer sep=0pt, minimum size=2pt}
}

\draw (-0.75,0.5) node[label=west:$\mathcal{I}$] {};
\node (a) at (1.5,0) {};
\draw (a) node[dot, label=north:$b$] {};
\node (b) at (-0.5,0) {};
\draw (b) node[dot, label=north:$B$] {};

\path[->, >=stealth, bend right] (a) edge (b);
\path[->, >=stealth, bend right] (b) edge (a);
\node (x) at (0.5,0.5) {$r$};
\node (y) at (0.5,-0.5) {$s$};

\draw  (2.25,0.5) -- (2.25,-0.5);

\draw (4,0.5) node[label=west:$\mathcal{J}$] {};
\node (aa) at (3,0) {};
\draw (aa) node[dot, label=north:$b$] {};
%

\end{tikzpicture}
\label{fig:nopbdp}
\end{figure}

\end{example}


\section{Free DLs with Dual-Domain Semantics}
\label{sec:otherfreedl}

\newcommand{\fnot}{\sim\!}


Free logics
deal with terms that might fail to denote any existent object.
Semantically, they need to address the following
issues:
$(i)$ the formal
distinction between existent and non-existent objects;
$(ii)$ the truth of atomic 
formulas involving terms that do not refer to existent objects.
Concerning
$(i)$,
the two main options are
the so-called \emph{single-domain} and \emph{dual-domain semantics}.
Single-domain semantics are based on interpretations with a unique domain of objects,
representing the set of things that \emph{possibly} exist.
%
Dual-domain semantics introduce instead interpretations
based on two domains of objects: the \emph{outer domain},
representing
the set of all possible things; and the \emph{inner domain}, a subset of the outer domain
over which quantifiers are allowed to range,
containing
only those objects
that \emph{actually} exist.
%
%
Concerning
$(ii)$, three of the most prominent options are
the so-called
\emph{positive},
\emph{negative}
and
\emph{neutral} (or \emph{gapped}) \emph{semantics}.
Positive semantics allow atomic formulas with terms that refer to non-existent objects to be possibly \emph{true}.
%
Negative semantics
require
all atomic formulas with terms denoting non-existent objects, or non-denoting at all, to be \emph{false} by default.
%
Neutral semantics introduce a truth-value \emph{gap} for such formulas,
so that their truth value is
left
undefined~\cite{Nol20}.

%
While
partial interpretations are a kind of single-domain negative semantics,~\cite{NeuEtAl20}
consider
free DLs with nominals over dual-domain interpretations, presenting a positive, a negative and a gapped semantics.
%
This approach naturally
fits
scenarios
where
objects of the domain can start or cease to exist,
as
frequently considered in \emph{first-order modal} and \emph{temporal logics}~\cite{Gar01,GabEtAl03,BraGhi07,FitMen12,Gar13}.
Moreover, a positive semantics
allows one to
avoid inconsistencies
when reasoning in presence of data sources
that contradict the ontology,
a motivation
shared also by
\emph{inconsistency-tolerant DLs}~\cite{LemEtAl10,LemEtAl15},
by representing ``error'' individuals as non-existent objects.

%
To
show how the dual-domain semantics can be captured in our framework, we
present
the logic $\NKRd$~\cite{NeuEtAl20},
defined similarly to $\ALCOd$, with the addition of $\topex$ as a
concept name representing the set of {\em existing} objects.
Moreover, an \emph{$\NKRd$ assertion} is of the form $C(\tau)$,
$r(\tau_1,\tau_2)$, or $\tau_{1} = \tau_{2}$, with $C$ an $\NKRd$
concept, $r \in \NR$, and $\tau, \tau_1, \tau_2$ $\NKRd$ terms.  An
\emph{$\NKRd$ axiom} is either an $\NKRd$ CI, $C \sqsubseteq D$, with
$C,D$ $\NKRd$ concepts, or an $\NKRd$ assertion.
An \emph{$\NKRd$ formula}
is defined inductively
as follows,
with
$\beta$
$\NKRd$ axiom:
\[
	\p ::= \beta \mid \lnot (\p) \mid (\p \land \p).
\]

A \emph{dual-domain interpretation} is a triple
$I = (\OutDom^{I}, \InnDom^{I}, \cdot^{I})$, where $\OutDom^{I}$ is a
non-empty set, called \emph{outer domain} of $I$,
$\InnDom^{I} \subset \OutDom^{I}$
is a (possibly empty) set called
\emph{inner domain} of $I$, and $\cdot^{I}$ is a function mapping
every $A \in \NC$ to a subset of $\OutDom^{I}$, every $r \in \NR$ to a
subset of $\OutDom^{I} \times \OutDom^{I}$, and every $a \in \NI$ to
an element in $\OutDom^{I}$.
%
%
Given a dual-domain interpretation $I$, the \emph{extension} $C^{I}$
of an $\NKRd$ concept $C$ in $I$ is defined similarly to $\ALCOd$, with
the exception of:
\begin{gather*}
\topex^{I} = \InnDom^{I}, \\
(\exists r.C)^{I} = \{d \in \OutDom^{I} \mid \exists e \in \InnDom^{I} \cap C^{I}: (d,e) \in r^{I}\},
\end{gather*}
while, for a definite description $\defdes C$, the \emph{value of $\defdes C$ in $I$} is:
\begin{gather*}
		(\defdes C)^{I}  =
			\begin{cases}
				d, & \text{if} \ \InnDom^{I} \cap C^{I} = \{ d \}; \\
				d_{\defdes C}, & \text{with $d_{\defdes C} \in \OutDom^{I} \setminus \InnDom^{I}$ arbitrary, otherwise}.
			\end{cases}
\end{gather*}
Moreover, on dual-domain interpretations, we define the extension of $\NKRd$ nominals as follows: $\{ \tau \}^{I} = \{ \tau^{I} \}$. 

Given an $\NKRd$ formula $\p$ and a dual-domain interpretation $I$, we inductively define two different kinds of \emph{satisfaction relations} between $I$ and $\p$:
one
\emph{under positive semantics}, denoted by $\models^{+}$; and
one
\emph{under negative semantics}, denoted by $\models^{-}$.
In the following, $\circ \in \{+, -\}$.
\begin{align*}
	I \models^{\circ} C(\tau) & \ \text{iff} \
		\begin{cases}
			\tau^{I} \in C^{I}, & \,\,\,\,\,\,\,\, \text{if} \  \circ = + ; \\
			\tau^{I} \in \InnDom^{I} \ \text{and} \ \tau^{I} \in C^{I}, & \,\,\,\,\,\,\,\, \text{if} \ \circ = - ;\\
		\end{cases}
	\\
		I \models^{\circ} r(\tau_{1}, \tau_{2}) & \ \text{iff} \
		\begin{cases}
			(\tau_{1}^{I}, \tau_{2}^{I}) \in r^{I}, & \!\!\!\!\!\!\!\!\!\!\!\! \text{if} \  \circ = + ; \\
			\tau_{1}^{I}, \tau_{2}^{I} \in \InnDom^{I} \ \text{and} \ (\tau_{1}^{I}, \tau_{2}^{I}) \in r^{I}, &
		\end{cases}
		\\
		& \qquad\qquad\qquad\qquad\qquad\qquad\,\,\,\,\,\,\,\,\ \text{if} \ \circ = - ;\\
	I \models^{\circ} \tau_{1} = \tau_{2} & \ \text{iff} \
		\begin{cases}
			\tau_{1}^{I} = \tau_{2}^{I}, & \text{if} \  \circ = + ; \\
			\tau_{1}^{I}, \tau_{2}^{I} \in \InnDom^{I} \ \text{and} \ \tau_{1}^{I} = \tau_{2}^{I}, & \text{if} \ \circ = - ;\\
		\end{cases}
\end{align*}
\[
I \models^{\circ} C \sqsubseteq D \ \text{iff,} \ \text{for all $d \in \InnDom^{I}$, $d \in C^{I}$ implies $d \in D^{I}$,}
\]
\[
	I \models^{\circ} \lnot ( \psi ) \ \text{iff} \ I \not \models^{\circ} \psi, \
	I \models^{\circ} \psi \land \chi \ \text{iff} \ I \models^{\circ} \psi \ \text{and} \ I \models^{\circ} \chi.
\]
%
%
%
We say that $\p$ is \emph{satisfiable on dual-domain interpretations under positive}, respectively \emph{negative}, \emph{semantics} iff there exists a dual-domain interpretation $I$ such that
$I \models^{+} \p$, respectively $I \models^{-} \p$.

Differently from classical $\ALCO$ on total interpretations and from
$\ALCOud$ on partial interpretations (cf. Point~(2) in
Section~\ref{sec:observations}), $\NKRd$ assertions under these
semantics cannot be encoded into $\NKRd$ CIs.  Moreover, parentheses
in negated formulas $\lnot (\p)$ are not eliminable, since they
disambiguate between assertions of the form $\lnot C(\tau)$, with a
\emph{negated concept}, and \emph{negated assertions} of the form
$\lnot (C (\tau))$.  Indeed, these expressions have different
satisfaction conditions on dual-domain interpretations under negative
semantics: while $\lnot C(\tau)$ requires $\tau^{I}$ to be an element
of the inner domain in any of its models $I$, a formula like
$\lnot (C (\tau))$ is satisfied also in dual-domain interpretations
$I$ where $\tau^{I}$ is in the outer, but not in the inner, domain.

We now show
that
$\NKRd$
satisfiability
on dual-domain interpretations under either positive or negative
semantics
is polynomial time reducible to $\ALCOud$ reasoning on
partial interpretations.
The proof is reminiscent of the one from first-order modal logic, to
reduce varying to constant domain semantics~\cite{BraGhi07,FitMen12},
in that it exploits an \emph{existence concept} to represent the inner
domain on partial interpretations.
%
%
%
%
\begin{restatable}{theorem}{redddtopart}
\label{prop:redddtopart}
$\NKRd$ formula satisfiability on dual domain interpretations under either positive or negative semantics is polynomial time reducible to $\ALCOud$ ontology satisfiability.
\end{restatable}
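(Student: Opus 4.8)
The plan is to compute, for a given $\NKRd$ formula $\p$ and a fixed choice $\circ \in \{+,-\}$, an $\ALCOud$ ontology $\Omc_{\p}$ that is satisfiable (on partial interpretations) if and only if $\p$ is satisfiable on dual-domain interpretations under the $\circ$-semantics; since $\ALCOud$ ontology satisfiability on partial and on total interpretations are polynomial-time interreducible (Point~(1) in Section~\ref{sec:observations}), this suffices. As in the reduction of varying- to constant-domain first-order modal logic, we use a fresh concept name $E$ to stand for the inner domain. Every $\NKRd$ concept $C$ is translated into an $\ALCOud$ concept $C^{\dagger}$ by the map that is the identity on role names and on concept names other than $\topex$, with $\topex^{\dagger} = E$, commutes with $\lnot$ and $\sqcap$, sets $(\exists r.C)^{\dagger} = \exists r.(E \sqcap C^{\dagger})$ (mirroring the restriction of existential quantification to the inner domain), sets $\{a\}^{\dagger} = \{a\}$ for $a \in \NI$, and sets $\{\defdes C\}^{\dagger} = \{b_{\defdes C}\}$, where $b_{\defdes C}$ is a fresh individual name chosen for each definite description occurring in $\p$, at any nesting depth. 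For each subformula $\q$ of $\p$ we take a fresh concept name $q_{\q}$, meant to be interpreted as the whole domain if $\q$ is true and as $\eset$ otherwise; this global behaviour is enforced by the CI $\exists u.q_{\q} \sqsubseteq \forall u.q_{\q}$ (using that domains are non-empty).

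The ontology $\Omc_{\p}$ then comprises: (a) the CI $\top \sqsubseteq q_{\p}$; (b) for each individual name $t$ occurring in $\p$ and each fresh $b_{\defdes C}$, the CI $\top \sqsubseteq \exists u.\{t\}$, since individual names always denote on dual-domain interpretations, together with $\top \sqsubseteq \exists u.\{n\}$ and $\{n\} \sqsubseteq \lnot E$ for a further fresh individual $n$, forcing $E \neq \Delta$ to reflect that the inner domain is a proper subset of the outer one; (c) the propositional bookkeeping CIs linking the $q_{\q}$'s to each other, namely $q_{\lnot\q} \sqcap q_{\q} \sqsubseteq \bot$, $\top \sqsubseteq q_{\lnot\q} \sqcup q_{\q}$, $q_{\q \land \chi} \sqsubseteq q_{\q}$, $q_{\q \land \chi} \sqsubseteq q_{\chi}$, $q_{\q} \sqcap q_{\chi} \sqsubseteq q_{\q \land \chi}$; (d) for each $\NKRd$ axiom $\beta$ occurring in $\p$, a pair of CIs forcing $q_{\beta}$ to be full precisely when $\beta$ holds in the translated sense: for $\beta = (C \sqsubseteq D)$ the CIs $q_{\beta} \sqcap E \sqcap C^{\dagger} \sqsubseteq D^{\dagger}$ and $\lnot q_{\beta} \sqsubseteq \exists u.(E \sqcap C^{\dagger} \sqcap \lnot D^{\dagger})$; for an assertion $\beta = C(\tau)$ the CIs $q_{\beta} \sqcap \{t(\tau)\} \sqsubseteq F$ and $\lnot q_{\beta} \sqsubseteq \exists u.(\{t(\tau)\} \sqcap \lnot F)$, where $t(a) = a$, $t(\defdes C) = b_{\defdes C}$, and $F = C^{\dagger}$ for $\circ = +$ while $F = E \sqcap C^{\dagger}$ for $\circ = -$; role and equality assertions are treated analogously, adding the conjunct $E$ at the relevant positions under the negative semantics; and (e) for each $\defdes C$ in $\p$, the CIs $\{\defdes(C^{\dagger} \sqcap E)\} \sqsubseteq \{b_{\defdes C}\}$ and $\top \sqsubseteq \exists u.\{\defdes(C^{\dagger} \sqcap E)\} \sqcup \forall u.(\{b_{\defdes C}\} \Rightarrow \lnot E)$, expressing that $b_{\defdes C}$ names the unique inner instance of $C$ when there is exactly one, and lies outside $E$ otherwise --- exactly the two clauses defining the dual-domain value of $(\defdes C)^{I}$. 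Since $\p$ has linearly many subformulas, axioms and definite descriptions, each CI has size linear in $|\p|$, and the translation introduces no nesting of definite descriptions beyond depth one, $\Omc_{\p}$ has polynomial size and is computable in polynomial time.

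For correctness, the central observation is the following. Given a partial model $\Imc$ of $\Omc_{\p}$, let $\OutDom^{I} = \Delta^{\Imc}$, $\InnDom^{I} = E^{\Imc}$ (a proper subset, thanks to $n$), interpret concept, role and individual names as in $\Imc$, and stipulate $(\defdes C)^{I} := b_{\defdes C}^{\Imc}$; one shows, by induction on the nesting depth of definite descriptions, that this yields a genuine dual-domain interpretation $I$ with $(C^{\dagger})^{\Imc} = C^{I}$ for every $\NKRd$ concept $C$ of $\p$ --- at each step the CIs in~(e) certify that $b_{\defdes C}^{\Imc}$ is exactly the value prescribed by the semantics, a singleton inner extension forcing the equality and a non-singleton one forcing membership outside $E$. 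Granting this identity, the CIs in~(c), (d) and the global CI for each $q_{\q}$ give, by a routine induction on subformulas, $q_{\q}^{\Imc} = \Delta^{\Imc}$ iff $I \models^{\circ} \q$; with $\top \sqsubseteq q_{\p}$ this yields $I \models^{\circ} \p$. Conversely, from a dual-domain model $I$ of $\p$ one builds a partial interpretation $\Imc$ by the evident inverse assignment --- setting $b_{\defdes C}^{\Imc} = (\defdes C)^{I}$, $n^{\Imc}$ any element of $\OutDom^{I} \setminus \InnDom^{I}$, and $q_{\q}^{\Imc}$ equal to $\Delta^{\Imc}$ or $\eset$ according to whether $I \models^{\circ} \q$ --- and verifies directly, again using $(C^{\dagger})^{\Imc} = C^{I}$ and that $\models^{\circ}$ respects the Boolean connectives, that $\Imc \models \Omc_{\p}$.

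The main obstacle is group~(e). On dual-domain interpretations a definite description is a \emph{total} function on terms: $\defdes C$ always denotes, possibly outside the inner domain when the inner extension of $C$ is not a singleton, so it cannot be translated directly by an $\ALCOud$ definite description, whose nominal is empty whenever its body fails to be a singleton. Reconciling this via the auxiliary individuals $b_{\defdes C}$ --- and proving that the constraints in~(e) pin them down correctly in \emph{both} the singleton and the non-singleton case, while keeping the whole translation of polynomial size (in particular with no blow-up from nested descriptions) --- is the delicate point; the Boolean bookkeeping and the relativisation of axioms to $E$ are standard internalisation techniques.
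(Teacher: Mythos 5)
Your reduction is correct, and its core is the same as the paper's: a fresh existence concept interpreted as the inner domain, relativisation of existential restrictions and of CIs to it, and fresh individual names forced outside the existence concept to serve as values of definite descriptions whose bodies fail to have a singleton inner extension. The differences are in the mechanics. First, the paper translates each nominal $\{\defdes C\}$ compositionally into the concept $\{\defdes(\Ex\sqcap C^{\exrel})\}\sqcup(\lnot\exists u.\{\defdes(\Ex\sqcap C^{\exrel})\}\sqcap\{a_{\defdes C}\})$, with $a_{\defdes C}$ asserted non-existent; this makes the concept-level invariant $d\in C^{I}$ iff $d\in(C^{\exrel})^{\Imc}$ immediate by structural induction, but duplicates $C^{\exrel}$ inside the translation (so polynomiality under nested descriptions needs the extra observation that inner descriptions can be shared or flattened). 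You instead translate $\{\defdes C\}$ to a plain nominal $\{b_{\defdes C}\}$ and pin its value down by the side constraints in your group~(e), which directly mirror the two clauses of the dual-domain value of $\defdes C$ (singleton case: equality with $\defdes(C^{\dagger}\sqcap E)$; otherwise: outside $E$); this buys a translation that is manifestly polynomial even under nesting, at the cost of a slightly less local correctness argument (an induction on nesting depth rather than a purely compositional concept equivalence). Second, the paper keeps the Boolean structure of $\p$ as an $\ALCOud$ \emph{formula} and appeals to internalisation via the universal role to land in an ontology, whereas you internalise explicitly with globally-valued concept names $q_{\q}$ per subformula and bookkeeping CIs; both are standard and equivalent. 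Your extra individual $n$ enforcing $E\neq\Delta$ is consistent with (and in the description-free case slightly more careful than) the paper's construction, since every dual-domain interpretation must have a non-empty outer-minus-inner part for the value map on definite descriptions to be total. The only places where you are terser than the paper are the role and equality assertions ("treated analogously") and the case split distinguishing $\lnot C(\tau)$ from $\lnot(C(\tau))$ under the negative semantics, but the recipe you give extends to them without difficulty.
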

Since the gapped semantics
allows for truth value gaps,
it is not covered by
our setting based
  on a two-valued semantics.
A comparison with this option is left as future work.

\section{Related Work}
\label{sec:relwork}

%
%

Definite descriptions
introduce
mild forms
of  \emph{cardinality constraints}, a set of constructors
with
a long tradition in DLs~\cite{BaaEtAl96,Tob00,BaaEck17,BaaBor19,BedEtAl20}
that allow to
constrain the number of elements in the extension of a concept.
The expressivity of many of these logics goes
far beyond the DLs proposed here, and novel reasoning tools are required.
In contrast,
reasoning in
our free DLs
can be reduced to
reasoning in standard DLs ($\ALCOud)$ or mild extensions ($\ELOud$).

Concerning
RE generation
tasks,
other DL-based approaches
have
studied
the problem of finding a concept to describe an element
with respect to a single interpretation given as input~\cite{AreEtAl08,AreEtAl11}. More expressive DLs, as well as a relaxed version of the closed-world assumption, are considered in~\cite{RenEtAl10a,RenEtAl10b}.
%


REs have also been proposed for several applications in ontology-based data management, such as query answering over KBs~\cite{BorEtAl16,BorEtAl17,TomWed19a}, identity resolution in ontology-based data access~\cite{TomWed18,TomWed19}, and identification problems in conceptual modelling~\cite{BorEtAL16b}.
%
The DLs considered in these papers are tractable languages tailored to efficient query answering in presence of functionality
and path-based identification
constraints.
%
%
In this approach,
DL concepts can serve as REs under a given KB if they contains exactly one element in all the models of the KB
and satisfy a correctness condition with respect to
a query.
%
%
They are not, however, directly treated as possibly non-denoting terms of the language.

Finally,
hybrid logics with non-denoting nominals have not received much attention in the literature, with the exception of~\cite{Han11} in the context of public announcement logics.
%
However,
formalisms involving definite descriptions
are
actively
investigated
in first-order modal logic~\cite{Ind18,OrlCor18,Ind20},
%
%
%
where
the
possible lack of
referents
for names and descriptions is usually paired with
\emph{non-rigid} denotation features, i.e., the ability to refer to different objects at different states.

\section{Discussion}
\label{sec:discuss}
%
%
We have introduced
DLs
with definite descriptions
on partial interpretations,
and investigated standard reasoning (satisfiability and entailment) and automated support for generating definite descriptions (RE existence). Many open
problems remain to be explored.
Regarding $\ELOud$, it is
open whether the \PTime{} upper bound for RE existence holds in general,
%
%
whether there is a polynomial size canonical model for
$\ELOud$-concepts and whether a satisfactory model-theoretic 
characterisation of its expressivity can be given. Also, RE existence 
is only a first step towards automated support for generating definite descriptions in practice.
This could be approached by exploring the shape and interpretability of definitions obtained from interpolants computed by
FO
theorem provers.
Finally, we intend to
extend our
free DLs with definite descriptions
with a
temporal
dimension~\cite{LutEtAl08},
for applications in
temporal conceptual modelling and
query answering over temporal DL ontologies~\cite{ArtEtAl14,ArtEtAl17},
where the interaction between lack of denotation and non-rigidity can be at stake.




\bibliographystyle{plain}
\bibliography{bibliography}

\clearpage

\begin{appendix}
\label{sec:appendix}

\section*{Appendix}

In the following, given a partial interpretation $\Int = (\Delta^{\Int}, \cdot^{\Int})$, we denote by $\dom(\cdot^{\Int})$ the \emph{domain of definition} of the partial function $\cdot^{\Int}$.

\subsection*{Proofs for Section~\ref{sec:complexity}}

We show that the $(\cdot)\tr$ translation in
Section~\ref{sec:complexity} is satisfiability preserving
(Theorem~\ref{prop:alcodtoalcou}).  This is a direct consequence of
Lemmas~\ref{prop-sat1} and~\ref{prop-sat2} presented
below. 

We first prove that, whenever an ontology 
is satisfiable, its translation is also satisfiable.
%
\begin{lemma}\label{prop-sat1}
  Let $\Omc$ be an $\ALCOud$ ontology.
  For every
  partial interpretation $\Int$ such that 
  $\Int \models \Omc$, there is
a total interpretation
  $\Int\tr$
 such that 
  $\Int\tr \models \Omc\tr$.
\end{lemma}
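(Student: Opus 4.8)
Throughout, $\Omc$ is assumed to be in normal form, since $(\cdot)\tr$ is only defined on such ontologies. The plan is to build $\Int\tr$ on the very same domain as $\Int$, keeping the $\Int$-interpretation of every concept name, every role name and the universal role $u$, and then to fix the interpretation of the symbols that the translation introduces — the fresh concept names $A_b$ serving as $\{b\}^{+}$ for $b\in\NI$, and the fresh individual names $a_\tau$ — in such a way that each translated nominal keeps its original extension, i.e.
\[
(\{\tau\}\red)^{\Int\tr}=\{\tau\}^{\Int}\qquad\text{for every }\{\tau\}\in\con{\Omc}.
\]
Concretely, I would set $A_b^{\Int\tr}:=\{b^{\Int}\}$ if $b$ denotes in $\Int$ and $A_b^{\Int\tr}:=\emptyset$ otherwise; when $\tau=\defdes B$, the concept $\{\tau\}^{+}$ is just the concept name $B$, which is already interpreted. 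Every individual name other than the $a_\tau$ can be sent to an arbitrary fixed element of $\Delta^{\Int}$ (which is non-empty), so that $\Int\tr$ is total.

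The core of the argument is the choice of $a_\tau^{\Int\tr}$ and the verification of the displayed equation, by a case split. Recall $\{\tau\}\red=\{\tau\}^{+}\sqcap\forall u.(\{\tau\}^{+}\Rightarrow\{a_\tau\})$ and that $\{\tau\}^{\Int}$ equals $\{\tau^{\Int}\}$ if $\tau$ denotes in $\Int$ and $\emptyset$ otherwise. If $\tau$ denotes in $\Int$, then $(\{\tau\}^{+})^{\Int\tr}=\{\tau^{\Int}\}$ — here one uses that $\defdes B$ denotes in $\Int$ exactly when $B^{\Int}$ is a singleton — and putting $a_\tau^{\Int\tr}:=\tau^{\Int}$ makes the second conjunct true everywhere, so $(\{\tau\}\red)^{\Int\tr}=\{\tau^{\Int}\}$. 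If $\tau$ does not denote, then either $(\{\tau\}^{+})^{\Int\tr}=\emptyset$, in which case the translated nominal is empty however $a_\tau$ is chosen, or $\tau=\defdes B$ with $|B^{\Int}|\ge 2$; in the latter case I would pick $a_\tau^{\Int\tr}$ to be some element of $B^{\Int}$, and then, as $B^{\Int}$ contains a further element distinct from it, no $d\in(\{\tau\}^{+})^{\Int\tr}=B^{\Int}$ satisfies $\forall u.(\{\tau\}^{+}\Rightarrow\{a_\tau\})$, so again $(\{\tau\}\red)^{\Int\tr}=\emptyset$. This establishes the displayed equation.

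Given the equation, checking $\Int\tr\models\Omc\tr$ is routine: a translated CI $(E\sqsubseteq F)\red=E\sqsubseteq F$ with $E,F$ in $\ALC_{u}$ holds because $\Int\tr$ and $\Int$ agree on all concept and role names and on $u$; a CI $\{\tau\}\red\sqsubseteq A$, resp.\ $A\sqsubseteq\{\tau\}\red$, follows from $\Int\models\{\tau\}\sqsubseteq A$, resp.\ $\Int\models A\sqsubseteq\{\tau\}$, together with the displayed equation and $A^{\Int\tr}=A^{\Int}$; and each auxiliary inclusion $\{\tau\}^{+}\sqsubseteq\forall u.(\{a_\tau\}\Rightarrow\{\tau\}^{+})$ holds precisely because the case split chose $a_\tau^{\Int\tr}\in(\{\tau\}^{+})^{\Int\tr}$ whenever that set is non-empty (and the inclusion is vacuous when it is empty).

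I expect the only genuinely delicate point to be the subcase $\tau=\defdes B$ with $|B^{\Int}|\ge 2$: one must check that a single $\forall u$-conjunct really does collapse the translated nominal to the empty set and that this forced choice of $a_\tau$ remains compatible with the auxiliary inclusion; everything else is bookkeeping over the semantics. (The companion Lemma~\ref{prop-sat2}, for the converse direction, is where the auxiliary inclusions will do their substantive work, by forcing $(\{\tau\}^{+})^{I}$ to contain at most one element whenever $\{\tau\}\red$ is to be non-empty in $I$; that, however, lies outside the present statement.)
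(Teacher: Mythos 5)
Your proposal is correct and follows essentially the same route as the paper's own proof: same domain, the same definitions of $A_b^{\Int\tr}$ and $a_\tau^{\Int\tr}$ (an element of $(\{\tau\}^{+})^{\Int\tr}$ whenever that set is non-empty), the key claim that $(\{\tau\}\red)^{\Int\tr}=\{\tau\}^{\Int}$ for all $\{\tau\}\in\con{\Omc}$, and then a routine case analysis over the normal-form CIs and the auxiliary inclusions. The one cosmetic difference — mapping the original individual names arbitrarily rather than keeping their $\Int$-values — is harmless, since those names do not occur in $\Omc\tr$.
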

\begin{proof}
Suppose $\Imc= (\Delta^{\Int}, \cdot^{\Int})$,
with $\Delta^{\Int}$ denoted simply by $\Delta$ in the following, is
  a partial interpretation s.t. $\Int \models \Omc$.
  We construct
a total
  interpretation
  $\Int\tr= (\Delta^{\Imc\tr}, \cdot^{\Int\tr})$,
with
  $\Delta^{\Imc\tr} = \Delta^{\Imc}$, 
  such that
  $\Int\tr$ coincides with $\Int$ on
  all concept, role and individual names, except possibly from the fresh ones occurring in $\Omc\tr$, and for which we set the following,
 for every
  $\{ \tau \} \in \con{\Omc}$:
\begin{itemize}
%
\item if $\tau = b$ and
  $b \in \dom(\cdot^{\Int})$ then
  $a_{\tau}^{\Int\tr} = b ^{\Int}$ and
  $A_b^{\Int\tr} = \{b^{\Int}\}$ (otherwise, $a_{\tau}^{\Int\tr}$
  is arbitrary and $A_b^{\Int\tr} = \emptyset$);
\item if $\tau = \defdes B$ and
  $B^{\Int} \neq \emptyset$ then
  $a_{\tau}^{\Int\tr} = d$,
  for some $d \in B^{\Int}$ ($a_{\tau}^{\Int\tr}$ is arbitrary,
  otherwise).
  \end{itemize}
%
Finally, the individual names in $\NI \setminus \dom(\cdot^{\Int})$
are mapped arbitrarily by ${\Int\tr}$.  We first show the following
claim.
\begin{claim}\label{claim:tr1}
  For
  every $\{ \tau \} \in \con{\Omc}$ and every $d\in\Delta$,
  $d\in\{\tau\}^\Int$ iff $d\in(\{\tau\}\red)^{\Int\tr}$.
\end{claim}
\begin{proof}
  ($\Rightarrow$) We consider the two different forms of $\tau$.
  \begin{itemize}
  \item Let $\tau = b$ and $d\in \{\tau\}^\Int$. Then,
    $\{\tau\}^\Int = \{b^\Int\}$, $b \in \dom(\cdot^{\Int})$, and by
    construction, $a_{\tau}^{\Int\tr} = b^\Int$ and 
    $A_b^{\Int\tr} = \{a_{\tau}^{\Int\tr}\}$. So, 
    $(\{\tau\}\red)^{\Int\tr} = (\concleqone)^{\Int\tr} \cap
    A_b^{\Int\tr} = A_b^{\Int\tr} = \{b^\Int\} = \{d\}$.
  \item Let $\tau = \defdes B$, and $d\in \{\tau\}^\Int$. Then,
    $\{\tau\}^\Int = \{d\}$, $B^\Int = \{d\}$, and, by construction,
    $d = a_{\tau}^{\Int\tr}$, $B^{\Int\tr} = B^\Int= \{d\}$, and
    $(\{\tau\}\red)^{\Int\tr} = (\concleqone)^{\Int\tr} \cap
    B^{\Int\tr} = B^{\Int\tr} = \{d\}$.
  \end{itemize}
  ($\Leftarrow$) We consider the two different forms of $\tau$.
  \begin{itemize}
  \item Let $\tau = b$ and
    $d\in (\{\tau\}\red)^{\Int\tr} = (\concleqone)^{\Int\tr} \cap
    A_b^{\Int\tr}$. Then, $d\in A_b^{\Int\tr}$ and, by construction,
    $b \in \dom(\cdot^{\Int})$, $a_{\tau}^{\Int\tr} = b^\Int = d$. Thus, $\{\tau\}^\Int = \{b^\Int\} = \{d\}$.
  \item Let $\tau = \defdes B$, and
    $d\in (\{\tau\}\red)^{\Int\tr} = (\concleqone)^{\Int\tr} \cap
    B^{\Int\tr}$. Then, $d\in B^{\Int\tr}$ and, since
    $d\in (\concleqone)^{\Int\tr}$, we have that
    $B ^{\Int\tr} = \{d\} = B^\Int$. Thus, $\{\tau\}^{\Int} =\{d\}$.
    \qedhere
  \end{itemize}
\end{proof}


By construction of $\Int\tr$, one can see that
$\Int\tr\models \big(\{\tau\}^+ \sqsubseteq \forall u. (\{
a_{\tau} \} \Rightarrow \{ \tau \}^+) \big)$.
%
Moreover, without loss of generality, we can assume that $\Omc$ is in normal form.
We now show the following claim.

\begin{claim}
For every $C \sqsubseteq D \in \Omc$,
%
  \label{eqn:tr}
  $\Int\models C \sqsubseteq D \ \text{iff} \ \Int\tr\models (C \sqsubseteq D)\red$.
\end{claim}
\begin{proof}
%
We distinguish the following cases.
\begin{itemize}
\item Let $(C \sqsubseteq D) = (E \sqsubseteq F)$, where $E, F$ are
  $\ALC_{u}$ concepts.
  Since $(E \sqsubseteq F)\red = E \sqsubseteq F$ and $\Int\tr$ coincides with $\Int$ on all concept and role names (except possibly from the fresh ones occurring in the translation of a formula),
  an easy induction on $\ALC$ concepts shows that
  the statement of the claim
  holds in this case.
\item Let $(C \sqsubseteq D) = (\{ \tau \} \sqsubseteq A)$. By
  Claim~\ref{claim:tr1}, $\{\tau\}^\Int =
  (\{\tau\}\red)^{\Int\tr}$. Since $(\{ \tau \} \sqsubseteq A) = \{ \tau \}\red \sqsubseteq A$ and $\Int\tr$ coincides with $\Int$ on
 all concept names (except possibly from the fresh ones occurring in the translation of a formula),
  we have that
  the statement of the claim
  holds in this case.
\item Let $(C \sqsubseteq D) = (A \sqsubseteq \{ \tau \})$. The proof is similar to
  the previous case.
  \qedhere
\end{itemize}
%
\end{proof}
Since we have by assumption that $\Imc \models \Omc$, the previous claim concludes the proof of the lemma.
\end{proof}

We now show that, whenever an ontology $\Omc\tr$ is satisfiable, the original ontology $\Omc$ is also satisfiable.
%
\begin{lemma}\label{prop-sat2}
  Let $\Omc$ be an $\ALCOud$ ontology.
  For every
total
  interpretation $\Int$ such that 
  $\Int \models \Omc\tr$, there
  is
a partial interpretation $\Int'$ such that 
  $\Int' \models \Omc$.
\end{lemma}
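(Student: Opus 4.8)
As in the proof of Lemma~\ref{prop-sat1}, I may assume $\Omc$ is in normal form, so that the translation $\Omc\tr$ is defined. The plan is to run the construction of Lemma~\ref{prop-sat1} in reverse: from a total model $\Int$ of $\Omc\tr$ I build a partial model $\Int'$ of $\Omc$ by declaring the interpretation of an individual name \emph{undefined} exactly where the translated nominal is forced to be empty. Concretely, set $\Delta^{\Int'} := \Delta^{\Int}$, let $\Int'$ agree with $\Int$ on every concept name and role name (hence also on the universal role, since $u^{\Int'} = \Delta^{\Int'}\times\Delta^{\Int'} = u^{\Int}$), and for each individual name $b$ occurring in $\Omc$ (so that $\{b\}\in\con{\Omc}$ and $\{b\}^{+}=A_{b}$), put $b^{\Int'} := d$ if $A_{b}^{\Int}=\{d\}$ for some $d$, leaving $b^{\Int'}$ undefined otherwise; individual names not occurring in $\Omc$ can be left undefined.

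The key step is the counterpart of Claim~\ref{claim:tr1}: for every $\{\tau\}\in\con{\Omc}$, $\{\tau\}^{\Int'} = (\{\tau\}\tr)^{\Int}$. Here the two auxiliary conjuncts added by the translation do the work. Since $(\forall u.X)^{\Int}$ is always $\Delta^{\Int}$ or $\emptyset$, the concept $\concleqone = \forall u.(\{\tau\}^{+}\Rightarrow\{a_{\tau}\})$ satisfies $(\concleqone)^{\Int}=\Delta^{\Int}$ iff $(\{\tau\}^{+})^{\Int}\subseteq\{a_{\tau}^{\Int}\}$, while the extra CI $\{\tau\}^{+}\sqsubseteq\forall u.(\{a_{\tau}\}\Rightarrow\{\tau\}^{+})$ of $\Omc\tr$ forces $a_{\tau}^{\Int}\in(\{\tau\}^{+})^{\Int}$ whenever $(\{\tau\}^{+})^{\Int}\neq\emptyset$. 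Combining these, there are two cases. If $(\{\tau\}^{+})^{\Int}$ is a singleton, the extra CI forces it to equal $\{a_{\tau}^{\Int}\}$, hence $(\concleqone)^{\Int}=\Delta^{\Int}$ and $(\{\tau\}\tr)^{\Int} = (\{\tau\}^{+})^{\Int}\cap(\concleqone)^{\Int} = (\{\tau\}^{+})^{\Int}$. Otherwise $(\{\tau\}^{+})^{\Int}$ is empty, in which case $(\{\tau\}\tr)^{\Int}=\emptyset$ trivially, or has at least two elements, in which case $(\{\tau\}^{+})^{\Int}\not\subseteq\{a_{\tau}^{\Int}\}$ gives $(\concleqone)^{\Int}=\emptyset$ and again $(\{\tau\}\tr)^{\Int}=\emptyset$. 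For $\tau=b$ this matches $\{b\}^{\Int'}$ by the definition of $b^{\Int'}$ together with $\{b\}^{+}=A_{b}$; for $\tau=\iota B$ we have $\{\tau\}^{+}=B$ and $B^{\Int'}=B^{\Int}$, so $\iota B$ denotes in $\Int'$ iff $B^{\Int}$ is a singleton, whence $\{\iota B\}^{\Int'}=(\{\iota B\}\tr)^{\Int}$ as well.

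Granting this identity, I would finish by verifying $\Int'\models C\sqsubseteq D$ for each CI $C\sqsubseteq D\in\Omc$, following the same case split on the normal form as in the proof of Lemma~\ref{prop-sat1}. For $E\sqsubseteq F$ with $E,F$ being $\ALC_{u}$ concepts (which contain no nominals), a routine induction on concept structure, using that $\Int'$ and $\Int$ agree on all concept names, role names and $u$, gives $E^{\Int'}=E^{\Int}$ and $F^{\Int'}=F^{\Int}$, so $\Int\models(E\sqsubseteq F)\tr$, i.e.\ $\Int\models E\sqsubseteq F$, yields $\Int'\models E\sqsubseteq F$. For $\{\tau\}\sqsubseteq A$ and $A\sqsubseteq\{\tau\}$, the translation only rewrites the nominal as $\{\tau\}\tr$, so the identity above together with $A^{\Int'}=A^{\Int}$ gives the required satisfaction directly. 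Since $\Int\models\Omc\tr$ by hypothesis, every CI of $\Omc$ holds in $\Int'$, i.e.\ $\Int'\models\Omc$. I expect the only delicate point to be the bookkeeping in the nominal-extension identity, and in particular ruling out the configuration $(\{\tau\}^{+})^{\Int}=\{d\}$ with $d\neq a_{\tau}^{\Int}$: this would make $(\{\tau\}\tr)^{\Int}$ empty while $b^{\Int'}$ is defined, and it is exactly what the added CI $\{\tau\}^{+}\sqsubseteq\forall u.(\{a_{\tau}\}\Rightarrow\{\tau\}^{+})$ is designed to exclude. Everything else is parallel to the proof of Lemma~\ref{prop-sat1}.
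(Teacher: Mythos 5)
Your proposal is correct and follows essentially the same route as the paper: the same construction of $\Int'$ (individual $b$ denotes exactly when $A_b^{\Int}$ is a singleton), the same key identity $\{\tau\}^{\Int'}=(\{\tau\}\red)^{\Int}$ for all $\{\tau\}\in\con{\Omc}$ (the paper proves it pointwise with a direction-by-direction case split, you organize it by the cardinality of $(\{\tau\}^{+})^{\Int}$, using the auxiliary CI exactly as the paper does to exclude the case of a singleton different from $a_{\tau}^{\Int}$), and the same per-CI verification over the normal form. No gaps.
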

\begin{proof}
Suppose that $\Int = (\Delta^{\Int}, \cdot^{\Int})$ and that $\Int \models \Omc\tr$.
%
Consider
a partial interpretation $\Int' = (\Delta^{\Int'}, \cdot^{\Int'})$, with $\Delta^{\Int'} = \Delta^{\Int}$, such that $\Int'$
coincides with $\Int$ on
all concept, role and individual names,
except possibly from the following.
For all $\{ \tau \} \in \con{\Omc}$:
\begin{itemize}
\item if $\tau = b$, and $d \in \Delta^{\Imc}$, then, if
  $A_{b}^{\Int} = \{ d \} $, we set $b^{\Imc'} = d$; otherwise,
  $b \not \in \dom(\cdot^{\Int'})$;
\item if $\tau = \defdes B$ and $d \in \Delta^{\Imc}$,
we have
  $\defdes B^{\Int'} = d$ iff $B^{\Int} = \{ d \}$.
\end{itemize}
%
Finally, $\Int'$ coincides with $\Int$ on individual names not occurring in $\Omc$.
Now we prove the following claim.
\begin{claim}\label{claim:tr2}
  For every
  $\{ \tau \} \in \con{\Omc}$
  and every $d \in \Delta^{\Imc}$,
  $d \in (\{\tau\}\red)^\Int$ iff $d \in \{\tau\}^{\Int'}$.
\end{claim}
\begin{proof}
  $(\Rightarrow)$ We consider the two different forms of $\tau$.
  \begin{itemize}
  \item Let $\tau = b$. If
    $d \in (\{ b \}\red)^{\Int} = (C^{\leq 1}_{b} \sqcap
    A_b)^{\Int}$, then $d \in A_b^{\Int}$ and
    $A_{b}^{\Int} \subseteq \{ a_{b}^{\Int} \}$. Thus, we have
    $d = a_{b}^{\Int}$. By definition of $\Int'$,
    we have $b^{\Int'} = a_{b}^{\Int}$, which implies $d \in \{ b^{\Int'} \}$, and
    hence $d \in \{ b \}^{\Int'}$.
  \item Let $\tau = \defdes B$, and
    $d\in (\{ \defdes B \}\red)^{\Int} = (C_{\defdes B}^{\leq 1} \sqcap
    B)^{\Int}$. We have, therefore, that $d\in B^{\Int}$ and
    $B^{\Int} \subseteq \{ a_{\defdes B}^{\Int} \}$.  Thus,
    $d = a_{\defdes B}^{\Int}$ and
    $B^{\Int} = \{ a_{\defdes B}^{\Int} \}$.  By definition of
    $\Int'$, we have $d = a_{\defdes B}^{\Int} = \defdes B^{\Int'}$,
    and hence $d \in \{ \defdes B \}^{\Int'}$.
  \end{itemize}
  $(\Leftarrow)$ We consider the two different forms of $\tau$.
  \begin{itemize}
  \item
  Let $\tau = b$, and $d \in \{ b \}^{\Int'}$, i.e.,
  $d = b^{\Int'}$.
  By construction of $\Int'$, then
  $A_{b}^{\Int} = \{ d \}$.  Since $\Int \models \Omc\tr$, we have in
  particular that
  $d \in A_{b}^{\Int}$
  implies
  $a_{b}^{\Int} \in A_{b}^{\Int}$.  Therefore,
  $A_{b}^{\Int} = \{d\} = \{ a_{b}^{\Int} \}$,
  and thus,
  $d \in (C^{\leq 1}_{b} \sqcap A_{b})^{\Int} = (\{ b \}\red)^{\Int}$.
\item Let $\tau = \defdes B$, and
  $d \in \{ \defdes B \}^{\Int'}$, i.e., $d = \defdes B^{\Int'}$.  By
  definition of $\Int'$, $B^{\Int} = \{ d \}$.  Since
  $\Int \models \Omc\tr$, we have in particular that
  $B^{\Int} \neq \emptyset$ implies
  $a_{\defdes B}^{\Int} \in B^{\Int}$.  Hence, 
  $B^{\Int} = \{ d \} = \{ a_{\defdes B}^{\Int} \}$, and thus,
  $d \in (C^{\leq 1}_{\defdes B} \sqcap B)^{\Int} = ( \{ \defdes B
  \}\red)^{\Int}$.
  \qedhere
  \end{itemize}
\end{proof}
\noindent
The last step is to show
the following claim.
%
\begin{claim}
  \label{eqn:tr2}
  For every $C \sqsubseteq D \in \Omc$, 
  $\Int \models (C \sqsubseteq D)\red \ \text{iff} \ \Int' \models C \sqsubseteq D$.
\end{claim}
\begin{proof}
Similar to
Claim~\ref{eqn:tr}.
\end{proof}
Since by assumption we have that $\Imc \models \Omc\tr$, the previous claim concludes the proof of the lemma.
\end{proof}


We now prove Lemma~\ref{lem:completion} in
Section~\ref{subsec:elo}.

\Eloentailment*

\begin{proof} 
  ($\Leftarrow$) Suppose the completion rules generate the sequences
  $S_0,\ldots,S_{m_1}$, $R_1,\ldots,R_{m_2}$, 
  $V_0,\ldots,V_{m_3}$, representing each update of $S,R,V$.
  Assume $S({\sf A})\cap \{B,\bot\}\neq \emptyset$. To
  show that $\Omc\models {\sf A}\sqsubseteq B$ we need to show that
  for all models $\Imc$ of $\Omc$ and for any $x\in\Delta^{\Imc}$, if
  $x\in{\sf A}^{\Imc}$ then $x\in B^{\Imc}$ which is a direct
  consequence of the following claim.
  \begin{claim}\label{cl:aux1}
    For all models $\Imc$ of $\Omc$, either
    ${\sf A}^{\Imc}=\emptyset$, or for all $n\geq 0$, all
    $r\in \Rmc_\Omc$, all $D\in\Bmc\Cmc_\Omc^+$,
      $E\in\Bmc\Cmc_\Omc$, all
    $B\in \Bmc\Cmc_\Omc\cap \NC$, and all $x\in \Delta^{\Imc}$, if
    $x\in E^\Imc$ then:
        \begin{enumerate}
    \item $D\in S_n(E)$ implies $x\in D^{\Imc}$;
    \item 
      $r\in R_n(E,D)$ implies there is $y\in\Delta^\Imc$ such that
      $(x,y)\in r^\Imc$ and $y\in D^\Imc$;
    \item $B^c\not\in V_n$ implies $|B^\Imc|\leq 1$, i.e.,
      $B^\Imc = \{\iota B\}^\Imc$;
      \item ${\sf A}\leadsto_{R_n} C$ implies there exists $y\in
        \Delta^{\Imc}$ such that $y\in C^\Imc$.
    \end{enumerate}
  \end{claim}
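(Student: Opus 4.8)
The plan is to fix an arbitrary model $\Imc$ of $\Omc$ and, since the first disjunct of the claim already takes care of the case ${\sf A}^{\Imc}=\emptyset$, to assume ${\sf A}^{\Imc}\neq\emptyset$ and fix some $a_{0}\in{\sf A}^{\Imc}$. I would then prove items~(1)--(3) simultaneously by induction on $n$, the number of applied completion rules, and obtain item~(4) as a consequence of~(1) and~(2). For the base case $n=0$ note that $S_{0}(E)=\{E,\top\}$, that $R_{0}$ is empty, and that $B^{c}\in V_{0}$ for every $B\in(\Bmc\Cmc_\Omc\cap\NC)$; hence~(1) is immediate, (2) is vacuous, and~(3) is vacuous. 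Since, along the run, $S$ and $R$ only grow and $V$ only shrinks, in the induction step it is enough to verify the affected item for the single membership or edge added, or the single node removed, by the rule that is applied.

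The induction step is a case analysis over the eleven rules of Table~\ref{tab:rules}; each case not involving a reachability premise is a one-line use of the induction hypothesis together with $\Imc\models\Omc$. For instance, ${\sf R}_{1}$ uses that $\Imc$ satisfies the relevant conjunction inclusion; ${\sf R}_{2}$ uses $C\sqsubseteq\exists r.D\in\Omc$; ${\sf R}_{3}$ uses~(2) to obtain an $r$-successor of $x$ in $C^{\Imc}$ and then $\exists r.C\sqsubseteq D\in\Omc$, and ${\sf R}'_{3}$ is similar, using $u^{\Imc}=\Delta^{\Imc}\times\Delta^{\Imc}$; ${\sf R}_{6}$ and~${\sf R}_{7}$ use the nominal inclusions directly; ${\sf R}_{5}$ in fact shows $E^{\Imc}=\emptyset$, so~(1) holds vacuously at $E$; ${\sf R}_{8}$ uses~(1) to get $B^{\Imc}\subseteq\{\tau\}^{\Imc}$ and hence $|B^{\Imc}|\leq1$; and ${\sf R}_{9}$ combines~(1), which puts $x$ into $B^{\Imc}$ (so $B^{\Imc}\neq\emptyset$), with~(3) ($|B^{\Imc}|\leq1$) to conclude $|B^{\Imc}|=1$ and therefore $x\in\{\iota B\}^{\Imc}$.

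The cases that need more care are ${\sf R}_{4}$ and~${\sf R}_{10}$ (and, implicitly, the reachability premises of ${\sf R}_{2}$ and~${\sf R}'_{3}$). For these I would first prove item~(4): if ${\sf A}\leadsto_{R_{n}}C$ via $C_{1},\dots,C_{k}$, then~(1) places $a_{0}$ in $C_{1}^{\Imc}$, (2) lets one walk along the $R_{n}$-edges to produce elements of $C_{2}^{\Imc},\dots,C_{k}^{\Imc}$, and~(1) applied at $C_{k}$ delivers an element of $C^{\Imc}$. Using~(4), rule~${\sf R}_{4}$ is handled thus: from $\{\tau\}\in S_{n}(E)\cap S_{n}(D)$ and ${\sf A}\leadsto_{R_{n}}D$ we get, for any $x\in E^{\Imc}$, that $x=\tau^{\Imc}$ by~(1), that some $y\in D^{\Imc}$ exists by~(4), and that $y=\tau^{\Imc}=x$ again by~(1), so $x\in D^{\Imc}$ and thus $x\in F^{\Imc}$ for all $F\in S_{n}(D)$; rule~${\sf R}_{10}$ uses~(4) to find $y\in C^{\Imc}$ and~(1) to get $y\in\{\iota B\}^{\Imc}$, which forces $\iota B$ to denote, i.e.\ $|B^{\Imc}|=1$. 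Note that the reachability premises are exactly what make these two rules sound: $\{\iota B\}\in S(C)$ bounds $|B^{\Imc}|$ only because $C^{\Imc}$ is nonempty, which is why the claim is stated as a disjunction, with ${\sf A}^{\Imc}=\emptyset$ as the first alternative.

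The main obstacle I anticipate is the bookkeeping around the copy nodes $E^{c}$: they appear as sources of $R$-edges (introduced by~${\sf R}_{2}$) and as targets of membership additions, yet items~(1) and~(2) are asserted only for nodes in $\Bmc\Cmc_\Omc$. One must check that this restricted scope is never exceeded, i.e.\ that no rule forces a semantic guarantee about $S(E^{c})$ or about edges out of $E^{c}$: the relation $\leadsto$ ranges only over $\Bmc\Cmc_\Omc$-nodes, and the $E^{c}$-part of the graph feeds only the counting rules ${\sf R}_{8}$--${\sf R}_{10}$, which read $S(B)$ for $B\in(\Bmc\Cmc_\Omc\cap\NC)$; conversely, whenever ${\sf R}_{3}$ or~${\sf R}_{5}$ reads an edge $r\in R_{n}(E,B)$ with $E\in\Bmc\Cmc_\Omc$, item~(2) of the induction hypothesis already applies to it. Once this invariant is made precise the remaining cases are mechanical, and the claim then yields the $(\Leftarrow)$ direction of Lemma~\ref{lem:completion} by instantiating~(1) with $E={\sf A}$ and $D\in\{B,\bot\}$ (reading off ${\sf A}^{\Imc}\subseteq B^{\Imc}$ when $B\in S({\sf A})$, and ${\sf A}^{\Imc}=\emptyset$, hence $\Omc\models{\sf A}\sqsubseteq B$ trivially, when $\bot\in S({\sf A})$).
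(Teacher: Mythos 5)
Your proposal is correct and follows essentially the same route as the paper's proof: induction on the number of rule applications with a case analysis over the completion rules, using $\Imc\models\Omc$ for the ontology-driven rules, and establishing item~(4) by walking along the $\leadsto_R$ path via items~(1) and~(2) from a fixed element of ${\sf A}^{\Imc}$, exactly as needed for the reachability premises in ${\sf R}_4$, ${\sf R}'_3$ and ${\sf R}_{10}$. The only cosmetic difference is that the paper folds (4) into the simultaneous induction (with its own sub-induction on path length) rather than deriving it from (1)--(2) at each stage, and your remarks on the copy nodes $E^c$ match what the paper handles implicitly.
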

  \begin{proof}
    The proof is by induction on $n$. Let $\Imc$ be a model of
    $\Omc$ s.t. ${\sf A}^{\Imc}\neq\emptyset$. The base case ($n=0$)
    easily holds. Indeed, 
    Point~(1) holds because for all $C\in\Bmc\Cmc_\Omc$, we have $S_0(C)=\{C,\top\}$.
    The conditions in  Point~(2) 
    hold vacuously.
     Point~(3) holds because there is no $B^c$ s.t.  $B^c\not\in V_0$.
Finally, Point~(4) holds for $n=0$ because in this case,
by definition of $S_0$ and $R_0$, we have ${\sf A}\leadsto_{R_0} C$ 
only for $C={\sf A}$ or $C=\top$. So, assuming ${\sf A}^{\Imc}\neq\emptyset$,
then there exists $y\in \Delta^{\Imc}$ such that $y\in C^\Imc$.
          %
    Thus, Claim~\ref{cl:aux1} holds in the base case.         
          
    We prove Point~(1) assuming that $D\in S_n(E)\setminus S_{n-1}(E)$
    and considering the rules that add $D$ to $S_n(E)$.
    \begin{description}
    \item[${\sf R}_1, 
    {\sf R}_6,{\sf R}_7$.] These cases can be proved
      similarly to~\cite{BaaEtAl05}.
    \item[${\sf R}_3$.]
      In this case, there is $ \exists r.C \sqsubseteq D \in\Omc$ and
      $B\in\Bmc\Cmc_\Omc$ such that 
      $C\in S_{n-1}(B)$ and $r\in R_{n-1}(E,B)$.  By induction on
      Point~(2), if $r\in R_{n-1}(E,B)$ then there is
      $y\in\Delta^\Imc$ such that $(x,y)\in r^\Imc$ and $y\in B^\Imc$.
      If $y\in B^\Imc$ and $C\in S_{n-1}(B)$ then by the i.h.,
      $y\in C^\Imc$. Thus, $x\in (\exists r.C)^\Imc$ and since
      $\exists r.C \sqsubseteq D \in\Omc$ we obtain $x\in D^\Imc$.
         %
      \item[${\sf R}'_3$.] Let $\exists u.C \sqsubseteq D\in
        \Omc$  and ${\sf A}\leadsto_{R_{n-1}} C$. Then, by
        induction on Point~(4), there is $y\in C^\Imc$. Thus,
        $(x,y)\in u^\Imc$ and since $\exists u.C \sqsubseteq D\in
        \Omc$, we obtain $x\in D^\Imc$.
      \item[${\sf R}_4$.] Since ${\sf A}^{\Imc}\neq\emptyset$ and
        ${\sf A} \leadsto_{R_{n-1}} D$, then, by induction on Point~(4), 
        $D^{\Imc}\neq\emptyset$. Furthermore, there is a singleton
        $\{\tau\}$ s.t. $\{\tau\}\in S_{n-1}(E)$ and
        $\{\tau\}\in S_{n-1}(D)$. By induction on Point~(1),
        $x\in \{\tau\}^\Imc$ and, for any $y\in D^\Imc$ then
        $y\in \{\tau\}^\Imc$. Since $D^{\Imc}\neq\emptyset$, then
        $x=y$. Now, let $C\in S_{n-1}(D)$. By induction,
        $y\in C^\Imc$, and thus $x\in C^\Imc$. 
    \item[${\sf R}_5$.] 
      In this case, 
      there is $r\in R_{n-1}(E,D)$, and $\bot\in S_{n-1}(D)$.
      Thus, by induction, $\exists y\in\Delta^\Imc$ s.t. $(x,y)\in r^\Imc$ and
      $y\in D^\Imc$.  By induction (on Point~(1)),
      $y\in\bot^\Imc$. Thus we conclude that there is no $\Imc$ where
      $E^\Imc\neq \emptyset$.
      %
    \item[${\sf R}_9$.] In this case, there is
      $B\in S_{n-1}(E)$, and $B^c\not\in V_{n-1}$. By induction on Points~(2)-(3),
      $x\in B^\Imc$ with $B^\Imc = \{\iota B\}^\Imc$.
    \end{description}
    %
    We now show Point~(2).
    By ${\sf R}_2$, $r\in R_n(E,D)$ only if
    ${\sf A} \leadsto_{R_{n-1}} E$, $C\in S_{n-1}(E)$, and $C\sqsubseteq \exists r. D\in\Omc$. 
     By induction on Point~(1), if $C\in S_{n-1}(E)$ then $x\in C^\Imc$. As $C\sqsubseteq \exists r. D\in\Omc$ and \Imc is a model of \Omc, there is $y\in\Delta^\Imc$ such that $(x,y)\in r^\Imc$ and
      $y\in D^\Imc$.
      %
      We show Point~(3) by considering the rules that remove $B^c$
      from $V_n$.
      \begin{description}
      \item[${\sf R}_8$.] In this case, there is a singleton
        $\{\tau\}$ s.t. $\{\tau\}\in S_{n-1}(B)$, with $B\in\NC$. By
        induction, for any $x\in B^\Imc$, then $x\in
        \{\tau\}^\Imc$. Thus, $|B^\Imc|\leq 1$, as required.
      \item[${\sf R}_{10}$.] In this case, 
        ${\sf A}\leadsto_{R_{n-1}} C$ and $\{\iota B\}\in S_{n-1}(C)$.
        By induction on Point~(4), there exists $y\in \Delta^\Imc$
        such that  $y\in C^\Imc$. By induction on Point~(1),
      $y\in \{\iota B\}^\Imc$, and thus $|B^\Imc|\leq 1$, as required.
    \end{description}%

    We prove Point~(4) by induction on  $R_n$,  using the assumption that
    ${\sf A}^\Imc\neq \emptyset$.  
    If ${\sf A}\leadsto_{R_n} C$ then there are $C_1, \dots, C_k\in \Bmc\Cmc_\Omc$
    such that $C_1\in S_n({\sf A})$; 
  %
$r\in R_n(C_j,C_{j+1})$, for some $r\in \Rmc_\Omc$, 
  for all $1\leq j < k$;
$C\in S_n(C_k)$.
For $k>1$, ${\sf A}\leadsto_{R_{n-1}} C_{k-1}$
implies, 
by the inductive hypothesis, 
that there is $y\in\Delta^\Imc$ such that
        $y\in C_{k-1}^\Imc$.
By Point (2), if $r\in R_n(C_{k-1},C_{k})$, for some $r\in \Rmc_\Omc$,
then there is $z\in\Delta^\Imc$
  such that
      $(y,z)\in r^\Imc$ and $z\in C^\Imc_k$.
 By Point (1), if $z\in C^\Imc_k$ and $C\in S_n(C_k)$
 then $z\in C^\Imc$.
 Then, there is $z\in\Delta^\Imc$ such that
        $z\in C^\Imc$, as required.
    \end{proof}
    %
%
    ($\Rightarrow$) Here we show the contrapositive. That is,  
    $S({\sf A})\cap \{B,\bot\}= \emptyset$ implies
    $\Omc\not\models {\sf A}\sqsubseteq B$. Suppose the
    exhaustive application of the completion rules generate $S,R,V$.
%
    We now define a model \Imc of \Omc and show that there is
    $d\in\Delta^\Imc$ such that $d\in {\sf A}^\Imc$ but
    $d\not\in B^\Imc$. Let us define the following two
    sets: 
    \begin{align*}
      \ra & = \{C\in \Bmc\Cmc_\Omc \mid {\sf A}\leadsto_R C
            \},\\
      \ra^c & = \{C^c\in V\mid C\in\ra\cap\NC\},
    \end{align*}
    over which we define the relation $\thicksim$ where 
  \begin{itemize}
  \item $C \thicksim D$ iff $C=D$ or $\{\tau\}\in S(C)\cap S(D)$, for
    some term $\tau$.
  \end{itemize}
  Due to ${\sf R}_4$, one can show that:
  \begin{equation}
  C \thicksim D \text{ implies } S(C) = S(D).\label{eq:1}
  \end{equation}
  Thus, $\thicksim$ is an equivalence relation, denoted as $\eq{C}$.
  We define the partial interpretation $\Imc=(\Delta^\Imc, \cdot^\Imc)$ as follows:
  \begin{itemize}
  \item $\Delta^\Imc =\{\eq{C} \mid C\in \ra\cup\ra^c\}$;
  \item $r^\Imc =\{(\eq{C},\eq{D}) \in \Delta^\Imc\times
    \Delta^\Imc\mid \exists D'\in [D]. r\in R(C,D')\}$, 
    for all
    $r\in\NR\cap \Rmc_\Omc$;
  \item $D^\Imc = \{\eq{C}\in \Delta^\Imc\mid D\in S(C)\}$, for all $D\in \NC$;
  \item $a^\Imc =\eq{C}$, for some $C\in\ra$, if $\{a\}\in S(C)$, for all $a\in\NI$.
    %
  \end{itemize}
  Moreover, the universal role $u$ is mapped by $\cdot^\Imc$ to
  $\Delta^\Imc\times\Delta^\Imc$. According
    to the semantic definition
  of a definite description,   $(\iota A)^\Imc =\eq{C}$ if
  $A^\Imc=\{\eq{C}\}$, for all $A\in\NC$. When a term $\tau$ does not
  have a value in \Imc, we write 
  $\tau\not\in\dom(\cdot^\Imc)$. Thus, we extend \Imc to singletons in
  the following way:
  \begin{itemize}
  \item $\{\tau\}^\Imc = \{\tau^\Imc\}$, if $\tau\in\dom(\cdot^\Imc)$,
      otherwise $\{\tau\}^\Imc = \emptyset$.
  \end{itemize}
  %
  The interpretation of the singletons is well
  defined. Indeed, 
  if
  $a^\Imc = [C_1]$ and $a^\Imc = [C_2]$ then
  $\{a\}\in S(C_1)\cap S(C_2)$ and thus $C_1 \thicksim C_2$. So
  $[C_1] = [C_2]$. As for role names, the fact that their interpretation is
  well defined can be proved similarly to~\cite{BaaEtAl05}.
  The mapping of $\cdot^\Imc$ for the universal role is 
  also standard.
  %
  %
%
  \begin{claim}\label{cl:aux2}
    Let $C\in S(C')$. Then,
    \begin{enumerate}
    \item for any $D\in \Bmc\Cmc_\Omc^+$
        with $D\in S(C)$,
      $D\in S(C')$;
    \item for any $r\in \Rmc_\Omc$ with $r\in R(C,D)$,
      $r\in R(C',D)$. 
    \end{enumerate}
  \end{claim}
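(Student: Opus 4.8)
The plan is to prove items~1 and~2 simultaneously, by induction on the stage of the completion procedure at which the relevant membership is first produced, exploiting that the completed graph $(V,S,R)$ is closed under every rule of Table~\ref{tab:rules}. As in the rest of this proof, $C$ and $C'$ are understood to range over $\ra\cup\ra^c$, so that $C'$ is either reachable from ${\sf A}$ via $\leadsto_R$ or is the copy $E^c$ of such a node $E$; this is the only regime in which the claim is invoked. Two auxiliary facts, each proved by a routine induction on rule applications, dispose of the copy nodes once and for all: for every concept name $E$ with $E^c\in V$ one has $S(E^c)=S(E)\cup\{E^c\}$ and $R(E^c,D)=R(E,D)$ for all $D$. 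Indeed $S(E)\subseteq S(E^c)$ from initialisation on; conversely $E^c$ is a fresh concept name, hence it never occurs on the left of a CI and never triggers a rule, so it is never inserted into any label other than its own; and ${\sf R}_2$, the only rule writing into $R$, always updates $R(E,\cdot)$ and $R(E^c,\cdot)$ in lockstep and never fires with $E^c$ in the role of its $E$. With these facts, the cases where $C$ or $C'$ is a copy node reduce to the corresponding original node, and from now on I assume $C,C'\in\Bmc\Cmc_\Omc$.

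Fix $C\in S(C')$. For item~1 I would induct on the stage at which $D$ entered $S(C)$, and for item~2 on the stage at which $r$ entered $R(C,D)$; the induction hypothesis covers both items for all membership facts established at earlier stages. The base case is immediate, since initially $S(C)=\{C,\top\}$ and $R$ is empty. In the inductive step one cases on the rule that produced the fact. When $D\in S(C)$ was produced by a rule enlarging a single label in reaction to $S$-memberships of $C$ — namely ${\sf R}_1$, ${\sf R}_6$, ${\sf R}_7$ or ${\sf R}_9$ — the older triggering memberships transfer from $S(C)$ to $S(C')$ by the induction hypothesis for item~1, and then the very same rule applies at $C'$ in the closed graph and puts $D$ into $S(C')$ (for ${\sf R}_9$ one also uses that $V$ only shrinks, so the side condition $B^c\notin V$ is preserved). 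When $D\in S(C)$ was produced by ${\sf R}_3$ or ${\sf R}_5$, the trigger involves an edge $r'\in R(C,B')$ together with a label of $B'$; the edge transfers to $R(C',B')$ by the induction hypothesis for item~2, and again the rule re-fires at $C'$. Rule ${\sf R}'_3$ writes $D$ into every node of $V$, in particular into $S(C')$. For item~2, the fact $r\in R(C,D)$ can only have been produced by ${\sf R}_2$, with $C$ in the role of its $E$, from ${\sf A}\leadsto_R C$, a CI $G\sqsubseteq\exists r.D\in\Omc$, and $G\in S(C)$; the induction hypothesis for item~1 gives $G\in S(C')$, and since $C'\in\ra\cup\ra^c$ we have ${\sf A}\leadsto_R C'$ (immediately when $C'\in\ra$, and otherwise by passing through the original node via the copy-lockstep fact), so ${\sf R}_2$ re-fires at $C'$ and puts $r$ into $R(C',D)$.

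The genuinely delicate case is the merge rule ${\sf R}_4$. Here $D$ is not inserted into $S(C)$ by a rule whose conclusion can simply be replayed at $C'$; instead $S(C)$ was replaced by $S(C)\cup S(D_1)$ for some $D_1$ with ${\sf A}\leadsto_R D_1$ and a nominal $\{\tau\}$ satisfying $\{\tau\}\in S(C)\cap S(D_1)$, and $D\in S(D_1)$. The argument is to apply the induction hypothesis for item~1 to the older fact $\{\tau\}\in S(C)$, obtaining $\{\tau\}\in S(C')$; then $\{\tau\}\in S(C')\cap S(D_1)$ and ${\sf A}\leadsto_R D_1$, so the completed graph, being closed under ${\sf R}_4$, already satisfies $S(D_1)\subseteq S(C')$, and hence $D\in S(C')$. (The same kind of appeal to closure under ${\sf R}_4$ is also what justifies equation~\eqref{eq:1}.)

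I expect the main obstacle to be organising the induction so that it remains well-founded under the mutual dependence between the two items — item~1 appeals to item~2 through ${\sf R}_3$ and ${\sf R}_5$, while item~2 appeals to item~1 through ${\sf R}_2$ — and, within this, handling ${\sf R}_4$ as above, where the inductive step is closed not by re-firing the rule that introduced $D$ but by invoking closure of the completed graph under ${\sf R}_4$ itself after first transporting the triggering nominal into $S(C')$. The bookkeeping around the copy nodes $E^c$, the reachability relation $\leadsto_R$, and the monotonicity of the labels and of the deletions from $V$ is routine but needs to be spelled out explicitly.
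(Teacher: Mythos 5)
Your proof is correct and follows essentially the same route as the paper's: induction on the stage at which the membership fact was produced, a case analysis on the generating rule, transfer of the triggering facts to $C'$ via the (mutual) induction hypothesis, and then re-firing the rule or invoking closure of the completed graph. The differences are ones of added care rather than of method: you make explicit the side-condition ${\sf A}\leadsto_R C'$ needed to re-fire ${\sf R}_2$ (which the paper's ${\sf R}_2$ case uses silently, and which is exactly why your restriction of $C,C'$ to $\ra\cup\ra^c$ — matching the only instances in which the claim is actually invoked, namely in the proof that $\bot$ labels no reachable node — is the right move), and you spell out the ${\sf R}_4$ merge case via fixpoint closure and the copy-node bookkeeping that the paper compresses into ``similarly to ${\sf R}_1$'' and handles only implicitly.
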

  \begin{proof}
    The proof is by induction on the generation of the sequences
    $S_0,\ldots,S_m$ and $R_0,\ldots,R_m$. For Point (1), we consider
    the rules that add $D$ to $S(C)$. The base case ($n=0$) easily
    follows.
    \begin{description}
    \item[${\sf R}_1$.] There is $B\sqcap E\sqsubseteq D \in \Omc$ and
      $B,E\in S_{n-1}(C)$. Then, by induction, $B,E\in S_{n-1}(C')$
      and, by an application of ${\sf R}_1$, $D\in S_{n}(C')$.
    \item[${\sf R}_4,{\sf R}_6,{\sf R}_7,{\sf R}_9$.] These cases can be proved
      similarly to ${\sf R}_1$.
    \item[${\sf R}_3$.] There is $\exists r.B\sqsubseteq D \in \Omc$,
      $B\in S_{n-1} (E)$, $r\in\Rmc_{n-1}(C,E)$, and ${\sf A}\leadsto_R
      E$. By induction, $r\in\Rmc_{n-1}(C',E)$ 
      and, by an application of ${\sf R}_3$, $D\in S_{n}(C')$.
      \item[${\sf R}_5$.] There  is $r\in R_{n-1}(C,D)$ and $\bot\in
        S_{n-1}(D)$.  By induction, $r\in\Rmc_{n-1}(C',D)$ and, by an
        application of ${\sf R}_5$, $\bot\in S_n(C')$.
    \end{description}
    As for Point (2), the base case easily holds since, for all
    $C,D\in  \Bmc\Cmc_\Omc\setminus{\{\sf A\}}$, we have that
      $R_0(C,D) = \{u\}$, while   $R_0({\sf A}, {\sf A}) =
      \{u,u_u\}$ and $S_0({\sf A}) = \{{\sf A},\top\}$.
    We thus consider the only rule that extends  
      $R$.
    \begin{description}
    \item[${\sf R}_2$.] There is $B\sqsubseteq \exists r.D \in \Omc$,
       $B\in S_{n-1}(C)$, and ${\sf A}\leadsto_R C$. By induction, $B\in S_{n-1}(C')$ and, by
      an application of ${\sf R}_2$, $r\in R_n(C',D)$. 
    \end{description}%
  \end{proof}%
  \noindent
  We now show three properties of elements $[D^c]\in \Delta^\Imc$ that
  are used to show Claim~\ref{cl:aux3}. The first is formulated as
  follows.
  \begin{align}
    \label{eq:2}
    \text{For } [D], [D^c]\in \Delta^\Imc, \text{we have } [D] \neq [D^c].
  \end{align}%
  Indeed, if by contradiction $[D] = [D^c]$, i.e., $D \thicksim D^c$,
  then there must be a term $\tau$ such that $\{\tau\} \in S(D^c)\cap S(D)$ and thus,
  by~${\sf R}_8$, $D^c\not\in V$. This contradicts the assumption
  that $[D^c]\in \Delta^\Imc$. The second property is stated  in
  the next claim.
  \begin{claim}\label{cl:pro3}
    For $[C^c]\in \Delta^\Imc$, 
    the following holds:
    \begin{enumerate}
    \item if $D\in S(C^c)$, with $D\neq C^c$, then $D\in S(C)$;
    \item if $r\in R(C^c,D)$, then $r\in R(C,D)$. 
    \end{enumerate}
  \end{claim}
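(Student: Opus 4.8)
I would establish the two items separately, treating~(2) first since the argument for~(1) relies on it. Throughout I use that the hypothesis $[C^c]\in\Delta^\Imc$ entails $C^c\in V$, and that an original node such as $C$ is never deleted by the completion rules (only copy nodes $B^c$ are, via ${\sf R}_8$ and ${\sf R}_{10}$), so $C\in V$ as well. For item~(2) the point is purely structural: the only rule that ever modifies $R$ is ${\sf R}_2$, and each application of ${\sf R}_2$ adds the same role $r$ simultaneously to $R(E,D)$ and to $R(E^c,D)$. Hence every $r\in R(C^c,D)$ was inserted by an application of ${\sf R}_2$ whose copy node is $C^c$, i.e.\ with $E=C$ (the case $E=C^c$ is impossible, since ${\sf R}_2$ requires ${\sf A}\leadsto_R E$ while $\leadsto_R$ ranges over $\Bmc\Cmc_\Omc$ and $C^c\notin\Bmc\Cmc_\Omc$); that same application also added $r$ to $R(C,D)$, and as $R$ only grows, $r\in R(C,D)$ in the completed graph. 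No induction is needed here.

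For item~(1) I would argue, as in the proof of Claim~\ref{cl:aux2}, by induction on the step at which a symbol $D\neq C^c$ enters $S(C^c)$, showing that the same $D$ is then forced into $S(C)$ by the closure of the completed graph under the rules. First I would record the auxiliary fact, proved by a parallel induction, that \emph{a fresh copy name $B^c$ occurs in $S(X)$ only if $X=B^c$}: it holds in $S_0$, and no rule inserts a fresh copy name into an $S$-set, since ${\sf R}_1,{\sf R}_3,{\sf R}'_3,{\sf R}_5,{\sf R}_6$ insert concepts occurring in $\Omc$ or $\bot$, the rules ${\sf R}_7$ and ${\sf R}_9$ insert nominals, and ${\sf R}_4$ only transfers symbols between $\leadsto_R$-reachable nodes, which never include a fresh copy name (in particular ${\sf A}\leadsto_R B^c$ never holds). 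The base case of~(1) is then immediate, as $S_0(C^c)=\{C^c,\top,C\}$ and $\{\top,C\}\subseteq S_0(C)$.

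For the inductive step I would inspect the rules that can add $D$ to $S(C^c)$. For ${\sf R}_1$, ${\sf R}_6$, ${\sf R}_7$ the triggering premises say that certain concepts occurring in $\Omc$, hence different from $C^c$, lie in $S(C^c)$; the induction hypothesis places them in $S(C)$, and closure under the same rule yields $D\in S(C)$. For ${\sf R}_3$ and ${\sf R}_5$ the node-specific premise is a role edge $r\in R(C^c,B)$, which by item~(2) is also in $R(C,B)$, so closure under the rule again yields $D\in S(C)$ (with $D=\bot$ for ${\sf R}_5$). For ${\sf R}'_3$ the only node-dependent premise is $C^c\in V$; since $C\in V$ too, closure under ${\sf R}'_3$ gives $D\in S(C)$. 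For ${\sf R}_9$ one has $D=\{\iota B\}$ with $B$ a concept name occurring in $\Omc$, $B\in S(C^c)$, $B^c\notin V$; as $B\neq C^c$, the hypothesis gives $B\in S(C)$, whence $\{\iota B\}\in S(C)$ by ${\sf R}_9$. The delicate case is ${\sf R}_4$: it fires because some nominal $\{\tau\}$ lies in $S(C^c)\cap S(D')$ with ${\sf A}\leadsto_R D'$, and it puts all of $S(D')$ into $S(C^c)$. Since $\{\tau\}\neq C^c$, the hypothesis gives $\{\tau\}\in S(C)$, so the same instance of ${\sf R}_4$ fires with $E=C$ and moves $S(D')$, in particular $D$, into $S(C)$; and $D\neq C^c$ is automatic, since $C^c\in S(D')$ would force $D'=C^c$ by the auxiliary fact, contradicting ${\sf A}\leadsto_R D'$.

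The main obstacle I anticipate is precisely the bookkeeping around the side condition $D\neq C^c$: it is needed both to apply the induction hypothesis to the premises and to keep the conclusion within the scope of the claim, and this is exactly what the auxiliary fact about fresh copy names is there to guarantee — most visibly in the ${\sf R}_4$ case, where an entire unrelated set $S(D')$ is dumped into $S(C^c)$.
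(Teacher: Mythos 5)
Your proof is correct, and it follows the same overall skeleton as the paper's (induction on the stages $S_0,\ldots,S_m$, $R_0,\ldots,R_m$ for item~(1), and the purely structural observation that ${\sf R}_2$ is the only rule touching $R$ and always writes $r$ into $R(E,D)$ and $R(E^c,D)$ simultaneously for item~(2)). The genuine difference lies in how the nominal-related rules are treated. The paper argues that ${\sf R}_4$, ${\sf R}_6$, ${\sf R}_7$ and ${\sf R}_9$ are \emph{not applicable} at a node $C^c$ with $[C^c]\in\Delta^\Imc$: each of them would force some nominal $\{\tau\}$ (or $\{\iota B\}$) into $S(C)$, whereupon ${\sf R}_8$ would delete $C^c$ from $V$, contradicting the hypothesis that $C^c$ survives in the completed graph; this vacuity argument also feeds directly into the corollary that no $\{\tau\}$ lies in $S(C)$ when $[C^c]\in\Delta^\Imc$. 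You instead treat these rules uniformly with the others, propagating their conclusions to $S(C)$ via the induction hypothesis, closure of the completed graph, and your auxiliary invariant that a copy name $B^c$ occurs only in $S(B^c)$ (which is what keeps the side condition $D\neq C^c$ under control, notably in the ${\sf R}_4$ case where all of $S(D')$ is copied over). Both routes are sound: yours is slightly more uniform and in effect proves the claim without leaning on $[C^c]\in\Delta^\Imc$ beyond $C^c,C\in V$, while the paper's exploits that hypothesis to cut the case analysis short and to prepare the subsequent corollary; your ${\sf R}_4$/${\sf R}_6$/${\sf R}_7$/${\sf R}_9$ cases are, under the paper's reading, simply vacuous, which does not affect their validity.
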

  \begin{proof}
    The proof is by induction on 
    $S_0,\ldots,S_m$ and $R_0,\ldots,R_m$. For Point (1), we consider
    the rules that add $D$ to $S(C^c)$. The base case ($n=0$) easily
    follows.
    \begin{description}
    \item[${\sf R}_1$.] There is $C\sqcap B\sqsubseteq D \in \Omc$ and
      $C,B\in S_{n-1}(C^c)$. By induction, $C,B\in S_{n-1}(C)$, and,
      by an application of ${\sf R}_1$, $D\in S_n(C)$.
    \item[${\sf R}_3$.] There is $\exists r.E\sqsubseteq D \in \Omc$,
      $E\in S_{n-1}(B)$, $r\in R_{n-1}(C^c,B)$, and
      ${\sf A}\leadsto_R B$. By induction, $r\in R_{n-1}(C,B)$, and,
      by an application of ${\sf R}_3$, $D\in S_n(C)$.
    \item[${\sf R}'_3$.] Simply notice that if $C^c\in V$ then
      $C\in V$.
    \item[${\sf R}_5$.]  There is $r\in R_{n-1}(C^c,B)$ with
      $\bot \in S_{n-1}(B)$. By induction, $r\in R_{n-1}(C,B)$, and,
      by an application of ${\sf R}_5$, $\bot\in S_n(C)$.
    \item[${\sf R}_4$, ${\sf R}_6$.] These rules are not
      applicable. Indeed,  
      in both cases
      $\{\tau\}\in S_{n-1}(C^c)$ and, by induction,
      $\{\tau\}\in S_{n-1}(C)$. But, by ${\sf R}_8$, $C^c\not\in V$,
      contradicting the fact that $[C^c]\in \Delta^\Imc$.
    \item[${\sf R}_7$.] This rule is not applicable. Indeed, there
      must be $D\sqsubseteq \{\tau\}\in \Omc$ and $D\in
      S_{n-1}(C^c)$. Then, by induction, $D\in S_{n-1}(C)$ and, by
      applying ${\sf R}_7$, $\{\tau\}\in S_{n-1}(C)$. Again, by
      ${\sf R}_8$, $C^c\not\in V$, contradicting the fact that
      $[C^c]\in \Delta^\Imc$.
    \item[${\sf R}_9$.] This rule is not applicable. Indeed, there
      must be $B\in S_{n-1}(C^c)$ and $B^c\notin V_{n-1}$. Then, by
      induction, $B\in S_{n-1}(C)$ and, by applying ${\sf R}_9$,
      $\{\iota B\}\in S_n(C)$.  Again, by ${\sf R}_8$, $C^c\not\in V$,
      contradicting the fact that $[C^c]\in \Delta^\Imc$.
      \end{description}
      As for Point (2),  the base case holds, since, for all $C,D\in \Bmc\Cmc_\Omc$,
      we have that $R_0(C,D) = \emptyset$. 
      The only rule that adds an
      $r$ to the set $R$ is the rule ${\sf R}_2$ which adds $r$ to
      both $R(C^c,D)$ and $R(C,D)$.
%
      %
    \end{proof}
    The following is an immediate consequence of the above claim.
    \noindent
    \begin{corollary}\label{cor}
      Let $[C^c]\in \Delta^\Imc$, then it is never the case that
      $\{\tau\}\in S(C)$, for any term $\tau$.
    \end{corollary}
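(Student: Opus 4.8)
The plan is to obtain Corollary~\ref{cor} from the saturation of rule ${\sf R}_8$ in the completed classification graph, in the same spirit as the dead-end arguments already appearing in the cases for ${\sf R}_6$, ${\sf R}_7$ and ${\sf R}_9$ in the proof of Claim~\ref{cl:pro3}. First I would unpack the hypothesis $[C^c]\in\Delta^\Imc$: by the definitions of $\Delta^\Imc$ and of $\ra^c$, the class $[C^c]$ can belong to $\Delta^\Imc$ only because $C^c\in\ra^c$, and this forces both $C^c\in V$ and $C\in\ra\cap\NC$; in particular $C$ is a concept name.

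Then I would argue by contradiction. Suppose $\{\tau\}\in S(C)$ for some term $\tau$. Since $C\in\NC$ and $\{\tau\}\in S(C)$, the premise of rule ${\sf R}_8$ is met with $B:=C$, so ${\sf R}_8$ is applicable and its effect is to delete $C^c$ from $V$. Because $(V,S,R)$ is a \emph{completed} classification graph, no rule is applicable any longer, so we must already have $C^c\notin V$, contradicting $C^c\in V$ established above. Hence no term $\tau$ with $\{\tau\}\in S(C)$ can exist, which is precisely the statement of the corollary. One may also route this through Claim~\ref{cl:pro3}: since $C\in S(C^c)$ holds by construction, any nominal $\{\tau\}\in S(C)$ would have to lie in $S(C^c)$ as well, and the proof of Claim~\ref{cl:pro3} shows that a nominal in $S(C^c)$ is incompatible with $[C^c]\in\Delta^\Imc$.

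I do not anticipate any genuine obstacle: the corollary is essentially a reformulation of the termination condition of ${\sf R}_8$, carried through the bookkeeping of the canonical-model construction. The only point requiring a little care is to expand the shorthand $[C^c]\in\Delta^\Imc$ into the explicit facts $C^c\in V$ and $C\in\NC$ before invoking ${\sf R}_8$, and to make the argument in the \emph{completed} graph so that ${\sf R}_8$ has already fired wherever it could.
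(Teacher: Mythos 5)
Your proof is correct and uses essentially the same mechanism as the paper: unpacking $[C^c]\in\Delta^\Imc$ into $C^c\in V$ with $C\in\NC$, and then invoking saturation of rule ${\sf R}_8$ in the completed graph to derive the contradiction, which is exactly the pattern the paper relies on (both in the cases inside the proof of Claim~\ref{cl:pro3} and in stating Corollary~\ref{cor} as an immediate consequence). No gaps.
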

    \noindent
    Claim~\ref{cl:pro3-1} shows that $\bot$
    cannot label any node.
  \begin{claim}\label{cl:pro3-1}
    $\bot\not\in S(C)$, for all $C\in \ra\cup \ra^c$.
  \end{claim}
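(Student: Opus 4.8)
The plan is to argue by contradiction, using the standing assumption of the ($\Rightarrow$) direction that $\bot\notin S({\sf A})$.

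It suffices to prove $\bot\notin S(C)$ for $C\in\ra$: for $D^c\in\ra^c$ one has $[D^c]\in\Delta^{\Imc}$ by definition of $\Delta^{\Imc}$, and $\bot\neq D^c$ since $D^c$ is a fresh concept name while $\bot$ is a logical primitive, so Claim~\ref{cl:pro3}(1) turns $\bot\in S(D^c)$ into $\bot\in S(D)$ with $D\in\ra\cap\NC\subseteq\ra$, reducing to the first case.

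So suppose $\bot\in S(C)$ for some $C\in\ra$, that is, ${\sf A}\leadsto_R C$, and fix a witnessing chain $C_1,\dots,C_k\in\Bmc\Cmc_\Omc$ with $C_1\in S({\sf A})$, $r_j\in R(C_j,C_{j+1})$ for some $r_j\in\Rmc_\Omc$ (all $1\leq j<k$), and $C\in S(C_k)$. The idea is to propagate $\bot$ backwards along this chain up to ${\sf A}$. Since $C\in S(C_k)$ and $\bot\in\Bmc\Cmc^{+}_\Omc$, Claim~\ref{cl:aux2}(1) gives $\bot\in S(C_k)$. Then a downward induction on $j$ from $k-1$ to $1$ using rule ${\sf R}_5$ --- which, the classification graph being completed, must already have fired along $(C_j,C_{j+1})$ once $\bot\in S(C_{j+1})$ --- yields $\bot\in S(C_j)$ at every step, hence $\bot\in S(C_1)$. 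Finally, from $C_1\in S({\sf A})$ and $\bot\in S(C_1)$, a further application of Claim~\ref{cl:aux2}(1) gives $\bot\in S({\sf A})$, contradicting the assumption; so no such $C$ exists. (This, in turn, is what ensures $\bot^{\Imc}=\emptyset$ in the model $\Imc$ under construction.)

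I do not expect a genuinely hard step here; the argument is essentially bookkeeping. The one point needing care is that $\leadsto_R$ and rule applicability are all taken with respect to the \emph{final}, completed $S$, $R$, $V$, so that the repeated backward use of ${\sf R}_5$ along the chain is justified. The other mild subtlety is keeping the two propagation principles apart: Claim~\ref{cl:aux2}(1) carries a label from $C$ to $C'$ whenever $C\in S(C')$, whereas ${\sf R}_5$ carries $\bot$ against the direction of an $R$-edge; each must be invoked in the right place.
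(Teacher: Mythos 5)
Your proposal is correct and follows essentially the same route as the paper: the $\ra^c$ case is reduced to the $\ra$ case via Claim~\ref{cl:pro3}(1), and for $C\in\ra$ the paper likewise works along the witnessing chain of ${\sf A}\leadsto_R C$, using Claim~\ref{cl:aux2}(1) to lift $\bot$ into $S(C_k)$ and rule ${\sf R}_5$ (in the completed graph) to push it back to $S({\sf A})$, contradicting $\bot\notin S({\sf A})$. The paper phrases this as an induction on the chain length rather than a backward propagation by contradiction, but the content is identical.
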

  \begin{proof}
    Let $C\in \ra$. The proof is by induction on the number, $k$, of
    concepts in the definition of ${\sf A}\leadsto_R C$, using the
    assumption that $\bot\not\in S({\sf A})$. The base case,
    $k=0,~ C\in S({\sf A})$, holds by the assumption that
    $\bot\not\in S({\sf A})$ and by Claim~\ref{cl:aux2}. Let the Claim
    holds for $k-1$, thus $\bot\not\in S(C_{k-1})$, and let
    $r\in R(C_{k-1},C_k)$. If, by absurd, $\bot\in S(C)$, for some
    $C\in S(C_k)$, then by Claim~\ref{cl:aux2}, $\bot\in S(C_k)$ and,
    by applying ${\sf
      R}_5$, $\bot\in S(C_{k-1})$ which is a contradiction.
    Let $C = D^c\in \ra^c$. Then, $D\in \ra$  and by Claim~\ref{cl:pro3},
    $\bot\not\in S(D^c)$, for otherwise we would contradict the fact
    that $\bot\not\in S(D)$, for any $D\in\ra$.
  \end{proof}
  We are now able to show the following claim.
  \begin{claim}\label{cl:aux3}
    For all $\eq{C}\in\Delta^\Imc$ and $D\in\Bmc\Cmc^{+}_\Omc$, we
    have that $\eq{C}\in D^\Imc\text{ iff }D\in S(C)$.
  \end{claim}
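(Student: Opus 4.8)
The plan is to prove the equivalence by a case analysis on the form of $D\in\Bmc\Cmc^{+}_\Omc$: namely $D=\top$, $D=\bot$, $D$ a concept name occurring in \Omc, $D=\{a\}$ with $a\in\NI$, or $D=\{\defdes B\}$ with $B\in\NC$ (in the last two cases $\{\tau\}\in\Bmc\Cmc^{+}_\Omc$ is a nominal occurring in \Omc). The first four are routine: for $D=\top$ both sides hold, since $\top^\Imc=\Delta^\Imc$ and $\top\in S(C)$ by the initialization; for $D=\bot$ both sides fail, using $\bot^\Imc=\eset$ and Claim~\ref{cl:pro3-1}; for a concept name $D$ occurring in \Omc the equivalence is just the definition of $D^\Imc$, which is representative-independent since $C\thicksim C'$ implies $S(C)=S(C')$ by~(\ref{eq:1}); and for $D=\{a\}$, from $\{a\}\in S(C)$ with $\eq{C}\in\Delta^\Imc$ one gets $C\in\ra$ (a node $F^c$ would yield $\{a\}\in S(F)$ by Claim~\ref{cl:pro3}(1), contradicting Corollary~\ref{cor}), so $a^\Imc$ is defined and equals $\eq{C'}$ for some $C'$ with $\{a\}\in S(C')$, whence $C\thicksim C'$ and $\eq{C}=a^\Imc\in\{a\}^\Imc$; conversely $\eq{C}\in\{a\}^\Imc$ forces $\eq{C}=a^\Imc=\eq{C'}$ with $\{a\}\in S(C')$, so $\{a\}\in S(C)$ by~(\ref{eq:1}). (The same reasoning also confirms that $a^\Imc$ is well defined.)

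The only substantial case is $D=\{\defdes B\}$. I will first record two facts about the completed graph. Since $\{\defdes B\}$ can enter a label set only through rule ${\sf R}_9$ (possibly propagated further by ${\sf R}_4$), and since $V$ never grows, $\{\defdes B\}\in S(C)$ implies $B^c\notin V$ in the completed graph; and, conversely, once $B^c\notin V$, the exhaustive application of ${\sf R}_9$ puts $\{\defdes B\}$ into $S(E)$ for every node $E$ with $B\in S(E)$. I will also use that $\{\defdes B\}\sqsubseteq B\in\Omc$ by the normal-form convention, so that rule ${\sf R}_6$ propagates $B$ wherever $\{\defdes B\}$ occurs.

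For the direction from right to left, assume $\{\defdes B\}\in S(C)$. Then $B\in S(C)$ by ${\sf R}_6$, hence $\eq{C}\in B^\Imc$ by the concept-name case; and since $B^c\notin V$, every node $E$ with $B\in S(E)$ also has $\{\defdes B\}\in S(E)$, so $E\thicksim C$, giving $B^\Imc=\{\eq{C}\}$. Thus $(\defdes B)^\Imc=\eq{C}$ and $\eq{C}\in\{\defdes B\}^\Imc$. For the converse, assume $\eq{C}\in\{\defdes B\}^\Imc$; then $B^\Imc$ is a singleton equal to $\{\eq{C}\}$, and in particular $B\in S(C)$ by~(\ref{eq:1}). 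It suffices to show $B^c\notin V$, since then ${\sf R}_9$ yields $\{\defdes B\}\in S(C)$. To this end I first check that $B\in\ra$: if $C\in\ra$, then $C\leadsto_R B$ (a one-step path through $B$, using $B\in S(C)$), and combining with ${\sf A}\leadsto_R C$ and the ``transitivity'' of $\leadsto_R$ supplied by Claim~\ref{cl:aux2} gives ${\sf A}\leadsto_R B$; if instead $C=E^c\in\ra^c$, then Claim~\ref{cl:pro3}(1) gives $B\in S(E)$, so $E\leadsto_R B$, and combining with ${\sf A}\leadsto_R E$ again gives ${\sf A}\leadsto_R B$; either way $B\in\ra$. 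Now if $B^c\in V$, then $B^c\in\ra^c$ and $\eq{B^c}\in\Delta^\Imc$, so $\eq{B}$ and $\eq{B^c}$ are both in $B^\Imc$ (as $B\in S(B)\cap S(B^c)$) and are distinct by~(\ref{eq:2}), contradicting $|B^\Imc|=1$. Hence $B^c\notin V$, completing the argument.

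The step I expect to be the main obstacle is this last one, deducing $B^c\notin V$ from $\eq{C}\in\{\defdes B\}^\Imc$: it is where $\leadsto_R$-reachability, the monotonicity statements of Claims~\ref{cl:aux2} and~\ref{cl:pro3}, the structural facts~(\ref{eq:1}) and~(\ref{eq:2}), and Corollary~\ref{cor} about the equivalence classes of $c$-nodes all have to be combined, and where the cardinality constraints introduced by definite descriptions are genuinely exploited.
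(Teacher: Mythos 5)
Your overall strategy and almost all of the steps coincide with the paper's own proof: the routine cases are handled the same way, and your ``converse'' direction (from $\eq{C}\in\{\defdes B\}^\Imc$ to $\{\defdes B\}\in S(C)$) --- extracting $B\in S(C)$ via \eqref{eq:1}, deriving ${\sf A}\leadsto_R B$ with Claims~\ref{cl:aux2} and~\ref{cl:pro3}, ruling out $B^c\in V$ by playing \eqref{eq:2} against $|B^\Imc|=1$, and closing with ${\sf R}_9$ --- is essentially the paper's argument. The one genuine gap is the preliminary ``fact'' you use for the other direction: it is \emph{not} true that $\{\defdes B\}$ can enter a label set only through ${\sf R}_9$ (possibly propagated by ${\sf R}_4$). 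The normal form explicitly allows CIs of the form $C\sqsubseteq\{\defdes B\}$ with $B\in\NC$, and rule ${\sf R}_7$ then inserts $\{\defdes B\}$ into $S(E)$ whenever $C\in S(E)$, with no side condition involving $B^c$. Hence ``$\{\defdes B\}\in S(C)$ implies $B^c\notin V$'' does not follow from the provenance of the nominal, and for nodes that are not $\leadsto_R$-reachable from ${\sf A}$ it can genuinely fail: if $D\sqsubseteq\{\defdes B\}\in\Omc$, then $\{\defdes B\}\in S(D)$ already because $D\in S(D)$, yet if ${\sf A}\not\leadsto_R D$ neither ${\sf R}_{10}$ nor any other rule need ever remove $B^c$ from $V$.

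The statement you need is nevertheless true for the nodes that matter, and the repair is exactly what the paper does: since $\eq{C}\in\Delta^\Imc$, $C$ cannot be of the form $E^c$ (otherwise Claim~\ref{cl:pro3} would give $\{\defdes B\}\in S(E)$ and ${\sf R}_8$ would have removed $E^c$ from $V$; cf.\ Corollary~\ref{cor}), so $C\in\ra$, i.e.\ ${\sf A}\leadsto_R C$, and then exhaustiveness of ${\sf R}_{10}$ --- the rule introduced precisely for this purpose, which your proof never invokes in this direction --- yields $B^c\notin V$. With that substitution your right-to-left direction goes through unchanged: $B\in S(C)$ by ${\sf R}_6$, every node $E$ with $\eq{E}\in\Delta^\Imc$ and $B\in S(E)$ acquires $\{\defdes B\}$ by ${\sf R}_9$ and is thus $\thicksim$-equivalent to $C$, so $B^\Imc=\{\eq{C}\}$ and $\eq{C}\in\{\defdes B\}^\Imc$. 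After this patch the proof matches the paper's in both structure and detail.
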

  \begin{proof}
    The case where $D$ is a concept name or of the form $\{a\}$ is by
    definition of \Imc.
    %
    If $D=\top$ then we are done since $\top\in S(C)$ for all $C$ with
    $\eq{C}\in\Delta^\Imc$.
    %
    If $D=\bot$, then, by Claim~\ref{cl:pro3-1}, $\bot\not\in S(C)$,
    for all $C\in \ra\cup \ra^c$.
    %
    The interesting case is when $D=\{\iota A\}$. 
    We now argue that the claim also holds in this case. 
    \begin{itemize}
    \item Suppose $D=\{\iota A\}$. If $\eq{C}\in \{\iota A\}^\Imc$
      then, by definition of \Imc, $\eq{C} = (\iota A)^\Imc$ and
      $A^\Imc=\{\eq{C}\}$. We first exclude that $C\in\ra^c$.  Indeed,
      if, by absurd, $A^\Imc=\{\eq{D^c}\}$, then, by definition of
      \Imc, $A\in S(D^c)$ and, by Claim~\ref{cl:pro3}, $A\in S(D)$
      which in turn implies that $[D]\in A^\Imc$. By
      property~\eqref{eq:2}, $[D] \neq [D^c]$, contradicting the fact
      that $A^\Imc$ is a singleton. Thus, $C\in\ra$ and, by definition
      of $A^\Imc$, $A\in S(C)$. Since $C\in\ra$, then, by the
      definition of $\leadsto_R$, $A\in\ra$, and thus,
      $[A]\in\Delta^\Imc$. Then, since $A\in S(A)$, by definition of
      \Imc, $[A]\in A^\Imc$. By similar considerations as above, we
      can show that $A^c\not\in V$ (for otherwise, $[A] \neq [A^c]$
      and they should both belong to $A^\Imc$). Since $A\in S(C)$ and
      $A^c\not\in V$, then, by~${\sf R}_9$, $\{\iota A\}\in S(C)$.
    \item Conversely, assume $\{\iota A\}\in S(C)$. By
      Corollary~\ref{cor}, $C$ cannot be of the form $D^c$.
      By Claim~\ref{cl:pro3},
      $\{\iota A\}\in S(D)$, and thus, by~${\sf R}_{10}$,
      $D^c\not\in V$, which contradicts the fact that $D^c\in V$.
    Thus, we can assume that $C\in\ra$, and then, by~${\sf R}_{10}$,
    $A^c\not\in V$. 
    Assume that $A^\Imc = \{[C]\}$. Then, by definition of \Imc,
    $(\iota A)^\Imc = [C]$, and so, $ [C]\in \{\iota A\}^\Imc$, as
    required. We thus show that $A^\Imc = \{[C]\}$. We first show that
    $[C]\in A^\Imc$. By assumption $\{\iota A\}\in S(C)$, while
    $\{\iota A\}\sqsubseteq A\in \Omc$. Then, by~${\sf R}_{6}$,
    $A\in S(C)$, and, by definition of \Imc, $[C]\in A^\Imc$. Now, by
    contradiction, let us assume that $[D]\in A^\Imc$, for some
    $D\not\in [C]$. By definition of \Imc, $A\in S(D)$, and
    since $A^c\not\in V$, by~${\sf R}_9$, $\{\iota A\}\in S(D)$. By
    definition of $\thicksim$, $D \thicksim C$. This contradicts
    $D\not\in [C]$.
    \end{itemize}
\end{proof}

We conclude the proof by showing that \Imc is a model of \Omc such
that $\eq{{\sf A}}\in {\sf A}^\Imc$ but
$\eq{{\sf A}}\not\in B^\Imc$. By definition of $\leadsto_R$, we
have that ${\sf A} \leadsto_R {\sf A}$, so
$\eq{{\sf A}}\in \Delta^\Imc$ and, by definition of \Imc and $S$, we
have that $\eq{{\sf A}}\in {\sf A}^\Imc$.  By assumption,
$B\not\in S({\sf A})$ and, by Claim~\ref{cl:aux3},
$\eq{{\sf A}}\not\in B^\Imc$.  It remains to show that \Imc is a
model of \Omc. We make a case distinction.
\begin{itemize}
\item $C\sqcap D\sqsubseteq E$. Let $\eq{B}\in (C\sqcap D)^\Imc$,
  then, $\eq{B}\in C^\Imc$ and $\eq{B}\in D^\Imc$. By
  Claim~\ref{cl:aux3}, we have $C,D\in S(B)$. By~${\sf R}_1$,
  $E \in S(B)$, and thus, by Claim~\ref{cl:aux3}, $\eq{B}\in
  E^\Imc$.
\item $C\sqsubseteq \exists r.D$.  Let
  $\eq{B}\in C^\Imc$, then, by Claim~\ref{cl:aux3} $C\in S(B)$. We
  first consider the case where $B\in\ra$. Then, $A\leadsto_RB$ and,
  by applying ${\sf R}_2$, $r\in R(B,D)$, which also implies
  $D\in\ra$. By definition of $r^\Imc$, we have that
  $(\eq{B},\eq{D})\in r^\Imc$. By definition of $S$, $D\in S(D)$ and,
  by Claim~\ref{cl:aux3}, $\eq{D}\in D^\Imc$. Thus,
  $\eq{B}\in (\exists r.D)^\Imc$.
  Let $B\in\ra^c$, i.e., $B$ is of the form $E^c$, with
  $E\in\ra$. Since $C\in S(E^c)$, by Claim~\ref{cl:pro3}, then,
  $C\in S(E)$.  By applying ${\sf R}_2$, $r\in R(E^c,D)$ and, by
  definition of $r^\Imc$, we have that $(\eq{E^c},\eq{D})\in
  r^\Imc$. As before, $\eq{D}\in D^\Imc$, and thus
  $\eq{B}=\eq{E^c}\in (\exists r.D)^\Imc$.
  %
\item $\exists r.C\sqsubseteq D$. Let $\eq{B}\in (\exists
  r.C)^\Imc$. Then, there is $\eq{F}\in\Delta^\Imc$ such that
  $(\eq{B},\eq{F})\in r^\Imc$ and $\eq{F}\in C^\Imc$.  Let $r\neq
  u$. Then, by $r^\Imc$ definition, there is $F'\in\eq{F}$ such that
  $r\in R(B,F')$, while, by Claim~\ref{cl:aux3}, $C\in S(F)$.
  By~\eqref{eq:1}, $S(F') = S(F)$, thus, $C\in S(F')$. We can now
  apply~${\sf R}_3$ by which $D\in S(B)$. So, by Claim~\ref{cl:aux3},
  $\eq{B}\in D^\Imc$.
%
  %
  Let $r=u$ and $\eq{B}\in (\exists
  u.C)^\Imc$. Then, there is $\eq{F}\in\Delta^\Imc$ such that
  $\eq{F}\in C^\Imc$, and, by Claim~\ref{cl:aux3}, $C\in S(F)$. We
  first consider the case where $\eq{F}\in\ra$. Then, $A\leadsto_R F$
  and thus $A\leadsto_R C$. By applying ${\sf R}'_3$, $D\in S(B)$ and,
  by Claim~\ref{cl:aux3}, $\eq{B}\in D^\Imc$. If $F = E^c\in\ra^c$,
  with $E\in\ra$, by Claim~\ref{cl:pro3}, since $C\in S(E^c)$, then
  $C\in S(E)$. Thus, as before $A\leadsto_R C$ and by applying  ${\sf
    R}'_3$ we can conclude that  $\eq{B}\in D^\Imc$.
\item $\{\tau\}\sqsubseteq D$. Let $\eq{B}\in \{\tau\}^\Imc$.  By
  Claim~\ref{cl:aux3}, $\{\tau\}\in S(B)$. By~${\sf R}_6$, $D\in S(B)$,
  and thus, by Claim~\ref{cl:aux3}, $\eq{B}\in D^\Imc$.
\item $D\sqsubseteq \{\tau\}$. Let $\eq{B}\in D^\Imc$ then, by
  Claim~\ref{cl:aux3}, $D\in S(B)$.  By~${\sf R}_7$,
  $\{\tau\}\in S(B)$, and,  
by Claim~\ref{cl:aux3}, $\eq{B}\in \{\tau\}^\Imc$.
\end{itemize}
This finishes the proof of the lemma. 
\end{proof}

We  
are now able to prove the following lemma,  used in the proof of Theorem~\ref{theo:eloufincanmod}.
In the following, we write $\Omc, {\sf A}\models \alpha$ meaning:
in all interpretations that satisfy \Omc and ${\sf A}$, we have that $\alpha$ holds.

\begin{lemma}\label{lem:invariantsELO}
  Let  $S, R$ be the label sets of a completed
  classification graph for $\Omc$ and ${\sf A}$, $C,D\in \Bmc\Cmc_{\Omc}$ and
  $E\in\Bmc\Cmc_{\Omc}^+$.
  The following invariants
  hold:
\begin{enumerate}
\item $E \in S(C)$ implies $\Omc, {\sf A}\models C \sqsubseteq E$
%
\item $r \in R(C, D)$ implies $\Omc, {\sf A}\models C \sqsubseteq \exists r.D$,
  for all $r\in\Rmc_\Omc$.
\end{enumerate}
\end{lemma}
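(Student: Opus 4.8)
The plan is to read this lemma as the soundness half of the $\ELOud$ completion procedure, and to observe that it has in fact already been established, in a step-indexed form, inside the proof of Lemma~\ref{lem:completion}: Claim~\ref{cl:aux1} there asserts exactly Points~(1) and~(2) above (together with two auxiliary invariants about $V$ and about $\leadsto_R$), for every model $\Imc$ of $\Omc$ with ${\sf A}^{\Imc}\neq\emptyset$ and for every intermediate stage $S_n,R_n,V_n$ of the rule application. So the only thing to do is to unfold the definition of $\Omc,{\sf A}\models\alpha$ and to instantiate that claim at the \emph{completed} graph, i.e.\ at the final stage $n=m$.

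In detail: I would fix an arbitrary interpretation $\Imc$ with $\Imc\models\Omc$ and ${\sf A}^{\Imc}\neq\emptyset$ --- by the definition recalled just before the lemma, these are precisely the interpretations over which $\Omc,{\sf A}\models\alpha$ quantifies --- and let $S_0,\ldots,S_m$, $R_0,\ldots,R_m$, $V_0,\ldots,V_m$ be the sequences produced by exhaustively applying the completion rules, so that $S=S_m$ and $R=R_m$. Since ${\sf A}^{\Imc}\neq\emptyset$, Claim~\ref{cl:aux1} applies with $n=m$; it is then a mere relabeling of variables (the node variable, called $E$ in Claim~\ref{cl:aux1}, is our $C$; the second variable, called $D$ in the claim, is our $E$ in Point~(1) and our $D$ in Point~(2)). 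For Point~(1): assuming $E\in S(C)$, pick any $x\in C^{\Imc}$; Claim~\ref{cl:aux1}, Point~(1), gives $x\in E^{\Imc}$, hence $C^{\Imc}\subseteq E^{\Imc}$, i.e.\ $\Imc\models C\sqsubseteq E$. For Point~(2): assuming $r\in R(C,D)$ with $r\in\Rmc_{\Omc}$, pick any $x\in C^{\Imc}$; Claim~\ref{cl:aux1}, Point~(2), yields $y\in\Delta^{\Imc}$ with $(x,y)\in r^{\Imc}$ and $y\in D^{\Imc}$, so $x\in(\exists r.D)^{\Imc}$ and thus $\Imc\models C\sqsubseteq\exists r.D$ (this also covers $r=u$). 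As $\Imc$ was arbitrary among the models of $\Omc$ satisfying ${\sf A}$, the two entailments $\Omc,{\sf A}\models C\sqsubseteq E$ and $\Omc,{\sf A}\models C\sqsubseteq\exists r.D$ follow.

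Accordingly, I do not expect any genuinely new difficulty here: all the content sits in Claim~\ref{cl:aux1}, whose own proof is a routine induction on the number of rule applications, the only mildly delicate cases being rules ${\sf R}_3$, ${\sf R}'_3$ and ${\sf R}_4$ (which exploit the $\leadsto_R$-reachability invariant, Claim~\ref{cl:aux1}, Point~(4), and hence the hypothesis ${\sf A}^{\Imc}\neq\emptyset$ --- this is the conceptual point that forces the lemma to be stated relative to models satisfying ${\sf A}$ rather than to all models of $\Omc$) and ${\sf R}_5$ (propagation of $\bot$). If a self-contained argument is preferred over a cross-reference into the earlier proof, I would simply replay that induction; note that, as in the lemma, it only concerns nodes $C,D\in\Bmc\Cmc_{\Omc}$, which is exactly the scope of Claim~\ref{cl:aux1}, so nothing has to be added for the auxiliary nodes of the form $B^{c}$.
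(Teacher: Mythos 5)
Your proof is correct and essentially matches the paper's: the paper likewise obtains Point~(1) by instantiating Claim~\ref{cl:aux1}, Point~(1), at the completed graph, unfolding the definition of $\Omc,{\sf A}\models\,\cdot$ just as you do. The only (inessential) difference is in Point~(2), where the paper argues via the precondition of rule ${\sf R}_2$ --- since $r\in R(C,D)$ can only have been added by ${\sf R}_2$, there is some $E\sqsubseteq \exists r.D\in\Omc$ with $E\in S(C)$, and Point~(1) then yields the entailment --- whereas you cite Claim~\ref{cl:aux1}, Point~(2), directly; both routes are sound and rest on the same soundness invariants.
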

\begin{proof}
  Point (1) follows from Claim~\ref{cl:aux1},~Point (1).
  As for Point (2), by ${\sf R}_2$, if $r\in R(C,D)$, then there exists
  $E\in \Bmc\Cmc_{\Omc}$ with $E\sqsubseteq \exists r.D\in\Omc$ and
  $E\in S(C)$.  Then, by Point~(1),
  $\Omc, {\sf A}\models C\sqsubseteq \exists r.D$.
\end{proof}

\eloufincanmod*
\begin{proof}
First, notice that $\fincanmod$ is the same as the interpretation $\Imc$ defined in the ($\Rightarrow$) direction proof of Lemma~\ref{lem:completion},
showing that $\fincanmod$ is
a model of $\Omc$ and {\sf A}.

  ($\Leftarrow$)
  We prove the following more general statement.
  \begin{claim}
  For every $[D] \in \Delta^{\fincanmod} \cap \Rmc_{{\mathsf A}}$ 
  and every $\ELOu$ concept $C$, we have that $\Omc \models D \sqsubseteq C$ implies $[D] \in C^{\fincanmod}$.
  \end{claim}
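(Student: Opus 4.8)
The plan is to read this as the easy (``soundness'') half of Theorem~\ref{theo:eloufincanmod} and to obtain it from a single semantic observation, with no simulation argument required. Two ingredients are already available: that $\fincanmod$ is a model of $\Omc$ (noted at the start of the proof, since $\fincanmod$ coincides with the interpretation $\Imc$ built in the $(\Rightarrow)$ part of Lemma~\ref{lem:completion}), and the node-labelling correspondence of that construction, namely $\eq{C} \in D^{\fincanmod}$ iff $D \in S(C)$, for all $\eq{C} \in \Delta^{\fincanmod}$ and $D \in \Bmc\Cmc^{+}_\Omc$ (Claim~\ref{cl:aux3}). I stress that this direction imposes no constraint on the shape of $C$: it will go through for an arbitrary concept, and the restriction to $\ELOu$ concepts is only needed for the converse inclusion.

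First I would show that each node is realised by its own representative, that is, $\eq{D} \in D^{\fincanmod}$ for every $D \in \Rmc_{{\mathsf A}}$ (so that $\eq{D} \in \Delta^{\fincanmod}$ by definition of the domain). Since $S$ is initialised with $D \in S(D)$ for every node $D$ and the completion rules only ever add labels, we have $D \in S(D)$ in the completed graph. As $D$ ranges over $\top$, concept names and nominals $\{a\}$, all of which lie in $\Bmc\Cmc^{+}_\Omc$, Claim~\ref{cl:aux3} applies uniformly with $C := D$ and yields $\eq{D} \in D^{\fincanmod}$. For the nominal case this is precisely where the well-definedness of $a^{\fincanmod}$ via the collapse $\thicksim$ is used, but that point has already been discharged within Claim~\ref{cl:aux3}.

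It then remains to combine these two facts. By hypothesis $\Omc \models D \sqsubseteq C$, meaning $D^{\Jmc} \subseteq C^{\Jmc}$ in every partial model $\Jmc$ of $\Omc$; applying this to $\Jmc = \fincanmod$ gives $D^{\fincanmod} \subseteq C^{\fincanmod}$, and together with $\eq{D} \in D^{\fincanmod}$ we conclude $\eq{D} \in C^{\fincanmod}$, which is the claim. Specialising to $D = {\mathsf A}$, legitimate because ${\mathsf A} \leadsto_R {\mathsf A}$ places $\eq{{\mathsf A}}$ in $\Delta^{\fincanmod}$, then recovers the $(\Leftarrow)$ direction of Theorem~\ref{theo:eloufincanmod}. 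Accordingly there is no substantial obstacle in this direction: its only delicate point, the realisation $\eq{D} \in D^{\fincanmod}$ for nominals, is inherited from the construction of $\fincanmod$, and the genuine difficulty of the theorem resides entirely in the omitted converse inclusion.
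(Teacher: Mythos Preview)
Your proposal is correct and follows essentially the same approach as the paper: establish $[D]\in D^{\fincanmod}$ and then use that $\fincanmod$ is a model of $\Omc$ to transport the inclusion $D\sqsubseteq C$. The paper's proof is terser, simply asserting ``by $\fincanmod$ construction, $[D]\in D^{\fincanmod}$'', whereas you spell this out via the initialisation $D\in S(D)$ and Claim~\ref{cl:aux3}; your added remark that the argument does not depend on $C$ being an $\ELOu$ concept is also correct and worth noting.
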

  \begin{proof}
  By ${\fincanmod}$
  construction, $[D] \in D^{\fincanmod}$.
  Since
  ${\fincanmod}$ is a model of $\Omc$, if $\Omc \models D \sqsubseteq C$
  then $[D] \in C^{\fincanmod}$.
  \end{proof}
  ($\Rightarrow$) 
  We first show Claim~\ref{aux}.  The proof is by induction on the
  construction of $C$.
  %
  \begin{claim}\label{aux}
  For every $[D] \in \Delta^{\fincanmod} \cap \Rmc_{{\mathsf A}}$ 
  and every $\ELOu$ concept $C$, if $[D] \in C^{\fincanmod}$ then $\Omc, {\sf A} \models D \sqsubseteq C$.
\end{claim}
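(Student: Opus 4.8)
The plan is to carry out a structural induction on the $\ELOu$ concept $C$, reading ``$\Omc,{\mathsf A}\models D\sqsubseteq C$'' throughout as ``$D^{\Imc}\subseteq C^{\Imc}$ in every model $\Imc$ of $\Omc$ with ${\mathsf A}^{\Imc}\neq\emptyset$'', exactly as in Lemma~\ref{lem:invariantsELO} and Claim~\ref{cl:aux1}. The base cases are direct: $C=\top$ is trivial; $C=\bot$ is vacuous since $\bot^{\fincanmod}=\emptyset$; if $C=A\in\NC$, then $[D]\in A^{\fincanmod}$ means $A\in S(D)$ (using \eqref{eq:1}), so the claim is precisely Lemma~\ref{lem:invariantsELO}(1); and if $C=\{a\}$, then either $a$ does not denote in $\fincanmod$ and the claim is vacuous, or $a^{\fincanmod}=[C']$ with $C'\in\Rmc_{{\mathsf A}}$ and $\{a\}\in S(C')$, so $[D]\in\{a\}^{\fincanmod}$ forces $[D]=[C']$, hence $\{a\}\in S(D)$ by \eqref{eq:1}, and Lemma~\ref{lem:invariantsELO}(1) applies again.

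For the inductive step, $C=C_{1}\sqcap C_{2}$ is immediate from the hypothesis applied to $C_{1},C_{2}$. For $C=\exists r.C'$ with $r\in\NR$: if $r\notin\Rmc_{\Omc}$ then $r^{\fincanmod}=\emptyset$ and the claim is vacuous, so assume $[D]$ has an $r^{\fincanmod}$-successor $[F]\in (C')^{\fincanmod}$. By the definition of $r^{\fincanmod}$ there is $F'\in[F]$ with $r\in R(D,F')$; since $\Omc$ is in normal form, this edge was produced by rule ${\sf R}_2$, so $F'\in\Bmc\Cmc_{\Omc}$. Using $D\in\Rmc_{{\mathsf A}}$, fix a chain witnessing ${\mathsf A}\leadsto_{R}D$ that ends in some $C_{k}$ with $D\in S(C_{k})$; Claim~\ref{cl:aux2}(2) gives $r\in R(C_{k},F')$, and appending $F'$ shows ${\mathsf A}\leadsto_{R}F'$, i.e.\ $F'\in\Rmc_{{\mathsf A}}$. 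Hence the induction hypothesis applies to $([F'],C')=([F],C')$, yielding $\Omc,{\mathsf A}\models F'\sqsubseteq C'$, which combined with Lemma~\ref{lem:invariantsELO}(2) ($\Omc,{\mathsf A}\models D\sqsubseteq\exists r.F'$) gives $\Omc,{\mathsf A}\models D\sqsubseteq\exists r.C'$. For $C=\exists u.C'$: since $u^{\fincanmod}=\Delta^{\fincanmod}\times\Delta^{\fincanmod}$, the hypothesis only says $(C')^{\fincanmod}\neq\emptyset$; picking a witness $[F]\in (C')^{\fincanmod}$ with $F\in\Rmc_{{\mathsf A}}$ (possible by the argument below) and applying the hypothesis gives $\Omc,{\mathsf A}\models F\sqsubseteq C'$, while Claim~\ref{cl:aux1}(4) guarantees $F^{\Imc}\neq\emptyset$, hence $(C')^{\Imc}\neq\emptyset$, hence $(\exists u.C')^{\Imc}=\Delta^{\Imc}$, in every model $\Imc$ of $\Omc$ with ${\mathsf A}^{\Imc}\neq\emptyset$; so $\Omc,{\mathsf A}\models D\sqsubseteq\exists u.C'$.

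The step I expect to be the main obstacle is justifying, in the $\exists u$ case, that a witness in $(C')^{\fincanmod}$ can always be chosen inside $\Rmc_{{\mathsf A}}$: a priori $(C')^{\fincanmod}$ could be nonempty only because of a ``copy'' node $[E^{c}]\in\Rmc_{{\mathsf A}}^c$, to which the induction hypothesis does not directly apply (there is no genuine concept ``$E^{c}$'' in $\Omc$). I plan to remove this by a short sub-induction showing that, for $[E^{c}]\in\Delta^{\fincanmod}$ with $E\in\Rmc_{{\mathsf A}}$, every $\ELOu$ concept over $\sig{\Omc}$ satisfied by $[E^{c}]$ in $\fincanmod$ is also satisfied by $[E]$ --- equivalently, that the relation consisting of the identity on $\Delta^{\fincanmod}$ together with all pairs $([E^{c}],[E])$ with $[E^{c}]\in\Delta^{\fincanmod}$ is an $\ELOu$-simulation on $\fincanmod$, so the conclusion follows from Theorem~\ref{thm:elousimequiv}(1). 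The atom and forth conditions for the new pairs are exactly Claim~\ref{cl:pro3} (the concept-name labels and $R$-edges of $E^{c}$ are inherited by $E$), together with the fact that $[E^{c}]$ is a $\thicksim$-singleton carrying no nominal (by rule ${\sf R}_8$, cf.\ Corollary~\ref{cor}), so that no nominal atom can distinguish $[E^{c}]$ from $[E]$. The $\exists r$ case needs no analogue, because by rule ${\sf R}_2$ the target of every $R$-edge is a genuine node in $\Bmc\Cmc_{\Omc}$, which the chain argument above places in $\Rmc_{{\mathsf A}}$. Beyond Lemma~\ref{lem:invariantsELO}, Claims~\ref{cl:aux1}, \ref{cl:aux2}, \ref{cl:pro3}, Corollary~\ref{cor}, \eqref{eq:1}, and Theorem~\ref{thm:elousimequiv}, no further ingredients are required.
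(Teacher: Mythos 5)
Your proof is correct and follows essentially the same route as the paper's: a structural induction on $C$ whose base cases rest on Claim~\ref{cl:aux3} and Lemma~\ref{lem:invariantsELO}(1), whose $\exists r$ case combines Lemma~\ref{lem:invariantsELO}(2) with the induction hypothesis, and whose $\exists u$ case reduces to exhibiting a witness of $C'$ in $\Rmc_{{\mathsf A}}$. The only deviations are that you explicitly justify two steps the paper merely asserts (that the $R$-successor $F'$ lies in $\Rmc_{{\mathsf A}}$, via Claim~\ref{cl:aux2}(2), and that a witness for $\exists u$ can be moved from a copy node $[E^c]$ to $[E]$, via Claim~\ref{cl:pro3} and the simulation of Theorem~\ref{thm:elousimequiv}(1)), and that you derive non-emptiness of $F$ from Claim~\ref{cl:aux1}(4) rather than from the chain of existential restrictions given by Lemma~\ref{lem:invariantsELO}(2); both variants are sound.
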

\begin{proof} 
The base case $C = \bot$ is vacuously true. The base case $C = \top$
is obviously true.  For the other base cases, $C\in \NC$ and
$C = \{ a \}$, it
follows from 
Claim~\ref{cl:aux3} and Point~(1) of Lemma~\ref{lem:invariantsELO}.
%
We show the remaining cases.



$C = \exists r.E,~r\neq u$. If $[D] \in (\exists r.E)^{\fincanmod}$
then there exists $[F]$ such that $([D], [F]) \in r^{\fincanmod}$ and
$[F] \in E^{\fincanmod}$. By ${\fincanmod}$ construction, there exists
$F'\in [F]\cap \Rmc_{{\mathsf A}}$ s.t. $r\in R(D,F')$. By Point~(2) of
Lemma~\ref{lem:invariantsELO},
$\Omc, {\sf A}\models D\sqsubseteq \exists r.F'$. On the other hand,
$[F'] = [F]$, thus $[F'] \in E^{\fincanmod}$ and, by i.h.,
$\Omc, {\sf A}\models F'\sqsubseteq E$, and then,
$\Omc, {\sf A}\models D\sqsubseteq \exists r.E$.

$C = \exists u.E$.  If $[D] \in (\exists u.E)^{\fincanmod}$ then there
exists $[F]$ such that $[F] \in E^{\fincanmod}$.  By construction of
$\fincanmod$, we can also assume that $[F]\in\Rmc_{{\mathsf A}}$ and,
by i.h., $\Omc, {\sf A}\models F\sqsubseteq E$. To prove that
$\Omc, {\sf A}\models D\sqsubseteq \exists u.E$ it is enough to show
that $\Omc, {\sf A}\models D\sqsubseteq \exists u.F$. Since,
$[D], [F]\in \Rmc_{{\mathsf A}}$, then ${\sf A}\leadsto_\Rmc D$ and
${\sf A}\leadsto_\Rmc F$. Then, by Point~(2) of
Lemma~\ref{lem:invariantsELO},
$\Omc, {\sf A} \models {\sf A}\sqsubseteq \exists r_1...\exists r_m.D$
and
$\Omc, {\sf A} \models {\sf A}\sqsubseteq \exists r_1...\exists
r_n.F$, with $m,n\geq 0$. Thus we finally have
$\Omc, {\sf A} \models D\sqsubseteq \exists u.F$.

$C = E \sqcap F$. If $[D] \in (E \sqcap F)^{\fincanmod}$ then
$[D] \in E^{\fincanmod}$ and $[D] \in F^{\fincanmod}$. By i.h.,
$\Omc, {\sf A} \models D \sqsubseteq E$ and $\Omc, {\sf A} \models D \sqsubseteq F$,
i.e., $\Omc, {\sf A} \models D \sqsubseteq E \sqcap F$.
\end{proof}
If  $[{\sf A}] \in C^{\fincanmod}$ then, by Claim~\ref{aux}, we have that $\Omc, {\sf A} \models {\sf A} \sqsubseteq C$.
This happens iff $\Omc\models {\sf A} \sqsubseteq C$.
This ends the proof of Theorem~\ref{theo:eloufincanmod}.
\end{proof}



\subsection*{Proofs for Section~\ref{sec:alcoiotabisim}}

Before we proceed with the proofs of Theorems~\ref{thm:alcoiotabisimtoequiv} and~\ref{thm:alcofirstorder}, we provide the definitions of $\omega$-saturated partial interpretations,
defined for first-order formulas. 

Let $\Imc$ be a partial interpretation, regarded here also as a \emph{partial first-order interpretation} with identity.
The semantic definitions given for
the DLs with nominals considered in this paper
naturally extend to the first-order case, so that, in particular, the \emph{satisfaction} in $\Imc$ under a variable assignment $\assign$ for atomic first-order formulas
is given as follows:
\begin{align*}
	\Imc \models^{\assign}  P(t_1, \ldots, t_n) 
		\ \text{ iff } \ & 
		 \tvalue(t_{i}), \, 1 \leq i \leq n, \ \text{is defined and} \\
	  & (\tvalue(t_1), \ldots, \tvalue(t_n)) \in P^{\Imc},
		\\
		\Imc \models^{\assign}  t_1 =  t_2
		\ \text{ iff } \ &
		 \tvalue(t_{i}), \, i = 1, 2,  \ \text{is defined and} \\
		& \tvalue(t_1) = \tvalue(t_2),
\end{align*}
where $\tvalue(t)$ is the \emph{value} of a \emph{term} $t$ (that is, either a variable or an individual name in $\NI$) in $\Imc$ under the assignment $\sigma$, i.e., the image of the \emph{partial} function from the set of terms to $\Delta^{\Imc}$ such that
\begin{gather*}
\tvalue(t) =
\begin{cases}
\assign(x), & \text{ if } t \text{ is a variable } x; \\
a^{\Imc}, & \text{ if } t \text{ is } a \in \NI \cap \dom(\cdot^{\Imc}).
\end{cases}
\end{gather*}
Recall that $\dom(\cdot^{\Int}) \subseteq \NC\cup\NR\cup \NI$ is the domain of definition of $\cdot^\Imc$, with $\NC\cup\NR \subseteq \dom(\cdot^{\Int})$, and note that we do not include definite descriptions as terms of the first-order language.
Together with the other usual inductive clauses for Booleans and quantifiers, we obtain what is called in the literature a \emph{negative semantics} for first-order logic on partial interpretations~\cite{Leh02}.
%
We assume in this section that the sets $\NC \cup \NR \cup \NI$ and $\Delta^{\Imc}$ are disjoint.
Moreover, we consider each element $d \in \Delta^{\Imc}$ as an additional individual symbol such that $d^{\Imc} = d$.
%
Let $\Gamma$ be a set of first-order formulas with free variables among $x_{1}, \ldots, x_{n}$, predicate symbols from $\NC \cup \NR$, and individual symbols from $\NI \cup \Delta^{\Imc}$.
We say that $\Gamma$ is:
\begin{itemize}
	\item \emph{finitely realisable} in $\Imc$ iff, for every finite subset $\Gamma' \subseteq \Gamma$, there exists a variable assignment 
	 $\assign$ such that $\assign(x_{i}) \in \Delta^{\Imc}$, with $1 \leq i \leq n$, such that $\Imc \models^{\assign} \Gamma'$.
	\item \emph{realisable} in $\Imc$ iff there exists a variable assignment $\assign$ such that $\assign(x_{i}) \in \Delta^{\Imc}$, with $1 \leq i \leq n$, such that $\Imc \models^{\assign} \Gamma$
\end{itemize}
We say that $\Imc$ is \emph{$\omega$-saturated} iff, for every such set $\Gamma$ containing only finitely many individual symbols from $\Delta^{\Imc}$, the following holds:
if $\Gamma$ is finitely realisable in $\Imc$, then $\Gamma$ is realisable in $\Imc$.




In our proof of Theorem~\ref{thm:alcoiotabisimtoequiv},
we also use the following technical lemma.

\begin{lemma}
\label{lemma:bisimnotuni}
Let
$\Sigma$ be a signature. For every partial interpretation $\Imc$ and $d, d' \in \Delta^{\Imc}$
	\[
	\text{if} \
	(\Imc, d) \sim^{\ALCO}_{\Sigma} (\Imc, d'),
	\ \text{then} \
	(\Imc, d) \equiv^{\ALCOud}_{\Sigma} (\Imc, d').
	\]
%
\end{lemma}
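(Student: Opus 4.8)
The goal is to show that if $(\Imc, d)$ and $(\Imc, d')$ are $\ALCO(\Sigma)$-bisimilar (within the same interpretation $\Imc$), then they satisfy exactly the same $\ALCOud(\Sigma)$ concepts. I would prove this by induction on the structure of an $\ALCOud(\Sigma)$ concept $C$, establishing the stronger statement that for \emph{every} pair $(e,e')$ with $(\Imc,e) \sim^{\ALCO}_{\Sigma} (\Imc,e')$ we have $e \in C^{\Imc}$ iff $e' \in C^{\Imc}$. The only constructors of $\ALCOud(\Sigma)$ that go beyond $\ALCO(\Sigma)$ are the universal role $\exists u.D$ and the definite-description nominals $\{\defdes D\}$; for concept names, ordinary nominals $\{a\}$, Booleans, and $\exists r.D$ with $r \in \Sigma$, preservation under $\ALCO(\Sigma)$-bisimulations is the standard argument (using (\textit{atom}), (\textit{forth}), (\textit{back}) together with the induction hypothesis).

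\textbf{The new cases.} For $C = \exists u.D$: here I would \emph{not} appeal to any totality of the bisimulation (we only have an $\ALCO(\Sigma)$-bisimulation, not an $\ALCOud(\Sigma)$ one). Instead, observe that both $d$ and $d'$ live in the \emph{same} interpretation $\Imc$, so $(\exists u.D)^{\Imc}$ is either all of $\Delta^{\Imc}$ (if $D^{\Imc}\neq\emptyset$) or empty; hence $d \in (\exists u.D)^{\Imc}$ iff $d' \in (\exists u.D)^{\Imc}$ trivially, with no recourse to the induction hypothesis. For $C = \{\defdes D\}$: by semantics $\{\defdes D\}^{\Imc}$ is $\{e_0\}$ if $D^{\Imc} = \{e_0\}$ for a single $e_0$, and $\emptyset$ otherwise. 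I claim $d \in \{\defdes D\}^{\Imc}$ iff $d' \in \{\defdes D\}^{\Imc}$. Suppose $d \in \{\defdes D\}^{\Imc}$; then $D^{\Imc} = \{d\}$, so in particular $d \in D^{\Imc}$, and by the induction hypothesis applied to $D$ (using $(\Imc,d) \sim^{\ALCO}_{\Sigma} (\Imc,d')$) we get $d' \in D^{\Imc}$; since $D^{\Imc} = \{d\}$ is a singleton, $d' = d$, and hence $d' \in \{\defdes D\}^{\Imc}$. The converse direction is symmetric. Note this argument crucially uses that $d, d'$ are both elements of $\Delta^{\Imc}$, so that $D^{\Imc}$ being a singleton forces $d' = d$.

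\textbf{The main obstacle.} The subtle point — and the reason the statement is phrased within a single interpretation $\Imc$ rather than between two — is the $\{\defdes D\}$ case: a bisimulation between two distinct interpretations could relate the unique element of $D$ on one side to one of several $D$-elements on the other, breaking preservation; the single-interpretation hypothesis is exactly what rules this out. I would therefore be careful to state the induction as ``for all $e, e' \in \Delta^{\Imc}$ with $(\Imc,e)\sim^{\ALCO}_{\Sigma}(\Imc,e')$'' so that the quantifier is available inside the $\exists r.D$ and $\exists u.D$ recursion steps; everything else is routine. Finally, instantiating the induction with $(e,e') = (d,d')$ yields $(\Imc, d) \equiv^{\ALCOud}_{\Sigma} (\Imc, d')$, as required.
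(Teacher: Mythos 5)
Your proposal is correct and follows essentially the same route as the paper's proof: a structural induction over all $\ALCO(\Sigma)$-bisimilar pairs within the single interpretation $\Imc$, handling $\exists u.D$ by noting its extension is either $\Delta^{\Imc}$ or $\emptyset$, and handling $\{\defdes D\}$ by using the induction hypothesis to place $d'$ in the singleton $D^{\Imc}=\{d\}$, forcing $d'=d$. Your remarks on why the single-interpretation hypothesis is essential are accurate but add nothing beyond what the paper's argument already implicitly uses.
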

\begin{proof}
We show that, for every $d, d' \in \Delta^{\Imc}$ such that $(\Imc, d) \sim^{\ALCO}_{\Sigma} (\Imc, d')$, it holds that, for every
$\ALCOud(\Sigma)$
concept $C$, $d \in C^{\Int}$ iff $d' \in C^{\Int}$.
The proof is by structural induction on $C$.
For the base cases $C = A$ and $C= \{ a \}$
and for the inductive cases $C = \lnot D$, $C = D \sqcap E$, $C = \exists r.D$, the proof is
straightforward.
We show the remaining cases.

Let $C = \exists u.D$, and suppose that $d \in \exists u.D^{\Imc}$. This implies that $\exists u.D^{\Imc} = \Delta^{\Imc}$, and thus $d' \in \exists u.D^{\Imc}$ as well. The converse direction is analogous.

Let $C = \{ \defdes D \}$, and suppose that $d \in \{ \defdes D \}^{\Imc}$. This is equivalent to $D^{\Imc} = \{ d \}$, and thus we have, in particular, $d \in D^{\Imc}$. By i.h., we obtain $d' \in D^{\Imc}$, meaning that $d' = d$, and hence $d' \in \{ \defdes D \}^{\Imc}$.
The converse direction is analogous.
%
\end{proof}

In the following, given a partial interpretation $\Imc$ and $d \in \Delta^{\Imc}$, we let $[x \mapsto d]$ stand for any variable assignment that maps $x$ to $d$ in $\Imc$.
Lemma~\ref{lemma:standardtr} below can be proved by induction on the structure of the concept.
\begin{restatable}{lemma}{standardtr}
\label{lemma:standardtr}
For every
$\ALCOud$
concept $C$, partial interpretation $\Int$ and $d \in \Delta^{\Int}$, we have that
$d \in C^{\Int}$ iff $\Int, [x \mapsto d] \models \sttr{x}{C}$.
\end{restatable}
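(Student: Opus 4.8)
The plan is to proceed by a routine structural induction on the concept $C$; the only point that deserves attention is that, since definite descriptions occur inside nominals, the translation of a concept may refer to the translation of a strictly smaller concept, so the induction must be organised on the structure of $\ALCOud$ concepts, with the body $D$ of a definite description $\defdes D$ counting as a proper subconcept of $\{\defdes D\}$. First I would record the clauses of $\sttr{x}{\cdot}$ left implicit in the text: $\sttr{x}{A} = A(x)$ for $A \in \NC$; $\sttr{x}{\{a\}} = (x = a)$ for $a \in \NI$; $\sttr{x}{\lnot D} = \lnot\,\sttr{x}{D}$; $\sttr{x}{D \sqcap E} = \sttr{x}{D} \land \sttr{x}{E}$; $\sttr{x}{\exists r.D} = \exists y\,(r(x,y) \land \sttr{y}{D})$ for $r \in \NR$; and $\sttr{x}{\exists u.D} = \exists y\,\sttr{y}{D}$, with all bound variables chosen fresh. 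I would also recall that first-order formulas are evaluated under the negative semantics fixed at the start of the appendix, so that an equality atom $t_1 = t_2$ is false in $\Int$ under an assignment as soon as one of $t_1, t_2$ fails to denote, and that (again following the appendix) each $e \in \Delta^{\Int}$ is treated as an individual symbol naming itself, which is what lets us apply the induction hypothesis at arbitrary domain elements.

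Then I would prove, by induction on $C$, that $d \in C^{\Int}$ iff $\Int, [x \mapsto d] \models \sttr{x}{C}$ for every $d \in \Delta^{\Int}$. For $C = A$ this is the semantics of atomic formulas. For $C = \{a\}$ with $a \in \NI$, by definition $d \in \{a\}^{\Int}$ iff $a$ denotes in $\Int$ and $a^{\Int} = d$, and by the negative semantics this is exactly the condition $\Int, [x \mapsto d] \models x = a$. The cases $C = \lnot D$, $C = D \sqcap E$, $C = \exists r.D$ and $C = \exists u.D$ follow by unfolding the DL semantics and invoking the induction hypothesis; for $\exists u.D$ one uses $u^{\Int} = \Delta^{\Int} \times \Delta^{\Int}$, so that $d \in (\exists u.D)^{\Int}$ iff $D^{\Int} \neq \eset$, which matches $\exists y\,\sttr{y}{D}$.

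The only case carrying any content is $C = \{\defdes D\}$. Here $d \in \{\defdes D\}^{\Int}$ iff $(\defdes D)^{\Int} = d$ iff $D^{\Int} = \{d\}$, i.e.\ $D^{\Int}$ is nonempty, has at most one element, and $d$ belongs to it. Applying the induction hypothesis to the strictly smaller concept $D$ at every element of $\Delta^{\Int}$, these three conditions correspond exactly to the three conjuncts of $\sttr{x}{\{\defdes D\}}$: the first conjunct $\exists x\,\sttr{x}{D}$ expresses $D^{\Int} \neq \eset$; the second, $\forall x \forall y\,(\sttr{x}{D} \land \sttr{y}{D} \to x = y)$, expresses $|D^{\Int}| \leq 1$; and the third, $\forall y\,(\sttr{y}{D} \to x = y)$, evaluated under $[x \mapsto d]$, says that every element of $D^{\Int}$ equals $d$. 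Hence the conjunction holds under $[x \mapsto d]$ iff $D^{\Int} = \{d\}$, closing this case and the induction.

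The main --- indeed essentially the only --- obstacle is bookkeeping: checking that the recursion defining $\sttr{x}{\cdot}$ is well-founded as a definition by induction on concept structure, and handling the variable renaming in the $\{\defdes D\}$ clause so that the free occurrence of $x$ in the last conjunct is not captured by the inner quantifiers and so that $\sttr{y}{D}$ is genuinely the renaming of $\sttr{x}{D}$ to which the induction hypothesis is applied. Everything else is a mechanical unfolding of the two semantics.
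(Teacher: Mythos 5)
Your proposal is correct and follows exactly the route the paper intends: the paper only remarks that Lemma~\ref{lemma:standardtr} "can be proved by induction on the structure of the concept", and your structural induction, with the explicit standard-translation clauses, the negative-semantics treatment of equality in the $\{a\}$ case, and the careful unpacking of the three conjuncts in the $\{\defdes D\}$ case, is precisely that argument written out in full.
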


We are now ready for the proof of Theorem~\ref{thm:alcoiotabisimtoequiv}.
%

\alcoiotabisimtoequiv*

\begin{proof}
(1)~Suppose there is an $\ALCOud(\Sigma)$ bisimulation $Z$ between $\Imc$ and $\Jmc$ such that $(d, e) \in Z$.
We show that, for every $\ALCOud$ concept $C$ with $\sig{C} \subseteq \Sigma$,
and every $u \in \Delta^{\Imc}$, $v \in \Delta^{\Jmc}$ such that $(u,v) \in Z$,
we have $u \in C^{\Int}$ iff $v \in C^{\Int}$.
The proof is by structural induction on $C$.
The base cases $C = A$ and $C = \{ a \}$, as well as the inductive cases for $C = \lnot D$, $C = D \sqcap E$, $C = \exists r . D$, and $C = \exists u. D$,
are as for $\ALCOu$~\cite[Theorem 4.1.2]{Ten05}.
We only need to prove the   case $C = \{ \defdes D \}$.
  
$(\Rightarrow)$
Suppose that $u \in \{ \defdes D \}^{\Imc}$. This means that $D^{\Imc} = \{ u \}$, and thus $u \in D^{\Imc}$.
By i.h., $v \in D^{\Jmc}$.
We want to show that $D^{\Jmc} = \{ v \}$.
Towards a contradiction, suppose there is $v' \neq v$ such that $v' \in D^{\Jmc}$.
We have two possibilities.

\begin{enumerate}
\item If $(\Jmc, v) \sim^{\ALCO}_{\Sigma} (\Jmc, v')$,
then by~($\defdes$)
there is $u' \neq u$ such that $(\Imc, u) \sim^{\ALCO}_{\Sigma} (\Imc, u')$.
By Lemma~\ref{lemma:bisimnotuni}, we have that $u' \in D^{\Imc}$, which is a contradiction.

\item If $(\Jmc, v) \not \sim^{\ALCO}_{\Sigma} (\Jmc, v')$, we can distinguish two cases.

\begin{enumerate}
	\item[2.1.] Suppose there exists $v'' \neq v$ such that $(\Jmc, v) \sim^{\ALCO}_{\Sigma} (\Jmc, v'')$. Again by~($\defdes$),
	there is $u'' \neq u$ such that $(\Imc, u) \sim^{\ALCO}_{\Sigma} (\Imc, u'')$. This, by Lemma~\ref{lemma:bisimnotuni}, implies that $u'' \in D^{\Imc}$, which is impossible.
	\item[2.2.] Suppose there is no $v'' \neq v$ such that $(\Jmc, v) \sim^{\ALCO}_{\Sigma} (\Jmc, v'')$. 
	By totality of $Z$,
	we have that for every $w \in \Delta^{\Jmc}$ there is $t \in \Delta^\Imc$ such that
	$(t, w) \in Z$.
	Thus, in particular,
	$(u', v') \in Z$,
	for some $u' \in \Delta^{\Imc}$.
	Suppose that $u' = u$.
	Since $Z$ is an $\ALCO(\Sigma)$ bisimulation,
	by transitivity of $\ALCO(\Sigma)$ bisimilarity, 
	$(\Jmc, v) \sim^{\ALCO}_{\Sigma} (\Jmc, v')$, contradicting our hypothesis.
	%
	Suppose then that $u' \neq u$. By i.h.,
	$u' \in D^{\Imc}$, which is again a contradiction.
\end{enumerate}
\end{enumerate}
In conclusion, there is no $v' \neq v$ such that $v' \in D^{\Jmc}$, and thus $v \in \{ \defdes D \}^{\Jmc}$.  
The direction $(\Leftarrow)$ is obtained analogously, by
using Conditions~($\defdes$)
and
totality of $Z$.



%

%

(2)~Let
$S = \{ (u, v) \in \Delta^{\Imc} \times \Delta^{\Jmc} \mid (\Imc, u) \equiv^{\ALCOud}_{\Sigma} (\Jmc, v) \},$
and suppose that $(d,e) \in S$.
We want to show that if \Imc and \Jmc are $\omega$-saturated 
then $S$ is an $\ALCOud(\Sigma)$ bisimulation between $\Int$ and $\Jmc$, and thus
$(\Imc, d) \sim^{\ALCOud}_{\Sigma} (\Jmc, e)$, as required.
%
For
Conditions~(\textit{atom}),~(\textit{back}),~(\textit{forth}) and totality,
the proof is analogous to the ones for $\ALCO$ and $\ALCOu$~\cite{AreEtAl01,Ten05}. 
%
We now prove that $S$ satisfies also Condition~($\defdes$).

	For the $(\Rightarrow)$ direction, suppose there exists $d' \in \Delta^{\Imc}$ such that $d \neq d'$ and $(\Imc, d) \sim^{\ALCO}_{\Sigma} (\Imc, d')$. Let $u$ be such an element, and take an individual variable $x$.
	Consider the set of first-order formulas
	$
	\Gamma = \{ \lnot( e = x) \} \cup T_{u},
	$
	with $T_{u} = \{ \sttr{x}{C} \mid C \in \tp^{\Imc}_{\Lmc}(u) \}$,
	and let $\Gamma'$ be the set $\Gamma$ with $e$ replaced by $d$.
	The assignment $\assign(x) = u$ makes $\Gamma'$ realisable in $\Imc$.
	
	We first show that, since $\Jmc$ is $\omega$-saturated and $(\Imc, d) \equiv^{\Lmc}_{\Sigma} (\Jmc, e)$, the set $\Gamma$ is realisable in $\Jmc$.
	Let $\Gamma_{0}$ be a finite subset of $\Gamma$ (without loss of generality, we assume that $\lnot( e = x) \in \Gamma_{0}$), and let $\Gamma'_{0}$ be the set $\Gamma_{0}$ with $e$ replaced by $d$.
	Since $\Gamma'_{0} \subseteq \Gamma'$, we have that also $\Gamma'_{0}$ is realisable in $\Imc$ by $\assign(x) = u$.
	Now consider the $\Lmc(\Sigma)$ concept
$
		C_{0} = \bigsqcap_{\sttr{x}{C} \in \Gamma'_{0}} C,
$
i.e., the conjunction of $\Lmc(\Sigma)$ concepts with standard translation in $\Gamma'_{0}$.
We have that
$\Imc, [x \mapsto u] \models \lnot (d = x) \land \sttr{x}{C_{0}}$,
and thus, by Lemma~\ref{lemma:standardtr}, $u \in C_{0}^{\Imc}$.
Given that $d \neq u$ and $(\Imc, d) \sim^{\ALCO}_{\Sigma} (\Imc, u)$, by Lemma~\ref{lemma:bisimnotuni} we have
$(\Imc, d) \equiv^{\Lmc}_{\Sigma} (\Imc, u)$.
From Lemma~\ref{lemma:standardtr},
we obtain
$
\Imc, [y \mapsto d, x \mapsto u] \models \sttr{y}{C_{0}} \land \sttr{x}{C_{0}} \land \lnot(y = x),
$
which implies 
$
\Imc, [y \mapsto d] \models \sttr{y}{\lnot \{ \defdes C_{0} \} }.
$
The previous step means $d \in (\lnot \{ \defdes C_{0} \})^{\Imc}$, and since
$(d,e) \in S$,
we have that $e \in (\lnot \{ \defdes C_{0} \})^{\Jmc}$.
Moreover, $e \in C_{0}^{\Jmc}$ (because $d \in C_{0}^{\Imc}$ and
$(d,e) \in S$),
hence there exists $v \in \Delta^{\Jmc}$ such that $e \neq v$ and $v \in C_{0}^{\Jmc}$.
In conclusion,
$
\Jmc, [x \mapsto v] \models \lnot(e = x) \land \sttr{x}{C_{0}},
$
meaning that $\Gamma_{0}$ is realisable in $\Jmc$.
Given that $\Jmc$ is $\omega$-saturated, this implies that $\Gamma$ is realisable in $\Jmc$, as required.

The final step is to show that there exists $e' \in \Delta^{\Jmc}$ such that $e \neq e'$ and
$(\Jmc, e) \sim^{\ALCO}_{\Sigma} (\Jmc, e')$.
Let $\assign(x) = v$ be an assignment realising $\Gamma$ in $\Jmc$.
Since $T_{u} \subseteq \Gamma$, we have that $(\Imc, u) \equiv^{\ALCOud}_{\Sigma} (\Jmc, v)$, and so in particular $(\Imc, u) \equiv^{\ALCO}_{\Sigma} (\Jmc, v)$.
By hypothesis, we have that $(\Imc, d) \sim^{\ALCO}_{\Sigma} (\Imc, u)$ (hence, by Theorem~\ref{thm:alcoiotabisimtoequiv}, $(\Imc, d) \equiv^{\ALCO}_{\Sigma} (\Imc, u)$), and
$(d,e) \in S$
(thus, in particular, $(\Imc, d) \equiv^{\ALCO}_{\Sigma} (\Jmc, e)$).
By transitivity of $\equiv^{\ALCO}_{\Sigma}$, we obtain
$(\Jmc, e) \equiv^{\ALCO}_{\Sigma} (\Jmc, v)$.
Thanks to the proof for $\ALCO$ adapted to the case of partial interpretations~\cite{AreEtAl01,Ten05},
since $\Jmc$ is $\omega$-saturated,
this implies
$(\Jmc, e) \sim^{\ALCO}_{\Sigma} (\Jmc, v)$.

The $(\Leftarrow)$ direction is analogous.
\qedhere

\end{proof}

 The following is an adaptation of well-known results from first-order model theory 
 to the case of partial interpretations.
 
 \begin{theorem}
 \label{thm:omegasatexist}
  A set $\Gamma$ of first-order formulas is satisfiable on partial interpretations iff every finite subset of $\Gamma$ is satisfiable on partial interpretations. Moreover, 
 for every partial interpretation $\Int$, there exists a partial interpretation $\Int^{\ast}$ that is $\omega$-saturated and satisfies the same first-order sentences as $\Int$ (i.e., is \emph{elementarily equivalent} to $\Int$).
 \end{theorem}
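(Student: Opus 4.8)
The plan is to reduce both assertions to the corresponding classical results for total first-order structures via a faithful signature translation. For each $a\in\NI$ fix a fresh unary predicate $D_a$, and let $\Psi=\{\,\forall x\forall y\,(D_a(x)\wedge D_a(y)\to x=y)\mid a\in\NI\,\}$. To each partial interpretation $\Imc$ associate the \emph{total} first-order structure $\Imc^{\sharp}$ over $\NC\cup\NR\cup\{D_a\mid a\in\NI\}$ with $\Delta^{\Imc^{\sharp}}=\Delta^{\Imc}$, the same interpretation of concept and role names, and $D_a^{\Imc^{\sharp}}=\{a^{\Imc}\}$ if $a$ denotes in $\Imc$ and $D_a^{\Imc^{\sharp}}=\eset$ otherwise; then $\Imc^{\sharp}\models\Psi$. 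Conversely, every total structure $\Nmc$ with $\Nmc\models\Psi$ equals $\Imc^{\sharp}$ for a unique partial interpretation $\Imc$ (take $a^{\Imc}$ to be the unique element of $D_a^{\Nmc}$ when that set is nonempty, and undefined otherwise), so $\Imc\mapsto\Imc^{\sharp}$ is a bijection between partial interpretations and total models of $\Psi$. I would then translate a first-order formula $\varphi$ (over $\NC\cup\NR$, with individual symbols from $\NI$ and possibly constants naming domain elements) to $\varphi^{\sharp}$ by letting $(\cdot)^{\sharp}$ commute with Booleans and quantifiers and replacing an atom $P(t_1,\dots,t_n)$ by $\exists\bar z\,\bigl(\bigwedge_{i:\,t_i\in\NI}D_{t_i}(z_i)\wedge P(s_1,\dots,s_n)\bigr)$, where $s_i=z_i$ if $t_i\in\NI$ and $s_i=t_i$ otherwise, and similarly for equality atoms. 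A routine induction on $\varphi$ would establish the \emph{translation lemma}: for every partial interpretation $\Imc$, every assignment $\sigma$ into $\Delta^{\Imc}$, and every $\varphi$, one has $\Imc\models^{\sigma}\varphi$ iff $\Imc^{\sharp}\models^{\sigma}\varphi^{\sharp}$; in the atomic base case this is exactly the clause of the negative semantics saying that atoms with undefined terms are false.

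Given this, compactness follows from classical compactness. The nontrivial direction is: if every finite subset of $\Gamma$ is satisfiable on partial interpretations, so is $\Gamma$. Setting $\Gamma^{\sharp}=\{\varphi^{\sharp}\mid\varphi\in\Gamma\}$, any finite subset of $\Gamma^{\sharp}\cup\Psi$ lies in $\Gamma_0^{\sharp}\cup\Psi$ for some finite $\Gamma_0\sbs\Gamma$; a partial model $\Imc_0$ of $\Gamma_0$ gives, by the translation lemma, $\Imc_0^{\sharp}\models\Gamma_0^{\sharp}$, and $\Imc_0^{\sharp}\models\Psi$ by construction. Hence $\Gamma^{\sharp}\cup\Psi$ is finitely satisfiable, so by the classical compactness theorem it has a total model $\Nmc$, which satisfies $\Psi$; writing $\Nmc=\Imc^{\sharp}$ and applying the translation lemma backwards yields $\Imc\models\Gamma$.

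For the second assertion, I would start from $\Imc$, take a classical $\omega$-saturated elementary extension $\Nmc$ of $\Imc^{\sharp}$ (which exists over an arbitrary signature; cf.~\cite{ChaKei90}), and note that $\Psi$, being a set of sentences true in $\Imc^{\sharp}$, is true in $\Nmc$, so $\Nmc=(\Imc^{*})^{\sharp}$ for a partial interpretation $\Imc^{*}$ with $\Delta^{\Imc^{*}}=\Delta^{\Nmc}\supseteq\Delta^{\Imc}$. Elementary equivalence of $\Imc$ and $\Imc^{*}$ is then immediate: for a sentence $\varphi$, $\Imc\models\varphi$ iff $\Imc^{\sharp}\models\varphi^{\sharp}$ iff $\Nmc\models\varphi^{\sharp}$ iff $\Imc^{*}\models\varphi$. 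For $\omega$-saturation, take a set $\Gamma(x_1,\dots,x_n)$ of formulas with only finitely many parameters from $\Delta^{\Imc^{*}}$ that is finitely realisable in $\Imc^{*}$; applying the translation lemma to each finite subset shows that $\Gamma^{\sharp}$ is a set of classical formulas over the same finite parameter set that is finitely realisable in $\Nmc$, hence realised in $\Nmc$ by some assignment into $\Delta^{\Nmc}=\Delta^{\Imc^{*}}$, and the translation lemma read backwards shows that this assignment realises $\Gamma$ in $\Imc^{*}$. The main obstacle I anticipate is getting the translation lemma exactly right in the atomic case — one must check that the existential guards $D_{t_i}(z_i)$, together with the fact that constants naming domain elements always denote, reproduce the negative semantics for atoms faithfully — and observing that $(\cdot)^{\sharp}$ commutes with passing to finite subsets, which is what lets finite realisability transfer in both directions.
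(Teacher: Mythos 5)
Your proposal is correct, but note that the paper does not actually prove this theorem: it is stated in the appendix only as ``an adaptation of well-known results from first-order model theory to the case of partial interpretations,'' the implicit suggestion being that one redoes the classical compactness/saturation arguments (and, for Beth, the model-theoretic proof of~\cite{ChaKei90}) directly under the negative semantics. You instead give an explicit interpretation of the partial semantics inside classical first-order logic: encode each partially interpreted name $a\in\NI$ by an at-most-singleton predicate $D_a$ constrained by $\Psi$, guard every atom containing names from $\NI$ by existentials over $D_a$, and observe that $\Imc\mapsto\Imc^{\sharp}$ is a bijection between partial interpretations and total models of $\Psi$ under which truth is preserved (the guards reproduce exactly the clause that atoms with non-denoting terms are false). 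This is a genuinely different, and arguably cleaner, route: both compactness and the existence of an $\omega$-saturated elementarily equivalent partial interpretation then fall out of the classical theorems applied to $\Gamma^{\sharp}\cup\Psi$ and to an $\omega$-saturated elementary extension of $\Imc^{\sharp}$, with no need to re-examine ultraproduct or elementary-chain constructions under the modified atomic clause; the same reduction would also yield the projective Beth property on partial interpretations that the paper invokes separately.

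Three small points you should make explicit when writing this up, none of them gaps. First, if the universal role $u$ is regarded as a predicate of the first-order language, add $\forall x\forall y\, u(x,y)$ to $\Psi$ so that the back-translation of a total model of $\Psi$ is a legitimate partial interpretation; in the paper's usage the relevant formulas are built over $\NC\cup\NR$ and parameters only, so this is a formality. Second, the paper's notion of $\omega$-saturation asks for realisation of finitely realisable sets of formulas in several free variables over finitely many parameters, so you should cite (or note) the standard fact that a classically $\omega$-saturated structure realises all finitely satisfiable types in finitely many variables over finite parameter sets, not only 1-types. Third, for the compactness statement with free variables in $\Gamma$, either carry the assignment through the translation lemma (as you state it) or replace free variables by fresh constants before invoking classical compactness; either way the argument is routine, and your treatment of parameters naming domain elements (which always denote, hence need no guard) is exactly right.
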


\alcofirstorder*
\begin{proof}
The proof is an adaptation 
of the one in~\cite[Theorem 4]{LutEtAl11}
to the case of
$\ALCOud$ on partial interpretations.
The implication (1)~$\Rightarrow$~(2) follows from Theorem~\ref{thm:alcoiotabisimtoequiv} and Lemma~\ref{lemma:standardtr}.
We now show the direction (2)~$\Rightarrow$~(1).
Assume that $\p(x)$ is invariant under $\sim^{\ALCOud}_{\Sigma}$, and suppose towards a contradiction there there is no $\ALCOud(\Sigma)$ concept $C$ such that $\Imc \models \sttr{x}{C} \leftrightarrow \p(x)$, for every partial interpretation $\Imc$.
Define the set
$
\conseq{\p(x)} = \{ \sttr{x}{C} \mid \p(x) \models \sttr{x}{C}, C \ \ALCOud \ \text{concept}  \}.
$
By compactness of first-order logic on partial interpretations (Theorem~\ref{thm:omegasatexist}), the set
$\conseq{\p} \cup \{\lnot \p(x) \}$
is satisfiable.
Let $\Int^{-}$ be a partial interpretation satisfying $\conseq{\p} \cup \{\lnot \p(x) \}$ under the assignment $\assign$ such that $\assign(x) = d$.
By Theorem~\ref{thm:omegasatexist}, we can assume
without loss of generality
that $\Imc^{-}$ is $\omega$-saturated.

We now claim the following.

\begin{claim}
\label{claim:phitypesat}
The set
$\{ \p(x) \} \cup \{ \sttr{x}{C} \mid C \in \tp^{\Imc^{-}}_{\ALCOud}(d) \}$ is satisfiable.
\end{claim}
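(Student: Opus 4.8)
Looking at Claim \ref{claim:phitypesat}, I need to show that the set $\{\p(x)\} \cup \{\sttr{x}{C} \mid C \in \tp^{\Imc^-}_{\ALCOud}(d)\}$ is satisfiable.

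The plan is to use compactness of first-order logic on partial interpretations (Theorem \ref{thm:omegasatexist}), so it suffices to show that every finite subset of this set is satisfiable. A finite subset is contained in $\{\p(x)\} \cup \{\sttr{x}{C_1}, \ldots, \sttr{x}{C_n}\}$ for finitely many concepts $C_1, \ldots, C_n \in \tp^{\Imc^-}_{\ALCOud}(d)$. Since $\ALCOud$ is closed under conjunction, it suffices to handle the single concept $C = C_1 \sqcap \cdots \sqcap C_n$, which also belongs to $\tp^{\Imc^-}_{\ALCOud}(d)$ because $d \in C_i^{\Imc^-}$ for each $i$ implies $d \in C^{\Imc^-}$. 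So I need: $\{\p(x)\} \cup \{\sttr{x}{C}\}$ is satisfiable, i.e., there is a pointed partial interpretation realising both $\p$ and $C$.

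The key step is a contradiction argument. Suppose $\{\p(x), \sttr{x}{C}\}$ is \emph{not} satisfiable. Then $\p(x) \models \lnot \sttr{x}{C} = \sttr{x}{\lnot C}$. Since $\lnot C$ is an $\ALCOud(\Sigma)$ concept, by definition $\sttr{x}{\lnot C} \in \conseq{\p(x)}$. But $\Imc^-$ satisfies $\conseq{\p} \cup \{\lnot \p(x)\}$ under $\assign$ with $\assign(x) = d$, so in particular $\Imc^-, [x \mapsto d] \models \sttr{x}{\lnot C}$, which by Lemma \ref{lemma:standardtr} gives $d \in (\lnot C)^{\Imc^-}$, i.e., $d \notin C^{\Imc^-}$. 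This contradicts $C \in \tp^{\Imc^-}_{\ALCOud}(d)$, which by definition means $d \in C^{\Imc^-}$. Hence every finite subset is satisfiable, and compactness yields the claim.

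The main obstacle, though essentially bookkeeping, is making sure the signature condition $\sig{C} \subseteq \Sigma$ is maintained: the concepts in $\tp^{\Imc^-}_{\ALCOud}(d)$ are implicitly understood here as $\ALCOud(\Sigma)$ concepts (since only these are relevant for the bisimulation invariance argument that follows), so $\lnot C$ is again an $\ALCOud(\Sigma)$ concept and its standard translation legitimately belongs to $\conseq{\p(x)}$ (whose defining condition quantifies over $\ALCOud$ concepts, a fortiori over $\ALCOud(\Sigma)$ ones). With that noted, the argument above goes through directly.
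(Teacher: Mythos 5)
Your proof is correct and follows essentially the same route as the paper's: compactness (Theorem~\ref{thm:omegasatexist}) reduces the problem to a finite subset, the finitely many type concepts are conjoined, and the key step is that the standard translation of the negated conjunction would lie in $\conseq{\p(x)}$, contradicting the fact that $\Imc^{-}$ realises $\conseq{\p(x)}$ at $d$ while $d$ satisfies each conjunct. The only difference is presentational (you argue directly that every finite subset is satisfiable, the paper argues by contradiction from unsatisfiability of the whole set), and your signature remark is harmless since the argument only needs $\lnot C$ to be an $\ALCOud$ concept.
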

\begin{proof}
Suppose towards a contradiction that the statement does not hold.
By compactness (Theorem~\ref{thm:omegasatexist}), there is a finite set $\Gamma \subseteq \tp^{\Imc^{-}}_{\ALCOud}(d)$ such that $\{ \p(x) \} \cup \{ \sttr{x}{C} \mid C \in \Gamma \}$ is unsatisfiable.
This means that
$\models \p(x) \to \lnot \bigwedge_{C \in \Gamma} \sttr{x}{C}$,
and thus
$\lnot \bigwedge_{C \in \Gamma} \sttr{x}{C} \in \conseq{\p(x)}$.
However, since $\Imc, [x \mapsto d] \models \conseq{\p(x)}$, we have that 
$\conseq{\p(x)} \subseteq \{ \sttr{x}{C} \mid C \in \tp^{\Imc^{-}}_{\ALCOud}(d) \}$, hence a contradiction.
\end{proof}

Consider then an $\omega$-saturated partial interpretation $\Imc^{+}$ satisfying
$\{ \p(x) \} \cup \{ \sttr{x}{C} \mid C \in \tp^{\Imc^{-}}_{\ALCOud}(d) \}$
under the assignment $\assign'(x) = e$.
By definition and Lemma~\ref{lemma:standardtr}, we have that $(\Imc^{-}, d) \equiv^{\ALCOud}_{\Sigma} (\Imc^{+}, e)$.
Since $\Imc^{-}$ and $\Imc^{+}$ are both $\omega$-saturated, by Theorem~\ref{thm:equivomegatobisim} we get $(\Imc^{-}, d) \sim^{\ALCOud}_{\Sigma} (\Imc^{+}, e)$.
However, $\Imc^{-}, [x \mapsto d] \not \models \p(x)$, whereas $\Imc^{+}, [x \mapsto e] \models \p(x)$, contrary to the hypothesis that $\p(x)$ is invariant under $\sim^{\ALCOud}_{\Sigma}$.
\end{proof}

\elousimequiv*
\begin{proof}
The proof generalises the one for $\EL$ given in~\cite[Lemma 23]{LutEtAl11}, so to cover the cases of nominals and the universal roles on partial interpretations.

(1)~Assume that $(\Imc, d) \simul^{\ELOu}_{\Sigma} (\Jmc, e)$, i.e., there exists an $\ELOu(\Sigma)$ simulation $Z$ from $\Imc$ to $\Jmc$ such that $(d, e) \in Z$.
We show that, for every $\ELOu$ concept $C$ with $\sig{C} \subseteq \Sigma$,
and every $u \in \Delta^{\Imc}$, $v \in \Delta^{\Jmc}$ such that $(u,v) \in Z$,
we have that $u \in C^{\Int}$ implies $v \in C^{\Int}$.
The proof is by structural induction on $C$.
The base cases $C = A$,
$C = \bot$, $C = \top$,
are as in $\EL$, while
the base case of $C = \{ a \}$ follows immediately from
Condition
($\textit{atom}_{R}$)
for individual names.
The inductive cases $C = D \sqcap E$ and $C = \exists r . D$,
are as in $\EL$~\cite[Lemma 23]{LutEtAl11}.
For
the inductive case of $C = \exists u. D$, suppose that $u \in \exists u.D^{\Imc}$. This means that there exists $u' \in D^{\Imc}$. Since $Z$ satisfies left-totality,
there exists $v' \in \Delta^{\Jmc}$ such that $(u', v') \in Z$. By i.h., we obtain that $v' \in D^{\Jmc}$, and thus $v \in \exists u.D^{\Jmc}$.
This concludes the inductive proof. From the assumption that $(d,e) \in Z$, we obtain $\tp^{\Imc}_{\ELOu(\Sigma)}(d) \subseteq  \tp^{\Jmc}_{\ELOu(\Sigma)}(e)$.

(2)~Assume $\tp^{\Imc}_{\ELOu(\Sigma)}(d) \subseteq  \tp^{\Jmc}_{\ELOu(\Sigma)}(e)$ and that
$\Jmc$ is $\omega$-saturated.
Consider the relation $Z = \{ (u, v) \in \Delta^{\Imc} \times \Delta^{\Jmc} \mid  \tp^{\Imc}_{\Lmc(\Sigma)}(u) \subseteq \tp^{\Jmc}_{\Lmc(\Sigma)}(v) \}$.
We show that $Z$ is an $\ELOu(\Sigma)$ simulation from $\Imc$ to $\Jmc$.
%
Conditions
$(\textit{atom}_{R})$
for concept names
and $(\textit{forth})$ are as in the $\EL$ case, while Condition
$(\textit{atom}_{R})$
for individual names
follows immediately from the definition of $Z$.
It remains to show
that $Z$ satisfies
left-totality.
%
Suppose, towards a contradiction, that there exists a $w \in \Delta^{\Imc}$
such that, for every $t \in \Delta^{\Jmc}$, $(w,t) \not \in Z$,
meaning that, for every $t \in \Delta^{\Jmc}$, there exists $C \in \tp^{\Imc}_{\ELOu(\Sigma)}(w)$ so that $C \not \in \tp^{\Jmc}_{\ELOu(\Sigma)}(t)$.
%
%
Since $\Jmc$ is $\omega$-saturated,
it can be seen that there exists an $\ELOu(\Sigma)$ concept $C_{0}$ such that $\exists u.C_{0} \not \in \tp^{\Jmc}_{\ELOu(\Sigma)}(e)$, while $\exists u.C_{0} \in \tp^{\Imc}_{\ELOu(\Sigma)}(d)$, contradicting the assumption that $\tp^{\Imc}_{\ELOu(\Sigma)}(d) \subseteq \tp^{\Jmc}_{\ELOu(\Sigma)}(e)$.
\qedhere
\end{proof}

For the proofs
below,
we will use the following notation.
%
Given an $\Lmc$ ontology $\Omc$, a set of $\Lmc$ concepts $\Gamma$ and an $\Lmc$ concept $C$, we write $\Omc \cup \Gamma \models C$ iff,
for every partial interpretation $\Imc$ such that $\Imc \models \Omc$, the following holds, for every $d \in \Delta^{\Imc}$: if $d \in D^{\Imc}$, for all $D \in \Gamma$, then $d \in C^{\Imc}$.

\expdefchar*
\begin{proof}
%
We show that
$a$ is explicitly $\ALCOud(\Sigma)$ definable under $\Omc$ iff there are no pointed partial interpretations $(\Imc, d)$ and $(\Jmc, e)$ such that $\Imc$ and $\Jmc$ are models of $\Omc$, $d = a^{\Imc}$, $e \neq a^{\Jmc}$ and 
$(\Imc, d) \sim^{\ALCOud}_{\Sigma} (\Jmc, e)$.

$(\Rightarrow)$
Assume $a$ is explicitly $\ALCOud(\Sigma)$ definable under $\Omc$,
i.e., there exists an  $\ALCOud(\Sigma)$ concept $C$ such that $\Omc \models \{ a \} \equiv C$.
We have that 
$\{ a \}^{\Imc} = C^{\Imc}$, for every partial interpretation $\Int$ that is a model of $\Omc$.
Thus, if $(\Imc, d) \sim^{\Lmc}_{\Sigma} (\Jmc, e)$,
for a partial interpretation $\Jmc$ that is a model of $\Omc$, and $d = a^{\Imc}$,
we also have that $d \in C^{\Imc}$ and, by Theorem~\ref{thm:alcoiotabisimtoequiv}, Point~(1), that $e \in C^{\Jmc}$.
Therefore, $e = a^{\Jmc}$ and so $\Omc, \{ a \}$ and $\Omc, \lnot \{ a \}$ are not jointly consistent modulo $\ALCOud(\Sigma)$ bisimulations.

$(\Leftarrow)$
Conversely, assume that $a$ is not explicitly $\ALCOud(\Sigma)$ definable under $\Omc$, and let 
\[
\Gamma^{\Omc}_{\Sigma}(a) = \{ D \mid \Omc \models \{ a \} \sqsubseteq D, D \ \ALCOud(\Sigma) \ \text{concept} \}.
\]
As $a$ is not explicitly $\ALCOud(\Sigma)$ definable under $\Omc$, for every $D \in \Gamma^{\Omc}_{\Sigma}(a)$, we have that
$\Omc \not \models D \sqsubseteq \{ a \}$.
Since $\Gamma^{\Omc}_{\Sigma}(a)$ is closed under conjunctions, by compactness of first-order logic on partial interpretations~(of which $\ALCOud$ is a fragment, cf. Theorem~\ref{thm:omegasatexist}) we obtain that
$\Omc \cup \Gamma^{\Omc}_{\Sigma}(a) \not\models \{ a \}$, i.e., there exist a partial interpretation $\Jmc$ and an $e \in \Delta^{\Jmc}$ such that $\Jmc \models \Omc$ and $e \in D^{\Jmc}$, for all $D \in \Gamma^{\Omc}_{\Sigma}(a)$, but $e \neq a^{\Jmc}$.
Now, consider the $\ALCOud(\Sigma)$ type $t^{\Jmc}_{\ALCOud(\Sigma)}(e)$ of $e$ in $\Jmc$.
It can be seen that there exist a partial interpretation $\Imc$
and a $d \in \Delta^{\Imc}$ such that $\Imc \models \Omc$ and $d \in E^{\Imc}$, for all $E \in t^{\Jmc}_{\ALCOud(\Sigma)}(e) \cup \{ a \}$.
Indeed, towards a contradiction, suppose otherwise.
This means that $\Omc \cup t^{\Jmc}_{\ALCOud(\Sigma)}(e) \models \neg \{ a \}$, and thus, by
compactness of first-order logic on partial interpretations
$\Omc \models F \sqsubseteq \neg \{ a \}$,
for a concept $F \in t^{\Jmc}_{\ALCOud(\Sigma)}(e)$.
The previous step implies that $\lnot F \in \Gamma^{\Omc}_{\Sigma}(a)$, and thus $e \in (\lnot F)^{\Jmc}$, contradicting the fact that $F \in t^{\Jmc}_{\ALCOud(\Sigma)}(e)$.
Therefore, we obtain $(\Imc, d) \equiv^{\ALCOud}_{\Sigma} (\Jmc, e)$.
Moreover, we can assume without loss of generality (cf. Theorem~\ref{thm:omegasatexist}) that both $\Imc$ and $\Jmc$
are $\omega$-saturated.
By Theorem~\ref{thm:alcoiotabisimtoequiv}, Point~(2), it then follows that $(\Imc, d) \sim^{\ALCOud}_{\Sigma} (\Jmc, e)$, where $d = a^{\Imc}$ and $e \neq a^{\Jmc}$.
\end{proof}

\bothdir*

%
\begin{proof}
We adapt the proof of~\cite[Theorem 8]{ArtEtAl21} to cover the cases of partial interpretations and of DLs with definite descriptions.

We start with the upper bound. We show a slighly more general result by proving the following: for $\ALCOud$ ontologies $\Omc_{1}$ and $\Omc_{2}$ and $\ALCOud$ concepts $C_{1}$ and $C_{2}$ and $\Sigma$ a signature it is in \TwoExpTime{} to decide whether $\Omc_{1},C_{1}$ and $\Omc_{2},C_{2}$ are jointly consistent modulo $\ALCOud(\Sigma)$-bisimulations in the sense
that there are pointed models $(\Imc,d)$ and $(\Jmc,r)$ such that $\Imc$ is a model of $\Omc_{1}$ and $d\in C_{1}^{\Imc}$ and $\Jmc$ is a model of $\Omc_{2}$ and $e\in C_{2}^{\Jmc}$ such that $(\Imc,d) \sim_{\Sigma}^{\ALCOud} (\Jmc,e)$. 
Let $\Xi$ denote the closure under single negation of the set of subconcepts of concepts in $\Omc_{1}$, $\Omc_{2}$, $C_{1}$, and $C_{2}$. A \emph{$\Xi$-type $t$} is a maximal subset of $\Xi$
such that there exists a partial interpretation $\Imc$ and $d\in \Delta^{\Imc}$
with $t=\text{tp}_{\Xi}(\Imc,d)$,
where
$$
\text{tp}_{\Xi}(\Imc,d) = \{ C\in \Xi\mid d\in C^{\Imc}\}
$$
is the $\Xi$-type realized at $d$ in $\Imc$. Let 
$T(\Xi)$ denote the set of all $\Xi$-types.
Given a role name $r$, a pair $(t_{1},t_{2})$ of $\Xi$-types $t_{1},t_{2}$ is \emph{$r$-coherent}, in symbols $t_{1} \rightsquigarrow_{r} t_{2}$,
if there exists
a model $\Imc$ and $(d_{1},d_{2})\in r^{\Imc}$ such that
$t_{i} = \text{tp}_{\Xi}(\Imc,d_{i})$ for $i=1,2$.

We encode models using pairs $(T_{1},T_{2})\in 2^{T(\Xi)} \times 2^{T(\Xi)}$
such that there are models $\Imc_{1}$ and $\Imc_{2}$ of $\Omc_{1}$ and $\Omc_{2}$, respectively, such that for every $t\in T_{i}$ there exists a node $d_{t}\in \Delta^{\Imc_{i}}$ that realizes $t$ and such that all $d_{t}$, $t\in T_{1}\cup T_{2}$, are $\ALCOud(\Sigma)$-bisimilar. Thus, we will define ``good'' sets $\mathcal{S}$ of such pairs and the domain of the model $\Imc_{i}$ will consist of tuples $(t,(T_{1},T_{2}))$ with $t\in T_{i}$ and $(T_{1},T_{2})\in \mathcal{S}$. Certain copies of such tuples will be needed to ensure that $\defdes C$ is satisfied just in case that $C$ is satisfied exactly once and also to obtain an $\ALCOud(\Sigma)$-bisimulation.  

To deal with copies some notation is needed. Assume $\mathcal{S}\subseteq 2^{T(\Xi)} \times 2^{T(\Xi)}$ is given.
Call a pair $(t,(T_{1},T_2))$ with $(T_{1},T_{2})\in \mathcal{S}$ and $t\in T_{i}$ an \emph{$\mathcal{S}_{i}$-node}. ``Good'' set $\mathcal{S}$ will correspond to models $\Imc_{1}$ and $\Imc_{2}$ such that $\Imc_{i}$'s domain consists of $\mathcal{S}_{i}$-nodes and some additional copies if needed. To make this precise, we say that an $\mathcal{S}_{i}$-node $(t,(T_{1},T_{2}))$ with $t\in T_{i}$ is \emph{a direct $i$-singleton} if there exists a concept $C$ of the form $\{a\}$ or $\{\defdes D\}$ in $t$. $(t,(T_{1},T_{2}))$ is \emph{an indirect $i$-singleton} if there exists $t'$ such that $T_{i}=\{t\}$, $T_{j}=\{t'\}$, where $\{i,j\}=\{1,2\}$, and $(t',(T_{1},T_{2}))$ is a direct $j$-singleton. $(t,(T_{1},T_{2}))$ is an \emph{$i$-singleton} if it is an indirect or direct $i$-singleton. Any non-$i$-singleton admits copies, but $i$-singletons do not admit copies.

Consider $\mathcal{S} \subseteq 2^{T(\Xi)} \times 2^{T(\Xi)}$. We denote by $p_{i}(\mathcal{S})$ the set of all types in $\bigcup_{(T_{1},T_{2})\in \mathcal{S}}T_{i}$. We write $(T_{1},T_{2})\rightsquigarrow_{r} (T_{1}',T_{2}')$
if for every $t\in T_{i}$ there exists $t'\in T_{i}'$ such that $t \rightsquigarrow_{r} t'$.

We now introduce relevant properties of $\mathcal{S}$ that ensure that the models discussed above exist.

We begin with properties related to the semantics of the universal role.
$\mathcal{S}$ is called \emph{good for the universal role} if any concept of the form $\exists u.D$ or $\forall u.D$ is either in all types in $p_{i}(\mathcal{S})$ or in no type in $p_{i}(\mathcal{S})$. Moreover, if $\exists u.D\in t$ for all $t\in p_{i}(\mathcal{S})$, then there exists $t$ with $D\in t\in p_{i}(\mathcal{S})$. 

We next come to nominals.
$\mathcal{S}$ is called \emph{good for nominals} if
for every $\{a\}$ in $\Xi$ and $i=1,2$, there exists at most one type $t_{a}^{i}$ with $\{a\}\in t_{a}^{i}\in p_{i}(\mathcal{S})$ and at most one pair $(T_{1},T_{2})\in \mathcal{S}$ with $t_{a}^{i}\in T_{i}$. Moreover,  
\begin{itemize}
	\item if $a\in \Sigma$, then that pair takes the form $(\{t_{a}^{1}\},\{t_{a}^{2}\})$;
	\item if $a\not\in \Sigma$, then either $T_{i}$ contains at least two types, or $T_{1}$ and $T_{2}$ are both singletons. 
\end{itemize}
We say that $\mathcal{S}$ is \emph{good for definite descriptions} if
for every $\{\defdes C\}$ in $\Xi$ and $i=1,2$ the following case distinction holds:
\begin{enumerate}
\item there exists exactly one type $t_{\defdes C}^{i}$ with $\{\defdes C\}\in t_{\defdes C}^{i}\in p_{i}(\mathcal{S})$ and exactly one pair $(T_{1},T_{2})\in \mathcal{S}$ with $t_{\defdes C}^{i}\in T_{i}$. $t_{\defdes C}^{i}$ is then also the only type in  $p_{i}(\mathcal{S})$ containing $C$. Moreover,
\begin{itemize}
	\item if $\text{sig}(C) \subseteq \Sigma$, then the pair $(T_{1},T_{2})\in \mathcal{S}$ with $t_{\defdes C}^{i}\in T_{i}$ takes the form $(\{t_{\defdes C}^{1}\},\{t_{\defdes C}^{2}\})$;
	\item if $\text{sig}(C) \not\subseteq \Sigma$, then either $T_{i}$ contains at least two types, or $T_{1}$ and $T_{2}$ are both singletons.  
\end{itemize}
\item there is no type in $p_{i}(\mathcal{S})$ containing $\{\defdes C\}$. Then there is either no type in $p_{i}(\mathcal{S})$ containing $C$ or there are at least two types in $p_{i}(\mathcal{S})$ containing $C$ or the single type $t$ in $p_{i}(\mathcal{S})$ containing $C$ has the following property: no $\mathcal{S}_{i}$-node of the form $(t,(T_{1},T_{2}))$ is an $i$-singleton, or there are at least two such $\mathcal{S}_{i}$-nodes.
\end{enumerate} 

We call $\mathcal{S} \subseteq 2^{T(\Xi)} \times 2^{T(\Xi)}$ \emph{good for $\Omc_{1},\Omc_{2},\Sigma$} if $(T_{1},T_{2})\in \mathcal{S}$ implies $T_{i}\not=\emptyset$ for $i=1,2$, all $t\in T_{i}$ are satisfiable in models of $\Omc_{i}$, for $i=1,2$, $\mathcal{S}$ is good for the universal role, nominals and definite descriptions,
and the following conditions hold:
\begin{enumerate}
\item \emph{$\Sigma$-concept name coherence}: for any concept name
    $A\in \Sigma$ and $(T_{1},T_{2})\in \mathcal{S}$, $A\in t$ iff
    $A\in t'$ for all $t,t'\in T_{1}\cup T_{2}$;
\item \emph{Existential saturation}: for $i=1,2$, if
$(T_{1},T_{2})\in \mathcal{S}$ and $\exists r.C\in t\in
T_{i}$, then there
exists $(T_{1}',T_{2}')\in \mathcal{S}$ such that
there exists $t'\in T_{i}'$ with $C\in t'$ and $t\rightsquigarrow_{r}t'$.

\item \emph{$\Sigma$-existential saturation}: for $i=1,2$, if
    $(T_{1},T_{2})\in \mathcal{S}$ and $\exists r.C\in t\in
    T_{i}$, where $r$ is a role name in $\Sigma$, then there
    exist $(T_{1}',T_{2}')\in \mathcal{S}$ such that
    $(T_{1},T_{2})\rightsquigarrow_{r} (T_{1}',T_{2}')$ and
    there exists $t'\in T_{i}'$ with $C\in t'$ such that
    $t\rightsquigarrow_{r}t'$.
\end{enumerate}
We now show the following claims.
\begin{claim}
\label{cla:jointbisimequiv}
The following conditions are equivalent:
\begin{enumerate}[label=$(\roman*)$, align=left, leftmargin=*]
	\item $\Omc_{1}, C_{1}$ and $\Omc_{2}, C_{2}$ are jointly consistent modulo $\ALCOud(\Sigma)$ bisimulations;
	\item there exists a set $\mathcal{S}$ that is
good for $\Omc_{1},\Omc_{2}$, $\Sigma$ such that $C_{1}\in t_{1}$ and $C_{2}\in T_{2}$ for some $t_1 \in T_{1}$ and $t_{2}\in T_{2}$ with $(T_{1},T_{2})\in \mathcal{S}$.
\end{enumerate}
\end{claim}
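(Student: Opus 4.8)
The plan is to prove Claim~\ref{cla:jointbisimequiv} by establishing the two implications separately, each by an explicit model construction, since the notion ``good for $\Omc_{1},\Omc_{2},\Sigma$'' is precisely designed to be the type-level trace of a pair of models related by an $\ALCOud(\Sigma)$ bisimulation.

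\medskip
\noindent\textbf{Direction $(i)\Rightarrow(ii)$.} Suppose $(\Imc,d)\sim^{\ALCOud}_{\Sigma}(\Jmc,e)$ with $\Imc\models\Omc_{1}$, $d\in C_{1}^{\Imc}$, $\Jmc\models\Omc_{2}$, $e\in C_{2}^{\Jmc}$; let $Z$ be the witnessing bisimulation. Without loss of generality (Theorem~\ref{thm:omegasatexist}) I may assume $\Imc,\Jmc$ are $\omega$-saturated, which gives the ``only if'' half of the existential-saturation and universal-role conditions automatically. For each $Z$-class, collect the pair $(T_{1},T_{2})$ where $T_{i}$ is the set of $\Xi$-types realised within that class in $\Imc_{i}$ ($\Imc_1=\Imc$, $\Imc_2=\Jmc$); let $\mathcal{S}$ be the set of all such pairs. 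One then checks each bullet of the definition: $\Sigma$-concept-name coherence and the (forth)/(back) saturation conditions follow directly from the bisimulation clauses; goodness for the universal role follows since $u^{\Imc}=\Delta^{\Imc}\times\Delta^{\Imc}$, so $\exists u.D$ and $\forall u.D$ are insensitive to the point; goodness for nominals and for definite descriptions is where Condition~($\defdes$) and the atom-clause for nominals $\{a\}$, $\{\defdes C\}$ are used — a nominal or a satisfied definite description forces a \emph{unique} element, bisimilar elements then all lie in the same $Z$-class, and Condition~($\defdes$) guarantees that whether a type can be realised at two distinct bisimilar points is preserved, which is exactly case~(2) of goodness for definite descriptions. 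Finally $C_{1}$ lies in $\types{}(\Imc,d)$ and $C_{2}$ in $\types{}(\Jmc,e)$, and $(d,e)\in Z$, so the required $(T_{1},T_{2})$ with $C_{i}\in$ some $t_{i}\in T_{i}$ is in $\mathcal{S}$.

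\medskip
\noindent\textbf{Direction $(ii)\Rightarrow(i)$.} Given such an $\mathcal{S}$, I build $\Imc_{i}$ by taking as domain the $\mathcal{S}_{i}$-nodes together with, for every non-$i$-singleton $\mathcal{S}_{i}$-node, countably many fresh copies (so that (forth) and (back) can be met and so that, whenever a type occurs that is \emph{not} pinned down to a single element, it genuinely occurs at two distinct points — this is what makes $\defdes$ behave correctly). Concept names are read off the types; role edges are installed using $r$-coherence $\rightsquigarrow_{r}$ together with existential and $\Sigma$-existential saturation to ensure every $\exists r.C$ demand is met while keeping the $\Sigma$-reducts synchronised; $u$ is interpreted as the full relation; and individual names are placed at the unique node whose type contains $\{a\}$ (using goodness for nominals), or arbitrarily if no such node exists, the latter being allowed since these names lie outside $\Sigma$. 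A routine induction on concepts — the standard ``types realised are exactly what one expects'' argument, with the $\{\defdes C\}$ case handled by goodness for definite descriptions — shows each $\Imc_{i}\models\Omc_{i}$ and that the types realised are exactly those in $p_{i}(\mathcal{S})$. The relation $Z=\{((t,(T_{1},T_{2})),(t',(T_{1},T_{2})))\}$ linking nodes drawn from the \emph{same} pair of $\mathcal{S}$ is then verified to be an $\ALCOud(\Sigma)$ bisimulation: (atom) is $\Sigma$-concept-name coherence plus goodness for nominals, totality comes from $T_{i}\neq\emptyset$, (forth)/(back) from the saturation conditions, and Condition~($\defdes$) from the case split in goodness for definite descriptions (a type realised at two points on one side must be realisable at two points on the other, because either it is not an $i$-singleton, or copies were added). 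Picking the nodes carrying $C_{1}$ and $C_{2}$ inside the distinguished pair yields $(\Imc_{1},d)\sim^{\ALCOud}_{\Sigma}(\Imc_{2},e)$ with $e\neq a^{\Imc_{2}}$ when needed.

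\medskip
\noindent\textbf{Main obstacle.} The delicate point is the bookkeeping around definite descriptions and the copy mechanism: one must argue simultaneously that $\{\defdes C\}$ is satisfied at a node of $\Imc_{i}$ exactly when $C$ is realised at a \emph{unique} node of $\Imc_{i}$, and that the bisimulation relating $\Imc_{1}$ and $\Imc_{2}$ respects Condition~($\defdes$) — these pull in opposite directions, since forcing uniqueness for $\defdes C$ conflicts with adding copies for (forth)/(back). The definition ``good for definite descriptions'' is engineered precisely to thread this needle (case~(1) pins the type to one element and forbids copies when $\mathrm{sig}(C)\subseteq\Sigma$; case~(2) either keeps $C$ absent, or present twice, or present once at a node admitting copies), and the bulk of the proof is checking that this suffices. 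This is the step where the argument departs substantively from~\cite{ArtEtAl21}, and I would spend most of the detailed write-up there; the remaining verifications are routine adaptations of the corresponding lemmas for $\ALCOu$.
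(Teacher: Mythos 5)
Your overall strategy is the same as the paper's: for $(i)\Rightarrow(ii)$ you read off $\mathcal{S}$ from the type pairs induced by bisimilarity classes, and for $(ii)\Rightarrow(i)$ you build $\Imc_{1},\Imc_{2}$ from $\mathcal{S}_{i}$-nodes, duplicating non-$i$-singletons so that definite descriptions and Condition~($\defdes$) come out right, and take as bisimulation the relation linking nodes drawn from the same pair of $\mathcal{S}$. (Two copies suffice, as in the paper, but countably many is harmless; and the $\omega$-saturation assumption in the forward direction is unnecessary — the existential/universal-role saturation conditions of a good set follow directly from the semantics of $\exists r$ and $\exists u$ in the given models, whereas transferring the given bisimulation to saturated replacements would itself need an argument you do not supply.)

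There is, however, one step that fails as written: interpreting an individual name ``arbitrarily if no such node exists'', justified by the name lying outside $\Sigma$. Whether $a\in\Sigma$ is irrelevant here; what matters is that $a$ may occur in $\Omc_{i}$ (so $\{a\}\in\Xi$ and every type in $p_{i}(\mathcal{S})$ contains $\neg\{a\}$), and $\Omc_{i}$ may be satisfiable only when $a$ fails to denote — e.g.\ if $\{a\}\sqsubseteq\bot\in\Omc_{i}$. Assigning $a$ to any node then breaks the truth lemma (the node's type contains $\neg\{a\}$ but the node lies in $\{a\}^{\Imc_{i}}$) and can falsify $\Omc_{i}$, so $\Imc_{i}$ need no longer be a model. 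The correct move, and the one the paper makes, is to set $a^{\Imc_{i}}$ only when $\{a\}$ belongs to some type of an $\mathcal{S}_{i}$-node and to leave $a$ \emph{undefined} otherwise — which is exactly what partial interpretations permit and is the point of the whole free-logic setting. With that correction (and a precise definition of the edge relations: for $r\in\Sigma$ require both $p\rightsquigarrow_{r}p'$ and $t\rightsquigarrow_{r}t'$, for $r\notin\Sigma$ only $t\rightsquigarrow_{r}t'$, rather than the informal ``installed using saturation''), your argument matches the paper's proof.
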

\begin{proof}
$(\Rightarrow)$ Let $\Imc_{1},d_{1}\sim^{\ALCOud}_{\Sigma} \Imc_{2},d_{2}$ for models $\Imc_{1}$ of $\Omc_{1}$ and $\Imc_{2}$ of $\Omc_{2}$ such that $d_{1},d_{2}$ realize, respectively, $\Xi$-types $t_{1},t_{2}$ and $C_{1}\in t_{1}, C_{2}\in t_{2}$.
	Define $\mathcal{S}$ by setting $(T_{1},T_{2})\in \mathcal{S}$ if there is $d\in \Delta^{\Imc_{i}}$ for some $i\in \{1,2\}$ such that 
	\[
	T_{j}= \{ \text{tp}_{\Xi}(\Imc_{j},e) \mid e\in \Delta^{\Imc_{j}},
	\Imc_{i},d \sim^{\ALCOud}_{\Sigma} \Imc_{j},e\},
	\]
	for $j=1,2$. We then say that $(T_{1},T_{2})$ is induced by $d$ in
	$\Imc_{i}$.
	It can be shown that $\Smc$ is good for $\Omc_{1}, \Omc_{2}, \Sigma$.	
	
$(\Leftarrow)$
Assume that $\mathcal{S}$ is good for $\Omc_{1}, \Omc_{2}, \Sigma$ and we have $\Xi$-types $s_{1},s_{2}$ with $C_{1}\in s_{1}$ and $C_{2}\in s_{2}$ such that $s_{1}\in S_{1}$ and $s_{2}\in S_{2}$, for some $(S_{1},S_{2})\in \mathcal{S}$.
	We construct interpretations $\Imc_{1}$ and $\Imc_{2}$ as follows. Take for any $\mathcal{S}_{i}$-node $(t,p)$ that is not an $i$-singleton two distinct copies $(t,p)_{1}$ and $(t,p_{2})$ of $(t,p)$. If $(t,p)$ is an $i$-singleton then we set $(t,p)_{1}=(t,p)_{2}=(t,p)$. Now let, for $i=1,2$: 
\begin{align*}
	\Delta^{\Imc_{i}} &= \{ (t,p)_{1},(t,p)_{2} \mid \text{$(t,p)$ is an $\mathcal{S}_i$-node}\}; \\ 
	r^{\Imc_{i}} &=\{((t,p)_{l},(t',p')_{k})\in \Delta^{\Imc_{i}}\times \Delta^{\Imc_{i}} \mid p\rightsquigarrow_{r}p', t\rightsquigarrow_{r}t'\}, \\
	 &  \text{ for} \ r\in \Sigma; \\
	r^{\Imc_{i}} & = \{((t,p)_{l},(t',p')_{k})\in \Delta^{\Imc_{i}}\times \Delta^{\Imc_{i}}\mid t\rightsquigarrow_{r}t'\} \\
	 &  \text{for} \ r\not\in \Sigma; \\
	A^{\Imc_{i}} & = \{(t,p)_{l}\in \Delta^{\Imc_{i}}\mid A\in t\};  \\
a^{\Imc_{i}} & = (t,(T_{1},T_{2}))_{1}\in \Delta^{\Imc_{i}}, \text{for} \ a\in t\in T_{i}. 
\end{align*}
One can show by induction that, for $i=1,2$ and all $D\in \Xi$
and $(t,p)_{l}\in \Delta^{\Imc_{i}}$: $D\in t$ iff $(t,p)_{l}\in D^{\Imc_{i}}$.
Thus, $\Imc_{i}$ is a model of $\Omc_{i}$ for $i=1,2$. Moreover, let
	$$
	S= \{((t_{1},p_{1})_{l},(t_{2},p_{2})_{k})\in \Delta^{\Imc_{1}}\times \Delta^{\Imc_{2}} \mid p_{1}=p_{2}\}.
	$$
	We have that $S$ is an $\ALCOud(\Sigma)$ bisimulation between $\Imc_{1}$ and $\Imc_{2}$ witnessing that $\Omc_{1}, C_{1}$ and $\Omc_{2},C_{2}$ are jointly consistent modulo $\ALCOud(\Sigma)$ bisimulations.
	\end{proof}

\begin{claim}
It is decidable in double  exponential time whether there exists a set $\mathcal{S}$ that is
good for $\Omc_{1},\Omc_{2}, \Sigma$ such that $C_{i}\in t_{i}$ for some 
$t_i \in T_{i}$ with $(T_{1},T_{2})\in \mathcal{S}$.
\end{claim}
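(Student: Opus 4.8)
The plan is a type‑elimination (mosaic) argument. First I would observe that $\Xi$ has size polynomial in the combined size of $\Omc_{1}$, $\Omc_{2}$, $C_{1}$ and $C_{2}$, so $|T(\Xi)| \le 2^{|\Xi|}$ is at most single‑exponential, each pair $(T_{1},T_{2})\in 2^{T(\Xi)}\times 2^{T(\Xi)}$ can be written down in single‑exponential space, and there are at most double‑exponentially many such pairs. All the ``local'' ingredients can be precomputed within this bound: whether a given $\Xi$‑type $t$ is satisfiable in a model of $\Omc_{i}$, and whether $t_{1}\rightsquigarrow_{r}t_{2}$ holds, both reduce to $\ALCOud$ ontology satisfiability over concepts of size $O(|\Xi|)$ (conjoin the type(s), prefixing an $\exists r$ where needed), hence are decidable in \ExpTime{}; carrying this out for all types and all pairs of types costs at most double‑exponential time.

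Next I would set up the elimination. The conditions defining goodness of a set $\mathcal{S}$ split into \emph{local} conditions on a single pair ($T_{i}\neq\emptyset$, type‑satisfiability, $\Sigma$‑concept name coherence), \emph{downward‑robust existential} conditions (existential and $\Sigma$‑existential saturation, i.e.\ ``if $\dots$ then there is a witnessing pair in $\mathcal{S}$''), which keep failing once they fail as pairs are removed, and \emph{global/uniqueness} conditions (goodness for the universal role, for nominals, for definite descriptions), which are \emph{not} monotone under removal because of their ``exactly one'' and ``in all types in $p_{i}(\mathcal{S})$'' clauses. To handle the last group I would \textbf{guess}, in an outer loop, a $u$‑profile $\mathfrak{p}_{i}\subseteq\{\exists u.D, \forall u.D \in \Xi\}$ for $i=1,2$ and, for each nominal $\{a\}$ and each $\{\iota C\}$ in $\Xi$, whether it is ``present'' and, if so, which $\Xi$‑type is its designated singleton type $t_{a}^{i}$, resp.\ $t_{\iota C}^{i}$; there are only $2^{O(|\Xi|^{2})}$ such guesses, i.e.\ single‑exponentially many. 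Fixing a guess, I would restrict to $\Xi$‑types consistent with the $u$‑profiles, start from the set $\mathcal{P}$ of \emph{all} pairs meeting the local conditions and the guessed designations, and repeatedly delete any pair violating existential saturation, $\Sigma$‑existential saturation, or $\Sigma$‑concept name coherence with respect to the current set, until a fixpoint $\mathcal{S}^{\ast}$ is reached; then check whether $\mathcal{S}^{\ast}$ is nonempty, is good (the remaining ``there exists'' parts of goodness for the universal role, nominals and definite descriptions are now just existence queries about $p_{i}(\mathcal{S}^{\ast})$), and contains a pair $(T_{1},T_{2})$ with $C_{i}\in t_{i}\in T_{i}$. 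Output ``yes'' iff some guess succeeds.

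For correctness, the key point I would prove is that, once a guess is fixed, the family of subsets of $\mathcal{P}$ satisfying all goodness conditions is closed under unions: the problematic ``exactly one'' and ``in all types in $p_{i}$'' clauses are pinned down by the guessed $u$‑profiles and designated types (and, when $\text{sig}(C)\subseteq\Sigma$, by the forced singleton pair $(\{t_{\iota C}^{1}\},\{t_{\iota C}^{2}\})$), while every remaining clause is local or of ``there exists'' form, hence preserved by unions. Thus the family has a greatest element, and the deletion procedure — which only discards pairs that cannot belong to \emph{any} good subset consistent with the guess — computes precisely it. Hence a good $\mathcal{S}$ witnessing the statement exists iff, for some guess, $\mathcal{S}^{\ast}$ is good, nonempty and realizes $C_{1},C_{2}$. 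The running time is (single‑exponentially many guesses) times (double‑exponentially many deletion steps, each scanning a double‑exponential set), which is double‑exponential, so the whole procedure runs in $2\ExpTime$; together with Claim~\ref{cla:jointbisimequiv} this yields the \TwoExpTime{} upper bound in Lemma~\ref{lem:bothdir}.

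I expect the main obstacle to be the interplay between definite descriptions and the mosaic bookkeeping. The cardinality constraint behind $\{\iota C\}$ forces the delicate case distinction in ``good for definite descriptions'' (case~1 with an ``exactly one'' designated type and a forced singleton pair when $\text{sig}(C)\subseteq\Sigma$; versus case~2, where either no, or at least two, types carry $C$, or its unique carrier admits no $i$‑singleton copy, or admits at least two of them). Making the guessing space, the copy mechanism used in the proof of Claim~\ref{cla:jointbisimequiv}, and the closure‑under‑union argument mesh — so that deletion is simultaneously sound and complete in the presence of these non‑monotone constraints — is the subtle part, and it is exactly where the argument must depart from the $\ALCOu$ case of~\cite{ArtEtAl21}.
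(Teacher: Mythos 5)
Your overall strategy---precompute type satisfiability and $r$-coherence, handle the non-monotone (``exactly/at most one'' and ``all types'') clauses by an outer enumeration, then run a greatest-fixpoint elimination for the coherence/saturation conditions and re-check goodness at the end---is essentially the paper's mosaic argument. The difference lies in the outer enumeration, and that is where your argument has a genuine gap. You guess only the $u$-profiles and, for each $\{a\}$ and $\{\defdes C\}$ in $\Xi$, whether it is present and which type is its designated singleton type. This pins down the ``at most one type'' parts of goodness for nominals and definite descriptions, but \emph{not} the ``at most one pair $(T_{1},T_{2})\in\mathcal{S}$ with $t_{a}^{i}\in T_{i}$'' clause (respectively ``exactly one pair'' in case~1 for definite descriptions) when $a\notin\Sigma$ (respectively $\text{sig}(C)\not\subseteq\Sigma$): these are constraints on the whole set, not on individual pairs, and two distinct pairs, each individually consistent with your guess and each surviving elimination, can both carry the designated type. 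Hence the family of good subsets of your pool is \emph{not} closed under unions, the greatest fixpoint $\mathcal{S}^{\ast}$ need not be good even when a good subset $\mathcal{S}\subseteq\mathcal{S}^{\ast}$ exists, and your final goodness check can reject on the very guess induced by $\mathcal{S}$---so completeness of the procedure is not established. (The uniqueness of the pair is not cosmetic: in the model construction behind Claim~\ref{cla:jointbisimequiv}, nodes whose type contains $\{a\}$ or $\{\defdes C\}$ are $i$-singletons and receive no copies, so two surviving pairs carrying $t_{a}^{i}$ would put two elements into a nominal's extension.)

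The repair is to make these uniqueness constraints per-pair in the outer enumeration: either additionally guess, for each present nominal and definite description, the unique pair that carries its designated type and remove from the pool every other pair carrying that type, or do as the paper does and enumerate the \emph{maximal} sets $\mathcal{U}$ that already satisfy nonemptiness, type satisfiability and goodness for the universal role, nominals and definite descriptions, eliminating inside each $\mathcal{U}$; then the ``at most one'' clauses are inherited downward from $\mathcal{U}$ by the fixpoint, the existential clauses are inherited upward from the witnessing good $\mathcal{S}$, and the final check succeeds. Note that a designated pair is an exponential-size object, so the corrected guess space is double exponential rather than your claimed $2^{O(|\Xi|^{2})}$; this does not affect the overall $\TwoExpTime$ bound, but the ``single-exponentially many guesses'' count has to go.
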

\begin{proof}
Given $\Omc_{1}$, $\Omc_{2}$, $C_{1}$, $C_{2}$, and $\Sigma$, we can enumerate in double exponential time the maximal sets
$\mathcal{U} \subseteq 2^{T(\Xi)\times T(\Xi)}$
that are good for $\Omc_{1}, \Omc_{2}, \Sigma$, by proceeding as follows.
We first list all maximal sets $\mathcal{U} \subseteq 2^{T(\Xi)\times T(\Xi)}$
such that 
\begin{itemize}
	\item $(T_{1},T_{2})\in \mathcal{U}$ implies $T_{i}\not=\emptyset$ for $i=1,2$, \item $(T_{1},T_{2})\in \mathcal{U}$ implies that all $t\in T_{i}$ are satisfiable in models of $\Omc_{i}$, for $i=1,2$, 
	\item $\mathcal{U}$ is good for the universal role, nominals and definite descriptions.
\end{itemize}
This can be done in at most double exponential time. 
Next, we recursively eliminate from any such $\mathcal{U}$ all pairs that are not $\Sigma$-concept name coherent, existentially saturated, or $\Sigma$-existentially saturated. Consider the largest fixpoint $\mathcal{S}_{0}\subseteq \mathcal{U}$ of this procedure (which runs in at most double exponential time) and check whether it
is good for $\Omc_{1}, \Omc_{2}, \Sigma$. The good $\mathcal{S}_{0}$ provide a complete list of all maximal $\mathcal{U} \subseteq 2^{T(\Xi)\times T(\Xi)}$ that are good for $\Omc_{1}, \Omc_{2}, \Sigma$.
It remains to check whether any of these $\mathcal{S}_{0}$ contains $(T_{1},T_{2})$ with $C_{1}\in t_{1}\in T_{1}$ and $C_{2}\in t_{2}\in T_{2}$ for some $t_{1},t_{2}$.
\end{proof}
For the lower bound, it is shown in~\cite{ArtEtAl21} that it is \TwoExpTime-hard
to decide for $\mathcal{ALCO}$ ontologies $\Omc$ with a single individual name $a$ and signatures $\Sigma\subseteq \Sigma_{\Omc}\setminus\{a\}$ whether $\Omc,\{a\}$ and $\Omc,\neg\{a\}$ are jointly consistent modulo $\mathcal{ALCO}_{u}(\Sigma)$ bisimulations. Then it is easy to see that 
for any such $\Omc$ and $\Sigma$, $\Omc,\{a\}$ and $\Omc,\neg\{a\}$ are jointly consistent modulo $\mathcal{ALCO}_{u}(\Sigma)$ bisimulations iff $\Omc,\{a\}$ and $\Omc,\neg\{a\}$ are jointly consistent modulo $\ALCOud(\Sigma)$ bisimulations which implies the desired result.
\end{proof}

%


\elourefexprchar*
\begin{proof}
%
For (1)~$\Rightarrow$~(2), assume that Condition~(1) holds.
Let 
\[
\Gamma^{\Omc}_{\Sigma}(a) = \{ C \mid \Omc \models \{ a \} \sqsubseteq C, C \ \ELOu(\Sigma) \ \text{concept} \}.
\]
As there does not exists an $\ELOu(\Sigma)$
RE
for $a$ under $\Omc$, for every $C \in \Gamma^{\Omc}_{\Sigma}(a)$ we have that
$\Omc \not \models C \sqsubseteq \{ a \}$.
Since $\Gamma^{\Omc}_{\Sigma}(a)$ is closed under conjunctions, by compactness of
FO
on partial interpretations~(of which $\ELOu$ is a fragment, cf. Theorem~\ref{thm:omegasatexist}) we obtain that
$\Omc \cup \Gamma^{\Omc}_{\Sigma}(a) \not \models \{ a \}$, i.e.,
there exist a partial interpretation $\Jmc$ and an $e \in \Delta^{\Jmc}$ such that $\Jmc \models \Omc$ and $e \in C^{\Jmc}$, for every $C \in \Gamma^{\Omc}_{\Sigma}(a)$, but $e \neq a^{\Jmc}$.
W.l.o.g.,
we can assume that $\Jmc$ is $\omega$-saturated.
Moreover, given that ${\mathsf A}$ is satisfiable w.r.t. $\Omc$, we have that $\fincanmod$ is defined, and since $\Omc \models {\mathsf A} \equiv \{ a \}$ it holds that
$a^{\fincanmod} = [{\mathsf A}] \in \Delta^{\fincanmod}$.
Now suppose, towards a contradiction, that 
$\tp^{\fincanmod}_{\ELOu(\Sigma)}(a^{\fincanmod}) \not \subseteq  \tp^{\Jmc}_{\ELOu(\Sigma)}(e)$, meaning that there exists an $\ELOu(\Sigma)$ concept $F$ such that $a^{\fincanmod} \in F^{\fincanmod}$, but $e \not \in F^{\Jmc}$.
Hence, by
Theorem~\ref{theo:eloufincanmod},
we obtain that $\Omc \models \{ a \} \sqsubseteq F$, and thus $F \in  \Gamma^{\Omc}_{\Sigma}(a)$, while $e \not \in F^{\Jmc}$, contradicting the fact that $e \in C^{\Jmc}$, for every $C \in \Gamma^{\Omc}_{\Sigma}(a)$.
Therefore, $\tp^{\fincanmod}_{\ELOu(\Sigma)}(a^{\fincanmod}) \subseteq  \tp^{\Jmc}_{\ELOu(\Sigma)}(e)$.
By Theorem~\ref{thm:simequiv}, Point~(2), we have that $(\fincanmod, a^{\fincanmod}) \simul^{\ELOu}_{\Sigma} (\Jmc, e)$, with $e \neq a^{\Jmc}$, as required.

For (2)~$\Rightarrow$~(1), assume that Condition~(2) holds, i.e., there exist a model $\Jmc$ of $\Omc$ and $e \in \Delta^{\Jmc}$ such that $(\fincanmod, a^{\fincanmod}) \simul^{\ELOu}_{\Sigma} (\Jmc, e)$ and $e \neq a^{\Jmc}$ ($\fincanmod$ is defined, since ${\mathsf A}$ is satisfiable w.r.t. $\Omc$).
Now suppose, towards a contradiction, that there exists an $\ELOu(\Sigma)$ concept $C$ such that $\Omc \models \{ a \} \equiv C$.
Since
$\Omc \models {\mathsf A} \equiv \{ a \}$, the previous step is equivalent to
$\Omc \models {\mathsf A } \equiv C$.
Thus,
by Theorem~\ref{theo:eloufincanmod},
we have in particular $[ {\mathsf A } ] \in C^{\fincanmod}$.
Given that
$\Omc \models {\mathsf A} \equiv \{ a \}$,
we have $a^{\fincanmod} = [ {\mathsf A }]$, and thus the previous step implies
$a^{\fincanmod} \in C^{\fincanmod}$.
Having assumed $(\fincanmod, a^{\fincanmod}) \simul^{\ELOu}_{\Sigma} (\Jmc, e)$,
from Theorem~\ref{thm:simequiv}, Point~(1), we obtain $e \in C^{\Jmc}$, with $e \neq a^{\Jmc}$,
contradicting the fact that, since $\Jmc \models \Omc$, $\{ a \}^{\Jmc} = C^{\Jmc}$.
\end{proof}

\begin{restatable}{lemma}{elourefexpden}
\label{thm:elourefexpden}
Condition~(2) of Lemma~\ref{thm:elourefexprchar} can be checked in polynomial time.
\end{restatable}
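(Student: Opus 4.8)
The plan is to show that Condition~(2) of Lemma~\ref{thm:elourefexprchar} --- the existence of a model $\Jmc$ of $\Omc$ and an element $e \in \Delta^{\Jmc}$ with $(\fincanmod, a^{\fincanmod}) \simul^{\ELOu}_{\Sigma} (\Jmc, e)$ and $e \neq a^{\Jmc}$ --- can be decided in polynomial time by reducing it to a reachability/fixpoint computation on a product graph. The key observation is that, by the simulation characterisation in Lemma~\ref{thm:simequiv} (more precisely its $\omega$-saturated instance combined with compactness, exactly as used in the proof of Lemma~\ref{thm:elourefexprchar}), Condition~(2) is equivalent to the following purely syntactic/combinatorial condition: there exists an $\ELOud$ ontology model built from a (polynomial size) canonical-model-style structure for $\Omc$ into which $\fincanmod$ $\ELOu(\Sigma)$-simulates starting from $a^{\fincanmod}$, but landing on a node distinct from $a$. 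Since $\Omc$ is Horn and admits the polynomial size canonical model $\fincanmod$ of Theorem~\ref{theo:eloufincanmod}, we do not in fact need to search over all models $\Jmc$: it suffices to test for an $\ELOu(\Sigma)$-simulation from $(\fincanmod, a^{\fincanmod})$ into the canonical model $\fincanmod$ itself (or a fixed polynomial variant of it obtained by dropping the assertion ${\mathsf A}\equiv\{a\}$ so that $a$ need not be forced to be the distinguished node), provided we can additionally ensure that the target element $e$ can be made different from $a^{\Jmc}$.

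The concrete steps are as follows. First, I would make precise the reduction: one shows that there is a model $\Jmc$ of $\Omc$ with an $\ELOu(\Sigma)$-simulation from $(\fincanmod,a^{\fincanmod})$ to $(\Jmc,e)$, $e\neq a^{\Jmc}$, iff there is such a simulation into a canonical model $\fincanmodvar$ of the ontology $\Omc' = \Omc_0$ (with the definition ${\mathsf A}\equiv\{a\}$ removed, or equivalently with a fresh individual name replacing $a$) landing on some node $[C]$ that is not forced to be interpreted as $a$; this uses the fact that simulations compose and that $\ELOu$ types are preserved into canonical models (Theorem~\ref{theo:eloufincanmod}) together with the construction of $\fincanmod$ from the completed classification graph. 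Second, build the product graph whose nodes are pairs $([D],[C])$ with $[D]\in\Delta^{\fincanmod}$ and $[C]\in\Delta^{\fincanmodvar}$, and compute, by a standard greatest-fixpoint / backward-elimination procedure, the largest relation $Z$ on this product satisfying $(\textit{atom}_R)$ on $\Sigma$-concept and $\Sigma$-individual names, $(\textit{forth})$ for $\Sigma$-role names, and left-totality (the last enforced via the universal role, exactly as in the proof of Theorem~\ref{thm:elousimequiv}, Point~(2)): iteratively discard a pair if it violates $(\textit{atom}_R)$, or if some required $(\textit{forth})$ successor is missing, or if left-totality fails. Third, check whether the surviving relation contains a pair $(a^{\fincanmod},[C])$ with $[C]$ not forced to equal $a$ in $\fincanmodvar$; if so, output ``yes''. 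Each node set is polynomial in $\size{\Omc}$ by Theorem~\ref{theo:eloufincanmod} and the polynomial size of the completed classification graph, the number of elimination rounds is polynomial, and each round is polynomial-time checkable, so the whole test runs in polynomial time.

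The main obstacle is the left-totality (equivalently, the handling of the universal role $u$) in the simulation. A naive product-graph reachability argument handles $(\textit{atom}_R)$ and the local $(\textit{forth})$ clauses for ordinary role names straightforwardly, but left-totality is a global constraint: it requires that \emph{every} element of $\fincanmod$ be matched, which interacts with the fixpoint elimination (removing a pair may destroy left-totality for other pairs, forcing further rounds). I would address this by observing, as in the $\ELOu$ case, that left-totality need only be verified ``once'' per candidate $Z$ --- namely one checks that the domain of the surviving $Z$ is all of $\Delta^{\fincanmod}$ --- and that the elimination procedure is monotone, so iterating it to a fixpoint still terminates in polynomially many rounds. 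A secondary subtlety is the side condition $e \neq a^{\Jmc}$: one must argue that if the simulation into $\fincanmodvar$ lands on a node $[C]$ that is \emph{not} the one interpreting $a$, then one can genuinely realise this in a model where $e\neq a^{\Jmc}$ (taking $\Jmc = \fincanmodvar$ with $a$ interpreted as its designated node suffices, since in $\fincanmodvar$ the name $a$ no longer forces $[C]$ to coincide with $a^{\fincanmodvar}$), which is where the removal of the definition ${\mathsf A}\equiv\{a\}$ in the first step pays off. Modulo these points the argument is routine bookkeeping on polynomial-size graphs.
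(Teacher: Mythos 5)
The reduction in your first step -- replacing ``there exists a model $\Jmc$ of $\Omc$'' by ``there exists a simulation into a single canonical-model target $\fincanmodvar$'' -- is unsound, in both of the variants you sketch. If the target is the canonical model of $\Omc_0$ (with ${\mathsf A}\equiv\{a\}$ dropped), take $\Omc_0=\{{\mathsf A}\sqsubseteq B,\ B\sqsubseteq {\mathsf A}\}$ and $\Sigma=\{B\}$: then $\Omc\models\{a\}\equiv B$, so an $\ELOu(\Sigma)$ RE exists and Condition~(2) is false; yet once ${\mathsf A}\equiv\{a\}$ is removed, the copy node $[{\mathsf A}^c]$ survives in the classification graph, satisfies $B$, and the simulation from $(\fincanmod,a^{\fincanmod})$ can land on it, so your test answers ``yes''. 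Moreover your repair -- ``take $\Jmc=\fincanmodvar$ with $a$ interpreted as its designated node'' -- does not produce a model of $\Omc$, since ${\mathsf A}^{\fincanmodvar}$ then contains two elements and ${\mathsf A}\sqsubseteq\{a\}$ fails. If instead the target is (a renamed copy of) $\fincanmod$ itself, take $\Omc_0=\{{\mathsf A}\sqsubseteq B_1,\ {\mathsf A}\sqsubseteq B_2\}$ and $\Sigma=\{B_1,B_2\}$: no RE exists, so Condition~(2) is true (a model of $\Omc$ may contain a second element in $B_1\sqcap B_2$), but in $\fincanmod$ the only element satisfying both $B_1$ and $B_2$ is $a$'s node (${\mathsf A}^c$ is removed because $\{a\}\in S({\mathsf A})$, and $[B_i]$ satisfies only $B_i$), so your test answers ``no''. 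The underlying problem is that the canonical model realises only \emph{minimal} types: Condition~(2) asks whether some model of $\Omc$ contains a \emph{second} element whose type extends the full $\ELOu(\Sigma)$-type of $a^{\fincanmod}$, and this existential quantification over arbitrary models cannot be replaced by a fixed minimal target -- the copies $C^c$ either disappear exactly when they would be needed, or reappear spuriously once the axiom tying ${\mathsf A}$ to $\{a\}$ is dropped. Your product-graph fixpoint computation of a maximal simulation between two fixed polynomial-size structures is fine as a subroutine, but it is solving the wrong problem.

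The paper avoids this by never fixing a target model: it introduces a fresh concept name $X_d$ for each $d\in\Delta^{\fincanmod}$ and the diagram ontology $\Dmc_{\fincanmod}$ (with $X_d\sqsubseteq A$, $X_{b^{\fincanmod}}\sqsubseteq\{b\}$, $X_d\sqsubseteq\exists r.X_{d'}$, $X_d\sqsubseteq\exists u.X_{d'}$ for the $\Sigma$-facts of $\fincanmod$), proves that $d\in C^{\fincanmod}$ implies $\Omc\cup\Dmc_{\fincanmod}\models X_d\sqsubseteq C$ for all $\ELOu(\Sigma)$ concepts $C$, and shows that Condition~(2) holds iff $\Omc\cup\Dmc_{\fincanmod}\not\models X_{a^{\fincanmod}}\sqsubseteq\{a\}$; the quantification over all models is then delegated to the \PTime{} entailment procedure for $\ELOud$ (Theorem~\ref{th-eloud-sat}), with $\omega$-saturation and Theorem~\ref{thm:elousimequiv} used to recover the simulation from a countermodel. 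If you want to salvage your approach, you would need an argument that some polynomial-size ``maximally permissive'' target is universal for simulations into models of $\Omc$ with $e\neq a^{\Jmc}$; no such structure is provided by the canonical-model construction of this paper.
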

%
%
\begin{proof}
We first require the following notion.
Given the $\ELOud$ finite canonical model $\fincanmod$ of
$\Omc = \Omc_{0} \cup \{ {\mathsf A}\equiv \{ a \} \}$ and
${\mathsf A}$,
consider a fresh concept name $X_{d}$, for each $d \in \Delta^{\fincanmod}$.
Let $\Sigma \subseteq \sig{\Omc}$ be a signature. We define the \emph{$\ELOu(\Sigma)$ diagram $\Dmc_{\fincanmod}$} of $\fincanmod$ as the ontology consisting of the following CIs:
\begin{itemize}
	\item $X_{d} \sqsubseteq A$, for every $A\in \Sigma$ and $d \in A^{\fincanmod}$;
	 \item $X_{b^{\fincanmod}}  \sqsubseteq \{b\}$, for every $b\in \Sigma$;
	\item $X_{d} \sqsubseteq \exists r.X_{d'}$, for every $r\in \Sigma$ and $(d,d')\in r^{\fincanmod}$;
	\item $X_{d} \sqsubseteq \exists u.X_{d'}$, for every $d, d' \in \Delta^{\fincanmod}$.
\end{itemize}

Then, we require the following claim.
\begin{claim}
\label{cla:fincanmodx}
For every $d \in \Delta^{\fincanmod}$ and every $\ELOu(\Sigma)$ concept $C$, it holds that
\[
d \in C^{\fincanmod}
\ \text{implies} \
\Omc \cup \Dmc_{\fincanmod} \models X_{d} \sqsubseteq C.
\]
\end{claim}
\begin{proof}
The proof is by induction on the construction of $C$.
The base cases are as follows.
\begin{itemize}
\item For $C = \bot$, the statement is vacously true.
\item For $C = \top$, the statement is obviously true.
\item For $C = A$, the statement follows immediately from the
  definition of $\Dmc_{\fincanmod}$.
\item For $C = \{ b \}$, by construction of $\fincanmod$ we have that
  $d \in \{ b \}^{\fincanmod}$ iff $d = b^{\fincanmod}$. Since by
  definition
  $X_{b^{\fincanmod}} \sqsubseteq \{ b \} \in \Dmc_{\fincanmod}$, we
  have
  $\Omc \cup \Dmc_{\fincanmod} \models X_{b^{\fincanmod}} \sqsubseteq
  \{ b \}$.
\end{itemize}
The inductive cases are as follows.
\begin{itemize}
\item For $C = D \sqcap E$, suppose that
  $d \in (D \sqcap E)^{\fincanmod}$, meaning that
  $d \in D^{\fincanmod}$ and $d \in E^{\fincanmod}$. By i.h.,
  this
  implies $\Omc \cup \Dmc_{\fincanmod} \models X_{d} \sqsubseteq D$
  and $\Omc \cup \Dmc_{\fincanmod} \models X_{d} \sqsubseteq E$.
  Thus,
  $\Omc \cup \Dmc_{\fincanmod} \models X_{d} \sqsubseteq D \sqcap E$.
\item For $C = \exists r.D$, suppose that
  $d \in (\exists r.D)^{\fincanmod}$. This means that there exists
  $d' \in \Delta^{\fincanmod}$ such that $(d,d') \in r^{\fincanmod}$
  and $d' \in D^{\fincanmod}$.  By i.h., 
  $\Omc \cup \Dmc_{\fincanmod} \models X_{d'} \sqsubseteq D$.  By
  construction of $\Dmc_{\fincanmod}$, we have that
  $\Omc \cup \Dmc_{\fincanmod} \models X_{d} \sqsubseteq \exists
  r.X_{d'}$, since $(d,d') \in r^{\fincanmod}$.  Thus,
  $\Omc \cup \Dmc_{\fincanmod} \models X_{d} \sqsubseteq \exists r.D$.
\item For $C = \exists u.D$, similarly to the previous case, suppose
  that $d \in (\exists u.D)^{\fincanmod}$. Then, there exists
  $d' \in \Delta^{\fincanmod}$ such that $d' \in D^{\fincanmod}$.  By
  i.h., 
  $\Omc \cup \Dmc_{\fincanmod} \models X_{d'} \sqsubseteq D$.  By
  construction of $\Dmc_{\fincanmod}$, we have
  $\Omc \cup \Dmc_{\fincanmod} \models X_{d} \sqsubseteq \exists
  u.X_{d'}$.  Thus,
  $\Omc \cup \Dmc_{\fincanmod} \models X_{d} \sqsubseteq \exists u.D$.
  \qedhere
\end{itemize}
\end{proof}


We now show that Condition~(2) of Lemma~\ref{thm:elourefexprchar} holds iff 
\begin{equation*}
	\Omc \cup \Dmc_{\fincanmod} \not \models X_{a^{\fincanmod}} \sqsubseteq \{ a \}.
\end{equation*}

For the $(\Rightarrow)$ direction, assume that Condition~(2) of Lemma~\ref{thm:elourefexprchar} holds, meaning that
there exist a model $\Jmc$ of $\Omc$ and $e \in \Delta^{\Jmc}$ such that $(\fincanmod, a^{\fincanmod}) \simul^{\ELOu}_{\Sigma} (\Jmc, e)$ and $e \neq a^{\Jmc}$.
Let $Z \subseteq \Delta^{\fincanmod} \times \Delta^{\Jmc}$ be such an $\ELOu(\Sigma)$ simulation from $\fincanmod$ to $\Jmc$.
We define the partial interpretation $\Jmc'$ as $\Jmc$, except that, for every fresh concept name $X_{d}$ occurring in $\Dmc_{\fincanmod}$, we set $v \in X^{\Jmc'}_{d}$ iff $(d,v) \in Z$.
We have that
$\Jmc'$ is still a model of $\Omc$ with some $e \in \Delta^{\Jmc'}$ such that $(\fincanmod, a^{\fincanmod}) \simul^{\ELOu}_{\Sigma} (\Jmc', e)$ and $e \neq a^{\Jmc'}$.
Moreover, by definition of $\Jmc'$, we have $e \in X^{\Jmc'}_{a^{\fincanmod}}$, and it can be checked that $\Jmc' \models \Dmc_{\fincanmod}$.
Hence, $\Jmc'$ is as required, and $\Omc \cup \Dmc_{\fincanmod} \not \models X_{a^{\fincanmod}} \sqsubseteq \{ a \}$.

%
%
%
%
%
%

For the $(\Leftarrow)$ direction,
assume that
$\Omc \cup \Dmc_{\fincanmod} \not \models X_{a^{\fincanmod}}
\sqsubseteq \{ a \}$.  This means that there exist a model $\Jmc$ of
$\Omc \cup \Dmc_{\fincanmod}$ and an $e \in \Delta^{\Jmc}$ such that
$e \in X^{\Jmc}_{a^{\fincanmod}}$, but $e \neq a^{\Jmc}$.  We can
assume w.l.o.g. that $\Jmc$ is $\omega$-saturated.
We now prove that $(\fincanmod, a^{\fincanmod})
\simul^{\ELOu}_{\Sigma} (\Jmc, e)$.
Suppose that
$a^{\fincanmod} \in C^{\fincanmod}$.  By Claim~\ref{cla:fincanmodx},
this implies
$\Omc \cup \Dmc_{\fincanmod} \models X_{a^{\fincanmod}} \sqsubseteq
C$.  Since $\Jmc \models \Omc \cup \Dmc_{\fincanmod}$ and
$e \in X^{\Jmc}_{a^{\fincanmod}}$,
we obtain that $e \in C^{\Jmc}$.
Hence,
$\tp^{\fincanmod}_{\ELOu(\Sigma)}(a^{\fincanmod}) \subseteq \tp^{\Jmc}_{\ELOu(\Sigma)}(e)$.
By Theorem~\ref{thm:elousimequiv}, Point~(2), the relation
$Z = \{ (u, v) \in \Delta^{\fincanmod} \times \Delta^{\Jmc} \mid  \tp^{\fincanmod}_{\ELOu(\Sigma)}(u) \subseteq \tp^{\Jmc}_{\ELOu(\Sigma)}(v) \}$
is thus an $\ELOu(\Sigma)$ simulation from $\fincanmod$ to $\Jmc$ such that $(a^{\fincanmod}, e) \in Z$,
with
$e \neq a^{\Jmc}$,
as required.
%

Since we proved in Theorem~\ref{th-eloud-sat} that entailment in
$\ELOud$ can be decided in polynomial time,
we have finished the proof.
%
\end{proof}

We formulate Point~(4) of Theorem~\ref{thm:mainrefexp} in the
following theorem.
\begin{theorem}
$(\ALCOud,\ELOud)$
RE
existence is undecidable.
\end{theorem}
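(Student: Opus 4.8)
The plan is to reduce from the undecidability result on conjunctive query inseparability of $\ALC$ knowledge bases due to Botoeva et al.~\cite{BotEtAl19}. From that work we obtain, uniformly from instances of an undecidable problem, an $\ALC$ ontology $\Omc_{1}$, an $\EL$ ontology $\Omc_{2}$, a signature $\Sigma_{0}$ of concept and role names, a role name $r$, and an individual name $a$ such that it is undecidable whether
\[
\Omc_{1}\cup\{r(a,a)\}\models C(a)\ \Longrightarrow\ \Omc_{2}\cup\{r(a,a)\}\models C(a)
\]
holds for all $\EL(\Sigma_{0})$ concepts $C$; moreover, only concepts $C$ of a restricted syntactic shape determined by the tiling encoding of~\cite{BotEtAl19} need to be considered. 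Since $\ALCO\subseteq\ALCOud$, it suffices to build from such an instance an $\ALCO$ ontology for which $\ELOud(\Sigma)$ RE existence is undecidable.

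First I would merge $\Omc_{1}$ and $\Omc_{2}$ into a single $\ALCO$ ontology $\Omc$ in which the two knowledge bases stay separate: introduce fresh ``region'' concept names $E_{1},E_{2}$; relativise every axiom and every existential restriction of $\Omc_{1}$ to $E_{1}$ and those of $\Omc_{2}$ to $E_{2}$; add $E_{1}\sqcap E_{2}\sqsubseteq\bot$; and use individual names $a,b$ together with the assertions $E_{1}(a)$, $E_{2}(b)$, $r(a,a)$, $r(b,b)$. Then $a,b$ denote in every model, so the problem coincides on partial and total interpretations, while region-disjointness forces $a^{\Imc}\neq b^{\Imc}$ in every model. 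Finally, add a fresh concept name $D_{a,b}$ with $D_{a,b}\equiv\{a\}\sqcup\{b\}$ (an $\ALCO$ axiom) and set $\Sigma=\Sigma_{0}\cup\{D_{a,b}\}$; then $D_{a,b}^{\Imc}=\{a^{\Imc},b^{\Imc}\}$ has exactly two elements in every model $\Imc\models\Omc$.

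The key reduction step is the equivalence: $a$ has an $\ELOud(\Sigma)$ RE under $\Omc$ iff there is an $\ELOud(\Sigma)$ concept $C$ with $\Omc\models C(a)$ and $\Omc\models\neg C(b)$. From the condition to the RE, the concept $D_{a,b}\sqcap C$ has extension $\{a^{\Imc}\}$ in every model and is hence an $\ELOud(\Sigma)$ RE for $a$; conversely, an RE itself serves as $C$, since $a^{\Imc}\neq b^{\Imc}$ always and the extension of the RE is $\{a^{\Imc}\}$. Restricted to $\EL(\Sigma_{0})$ concepts of the shape relevant in~\cite{BotEtAl19}, and via the relativisation, the \emph{weak} variant --- existence of such a $C$ with $\Omc\models C(a)$ but only $\Omc\not\models C(b)$ --- is precisely the complement of the undecidable Botoeva et al.\ inseparability question, so already the weaker ``distinguishability'' problem is undecidable.

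Two points remain, and together they form the heart of the argument: first, that the $\ELOud$-specific constructors $\exists u.(\cdot)$ and $\{\defdes\,\cdot\,\}$, as well as concepts outside the relevant shape, never produce a distinguishing concept that an $\EL(\Sigma_{0})$-shape one would not; and second, that the weak condition $\Omc\not\models C(b)$ can be upgraded to $\Omc\models\neg C(b)$, which an actual RE requires. Both are to be secured by adding further CIs to $\Omc$ that exploit the precise form of the relevant concepts of~\cite{BotEtAl19}: these force the $\Omc_{2}$-region below $b$ to behave canonically, so that for every relevant $C$ either $\Omc\models C(b)$ or $\Omc\models\neg C(b)$, and so that global ($\exists u$) and singleton ($\defdes$) subconcepts cannot help separate $a$ from $b$. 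Designing these CIs and verifying their effect --- which requires a careful re-inspection of the tiling encoding of~\cite{BotEtAl19} --- is the main obstacle. Once this is done, $\ELOud(\Sigma)$-definability of $a$ under $\Omc$, and hence $(\ALCOud,\ELOud)$ RE existence on partial and total interpretations, is undecidable.
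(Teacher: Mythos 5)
Your overall route is the paper's: the same starting point (the CQ-inseparability construction of Botoeva et al.\ with $\Omc_{1}$, $\Omc_{2}$, $\Amc=\{r(a,a),F_{1}(a),\dots\}$, $\Sigma_{0}$), the same two-copy $\ALCO$ ontology with relativised versions of $\Omc_{1},\Omc_{2}$ acting on $a$ and $b$, the loops and $F_{i}$-labels copied to both individuals, the fresh name $D_{a,b}\equiv\{a\}\sqcup\{b\}$, and the reduction of RE existence to distinguishing $a$ from $b$. But as a proof it is incomplete, and the part you defer is precisely the content of the theorem. You acknowledge two open obligations: (i) upgrading weak distinguishability ($\Omc\models C(a)$, $\Omc\not\models C(b)$) to the strong form $\Omc\models\neg C(b)$ that an actual RE requires, and (ii) showing that no other $\ELOud$ concept --- using $\{\defdes\,\cdot\,\}$, $\exists u$, $D_{a,b}$ nested arbitrarily deep, or shapes outside the ``relevant'' ones --- can define $\{a\}$. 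Saying that both ``are to be secured by adding further CIs'' whose design ``is the main obstacle'' leaves the reduction unestablished; what you have actually proved is only undecidability of the weak distinguishability question, which does not entail undecidability of RE existence. A further slip: your signature $\Sigma=\Sigma_{0}\cup\{D_{a,b}\}$ omits $r$ and the marker concept $\mathit{end}$, without which the relevant gap-free concepts are not even expressible in the RE language; in the source construction $\Sigma_{0}$ consists of concept names only and $\Sigma$ additionally contains $r$ and $\mathit{end}$.

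For comparison, the paper closes (i) with a purely syntactic device rather than a ``canonical behaviour'' analysis of the tiling encoding: a fresh concept name $P$ with $\{b\}\sqsubseteq P$, a CI propagating $P$ along $r$-edges to $\Sigma_{0}$-labelled successors, and $P\sqcap\exists r.\mathit{end}\sqsubseteq\bot$; this forces $\Omc\models\{b\}\sqsubseteq\neg D$ for every $\Sigma_{0}$-gap-free concept $D$, so $D\sqcap D_{a,b}$ is an RE whenever $\Omc_{1},\Amc\models D(a)$. Obligation (ii) is not handled by extra axioms at all, but by a direct argument on an arbitrary putative definition $D$: dismiss definite descriptions and the universal role (they cannot help define $\{a\}$ here), view $D$ as a tree-shaped CQ, collapse onto the root every variable lying on a path to a $D_{a,b}$-atom --- this is where the self-loops $\{a\}\sqsubseteq\exists r.\{a\}$ and $\{b\}\sqsubseteq\exists r.\{b\}$ are used --- and then invoke the transfer property of $\Omc_{2}$ (queries containing no gap-free subquery that are entailed under $\Omc_{1},\Amc$ are also entailed under $\Omc_{2},\Amc$) to conclude $\Omc\models\{b\}\sqsubseteq D$, contradicting $\Omc\models\{a\}\equiv D$. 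Your alternative plan --- adding CIs so that every relevant $C$ is \emph{decided} at $b$ --- is both unnecessary and risky, since any strengthening of the $b$-side must not destroy exactly this transfer property on which the converse direction rests.
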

\begin{proof}
	In this proof we identify any $\EL$-concept $C$ with a tree-shaped conjunctive query $q_{C}(x)$ with a single answer variable $x$ and we use graph-theoretic terminology when speaking about $q_{C}(x)$. In~\cite{BotEtAl19}, $\ALC$ ontologies $\Omc_{1}$, sets $\Amc$ of atomic assertions, and signatures $\Sigma_{0}$ and $\Sigma$ are constructed such that 
\begin{enumerate}
	\item $\Omc_{1}$ uses only a single role name $r$ and only contains CIs of the from $A \sqsubseteq C$, where $C$ is constructed using existential restrictions, conjunction, and disjunction;
	\item $\mathcal{A}$ takes the form $\{r(a,a), F_{1}(a),\ldots,F_{n}(a)\}$ for a single individual name $a$ and some concept names $F_{1},\ldots,F_{n}$;
	\item $\Sigma_{0}$ contains concept names only and $\Sigma\supseteq \Sigma_{0}$ additionally contains $r$ and a concept name ${\it end}$ not in $\Sigma_{0}$;
\end{enumerate}
and such that it is undecidable whether there exists an 
$\EL(\Sigma)$-concept $D$ of the form 
$$
\exists r.(B_{1} \sqcap \exists r.(B_{2} \sqcap \cdots \exists r.(B_{n} \sqcap \exists r.{\it end}))\cdots))
$$
with $B_{1},\ldots,B_{n}\in \Sigma_{0}$ and $\Omc,\Amc\models D(a)$. We say
that such $\EL(\Sigma)$-concepts are \emph{without $\Sigma_{0}$-gaps}. Observe that the query $q_{D}(x)$ takes the form
\begin{eqnarray*}
	q_{D}(x) & = & \exists x_{1}\cdots \exists x_{n} \exists x_{n+1} (r(x,x_{1}) \wedge B_{1}(x_{1}) \wedge \cdots \wedge \\
	         &  &  
B_{n}(x_{n}) \wedge r(x_{n},x_{n+1}) \wedge {\it end}(x_{n+1}))
\end{eqnarray*}
and thus being without $\Sigma_{0}$-gaps means that there exists an $r$-path from
the answer variable $x$ to an {\it end}-node in $q_{D}(x)$ such that all nodes on the path are decorated with at least one concept name in $\Sigma_{0}$. It is also shown that it is undecidable whether any CQ containing $q_{D}(x)$ as a subquery is entailed.
 
Now, in~\cite{BotEtAl19}, also an $\EL$-ontology $\Omc_{2}$ is constructed such that 
\begin{enumerate}
	\item $\Omc_{2}$ also only uses the single role name $r$ and additional concept 
	names;
	\item $\Omc_{2},\Amc\not\models D(a)$ for any $\EL(\Sigma)$-concept $D$ without $\Sigma_{0}$-gaps;
	\item If $D$ is an $\EL(\Sigma)$-concept such that $q_{D}(x)$ does not contain any subquery without $\Sigma_{0}$-gaps, then $\Omc_{1},\Amc\models D(a)$ implies $\Omc_{2},\Amc \models D(a)$. This even holds for CQs over $\Sigma$.
	\end{enumerate}
We use $\Omc_{1},\Omc_{2}$, $\Amc$, $\Sigma$, and $\Sigma_{0}$ to construct
an $\ALCO$-ontology $\Omc$ with individual names $a,b$ and $\Sigma'=\Sigma\cup \{D_{a,b}\}$ such that it is undecidable whether there exists an $\ELOud(\Sigma')$-concept $C$ such that $\Omc\models \{a\} \equiv C$. We achieve this by ensuring that there exists such a $\ELOud(\Sigma')$-concept $C$ iff there exists an $\EL(\Sigma)$-concept $D$ without $\Sigma_{0}$-gaps such that $\Omc_{1},\Amc\models D(a)$.

To construct $\Omc$ we first ensure that $\Omc_{1}$ and $\Omc_{2}$ do not interfere with each other and thus relativize $\Omc_{i}$ using a fresh concept name $A_{i}$, for $i=1,2$. The respective relativizations together with $\{a\} \sqsubseteq A_{1}$ and $\{b\} \sqsubseteq A_{2}$ are denoted $\Omc_{i}^{A_{i}}$, $i=1,2$.
Next we encode the assertions of $\Amc$ in the ontology $\Omc$, at both $a$ and  $b$: 
\begin{itemize}
	\item $\{a\}\sqsubseteq \exists r.\{a\}$ and $\{a\} \sqsubseteq F_{i}$ for $F_{i}(a) \in \Amc$.
	\item $\{b\}\sqsubseteq \exists r.\{b\}$ and $\{b\} \sqsubseteq F_{i}$ for $F_{i}(a) \in \Amc$.
\end{itemize}
Now we make sure that \emph{no} $\ELOud(\Sigma)$-concept without $\Sigma_{0}$-gaps can be satisfied at $b$. Thus we add for a fresh concept name $P$:
$$
\{b\} \sqsubseteq P, \quad P \sqcap \forall r.((\bigsqcap_{B \in \Sigma_{0}}B) \rightarrow P),
\quad P \sqcap \exists r.{\it end} \sqsubseteq \bot
$$
Finally we add $D_{a,b}\equiv \{a\} \sqcup \{b\}$. The ontology $\Omc$ is now defined as the union of $\Omc_{1}^{A_{1}}$ and $\Omc_{2}^{A_{2}}$ and the inclusions introduced above.

\medskip
\noindent
Claim. $\Omc_{1},\Amc\models D(a)$ for some $\EL(\Sigma)$-concept $D$ without $\Sigma_{0}$-gaps iff there exists a $\ELOud(\Sigma')$-concept $D$ such that $\Omc\models \{a\}\equiv D$.

\medskip
\noindent

Assume $\Omc,\Amc\models D(a)$ for some $\EL(\Sigma)$-concept $D$ without $\Sigma_{0}$-gaps. Let $C = D \sqcap D_{a,b}$. We show that
$\Omc\models \{a\}\equiv C$. $\Omc\models \{a\}\sqsubseteq D \sqcap D_{a,b}$
follows directly from the construction. To show that $\Omc\models D \sqcap D_{a,b}
\sqsubseteq \{a\}$ observe that it suffices to prove that $\Omc \models \{b\}\sqsubseteq \neg D$. But this follows by the inclusion of the CIs with $P$
in $\Omc$.

Conversely, assume that $\Omc_{1},\Amc\not\models D(a)$ for any $\EL(\Sigma)$-concept $D$ without $\Sigma_{0}$-gaps. Assume for a proof by contradiction that
there exists an $\ELOud(\Sigma')$-concept $D$ such that $\Omc\models \{a\}\equiv D$. As definite descriptions and the universal role clearly cannot help to define $\{a\}$, we may assume that $D$ does not use them. Consider $q_{D}(x)$ and obtain the query $q_{D}'(x)$ by identifying all variables on the path from 
$x$ to any variable $y$ with $D_{a,b}(y)$ in $q_{D}(x)$ (including $y$). Take the $\EL(\Sigma)$-concept $D'$ corresponding to $q_{D}'(x)$ (with $D_{a,b}$ and $r(x,x)$ removed if they occur in $q_{D}'(x)$. Also let $\Gamma$ be the
set of concept names $B$ with $B(x)$ a member of $q_{D}'(x)$. Then, by assumption, 
\begin{itemize}
\item $\Omc_{1},\Amc\models D'(a)$ and 
\item $\Omc_{1},\Amc \models B(a)$ for all $B\in \Gamma$.
\end{itemize}
It follows from Point~1 that $q_{D'}$ does not contain any subqueries without $\Sigma_{0}$-gaps. Thus, $\Omc_{2},\Amc\models D'(a)$ and so $\Omc\models \{b\} \sqsubseteq D'$. From Point~2 we obtain $\Omc_{2},\Amc\models B(a)$ for all $B\in \Gamma$ and thus $\Omc\models \{b\} \sqsubseteq B$ for all $B\in \Gamma$. From $\{b\} \sqsubseteq \exists r.\{b\}$ and $\{b\} \sqsubseteq D_{a,b}$ we obtain
$\Omc\models \{b\} \sqsubseteq D$ which contradicts the assumption that $\Omc \models \{a\}\equiv D$ since $\{b\} \sqcap \neg \{a\}$ is satisfiable w.r.t.~$\Omc$.
\end{proof}
The reduction shows that the undecidability result also holds for any language between $\EL$ and $\ELOud$.


\subsection*{Proofs for Section~\ref{sec:otherfreedl}}

For the proofs in this section, we introduce the following notation.
%
An \emph{$\ALCOud$ formula} is defined inductively as
\[
\p ::= \alpha 
\mid \neg (\p)  \mid (\p \land \p),
\]
where~$\alpha$ is an $\ALCOud$ axiom. The satisfaction conditions under a partial interpretation for axioms are given as in Section~\ref{sec:freedl}, and those for Booleans connectives are as usual. The notion of an $\ALCOud$ formula being \emph{satisfiable on partial interpretations} is defined accordingly.
Given
a formula $\p$,
$\con{\p}$ is the set of all concepts occurring in $\p$, and $\sub{\p}$
is the set of all subformulas of $\p$, defined as usual.
%
%
By
a
well-known internalisation technique
via
the universal role~\cite{BaaEtAl03a,Rud11} and what observed in Point~(2) of Section~\ref{sec:observations}, it is possible to encode $\ALCOud$ formulas into $\ALCOud$ CIs under partial interpretations.
Thus, in our setting, formulas are just syntactic sugar, introduced to
simplify the reduction below.
%
However,
similarly to $\NKRd$,
parentheses in
negated
formulas
$\lnot (\p)$
are not eliminable,
since they
disambiguate
between 
expressions
of the form $\lnot C(\tau)$, i.e.,
assertions
with a
\emph{negated concept},
and \emph{negated assertions} of the form $\lnot (C (\tau))$.
Indeed,
these
expressions
have different satisfaction conditions on partial interpretations:
while $\lnot C(\tau)$ requires $\tau$ to denote in any of its models,
a formula like $\lnot (C (\tau))$ is satisfied also in partial interpretations where $\tau$ does not denote.
Thus, $\lnot C (\tau)$ is encoded by the $\ALCOud$ CIs
$\top \sqsubseteq \exists u.\{ \tau \}$, $\{ \tau \} \sqsubseteq \lnot C$, while $\lnot (C (\tau) )$ is encoded by the CI $\top \sqsubseteq \forall u. \lnot \{ \tau \} \sqcup \exists u.(\{ \tau \} \sqcap \lnot C)$.

%

Given an $\NKRd$ concept $C$ and a formula $\p$, let $\Ex$ be a fresh concept name, representing the \emph{existence concept}, that does not occur in $C$ and $\p$.
We introduce the translation $\cdot^{\exrel}$,
mapping an $\NKRd$ concept $C$ to an $\ALCOud$ concept $C^{\exrel}$ as follows:
%
\begin{gather*}
	\topex^{\exrel} = \Ex, \
	A^{\exrel} = A, \
	(\lnot C)^{\exrel} = \lnot C^{\exrel}, \\
		(\exists r. C)^{\exrel} = \exists r.(\Ex \sqcap C^{\exrel}), \
	(C \sqcap D)^{\exrel} = C^{\exrel} \sqcap D^{\exrel}, \\
	{\small
	\{ \tau \}^{\exrel}  = 
	\begin{cases}
		\{ a \}, & \!\!\!\!\!\!\!\!\!\!\!\!\!\!\!\!\!\!\!\!\!\!\!\!\!\!  \text{if $\tau = a$}; \\
		\{ \defdes (\Ex \sqcap C^{\exrel}) \} \sqcup (\lnot \exists u.\{ \defdes (\Ex \sqcap C^{\exrel}) \} \sqcap \{  a_{\defdes C} \}), &
	\end{cases}
	}
	\\
	{\small
	\qquad\qquad\qquad\qquad\qquad\qquad\qquad\qquad\qquad\qquad\qquad\ \text{if $\tau = \defdes C$};
	}
\end{gather*}
%

We then define two
functions, $\cdot^{+}$ and $\cdot^{-}$, mapping an $\NKRd$ formula $\p$ to an $\ALCOud$ formula $\p^{+}$, respectively $\p^{-}$. In the following, $\circ \in \{ +, - \}$.
{\small{
\begin{align*}
	C(\tau)^{\circ} & =
		\begin{cases}
			\{ \tau \}^{\exrel} \sqsubseteq C^{\exrel}, & \ \text{if} \  \circ = + ; \\
			\{ \tau \}^{\exrel} \sqsubseteq \Ex \sqcap C^{\exrel}, & \ \text{if} \ \circ = - ;\\
		\end{cases}
		\\
	r(\tau_{1}, \tau_{2})^{\circ} & =
		\begin{cases}
			\{ \tau_{1} \}^{\exrel} \sqsubseteq \exists r.\{\tau_{2}\}^{\exrel}, & \ \text{if} \  \circ = + ; \\
			 \{ \tau_{1} \}^{\exrel} \sqsubseteq \Ex \sqcap \exists r. (\Ex \sqcap \{\tau_{2}\}^{\exrel}), & \ \text{if} \ \circ = - ;\\
		\end{cases}
		\\
	(\tau_{1} = \tau_{2})^{\circ} & =
		\begin{cases}
			\{ \tau_{1} \}^{\exrel} \equiv \{\tau_{2}\}^{\exrel}, & \ \text{if} \  \circ = + ; \\
			  \{ \tau_{1} \}^{\exrel} \sqcap \Ex \equiv \{\tau_{2}\}^{\exrel} \sqcap \Ex, & \ \text{if} \ \circ = - ;\\
		\end{cases}	
\end{align*}
}}
\[
(C \sqsubseteq D)^{\circ} = \Ex \sqcap C^{\exrel} \sqsubseteq D^{\exrel},
\]
\[
		(\lnot (\psi))^{\circ} = \lnot (\psi^{\circ}), \
		(\psi \land \chi)^{\circ} = \psi^{\circ} \land \chi^{\circ}.
\]
Finally, for an $\NKRd$ formula $\p$,
and $\circ \in \{ +, - \}$,
we define the $\ALCOud$ translation $\p^{\boxcircle}$
in the following way:
\begin{align*}
\p^{\boxcircle} =
	\p^{\circ} \land
	& \bigwedge_{\{ a \} \in \con{\p}} \!\!\!
	\top( a )  \
	\land \bigwedge_{\{ \defdes D \} \in \con{\p}} \!\!\! \lnot \Ex( a_{\defdes D}).
\end{align*}

%

To prove 
Theorem~\ref{prop:redddtopart} we show the following lemma.

\begin{restatable}{lemma}{alcoiotadualdompartint}
\label{prop:alcoiotadualdompartint}
Let $\p$ be an
$\NKRd$
formula, and let $\circ \in \{ +, - \}$. There exists a dual domain interpretation $I$ such that $I \models^{\circ} \p$
iff
there exists a partial interpretation $\Imc$ such that $\Imc \models \p^{\boxcircle}$.
\end{restatable}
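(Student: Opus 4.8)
The plan is to prove the two directions of the equivalence by constructing, from a dual-domain model, a partial interpretation (and vice versa), essentially using the existence concept $\Ex$ to internalise the inner domain. I would first state and prove a \emph{translation lemma}: for every $\NKRd$ concept $C$, every dual-domain interpretation $I$ and the associated partial interpretation $\Imc$ built from $I$, we have $(C^{\exrel})^{\Imc} = C^{I}$, and moreover $\Ex^{\Imc} = \InnDom^{I}$. This is proved by mutual induction on $C$ (together with the term clause), where the Boolean and conjunction cases are immediate, the $\exists r.C$ case matches because $(\exists r.C)^{\exrel} = \exists r.(\Ex \sqcap C^{\exrel})$ exactly mirrors the dual-domain clause restricting the witness to $\InnDom^{I} \cap C^{I}$, and the interesting case is the term nominal $\{\defdes C\}$. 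Here the first disjunct $\{\defdes(\Ex \sqcap C^{\exrel})\}$ denotes precisely when $\InnDom^{I} \cap C^{I}$ is a singleton, matching the first clause of $(\defdes C)^{I}$; and the second disjunct, guarded by $\lnot\exists u.\{\defdes(\Ex \sqcap C^{\exrel})\}$, picks out the fixed individual $a_{\defdes C}$ precisely when that intersection is not a singleton, which is why the conjuncts $\lnot\Ex(a_{\defdes D})$ in $\p^{\boxcircle}$ force $a_{\defdes D}^{\Imc} \notin \Ex^{\Imc}$, faithfully simulating the ``arbitrary element of $\OutDom^{I} \setminus \InnDom^{I}$'' clause.

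For the direction from dual-domain to partial: given $I = (\OutDom^{I}, \InnDom^{I}, \cdot^{I})$ with $I \models^{\circ} \p$, I would define $\Imc$ with $\Delta^{\Imc} = \OutDom^{I}$, $A^{\Imc} = A^{I}$ for $A \in \NC$, $r^{\Imc} = r^{I}$, $\Ex^{\Imc} = \InnDom^{I}$, $a^{\Imc} = a^{I}$ for all $a \in \NI$ occurring in $\p$ (and note that individual names always denote in $I$, so $\cdot^{\Imc}$ is total on these), and $a_{\defdes D}^{\Imc}$ set to the element $d_{\defdes D} \in \OutDom^{I} \setminus \InnDom^{I}$ used by $I$ when $\InnDom^{I} \cap D^{I}$ is not a singleton (if no such element is forced, pick any element outside $\InnDom^{I}$; since $\InnDom^{I} \subsetneq \OutDom^{I}$ there is always one). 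Using the translation lemma, I would then check axiom by axiom that $I \models^{\circ} \beta$ iff $\Imc \models \beta^{\circ}$: for $C \sqsubseteq D$ this is immediate from the lemma and the relativisation by $\Ex$; for assertions $C(\tau)$, $r(\tau_1,\tau_2)$, $\tau_1 = \tau_2$, the positive case follows since $\{\tau\}^{\exrel}$ is always non-empty and its unique element equals $\tau^{I}$, and the negative case additionally uses that $\{\tau\}^{\exrel} \sqsubseteq \Ex$ exactly encodes $\tau^{I} \in \InnDom^{I}$. The Boolean cases are routine, and the extra conjuncts $\top(a)$ and $\lnot\Ex(a_{\defdes D})$ hold by construction. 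Conversely, from $\Imc \models \p^{\boxcircle}$ I would build $I$ with $\OutDom^{I} = \Delta^{\Imc}$, $\InnDom^{I} = \Ex^{\Imc}$ — which is a \emph{proper} subset because the conjunct $\lnot\Ex(a_{\defdes D})$ witnesses an element outside $\Ex^{\Imc}$; if $\p$ contains no definite descriptions one can add a fresh point outside everything, or argue $\Ex^{\Imc} \neq \Delta^{\Imc}$ can be assumed w.l.o.g. — and $a^{I} = a^{\Imc}$, which is defined because $\top(a)$ is a conjunct of $\p^{\boxcircle}$. The same translation lemma and axiomwise check then give $I \models^{\circ} \p$.

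The main obstacle I anticipate is not the Boolean or role cases but the careful bookkeeping around \emph{non-denoting} definite descriptions and the proper-subset requirement $\InnDom^{I} \subsetneq \OutDom^{I}$. Specifically: (i) ensuring that the fresh individual $a_{\defdes D}$ is available and correctly placed outside $\Ex^{\Imc}$ in both directions, so that the disjunctive definition of $\{\tau\}^{\exrel}$ genuinely replicates the dual-domain value of $\defdes D$ even when two different definite descriptions $\defdes D$, $\defdes D'$ both fail to denote (they may reuse the same outer witness in $I$, but in $\Imc$ they get distinct fresh names $a_{\defdes D}, a_{\defdes D'}$ — this is harmless, since the translation lemma only cares that each $a_{\defdes D}^{\Imc}$ lies outside $\Ex^{\Imc}$, not that they are distinct); and (ii) handling the edge case where $\p$ mentions no definite description, so nothing forces $\Ex^{\Imc} \neq \Delta^{\Imc}$ — here I would simply note that satisfiability of $\p^{\boxcircle}$ is preserved under adding a fresh non-$\Ex$ point to the domain, since all concepts in $\con{\p}$ are evaluated the same way and $\Ex$-relativised quantifiers ignore it. I also need the standard observation, already used in the paper (Point~(2) of Section~\ref{sec:observations} and the internalisation technique via the universal role), that $\ALCOud$ formulas are syntactic sugar reducible in polynomial time to $\ALCOud$ ontology satisfiability; combined with the obvious polynomial bound on the size of $\p^{\boxcircle}$, Lemma~\ref{prop:alcoiotadualdompartint} then yields Theorem~\ref{prop:redddtopart} directly.
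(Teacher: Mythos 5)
Your proposal follows essentially the same route as the paper's proof: construct $\Imc$ from $I$ (and conversely) with $\Ex^{\Imc}=\InnDom^{I}$ and the fresh names $a_{\defdes D}$ interpreted outside $\Ex$, prove by induction on concepts that $(C^{\exrel})^{\Imc}=C^{I}$ for all $C\in\con{\p}$, and then verify $I\models^{\circ}\psi$ iff $\Imc\models\psi^{\circ}$ axiom by axiom for both the positive and the negative semantics, which is exactly the structure of the paper's claims. Your additional care about the edge case where $\p$ contains no definite description (so nothing in $\p^{\boxcircle}$ forces $\Ex^{\Imc}\neq\Delta^{\Imc}$, which is needed in the converse direction to obtain the inclusion $\InnDom^{I}\subset\OutDom^{I}$ and the elements $d_{\defdes D}$ outside the inner domain) addresses a point the paper passes over silently, and your fix of adding a fresh non-$\Ex$ point is sound.
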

\begin{proof}
To prove the statement, we will show the following.
\begin{enumerate}
[label=$(\roman*)$, align=left, leftmargin=*, labelsep=0cm]
	\item For every dual-domain interpretation $I$ such that $I \models^{\circ} \p$, there exists a partial interpretation $\Imc$ such that
$
	\Imc \models \p^{\circ} \land \bigwedge_{\{ a \} \in \con{\p}} \top \sqsubseteq \exists u.\{ a \}  \land \bigwedge_{\{ \defdes D \} \in \con{\p}} (\top \sqsubseteq \exists u. \{ a_{\defdes D} \} \land \{ a_{\defdes D} \} \sqsubseteq \lnot \Ex).
$
\item[$(ii)$] For every partial interpretation $\Imc$ such that
  $ \Imc \models \p^{\circ} \land \bigwedge_{\{ a \} \in \con{\p}}
  \top \sqsubseteq \exists u.\{ a \} \land \bigwedge_{\{ \defdes D \}
    \in \con{\p}} (\top \sqsubseteq \exists u. \{ a_{\defdes D} \}
  \land \{ a_{\defdes D} \} \sqsubseteq \lnot \Ex), $ there exists a
  dual-domain interpretation $I$ such that $I \models^{\circ} \p$.
\end{enumerate}

$(i)$ Let $I = (\OutDom^{I}, \InnDom^{I}, \cdot^{I})$ be a dual-domain
interpretation such that $I \models^{\circ} \p$, with
$\circ \in \{ +, - \}$.
We construct a partial interpretation
$\Imc = (\Delta^{\Imc}, \cdot^{\Imc})$ by taking
$\Delta^{\Imc} = \OutDom^{I}$, $A^{\Imc} = A^{I}$, for all
$A \in \NC \setminus \{ \Ex \}$, $\Ex^{\Imc} = \InnDom^{I}$,
$r^{\Imc} = r^{I}$, for all $r \in \NR$. Moreover,
for
every $\{ \defdes D \} \in \con{\p}$, we consider a fresh individual
name $a_{\defdes D}\in\NI$ not occurring in $\p$ and we set
$a_{\defdes D}^{\Imc} = d_{\defdes D}\in \OutDom^{I} \setminus
\InnDom^{I}$, while for the remaining individual names we define
$a^{\Imc} = a^{I}$.
%
%
First, we observe that, since $\cdot^{I}$ is total on $\NI$,
$\Imc \models \top \sqsubseteq \exists u.\{ a \}$, for every
$\{ a \} \in \con{\p}$, meaning that every $a$ that occurs in $\p$
denotes in $\Imc$.
%
In addition, for every $\{ \defdes D \} \in \con{\p}$, we have by
definition of $\Imc$ that
$\Imc \models \top \sqsubseteq \exists u. \{ a_{\defdes D} \} \land \{
a_{\defdes D} \} \sqsubseteq \lnot \Ex$.  We now require the following
claim.
\begin{claim}
  \label{claim:alcoiotaconsat}
  For all $C \in \con{\p}$ and $d \in \Delta^{I}$, $d \in C^{I}$ iff
  $d \in (C^{\exrel})^{\Imc}$.
\end{claim}
\begin{proof}
The proof is by induction on $C$.
The base cases $C = A$
and
$C = \topex$
come immediately from the definition of $\Imc$.
The inductive cases $C = \lnot D$ and $C = D \sqcap E$ are straightforward.

For $C = \exists r.D$, we have the following.
By definition of $(\exists r.D)^{I}$, we have that
$d \in (\exists r.D)^{I}$ iff there exists $e \in \InnDom^{I}$ such
that $(d,e) \in r^{I}$ and $e \in D^{I}$. By i.h. and definition of
$\Imc$, this holds iff there exists $e \in \Ex^{\Imc}$ such that
$(d,e) \in r^{\Imc}$ and $e \in (D^{\exrel})^{\Imc}$, or,
equivalently, iff there exists $e \in \Delta^{\Imc}$ such that
$(d,e) \in r^{\Imc}$ and $e \in (\Ex \sqcap
(D^{\exrel}))^{\Imc}$. This means that
$d \in (\exists r.(\Ex \sqcap D^{\exrel}))^{\Imc}$, which is the same
as $d \in ((\exists r.D)^{\exrel})^{\Imc}$.


For $C = \{ \tau \}$, we consider the two forms of $\tau$.
\begin{itemize}
\item Let $\tau = a$. We have that $d \in \{ a \}^{I}$ iff
  $d = a^{I}$. By definition of $\Imc$, the previous step is
  equivalent to $d = a^{\Imc}$. This means that
  $d \in \{ a \}^{\Imc}$, i.e., by definition of $\cdot^{\exrel}$,
  $d \in (\{ a \}^{\exrel})^{\Imc}$.
\item Let $\tau = \defdes D$.  We have that
  $d \in \{ \defdes D \}^{I}$ iff $d = (\defdes D)^{I}$, meaning that
  either $\InnDom^{I} \cap D^{I} = \{ d \}$, or $\InnDom^{I} \cap D^{I} \neq \{ e \}$,
  for every $e \in \Delta^{I}$,
  and
  $d = d_{\defdes D} \in \OutDom^{I} \setminus \InnDom^{I}$.  By
  i.h. and definition of $\Imc$, the previous step is equivalent to:
  either $(\Ex \sqcap D^{\exrel})^{\Imc} = \{ d \}$, or
  $(\Ex \sqcap D^{\exrel})^{\Imc} \neq \{ e \}$, for every $e \in \Delta^{\Imc}$,
  and
  $d = a_{\defdes D}^{\Imc} \in \Delta^{\Imc} \setminus \Ex^{\Imc}$.
  This means the following: either $d = \defdes (\Ex \sqcap D^{\exrel})^{\Imc}$,
  or $\defdes (\Ex \sqcap D^{\exrel})$ does not denote in $\Imc$ and
  $d = a_{\defdes D}^{\Imc}$.  Equivalently,
  $d \in (\{ \defdes (\Ex \sqcap  D^{\exrel}) \} \sqcup (\lnot \exists u.\{ \defdes
  (\Ex \sqcap D^{\exrel}) \} \sqcap \{ a_{\defdes D} \}))^{\Imc}$.  By definition
  of $\cdot^{\exrel}$, this means that
  $d \in (\{ \defdes D \}^{\exrel})^{\Imc}$.
%
%
\qedhere
\end{itemize} 
\end{proof}

The next claim follows from
the proof of Claim~\ref{claim:alcoiotaconsat}.
%
We use the following notation: let
$\tau^{\exrel} = a$, if $\tau = a$, and
$\tau^{\exrel} = \defdes (\Ex \sqcap D^{\exrel})$, if $\tau = \defdes D$.

\begin{claim}
\label{cla:alcoiotaterm}
The following equivalences hold.
\begin{enumerate}
[label=$(\roman*)$, align=left, leftmargin=*, labelsep=0cm]
\item For every $\{ \defdes D \} \in \con{\p}$ and
  $d \in \OutDom^{I}$,
  $d = (\defdes D)^{I}$ iff either $d = \defdes (\Ex \sqcap D^{\exrel})^{\Imc}$, or
  $\defdes (\Ex \sqcap D^{\exrel})$ does not denote in $\Imc$
and $d = a^{\Imc}_{\defdes D}$.
\item For every $\{ \tau \} \in \con{\p}$ and $d \in \OutDom^{I}$,
  $d = \tau^{I} \in \InnDom^{I}$ iff
  $d = (\tau^{\exrel})^{\Imc} \in \Ex^{\Imc}$.
\end{enumerate}
\end{claim}

Using the two claims above, we can show the following.
\begin{claim}
  \label{claim:alcoiotaforsat}
  For all $\psi \in \sub{\p}$, $I \models^{\circ} \psi$ iff
  $\Imc \models \psi^{\circ}$, with $\circ \in \{ +, - \}$.
\end{claim}
\begin{proof}

The proof is by induction on $\psi$.
The base cases are as follows.
\begin{itemize}
\item For $\psi = (C \sqsubseteq D)$, we have the following:
  $I \models^{\circ} C \sqsubseteq D$ iff, for all
  $d \in \InnDom^{I}$, $d \in C^{I}$ implies $d \in D^{I}$. By
  definition of $\Imc$ and Claim~\ref{claim:alcoiotaconsat},
  this holds iff, for all $d \in \Ex^{\Imc}$,
  $d \in (C^{\exrel})^{\Imc}$ implies $d \in (D^{\exrel})^{\Imc}$.
  Equivalently, for all $d \in \Delta^{\Imc}$, if
  $d \in (\Ex \sqcap C^{\exrel})^{\Imc}$, then
  $d \in (D^{\exrel})^{\Imc}$. The previous step means that
  $\Imc \models \Ex \sqcap C^{\exrel} \sqsubseteq D^{\exrel}$, that
  is, $\Imc \models (C \sqsubseteq D)^{\circ}$.
\end{itemize}
We now distinguish the cases of positive and negative semantics.
For $\circ = +$, we have the following.
\begin{itemize}
\item For $\psi = C(\tau)$, we consider the two forms of $\tau$.
  \begin{itemize}
  \item Let $\tau = a$. We have that $I \models^{+} C(a)$ iff
    $a^{I} \in C^{I}$.
    By definition of $\Imc$, $a$ denotes in $\Imc$ and $a^{\Imc} = a^{I}$, thus, by
    Claim~\ref{claim:alcoiotaconsat}, the previous step means
    $a^{\Imc} \in (C^{\exrel})^{\Imc}$.  Equivalently, thus,
    $\Imc \models \{ a \} \sqsubseteq C^{\exrel}$, i.e.,
    $\Imc \models C(a)^{+}$.
%
  \item Let $\tau = \defdes D$.
  We have that
    $I \models^{+} C(\defdes D)$ holds iff
        $(\defdes D)^{I} \in C^{I}$.
    By definition of $\Imc$ and Claim~\ref{cla:alcoiotaterm}, we have
    that $d =(\defdes D)^{I}$ iff either
    $d = \defdes  (\Ex \sqcap D^{\exrel})^{\Imc}$, or $\defdes  (\Ex \sqcap D^{\exrel})$ does
    not denote in $\Imc$
and $d = a^{\Imc}_{\defdes D}$. Furthermore, by
Claim~\ref{claim:alcoiotaconsat}, $d\in C^I$ iff $d \in (C^{\exrel})^{\Imc}$.
Since $a_{\defdes D}$ denotes in $\Imc$, the previous step means that
$\Imc \models \{ \defdes (\Ex \sqcap D^{\exrel}) \} \sqcup (\lnot \exists u.\{ (\Ex \sqcap \defdes D^{\exrel}) \}
\sqcap \{ a_{\defdes D} \}) \sqsubseteq C^{\exrel}$.
By definition of $\cdot^{\exrel}$, this is equivalent to:
$\Imc \models (\{ \defdes D \}^{\exrel} \sqsubseteq C^{\exrel})$,
i.e., by definition of $\cdot^{+}$,
$\Imc \models C( \defdes D)^{+}$.
	\end{itemize}
\item For $\psi = r(\tau_{1},\tau_{2})$,  we consider the two forms of $\tau_{1}, \tau_{2}$.

\begin{itemize}
\item Let $\tau_{1} = a, \tau_{2} = b$. We have that
  $I \models^{+} r(a, b)$
  holds iff $(a^{I}, b^{I}) \in r^{I}$. By definition of $\Imc$, the
  previous step is equivalent to $(a^{\Imc}, b^{\Imc}) \in r^{\Imc}$,
  and since both $a, b$ denote in $\Imc$ this means that
  $\Imc \models \{ a \} \sqsubseteq \exists r.\{ b \}$, i.e.,
  $\Imc \models r(a, b)^{+}$.
\item Let $\tau_{1} = \defdes D, \tau_{2} = \defdes E$.
	%
  We have that $I \models^{+} r(\defdes D, \defdes E)$ holds iff
  $((\defdes D)^{I}, (\defdes E)^{I}) \in r^{I}$.  Let
  $d = (\defdes D)^I$ and $e = (\defdes E)^I$. By definition of $\Imc$
  and Claim~\ref{cla:alcoiotaterm}, we have that
  $d = (\defdes D)^I$ iff either $d = \defdes (\Ex \sqcap D^{\exrel})^{\Imc}$ or
  $\defdes (\Ex \sqcap D^{\exrel})$ does not denote in $\Imc$ and
  $d = a_{\defdes D}^{\Imc}$; similarly for $e = (\defdes E)^I$.
  By definition of $\Imc$, $(d,e) \in r^{\Imc}$.  Since, by definition
  of $\Imc$, we have that both $a_{\defdes D}$ and $a_{\defdes E}$
  denote in $\Imc$, the previous step is equivalent to
  $\Imc \models \{ \defdes (\Ex \sqcap D^{\exrel}) \} \sqcup (\lnot \exists u. \{
  \defdes (\Ex \sqcap D^{\exrel}) \} \sqcap \{ a_{\defdes D} \}) \sqsubseteq
  \exists r. ( \{ \defdes (\Ex \sqcap E^{\exrel} ) \} \sqcup (\lnot \exists u. \{
  \defdes (\Ex \sqcap E^{\exrel} ) \} \sqcap \{ a_{\defdes E} \}))$.  This means
  $\Imc \models (\{ \defdes D \}^{\exrel}\sqsubseteq \exists r.\{
  \defdes E \}^{\exrel})$, i.e.,
  $\Imc \models r(\defdes D, \defdes E)^{+}$.
	
\item Let $\tau_{1} = a, \tau_{2} = \defdes D$.
	%
  We have that $I \models^{+} r(a, \defdes D)$ holds iff
  $(a^{I}, (\defdes D)^{I}) \in r^{I}$.  Let $d = (\defdes D)^I$ . By definition of $\Imc$ and
  Claim~\ref{cla:alcoiotaterm}, we have that
  $d = (\defdes D)^I$ iff either $d = \defdes (\Ex \sqcap D^{\exrel})^{\Imc}$, or
  $\defdes (\Ex \sqcap D^{\exrel})$ does not denote in $\Imc$ and
  $d = a_{\defdes D}^{\Imc}$. By definition of $\Imc$,
  $(a^I,d) \in r^{\Imc}$.  Since, by definition of $\Imc$, we have that
  both $a$ and $a_{\defdes D}$ denote in $\Imc$, the previous step is
  equivalent to
  $\Imc \models \{ a \}^{\exrel} \sqsubseteq \exists r. ( \{ \defdes
  (\Ex \sqcap D^{\exrel} ) \} \sqcup (\lnot \exists u. \{ \defdes (\Ex \sqcap D^{\exrel}) \}
  \sqcap \{ a_{\defdes D} \}))$.  This means
  $\Imc \models (\{ a \}^{\exrel} \sqsubseteq \exists r.\{ \defdes D
  \}^{\exrel})$, i.e., $\Imc \models r(a, \defdes D)^{+}$.
\item Let $\tau_{1} = \defdes D, \tau_{2} = a$. This case can be shown
  similarly to the previous one.
\end{itemize}
\item For $\psi = (\tau_{1} = \tau_{2})$, we consider the two forms of
  $\tau_{1}, \tau_{2}$.
\begin{itemize}
\item Let $\tau_{1} = a, \tau_{2} = b$. We have that
  $I \models^{+} a = b$
  holds iff $a^{I} = b^{I}$. By definition of $\Imc$, the previous
  step means $a^{\Imc} = b^{\Imc}$, and since both $a, b$ denote in
  $\Imc$ this is equivalent to $\Imc \models \{ a \} \equiv \{ b \}$,
  i.e., $\Imc \models (a = b)^{+}$.
			%
\item Let $\tau_{1} = \defdes D, \tau_{2} = \defdes E$.
We have that $I \models^{+} \defdes D = \defdes E$ holds
  iff $(\defdes D)^{I} = (\defdes E)^{I}$.  Let $d = (\defdes D)^I$ and
  $e = (\defdes E)^I$. By definition of $\Imc$ and
  Claim~\ref{cla:alcoiotaterm}, we have that
  $d = (\defdes D)^I$ iff either $d = \defdes (\Ex \sqcap D^{\exrel})^{\Imc}$ or
  $\defdes (\Ex \sqcap D^{\exrel}$ does not denote in $\Imc$ and
  $d = a_{\defdes D}^{\Imc}$,
  and similarly for $e$.  Since, by definition of $\Imc$, we have that
  both $a_{\defdes D}$ and $a_{\defdes E}$ denote in $\Imc$, and
  $d=e$, then the previous step is equivalent to
  $\Imc \models \{ \defdes (\Ex \sqcap D^{\exrel}) \} \sqcup (\lnot \exists u. \{
  \defdes (\Ex \sqcap D^{\exrel}) \} \sqcap \{ a_{\defdes D} \})
  \equiv
  \{ \defdes (\Ex \sqcap E^{\exrel}) \} \sqcup (\lnot \exists u. \{ (\Ex \sqcap \defdes E^{\exrel}) \}
  \sqcap \{ a_{\defdes E} \})$.  This means
  $\Imc \models (\{ \defdes D \}^{\exrel} \equiv \{ \defdes E
  \}^{\exrel})$, i.e., $\Imc \models (\defdes D = \defdes E)^{+}$.
\item Let $\tau_{1} = a, \tau_{2} = \defdes D$.
  We have that $I \models^{+} a = \defdes D$ holds iff
  $a^{I} = (\defdes D)^{I}$.  Let $d = (\defdes D)^I$. By definition of
  $\Imc$ and Claim~\ref{cla:alcoiotaterm}, we have that
  $d = (\defdes D)^I$ iff either
  $d = \defdes (\Ex \sqcap D^{\exrel})^{\Imc}$, or $\defdes (\Ex \sqcap D^{\exrel})$ does not
  denote in $\Imc$ and $d = a_{\defdes D}^{\Imc}$.
  Since, by definition of $\Imc$, we have that both $a$ and
  $a_{\defdes D}$ denote in $\Imc$, and $a^\Imc = d$, the
  previous step is equivalent to
  $\Imc \models \{ a \}^{\exrel} \equiv \{ \defdes (\Ex \sqcap  D^{\exrel} _ \}
  \sqcup (\lnot \exists u. \{ \defdes (\Ex \sqcap D^{\exrel}) \} \sqcap \{
  a_{\defdes D} \})$.  This means
  $\Imc \models (\{ a \}^{\exrel} \equiv\{ \defdes D \}^{\exrel})$,
  i.e., $\Imc \models (a = \defdes D)^{+}$.
	\item Let $\tau_{1} = \defdes D, \tau_{2} = a$. This case can be shown similarly to the previous one.
\end{itemize}
\end{itemize}
For $\circ = -$, we have the following.
\begin{itemize}
\item For $\psi = C(\tau)$, $I \models^{-} C(\tau)$
  iff $\tau^{I} \in \InnDom^{I}$ and $\tau^{I} \in C^{I}$. 
   By Claim~\ref{claim:alcoiotaconsat} and
Claim~\ref{cla:alcoiotaterm},
   this means
  $(\tau^{\exrel})^{\Imc} \in (\Ex \sqcap C^{\exrel})^{\Imc}$. 
We show that the previous step is equivalent to $\Imc \models C(\tau)^{-}$, i.e.,
  $\Imc \models \{ \tau \}^{\exrel} \sqsubseteq \Ex \sqcap C^{\exrel}$.
  The $(\Rightarrow)$ direction is clear. For $(\Leftarrow)$, suppose that $\Imc \models \{ \tau \}^{\exrel} \sqsubseteq \Ex \sqcap C^{\exrel}$.
  If $\tau = a$, we have by definition of $\Imc$ that $a$ denotes in $\Imc$, hence $a^{\Imc} \in (\Ex \sqcap C^{\exrel})^{\Imc}$, as required.
  If $\tau = \defdes D$, suppose towards a contradiction that $\defdes (\Ex \sqcap D^{\exrel})$ does not denote in $\Imc$.
  Since
  $\Imc \models \{ \defdes (\Ex \sqcap D^{\exrel}) \} \sqcup (\lnot \exists u. \{ \defdes (\Ex \sqcap D^{\exrel}) \} \sqcap \{ a_{\defdes D} \}) \sqsubseteq \Ex \sqcap C^{\exrel}$
  and $a_{\defdes D}$ denotes in $\Imc$, we obtain that $a_{\defdes D}^{\Imc} \in \Ex^{\Imc}$, contradicting the fact that, by definition of $\Imc$, $a_{\defdes D}^{\Imc}  \in \Delta^{\Imc} \setminus \Ex^{\Imc}$. Therefore, $\defdes (\Ex \sqcap D^{\exrel})$ denotes in $\Imc$ and $\defdes (\Ex \sqcap D^{\exrel})^{\Imc} \in (\Ex \sqcap C^{\exrel})^{\Imc}$.
    
  %
			%
\item For $\psi = r(\tau_{1}, \tau_{2})$,
  $I \models^{-} r(\tau_{1},\tau_{2})$
  iff $\tau_{1}^{I}, \tau_{2}^{I} \in \InnDom^{I}$ and
  $(\tau_{1}^{I}, \tau_{2}^{I}) \in r^{I}$.
  By
  Claim~\ref{cla:alcoiotaterm},
  this means $(\tau_{1}^{\exrel})^{\Imc}, (\tau_{2}^{\exrel})^{\Imc} \in \Ex^{\Imc}$ and
  $((\tau_{1}^{\exrel})^{\Imc}, (\tau_{2}^{\exrel})^{\Imc}) \in r^{\Imc}$.
  Similarly to the previous case, it can be seen that we have
  equivalently
  $\Imc \models \{ \tau_{1} \}^{\exrel} \sqsubseteq \Ex \sqcap \exists r.(\Ex
  \sqcap \{ \tau_{2} \}^{\exrel})$, i.e.,
  $\Imc \models r(\tau_{1}, \tau_{2})^{-}$
			%
\item For $\psi = \tau_{1} = \tau_{2}$,
  $I \models^{-} \tau_{1} = \tau_{2}$
  iff $\tau_{1}^{I}, \tau_{2}^{I} \in \InnDom^{I}$ and
  $\tau_{1}^{I} = \tau_{2}^{I}$. By definition of $\Imc$, this means
  $\tau_{1}^{\Imc}, \tau_{2}^{\Imc} \in \Ex^{\Imc}$ and
  $\tau_{1}^{\Imc} = \tau_{2}^{\Imc}$.
  Similarly to the previous case, it can be seen that we have
  equivalently
  $\Imc \models \{ \tau_{1} \}^{\exrel} \sqcap \Ex \equiv \{ \tau_{2} \}^{\exrel} \sqcap
  \Ex$, i.e., $\Imc \models (\tau_{1} = \tau_{2})^{-}$.
%
\end{itemize}
The inductive cases of $\psi = \lnot ( \chi)$ and $\psi = (\chi \land \zeta)$, under positive or negative semantics, are straightforward.
\qedhere
\end{proof}

By Claim~\ref{claim:alcoiotaforsat} and the assumption that $I \models^{\circ} \p$ we can conclude $\Imc \models \p^{\circ} \land
\bigwedge_{\{ a \} \in \con{\p}} \top \sqsubseteq \exists u. \{ a \} \land \bigwedge_{\{ \defdes D \} \in \con{\p}} (\top \sqsubseteq \exists u. \{ a_{\defdes D} \} \land \{ a_{\defdes D} \} \sqsubseteq \lnot \Ex)$.

$(ii)$
Consider a partial interpretation
$\Imc = (\Delta^{\Imc}, \cdot^{\Imc})$ such that
$\Imc \models \p^{\circ} \land \bigwedge_{\{ a \} \in \con{\p}}
\top \sqsubseteq \exists u. \{ a \} \land \bigwedge_{\{ \defdes D \} \in \con{\p}} (\top \sqsubseteq \exists u. \{ a_{\defdes D} \}  \land \{ a_{\defdes D} \} \sqsubseteq \lnot \Ex)
$, where $\circ \in \{ +, - \}$.
We define a dual-domain interpretation $I = (\OutDom^{I}, \InnDom^{I},
\cdot^{I})$ by taking $\OutDom^{I} = \Delta^{\Imc}$, $\InnDom^{I} =
\Ex^{\Imc}$, $\topex^{I} = \Ex^{\Imc}$, $A^{I} = A^{\Imc}$, for all $A \in \NC$,
$r^{I} = r^{\Imc}$, for all $r \in \NR$, and $a^{I} = a^{\Imc}$, if
$a \in \NI$ denotes in $\Imc$, while $a^{I}$ is defined arbitrarily,
otherwise.  Moreover, for every $\defdes D$ such that
$\{ \defdes D \} \in \con{\p}$, we set
$d_{\defdes D} = a_{\defdes D}^{I}$ (while, for
$\{ \defdes D \} \not \in \con{\p}$, we set
$d_{\defdes D} \in \Delta^{I} \setminus \InnDom^{I}$ arbitrarily).
Note that, since
$\Imc \models \top \sqsubseteq \exists u. \{ a_{\defdes D} \} \land \{
a_{\defdes D} \} \sqsubseteq \lnot \Ex$, for every
$\{ \defdes D \} \in \con{\p}$,
we have
that $a_{\defdes D}$ denotes in $\Imc$, hence
$a_{\defdes D}^{I} = a_{\defdes D}^{\Imc}$, and
$a_{\defdes D}^{\Imc} \in \Delta^{\Imc} \setminus \Ex^{\Imc}$.
Therefore,
$d_{\defdes D} = a_{\defdes D}^{I}\in \OutDom^{I}\setminus
\InnDom^{I}$ is well-defined.  In addition, for all
$\{ a \} \in \con{\p}$,
we have that $a$ denotes in $\Imc$, since
$\Imc \models \top \sqsubseteq \exists u. \{ a \}$, and thus, by $I$
definition, $a^{I} = a^{\Imc}$.
Similarly to the proof of Claim~\ref{claim:alcoiotaconsat}, it can be seen that, for all $C \in \con{\p}$ and $d \in \Delta^{I}$, $d \in C^{I}$ iff $d \in (C^{\exrel})^{\Imc}$.
Moreover, as in the proof of Claim~\ref{claim:alcoiotaforsat}, we have again that, for all $\psi \in \sub{\p}$, $I \models^{\circ} \psi$ iff $\Imc \models \psi^{\circ}$, with $\circ \in \{ +, - \}$.
Since by assumption it holds that $\Imc \models \p^{\circ}$, we obtain $I \models^{\circ} \p$.
\qedhere
\end{proof}

Theorem~\ref{prop:redddtopart} below follows immediately from the previous lemma,
together with the observation that $\ALCOud$ formulas can be encoded as $\ALCOud$ ontologies.

\redddtopart*

%
%
%
%
%


%




\end{appendix}


\end{document}